\pgfplotsset{compat=1.17}
\newtheorem*{rep@theorem}{\rep@title}
\newcommand{\newreptheorem}[2]{
\newenvironment{rep#1}[1]{
 \def\rep@title{\theoremref{##1} Restated}
 \begin{rep@theorem}}
 {\end{rep@theorem}}}
\newtheorem*{rep@lemma}{\rep@title}
\newcommand{\newreplemma}[2]{
\newenvironment{rep#1}[1]{
 \def\rep@title{\lemmaref{##1} Restated}
 \begin{rep@lemma}}
 {\end{rep@lemma}}}
\newenvironment{proofof}[1]{\begin{trivlist} \item {\bf Proof
#1:~~}}
  {\qed\end{trivlist}}
\newtheorem{theorem}{Theorem}[section]
\newtheorem{importedtheorem}{Imported Theorem}
\newtheorem{corollary}[theorem]{Corollary}
\newtheorem{lemma}[theorem]{Lemma}
\newtheorem{definition}[theorem]{Definition}
\newtheorem{setting}[theorem]{Setting}
\newcommand{\namedref}[2]{\hyperref[#2]{#1~\ref*{#2}}\xspace}
\newcommand{\lemmaref}[1]{\namedref{Lemma}{lem:#1}}
\newcommand{\theoremref}[1]{\namedref{Theorem}{thm:#1}}
\newcommand{\corolref}[1]{\namedref{Corollary}{corol:#1}}
\newcommand{\figureref}[1]{\namedref{Figure}{fig:#1}}
\newcommand{\equationref}[1]{\namedref{Equation}{eq:#1}}
\newcommand{\algorithmref}[1]{\namedref{Algorithm}{alg:#1}}
\newcommand{\settingref}[1]{\namedref{Setting}{setting:#1}}
\newcommand{\importedtheoremref}[1]{\namedref{Imported Theorem}{impthm:#1}}
\newcommand{\sectionref}[1]{\namedref{Section}{sec:#1}}
\newcommand{\appendixref}[1]{\namedref{Appendix}{app:#1}}
\newcommand{\abs}[1]{\ensuremath{\left\vert{#1}\right\vert}\xspace}
\newcommand{\eps}[0]{\ensuremath{\varepsilon}}
\let\epsilon\eps
\newcommand{\cA}{\ensuremath{{\mathcal A}}\xspace}
\newcommand{\cB}{\ensuremath{{\mathcal B}}\xspace}
\newcommand{\cD}{\ensuremath{{\mathcal D}}\xspace}
\newcommand{\cE}{\ensuremath{{\mathcal E}}\xspace}
\newcommand{\cF}{\ensuremath{{\mathcal F}}\xspace}
\newcommand{\cI}{\ensuremath{{\mathcal I}}\xspace}
\newcommand{\cN}{\ensuremath{{\mathcal N}}\xspace}
\newcommand{\cP}{\ensuremath{{\mathcal P}}\xspace}
\newcommand{\cQ}{\ensuremath{{\mathcal Q}}\xspace}
\newcommand{\cS}{\ensuremath{{\mathcal S}}\xspace}
\newcommand{\cV}{\ensuremath{{\mathcal V}}\xspace}
\newcommand{\cW}{\ensuremath{{\mathcal W}}\xspace}
\newcommand{\bs}{\ensuremath{{\mathbf s}}\xspace}
\newcommand{\bx}{\ensuremath{{\mathbf x}}\xspace}
\newcommand{\by}{\ensuremath{{\mathbf y}}\xspace}
\newcommand{\bz}{\ensuremath{{\mathbf z}}\xspace}
\newcommand{\bA}{\ensuremath{{\mathbf A}}\xspace}
\newcommand{\bW}{\ensuremath{{\mathbf W}}\xspace}
\newcommand{\bbE}{\ensuremath{{\mathbb E}}\xspace}
\newcommand{\bbR}{\ensuremath{{\mathbb R}}\xspace}
\newcommand{\defeq}[0]{\ensuremath{\;{\vcentcolon=}\;}\xspace}
\newcommand{\E}[0]{\mathop{\bbE}\xspace}
\newcommand{\iid}[0]{\text{i.i.d.}\xspace}
\newcommand{\mat}[1]{\boldsymbol{#1}}
\renewcommand{\vec}[1]{\boldsymbol{\mathrm{#1}}}
\newcommand{\vecalt}[1]{\boldsymbol{#1}}
\newcommand{\normof}[1]{\|#1\|}
\newcommand{\bmat}[1]{\begin{bmatrix} #1 \end{bmatrix}}
\newcommand{\frakS}{\mathfrak{S}}
\newcommand{\mA}{\ensuremath{\mat{A}}\xspace}
\newcommand{\mB}{\ensuremath{\mat{B}}\xspace}
\newcommand{\mR}{\ensuremath{\mat{R}}\xspace}
\newcommand{\mS}{\ensuremath{\mat{S}}\xspace}
\newcommand{\mU}{\ensuremath{\mat{U}}\xspace}
\newcommand{\mW}{\ensuremath{\mat{W}}\xspace}
\newcommand{\va}{\ensuremath{\vec{a}}\xspace}
\newcommand{\vb}{\ensuremath{\vec{b}}\xspace}
\newcommand{\vd}{\ensuremath{\vec{d}}\xspace}
\newcommand{\ve}{\ensuremath{\vec{e}}\xspace}
\newcommand{\vf}{\ensuremath{\vec{f}}\xspace}
\newcommand{\vp}{\ensuremath{\vec{p}}\xspace}
\newcommand{\vr}{\ensuremath{\vec{r}}\xspace}
\newcommand{\vu}{\ensuremath{\vec{u}}\xspace}
\newcommand{\vv}{\ensuremath{\vec{v}}\xspace}
\newcommand{\vx}{\ensuremath{\vec{x}}\xspace}
\newcommand{\vy}{\ensuremath{\vec{y}}\xspace}
\newcommand{\vz}{\ensuremath{\vec{z}}\xspace}
\newcommand{\vDelta}{\ensuremath{\vecalt{\Delta}}\xspace}
\sodef\allcapsspacing{\upshape}{0.15em}{0.65em}{0.6em}
\colorlet{todo_background_normal}{white}
\definecolor{todo_background_dark}{RGB}{39,40,34}
\definecolor{advice_text}{RGB}{78, 12, 123}
\colorlet{advice_background}{todo_background_normal}
\definecolor{incomplete_text}{RGB}{204, 64, 84}
\colorlet{incomplete_background}{todo_background_normal}
\newcounter{question}
\DeclareMathOperator*{\argmin}{argmin}
\newcommand\numberthis{\addtocounter{equation}{1}\tag{\theequation}}
\newcommand{\tsfrac}[2]{{\textstyle\frac{#1}{#2}}}
\DeclareMathOperator*{\argmax}{argmax}
\DeclareMathOperator*{\polylog}{polylog}
\DeclareMathOperator*{\poly}{poly}
\DeclareMathOperator*{\sign}{sign}
\newcommand{\PPr}[1]{\ensuremath{\mathbf{Pr}\left[#1\right]}}
\newcommand{\Ex}[1]{\ensuremath{\mathbb{E}\left[#1\right]}}
\newcommand{\Var}{\text{Var}}
\renewcommand*{\@fnsymbol}[1]{\textcolor{darkpastelgreen}{\ensuremath{\ifcase#1\or *\or \dagger\or \ddagger\or
 \mathsection\or \triangledown\or \mathparagraph\or \|\or **\or \dagger\dagger
   \or \ddagger\ddagger \else\@ctrerr\fi}}}
\providecommand{\email}[1]{\href{mailto:#1}{\nolinkurl{#1}\xspace}}
\definecolor{ceruleanblue}{rgb}{0.16, 0.32, 0.75}
\definecolor{darkmidnightblue}{rgb}{0.0, 0.2, 0.4}
\definecolor{darkpastelgreen}{rgb}{0.01, 0.75, 0.24}
\definecolor{bleudefrance}{rgb}{0.19, 0.55, 0.91}
\title{Near-Linear Sample Complexity for \texorpdfstring{\(L_p\)}{Lp} Polynomial Regression}
\author{\hspace{0.2in}
Raphael A. Meyer\thanks{New York University. 
E-mail: \email{ram900@nyu.edu}}
\and
Cameron Musco\thanks{University of Massachusetts Amherst.
E-mail: \email{cmusco@cs.umass.edu}}
\and
Christopher Musco\thanks{New York University. 
E-mail: \email{cmusco@nyu.edu}}\hspace{0.2in}
\and
David P. Woodruff\thanks{Carnegie Mellon University. 
E-mail: \email{dwoodruf@cs.cmu.edu}}
\and
Samson Zhou\thanks{UC Berkeley and Rice University. Work done in part while at Carnegie Mellon University. 
E-mail: \email{samsonzhou@gmail.com}}
}
\date{\today}
\definecolor{cb_orange}{HTML}{FF6A10}
\definecolor{cb_red}{HTML}{D12757}
\definecolor{cb_blue}{HTML}{2437E6}
\definecolor{cb_cyan}{HTML}{75CDFA}
\definecolor{cb_green}{HTML}{4DC961}
\newcommand{\R}{\mathbb{R}}
\newcommand{\norm}[1]{\normof{#1}}
\newcommand{\bv}[1]{\vec{#1}}
\begin{document}
\hypersetup{pageanchor=false}
\begin{titlepage}
\maketitle
\thispagestyle{empty}

\begin{abstract}
We study $L_p$ polynomial regression. Given query access to a function $f:[-1,1] \rightarrow \R$, the goal is to find a degree $d$ polynomial $\widehat{q}$ such that, for a given parameter $\varepsilon > 0$,
\begin{align*}
\|\widehat{q}-f\|_p\le (1+\epsilon) \cdot\min_{q:\deg(q)\le d}\|q-f\|_p.
\end{align*}
Here $\norm{\cdot}_p$ is the $L_p$ norm, $\norm{g}_p = (\int_{-1}^1 |g(t)|^p dt)^{1/p}$.
We show that querying $f$ at points randomly drawn from the Chebyshev measure on $[-1,1]$ is a near-optimal strategy for polynomial regression in all $L_p$ norms.
In particular, to find $\hat q$, it suffices to sample $O(d\, \frac{\polylog d}{\poly\,\varepsilon})$ points from $[-1,1]$ with probabilities proportional to this measure. While the optimal sample complexity for polynomial regression was well understood for $L_2$ and $L_\infty$, our result is the first that achieves sample complexity linear in $d$ and error $(1+\epsilon)$ for other values of $p$ without any assumptions.

Our result requires two main technical contributions. The first concerns $p\leq 2$, for which we  provide explicit bounds on the \textit{$L_p$ Lewis weight function}  of the infinite linear operator underlying polynomial regression.  Using tools from the orthogonal polynomial literature, we show that this function is bounded by the Chebyshev density. 
Our second key contribution is to take advantage of the structure of polynomials to reduce the $p>2$ case to the $p\leq 2$ case. By doing so, we obtain a better sample complexity than what is possible for general $p$-norm linear regression problems, for which $\Omega(d^{p/2})$ samples are required.

\end{abstract}
\end{titlepage}
\hypersetup{pageanchor=true}

\newpage

\section{Introduction}
\label{sec:intro}
We study the problem of learning a near optimal low-degree polynomial approximation to a function $f: [-1,1]\rightarrow \R$ based on as few queries $f(t_1), \ldots, f(t_n)$ to the function as possible. Studied since at least the 19th century with the work of Legendre and Gauss on least squares polynomial regression, this problem remains fundamental in statistics, computational mathematics, and machine learning.
Concretely, our goal is to find a degree $d$ polynomial $\hat{q}$ that satisfies the guarantee:
\[\|\widehat{q}(t)-f(t)\|_p\le(1+\eps)\cdot\min_{\substack{\text{degree } d \\\text{ polynomial }q}}\|q(t)-f(t)\|_p,\]
where $\eps$ is an input accuracy parameter and $\norm{\cdot}_p$ is the $L_p$-norm, i.e., $\norm{g}_p = \left(\int_{-1}^1 |g(t)|^p dt\right)^{1/p}$. 

The problem of near-optimal polynomial approximation, visualized in \figureref{sample:points} and \figureref{best:fit:p}, finds applications ranging from learning half-spaces \cite{KalaiKMS08}, to solving parametric PDEs \cite{HamptonDoostan:2015}, to surface reconstruction \cite{Pratt87}. The choice of norm depends on the application: for example, $p =1$ is used in robust approximation, $p=2$ is common in computational science settings \cite{cohen2017optimal}, and $p = \infty$ is popular in applications where $f$ is smooth and known to admit a good minimax polynomial approximation \cite{KaneKP17,Trefethen:2012}. 
Values of $p$ between $2$ and $\infty$ offer a compromise between robustness and uniform accuracy, and find applications, e.g., in the design of polynomial finite impulse response filters in signal processing \cite{BurrusBS94,dumitrescu2007positive}.

\begin{figure}[!bht]
	\centering
	\begin{minipage}{0.49\columnwidth}
    	\centering
		\includegraphics[width=0.7\columnwidth]{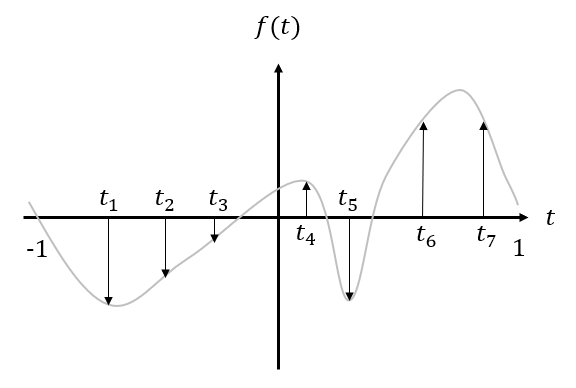}
		\caption{
    		We choose points $t_1,\ldots,t_n$ at which to query a function $f$. Based on $f(t_1), \ldots, f(t_n)$, we want  to find a polynomial approximating $f$ on $[-1,1]$. }
		\label{fig:sample:points}
	\end{minipage}\hfill
	\begin{minipage}{0.49\columnwidth}
    	\centering
		\includegraphics[width=0.7\columnwidth]{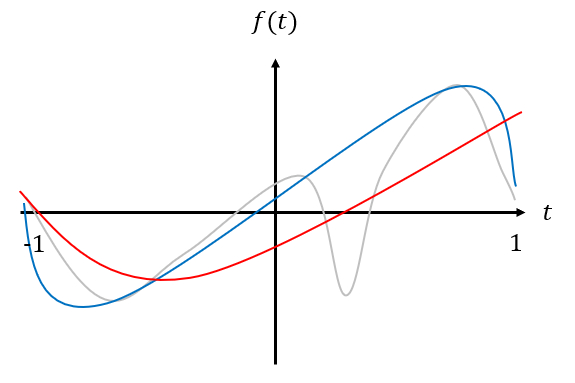}
		\caption{The blue curve is a near optimal approximating polynomial of degree $3$ for $p=2$, while the red curve is near optimal for $p=\infty$.}
		\label{fig:best:fit:p}
	\end{minipage}
\end{figure}

The above problem is an \emph{active learning} or \emph{experimental design} problem since we have the freedom to choose the query locations $t_1, \ldots, t_n$. 
Our goal is to answer two questions:
\begin{enumerate}
    \item As a function of the degree $d$, norm $p$, and tolerance $\epsilon$, how many queries $n$ are required to find $\hat{q}$?
    \item How should the query locations $t_1, \ldots, t_n$ be chosen from $[-1,1]$? 
\end{enumerate}
When $f$ is already a degree $d$ polynomial, via direct interpolation, $d+1$ queries are necessary and sufficient to exactly fit $f$. When $f$ is not a polynomial, we will require more than $d+1$ queries.

The above two questions have been studied extensively for $p = 2$ and $p = \infty$ \cite{Trefethen:2012,RauhutWard:2012,cohen2017optimal,HamptonDoostan:2015}. 
It is well-known that it is sub-optimal to select $t_1,\ldots,t_n$ either from an evenly spaced grid or uniformly at random: methods that try to recover $\hat{q}$ from uniform samples suffer from Runge's phenomenon \cite{Boyd:2009,CohenDavenportLeviatan:2013}. 
Improved results are obtained by selecting more queries near the \emph{edges} of the interval $[-1,1]$. 
When $p = {\infty}$, the typical approach is to select queries at the Chebyshev nodes  \cite{Trefethen:2012}.
Classical work in approximation theory shows that, with \(d+1\) samples, this approach gives an $O(\log d)$ approximation in the $L_\infty$ norm if either polynomial interpolation or a truncated Chebyshev series is used to construct the approximation $\widehat{q}$ \cite{powell1967maximum,Trefethen:2012}.

For $p = 2$, a recent line of work studies randomly querying according to the non-uniform Chebyshev density, which is the asymptotic density of the Chebyshev nodes:
\begin{definition}[Chebyshev density]\label{def:cheby}For $t\in [-1,1]$ the Chebyshev density at $t$ is $\frac{1}{\pi\sqrt{1-t^2}}$.
\end{definition}
The Chebyshev density is larger for values of $t$ near $1$ and $-1$, and is smallest in the center of the interval, as shown in \figureref{clipped_cheby}. 
Prior work proves that sampling query points independently according to this density and then solving a weighted least squares problem returns a solution to the $L_2$ polynomial regression problem with accuracy $(1+\eps)$ using $O\left(d\log d + \frac{d}{\eps}\right)$ queries \cite{RauhutWard:2012,CohenDavenportLeviatan:2013,cohen2017optimal}.
This bound is optimal up to a $\log d$ factor: Chen and Price achieve an $O\left(\frac{d}{\eps}\right)$ result using an alternative approach \cite{ChenPrice:2019a}, with a matching lower bound.
It has also been shown that Chebyshev density sampling solves the $L_\infty$ problem to a constant approximation factor with $O(d\log d)$ samples,
improving on the $O(\log d)$ approximation guarantee for $d+1$ samples that can be obtained via classic techniques \cite{KaneKP17}.

In contrast to $L_\infty$ and $L_2$, there have been far fewer results on near optimal polynomial regression for general $p$. The case of $L_1$ has been studied in the context of robust polynomial regression \cite{KaneKP17}, but results are only given under the strong assumption that $f$ is \(L_\infty\) close to an unknown polynomial. With effort, and at the cost of a computationally expensive sampling procedure, it is possible to extend existing results on active linear regression to obtain near optimal sample complexity bounds for $p\in [1,2]$ (see \sectionref{prior:work} for details). However, for larger values of $p$, all prior methods either require super-linear sample complexity ($\Omega(d^{2})$ or larger), or yield a constant factor instead of a $(1+\epsilon)$ factor approximation.

\subsection{Our Contributions}
We give the first algorithm for active polynomial approximation that simultaneously achieves sample complexity near-linear in $d$ and a $(1+\eps)$ approximation factor for all $L_p$ norms.
Moreover, our procedure is simple, computationally efficient, and \emph{universal}: we just sample points from the Chebyshev density, regardless of the value of $p$. That is, the same approach that works for the $L_2$ norm surprisingly extends to all $L_p$ norms.
Our main result is:
 \begin{theorem}
\label{thm:main}
For any degree $d$, $p \geq 1$, and accuracy parameter $\eps\in(0,1)$, there is an algorithm\footnote{By an artifact of our analysis, we sample \(n\sim B(n_0, \frac{m}{n_0})\) and run \algorithmref{chebyshev-const:Lp:eps}, where \(n_0 = \frac{d^6 p^{O(p)}}{\eps^{O(p^2)}}\) and \(m = d (\frac{p\log(d)}{\eps})^{O(p)}\). This has an overall sample complexity of \(d (\frac{p\log(d)}{\eps})^{O(p)}\) with very high probability.} that queries $f$ at $n = d \left(\frac{p\log(d)}{\eps}\right)^{O(p)}$ points $t_1, \ldots, t_n$, each selected independently at random according to the Chebyshev density on $[-1,1]$, and outputs a degree $d$ polynomial $\hat{q}(t)$ such that, with probability at least \(0.9\), 
\begin{align*}\|\widehat{q}(t)-f(t)\|_p^p\le(1+\eps)\cdot\min_{q:\deg(q)\le d}\|q(t)-f(t)\|_p^p.\end{align*}
\end{theorem}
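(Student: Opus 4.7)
The plan is to treat polynomial regression as an abstract $L_p$ active-regression problem for an infinite-dimensional ``design operator.'' Fix a basis $\{\phi_0,\ldots,\phi_d\}$ for the degree-$d$ polynomials (Chebyshev polynomials are the convenient choice) and view the map $\Phi:\R^{d+1}\to L_p([-1,1])$ given by $(\Phi c)(t)=\sum_i c_i\phi_i(t)$ as the continuous analog of a tall ``design matrix.'' Polynomial regression is then $\min_c\|\Phi c-f\|_p$, and the goal is to show that i.i.d.\ sampling from the Chebyshev density, followed by a weighted empirical $L_p$ regression, yields a $(1+\eps)$ solution with few samples.

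\textbf{Step 1 (the case $p\le 2$, the heart of the paper).} I would introduce the $L_p$ \emph{Lewis weight function} $w_p(t)$ for $\Phi$, defined as the natural continuous analog of matrix Lewis weights via a fixed-point equation. The crucial technical claim is the pointwise majorization
\[w_p(t)\;\le\;C_p\cdot\frac{1}{\pi\sqrt{1-t^2}}\qquad\text{for all }t\in(-1,1).\]
I would prove this using orthogonal polynomial theory: when $\phi_i$ are the Chebyshev polynomials, they are orthonormal with respect to the Chebyshev density, which forces the $p=2$ leverage function $\sum_i\phi_i(t)^2$ to be dominated by that density (a classical Bernstein/Szeg\H{o}-type bound). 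For general $p\le 2$ I would push this majorant through the Lewis weight fixed-point equation using monotonicity of Lewis weights in $p$ and a H\"older argument, obtaining the bound above with total Lewis mass $O(d)$. Invoking a standard active $L_p$ sampling theorem then says that $\mathrm{poly}(d,1/\eps)$ i.i.d.\ samples drawn from the Chebyshev density yield a $(1\pm\eps)$ $L_p$ subspace embedding of $\Phi$, from which the $(1+\eps)$ regression guarantee follows immediately.

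\textbf{Step 2 (the case $p>2$).} Generic active $L_p$ regression requires $\Omega(d^{p/2})$ samples, so one must exploit polynomial structure to remain near-linear in $d$. The plan is a two-stage reduction: first oversample $n_0=\mathrm{poly}(d,p,1/\eps)$ points from the Chebyshev density to build a discretized instance on a finite set $T_0\subset[-1,1]$ whose empirical $L_p$ norm is a constant-factor approximation to $\|\cdot\|_p$ on every degree-$d$ polynomial; then on this finite problem, invoke the Step~1 pointwise Chebyshev bound as an importance distribution for a reweighted finite $L_p$ Lewis-weight subsampling to push the approximation factor down to $(1+\eps)$. The polynomial-specific input that replaces the $d^{p/2}$ obstruction is a Nikolskii/Markov-type comparison between $L_p$ and $L_\infty$ norms of degree-$d$ polynomials under the Chebyshev measure, which one expects to account for the $(p\log d/\eps)^{O(p)}$ overhead in the final bound. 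Composing the two stages (together with the Poissonization $n\sim B(n_0,m/n_0)$ noted in the footnote) yields the claimed sample complexity $n=d\,(p\log d/\eps)^{O(p)}$.

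\textbf{Main obstacle.} The hardest step is the pointwise Lewis-weight bound of Step~1: matrix Lewis weights are defined by a nonlinear fixed-point condition, and transferring it to an infinite-dimensional polynomial operator while extracting an \emph{explicit} Chebyshev-density majorant requires genuinely using orthogonal polynomial asymptotics rather than a soft duality argument. Once that is in hand, the $p>2$ reduction is a careful concatenation of discretization ideas with Step~1, where the polynomial inequalities control only the oversampling rate and not the linear-in-$d$ dependence.
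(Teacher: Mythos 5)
Your high-level framing matches the paper's: cast polynomial regression as active $L_p$ regression for an infinite "design operator," bound the Lewis weight function of that operator by the Chebyshev density using orthogonal-polynomial tools, and use a polynomial-specific reduction to evade the $d^{p/2}$ barrier for $p>2$. But several of your key steps as described would not go through, and the gaps are substantive rather than cosmetic.

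\paragraph{The Lewis weight bound for $p\in(0,2]$.} You propose to get the pointwise majorization $w_p(t)\le C_p v(t)$ by ``pushing the $p=2$ leverage bound through the fixed-point equation using monotonicity of Lewis weights in $p$ and H\"older.'' There is no general monotonicity of $\ell_p$ Lewis weights in $p$, and the fixed-point definition gives no obvious handle for a direct comparison argument. The paper instead \emph{exactly diagonalizes} the reweighted leverage function $\tau[\cV^{1/2-1/p}\cP](t)$: choosing the ultraspherical (Jacobi, $\alpha=\beta=\tfrac1p-\tfrac12$) polynomials as the change of basis makes $\cV^{1/2-1/p}\cP$ have orthogonal columns, reducing the problem to a uniform bound on $\sum_i (J_i^{(\alpha)}(t))^2$, which is a known asymptotic in the orthogonal polynomial literature. (Your plain Chebyshev polynomials only handle $\alpha=\pm\tfrac12$, not the full range of $p$.) More importantly, you miss that the unclipped Chebyshev density \emph{fails} to be an almost-Lewis-weight function near $\pm1$: $\tau[\cV^{1/2-1/p}\cP](t)\to 0$ there while $v(t)\to\infty$, violating the lower-bound side of the almost-Lewis-weight property of Cohen--Peng that you would need to conclude that the true Lewis weights are dominated by $v$. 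The paper repairs this by introducing the \emph{clipped} Chebyshev measure $w(t)=\min\{c_1(d+1)^2,v(t)\}$ and proving a two-sided bound there (\theoremref{chebyshev:ratio}), incurring a $\log^3 d$ loss from the spike-polynomial argument in the endcaps. Without this step you have no rigorous upper bound on the Lewis weights.

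\paragraph{The $\log m$ issue and the $(1+\eps)$ guarantee.} You ``invoke a standard active $L_p$ sampling theorem'' at the end of Step~1. Standard Lewis-weight sampling carries a $\log m$ dependence, where $m$ is the number of rows; this is infinite for $\cP$. The paper routes around this with a two-stage decomposition \emph{even for $p\le 2$}: uniform sampling bounded by $L_p$ sensitivities (which, via the Markov brothers' inequality, are at most $d^2(p+1)$) first reduces $\cP$ to a finite Vandermonde matrix $\mA$ of $\poly(d,1/\eps)$ rows, and only then is Chebyshev--Lewis subsampling applied. Moreover, a subspace embedding alone gives a \emph{constant-factor} regression guarantee, not $(1+\eps)$. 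The $(1+\eps)$ bound requires the two-pass algorithm (\algorithmref{chebyshev-const:Lp:eps}) and an \emph{affine} embedding theorem for $\mA\vx-\vz$, which is where all the $\eps$-dependence actually gets proven.

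\paragraph{The $p>2$ reduction.} Your proposed tool for $p>2$ -- a Nikolskii/Markov $L_p$-vs-$L_\infty$ comparison -- is not what beats the $d^{p/2}$ barrier. The paper instead uses the tensor/power observation: if $f$ has degree $d$, then $f^r$ has degree $rd$, so $\|\mA\vx\|_p^p = \|\mB\vy\|_q^q$ for the extended Vandermonde matrix $\mB$ and $q=p/r\le 2$. This gives a subspace embedding for $\mA$ in $\ell_p$ from a Lewis-weight subspace embedding for $\mB$ in $\ell_q$. For the $(1+\eps)$ affine-embedding guarantee, the paper must open the Bourgain--Lindenstrauss--Milman compact rounding construction and show that the rounding nets built for $\mB$ in $\ell_q$ can be pulled back (via $r$-th roots, with $r$ \emph{odd} to preserve signs) to nets for $\mA$ in $\ell_p$ of the same cardinality. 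That transfer of compact roundings is the central technical novelty of the $p>2$ analysis, and it is absent from your sketch.
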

In addition to the simple sampling procedure, the algorithm for recovering $\widehat{q}$ is also simple: to achieve a constant factor approximation, we show that it suffices to solve an $\ell_p$ polynomial regression problem to find the best degree $d$ polynomial approximating $f$ at our queried points, reweighted appropriately\footnote{We use $\ell_p$ to denote norms on finite dimensional spaces and $L_p$ to denote norms on infinite dimensional spaces.}. To obtain a $(1+\epsilon)$ factor approximation, we first compute a constant factor approximation $q(t)$, and then run the same regression algorithm on the residual $f(t)-q(t)$.
This type of two-stage approach has been used several times in prior work on active learning for linear regression problems \cite{DasguptaDHKM08,MuscoMWY21}. 

The full pseudocode is included in \algorithmref{chebyshev-const:Lp} and \algorithmref{chebyshev-const:Lp:eps} below.

\begin{algorithm}[!tbh]
\caption{Chebyshev sampling for \(L_p\) polynomial approximation, Constant Factor Approximation}
\label{alg:chebyshev-const:Lp}
\begin{algorithmic}[1]
\Require{Access to function \(f\), parameter \(p \geq 1\), degree \(d\), number of samples \(n\)}
\Ensure{Degree \(d\) polynomial \(q(t)\)}
\State{Sample \(t_1,\ldots,t_n\in[-1,1]\) i.i.d.~from the pdf \(\frac{1}{\pi\sqrt{1-t^2}}\)}
\State{Observe function samples \(b_i \defeq f(t_i)\) for all \(i\in[n]\)}
\State{Build \(\mA\in\bbR^{n\times(d+1)}\) and diagonal \(\mS\in\bbR^{n \times n}\) with \([\mA]_{i,j}=t_i^{j-1}\) and \([\mS]_{ii}=\big(\sqrt{1-t_i^2}\big)^{\nicefrac1p}\)}
\State{Compute \(\vx=\argmin_{\vx\in\bbR^{d+1}} \normof{\mS\mA\vx-\mS\vb}_p\)}
\State{Return \(q(t) = \sum_{i=1}^{d+1} x_i t^{i-1}\)}
\end{algorithmic}
\end{algorithm}

\begin{algorithm}[!tbh]
\caption{Chebyshev sampling for \(L_p\) polynomial approximation, Relative Error Approximation}
\label{alg:chebyshev-const:Lp:eps}
\begin{algorithmic}[1]
\Require{Access to function \(f\), parameter \(p \geq 1\), degree \(d\), number of samples \(n\)}
\Ensure{Degree \(d\) polynomial \(p(t)\)}
\State{Run \algorithmref{chebyshev-const:Lp} on \(f\) with \(\frac n2\) samples to get a polynomial \(q(t)\)}
\State{Run \algorithmref{chebyshev-const:Lp} on \(\hat f(t) \defeq f(t) - q(t)\) with \(\frac n2\) samples to get a polynomial \(\hat q(t)\)}
\State{Return \(p(t) \defeq q(t)+\hat q(t)\)}
\end{algorithmic}
\end{algorithm}

\theoremref{main} has a near-optimal dependence on $d$, since a linear dependence is required. We show that our dependence on $\epsilon$ is near optimal as well, proving the following lower bound:
\begin{theorem}
\label{thm:lb}
Let $p\ge 1$ be a fixed constant. 
Any algorithm that can output a \((1+\eps)\) approximation to \(L_p\) polynomial regression with probability \(\frac23\) must use \(n=\Omega(\frac{1}{\eps^{p-1}})\) queries.
\end{theorem}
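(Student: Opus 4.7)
The plan is to apply Yao's minimax principle with a hard distribution consisting of the zero function \(f_0 \equiv 0\) together with a family of narrow ``bump'' functions whose locations are hidden from the algorithm. Fix \(\delta = c_p \eps^{p-1}\) for a small constant \(c_p>0\) depending on \(p\), and let \(f_\tau(t) = \mathbf{1}[t \in [\tau,\tau+\delta]]\) for \(\tau \in [-1+\delta,1-\delta]\). The hard distribution \(\mathcal{D}\) returns \(f_0\) with probability \(\tfrac{1}{2}\) and returns \(f_\tau\) with \(\tau\) uniform on \([-1+\delta,1-\delta]\) with probability \(\tfrac{1}{2}\). The intuition is that such narrow bumps are missed by \(o(1/\eps^{p-1})\) queries for most \(\tau\), so the algorithm's output on \(f_\tau\) must coincide with its output on \(f_0\); yet these two inputs have incompatible \((1+\eps)\)-optimal polynomial approximations.

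The first main step is to quantify how much a degree-\(d\) polynomial can improve over \(0\) in approximating \(f_\tau\) in \(L_p\). Since constants are degree \(\le d\) for any \(d \ge 0\), it suffices to analyze the best constant \(c^*\), which minimizes \(E(c) = \delta|1-c|^p + (2-\delta)|c|^p\). The first-order condition \(\delta(1-c^*)^{p-1} = (2-\delta)(c^*)^{p-1}\) yields \(c^* \approx (\delta/2)^{1/(p-1)}\); substituting back produces the identity \(E(c^*) = \delta(1-c^*)^{p-1}\), and a Taylor expansion for small \(c^*\) gives
\[
E(c^*) \;=\; \delta\left(1 - (p-1)(\delta/2)^{1/(p-1)} + O\!\left(\delta^{2/(p-1)}\right)\right).
\]
Since \(\|0-f_\tau\|_p^p = \delta\), choosing \(c_p\) so that \((p-1)(\delta/2)^{1/(p-1)} \ge 2\eps\) (equivalently, \(\delta = \Theta(\eps^{p-1})\)) guarantees
\[
\|0-f_\tau\|_p^p \;\ge\; (1+\eps)\min_{\deg(q)\le d}\|q-f_\tau\|_p^p,
\]
so the zero polynomial is not a \((1+\eps)\)-optimal approximation to any \(f_\tau\).

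The second main step is a standard distinguishing argument. Let \(A\) be any deterministic algorithm making \(n\) queries. On \(f_0\) every answer is \(0\), so even an adaptive \(A\) has a fixed query sequence \(t_1,\ldots,t_n\) and fixed output polynomial \(q_0\). Success on \(f_0\), whose unique \((1+\eps)\)-optimum is \(0\), forces \(q_0 \equiv 0\). For random \(\tau\), a union bound gives \(\Pr_\tau[\{t_1,\ldots,t_n\} \cap [\tau,\tau+\delta] = \emptyset] \ge 1 - n\delta\); on this event, \(A\) receives only zeros, outputs \(q_0 \equiv 0\), and fails on \(f_\tau\) by the first step. Hence \(\Pr_{f\sim\mathcal{D}}[A\text{ succeeds}] \le \tfrac12 + \tfrac12 n\delta\), and requiring success probability \(\ge \tfrac23\) forces \(n \ge 1/(3\delta) = \Omega(1/\eps^{p-1})\). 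Yao's minimax principle lifts this bound to randomized algorithms with worst-case success probability \(\tfrac23\).

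The main obstacle is the sharp computation in the first step: producing the clean \(1+\Theta(\delta^{1/(p-1)})\) gap requires careful control of the remainder in the \((p-1)\)-th-power Taylor expansion, and the hidden constants degenerate as \(p \to 1\) (where \(1/(p-1)\) blows up), which is consistent with the theorem's ``fixed constant \(p\)'' assumption. The adaptivity handling and the Yao/union-bound steps are routine once the first step is in hand.
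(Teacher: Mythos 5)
Your proof is correct, and although it shares the same Yao/hidden-bump skeleton as the paper's, it is a genuinely different instantiation with a different key lemma. The paper hides a bump of \emph{height} $\frac{2^{1/p}}{\eps}$ on a fixed narrow interval $\cI$ and compares two opposite-signed functions $f_+ = -f_-$ supported there; because the bump is tall, its $L_p^p$ mass dominates the flat region, so the explicit constant $q = \pm 1$ already beats $q=0$ by a $(1+\Theta(\eps))$ factor with a two-line calculation, and the distinguishing step is a triangle-inequality argument that any $\hat q$ strictly improving on $0$ for $f_+$ must be strictly worse for $f_-$. Your construction uses a unit-height bump against the zero function, so the improvement over $q=0$ is not immediate and requires pinning down the small optimal constant $c^*$ via the first-order condition and the expansion of $E(c^*) = \delta(1-c^*)^{p-1}$; in exchange, the distinguishing step is more direct (correctness on $f_0$ forces $q_0 \equiv 0$, which is not $(1+\eps)$-optimal for $f_\tau$), and adaptivity is handled transparently since on $f_0$ the deterministic query sequence is literally fixed and the union bound over random $\tau$ is immediate. (The paper's ``there exists a missed interval'' step is the same fact phrased dually but stated more tersely.) Both constructions degenerate as $p\to1$, as they must. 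One small quantitative slip: the remainder in your expansion should be $O\big(\delta^{\min\{2/(p-1),\,1+1/(p-1)\}}\big)$ rather than $O(\delta^{2/(p-1)})$ when $1<p<2$, since $c^* = (\delta/2)^{1/(p-1)}(1+O(\delta))$; this is still $o(\delta^{1/(p-1)})$, so the $(1+\eps)$ gap and the final $n = \Omega(1/\eps^{p-1})$ bound are unaffected.
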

It can be shown directly that no algorithm that queries $f$ at a finite number of locations can output better than a 2-factor approximation to the best polynomial approximation in the \(L_\infty\) norm with good probability (see \sectionref{L_infty} or \cite{KaneKP17} for details).
On the other hand, a $(1+\epsilon)$ factor approximation is achievable for $p=2$ with just a $1/\epsilon$ dependence in the sample complexity \cite{ChenPrice:2019a}.
Combined with \theoremref{main}, \theoremref{lb} helps complete the picture on the accuracy achievable for all other $L_p$ norms.

\subsection{Our Approach and Comparison to Existing Techniques}
\label{sec:prior:work}

Like prior work on optimal polynomial approximation in the $L_2$ norm \cite{ChenPrice:2019a,cohen2017optimal}, we prove \theoremref{main} by casting the general $L_p$ problem as an \emph{active linear regression problem} involving an infinitely tall design matrix (i.e., a linear operator).
In the finite active linear regression problem, we are given full access to a design matrix $\bA \in \R^{m\times d}$ and query access to a target vector $\vb\in \R^m$. The goal is to query a small number of entries from $\vb$, and based on their values, to approximately solve $\min_{\vx}\|\bA\vx-\vb\|_p$.

To solve the active regression problem for $p=2$, it is known that it suffices to sample $O(d(\log d)/\epsilon)$ entries of $\vb$ with probabilities proportional to the \emph{$\ell_2$ leverage scores} of the corresponding rows in $\bA$ \cite{Sarlo2006,ChenPrice:2019a}. This result generalizes to linear operators with an infinite number of rows in $\bA$ and entries in $\vb$ \cite{AvronKapralovMusco:2019}. The only difference is that for linear operators, we cannot explicitly compute the $\ell_2$ leverage scores (since there are an infinite number of them). To address this challenge,
prior results on $L_2$ polynomial approximation are based on showing that, for the infinite linear operator underlying polynomial regression, the leverage scores can be tightly upper bounded by the Chebyshev measure \cite{RauhutWard:2012,cohen2017optimal}. Sampling by this measure thus yields an upper bound of $O(d(\log d)/\epsilon)$ samples.

To extend these results to general $L_p$ norms, a natural starting point is to leverage generalizations of the $L_2$ leverage scores to other $L_p$ norms.
There are several possible generalizations in the finite matrix case, including the $\ell_p$ leverage scores~\cite{DasguptaDHKM08,CormodeDW18}, the $\ell_p$ sensitivities~\cite{ClarksonWW19,BravermanDMMUWZ20,meyer2021}, and the $\ell_p$ Lewis weights~\cite{CohenP15,ChenD21,ParulekarPP21,meyer2021}.
Unfortunately, na\"ive applications of these tools to the $L_p$ polynomial approximation problem all lead to sub-optimal guarantees. For example, 
it is possible to upper bound the $L_p$ sensitivities by a scaling of the Chebyshev measure. We could then apply recent work on active regression via sensitivity sampling \cite{MuscoMWY21}. However, that work leads to at best a quadratic dependence on $d$.

\begin{figure*}[b]\begin{center}
{\tabulinesep=1.2mm
\begin{tabu}{|c|c|c|}\hline
Approach & Sample Complexity & Approximation \\\hline\hline
$L_p$ sensitivity sampling \cite{MuscoMWY21} & $d^2\,\left(\frac{\log d}{\eps}\right)^{O(p)}$ & $(1+\epsilon)$\\\hline
$L_p$ sensitivity + Lewis weight sampling \cite{MuscoMWY21} & $d^{\max(1,p/2)}\,\left(\frac{\log d}{\eps}\right)^{O(p)}$ & $(1+\eps)$\\\hline
$L_1$ Lewis weight sampling \cite{meyer2021} & $dp^2 \, (\log dp)^{O(1)}$ & $O(1)$\\\hline\hline
\textbf{Chebyshev measure sampling for all $p \geq 1$ (our results)} & $d\,\left(\frac{\log d}{\eps^p}\right)^{O(p)}$ & $(1+\eps)$\\\hline
\end{tabu}
}
\end{center}
\caption{Summary of results for $L_p$ polynomial regression. Our result is the first to obtain both an optimal linear dependence on $d$ for all $p$ as well as a $(1+\epsilon)$ factor approximation. }
\label{fig:results}
\end{figure*}

Alternatively, we might hope to take advantage of recent work on active regression via sampling by $\ell_p$ Lewis weights -- a conceptually different generalization of the $\ell_2$ leverage scores than sensitivities \cite{ChenD21,ParulekarPP21,MuscoMWY21}.
However, there are a few major challenges.
First, we cannot explicitly compute the Lewis weights for the infinite dimensional polynomial operator, and it is much harder to obtain closed form bounds on these weights than it is for the $L_2$ leverage scores and $L_p$ sensitivities.
Second, for regression problems with $d$ features, like degree-$d$ polynomial regression, Lewis weight sampling requires $\tilde{O}(d^{\max(1,p/2)})$ rows \cite{CohenP15,MuscoMWY21}.
So, the approach na\"ively provides linear sample complexity results only for $p\in [1,2]$.\footnote{For $p\in [1,2]$, one option would be to first carefully discretize the regression operator before computing Lewis weights, e.g.,  using $L_p$ sensitivity sampling (the ``first stage'' in \sectionref{two-stage}). While less technically involved than the $p \geq 2$ case, analyzing this approach still requires proving a bound on the $L_p$ sensitivities of the polynomial operator. Moreover, this stage gives sub-optimal dimensionality reduction, so it would be necessary to compute the Lewis Weights of a \(\tilde O(\frac{d^5p^4}{\eps^{2+2p}}) \times d\) matrix, {using significant space and time}, and resulting in a sampling procedure that is not universally good for all $p$.} For polynomial regression specifically, it is possible to use a technique from \cite{meyer2021} to reduce from the general $p$ case to $p\in [1,2]$, which leads to a $dp$ dependence, as in our \theoremref{main}. However, this reduction yields at best a constant factor approximation. The limitations of existing techniques are summarized in \figureref{results}.

To prove \theoremref{main}, we circumvent the above limitations for $L_p$ Lewis weight sampling.
First, for $p\leq 2$, we provide an explicit bound on Lewis weight function of the infinite linear operator underlying polynomial regression, showing that the weight function is closely upper bounded by the Chebyshev measure.
This almost immediately yields our results for $p\in [1,2]$. 
As discussed in \sectionref{tech_overview}, doing so requires a significantly different approach than existing work on bounding leverage scores of the operator.
To the best of our knowledge, our bounds are the first on the Lewis weights of any natural infinite dimensional regression problem, so we hope they will be helpful in related settings where leverage scores have proven powerful.
Examples include active learning for sparse Fourier functions, for bandlimited functions, and for kernel methods in machine learning \cite{ChenKanePrice:2016,ChenPrice:2019, AvronKapralovMusco:2019,ErdelyiMuscoMusco:2020,meyer2020statistical}.

 Second, for $p> 2$, we need to obtain tighter bounds for Lewis weight sampling than available from black-box results that depend on $d^{p/2}$. To do so, we provide a new analysis tailored to the polynomial operator. We show that for \emph{any} $p$, it actually suffices to collect $d \polylog(d)$ samples according to the Lewis weights for some other $p'$ chosen in $[\frac23,2]$. Our analysis requires opening up a net analysis used in \cite{bourgain1989approximation} and \cite{MuscoMWY21} to analyze Lewis weight sampling for general linear operators. We leverage the fact that the $L_{p'}$ Lewis weights are close to the $L_p$ sensitivities -- both are approximated by the Chebyshev measure.

\section{Technical Overview}
\label{sec:tech_overview}
The algorithm that achieves \theoremref{main} is  the same for all $L_p$ norms (sample points via the Chebyshev measure and then solve two weighted $\ell_p$ regression problems -- see \algorithmref{chebyshev-const:Lp} and \algorithmref{chebyshev-const:Lp:eps}). Our analysis differs for \(p\in[1,2]\) and for \(p > 2\).
We first describe the \(p\in[1,2]\) analysis, which is more direct.

\begin{figure}
	\centering
	\begin{minipage}{0.48\columnwidth}
    	\centering
		\includegraphics[width=0.7\columnwidth]{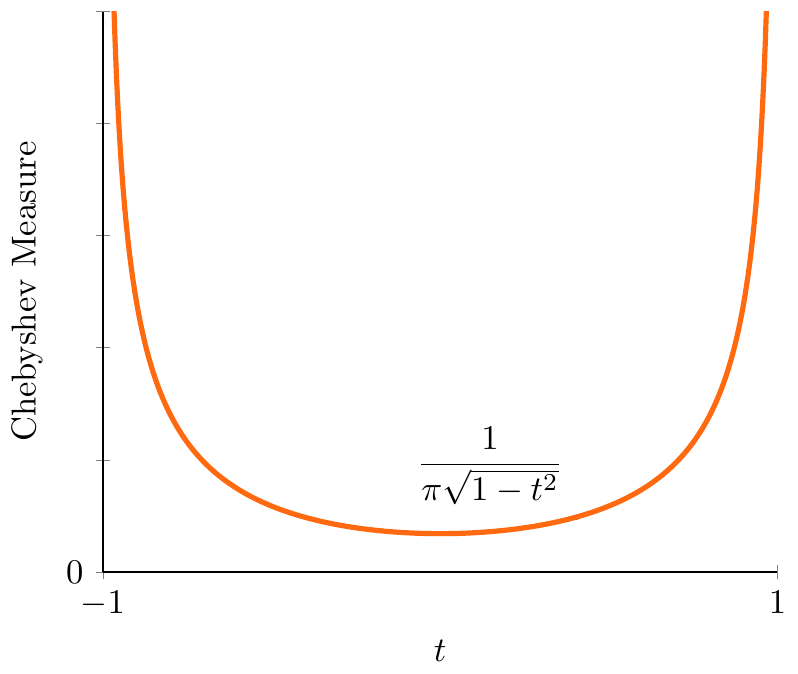}
		\caption{Plot of the Chebyshev Measure on \([-1,1]\). Sampling from the Chebyshev measure draws fewer points from the middle of \([-1,1]\), and more points from the ends of \([-1,1]\)}
		\label{fig:clipped_cheby}
	\end{minipage}\hfill
	\begin{minipage}{0.48\columnwidth}
    	\centering
		\includegraphics[width=0.7\columnwidth]{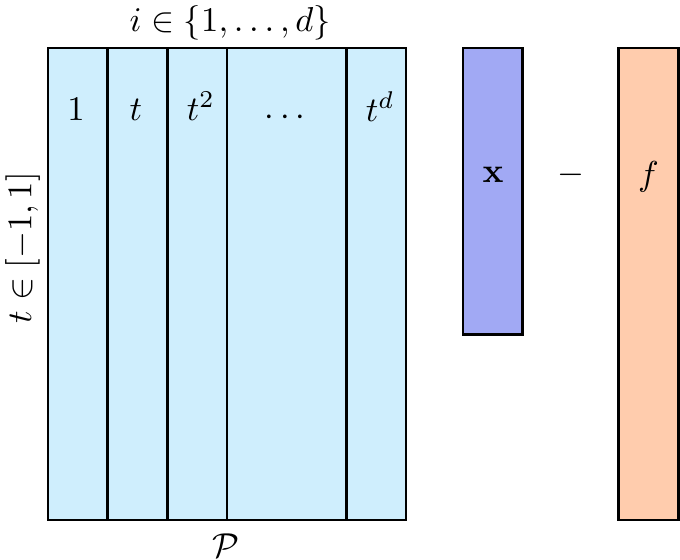}
		\caption{Visualization of the polynomial operator. \cP's column span is the set of degree \(d\) polynomials. We can approximately minimize $\|\cP\vx - \vf\|_p$ by leveraging row-sampling methods for finite matrices. 
}
		\label{fig:operator-as-matrix}
	\end{minipage}
\end{figure}

As discussed, we solve the active polynomial approximation problem by casting it as an \(L_p\) regression problem with an infinitely tall matrix. Concretely,
let \(\cP:\bbR^{d+1} \rightarrow L_2([-1,1])\) be the \emph{polynomial operator}, which maps a coefficient vector \(\vx\in\bbR^{d+1}\) to its corresponding degree \(d\) polynomial: \([\cP\vx](t) \defeq \sum_{k=0}^d \vx_k t^k\) for \(t\in[-1,1]\).
Our original regression problem is equivalent to finding a vector \(\hat\vx\) such that
\[
    \normof{\cP\hat\vx-f}_p^p \leq (1+\eps) \min_{\vx\in\bbR^{d+1}} \normof{\cP\vx-f}_p^p
\]
\figureref{operator-as-matrix} visualizes this operator as matrix with infinite rows.
The \(k^{th}\) column of \cP is the polynomial \(t \mapsto t^k\).
Each row of \cP, indexed by some \(t\in[-1,1]\), is the vector \(\bmat{1 & t & t^2 & \cdots & t^d}\).

As discussed, for $p=2$, an effective approach to solving linear regression problems using a small number of queries of the target function is via leverage score sampling. Specifically, entries of $f$ are sampled independently at random with probability proportional to the leverage score of the corresponding row in $\cP$. For a finite matrix \(\mA\in\bbR^{m \times d}\), the leverage score of the \(i^{th}\) row of \mA is
\[
    \tau[\mA](i) \defeq \max_{\vx\in\bbR^d, \normof\vx_2>0} \frac{([\mA\vx](i))^2}{\normof{\mA\vx}_2^2}.
\]
That is, \(\tau[\mA](i)\) is the maximum contribution that the \(i^{th}\) entry of a vector in \mA's range can make to its \(\ell_2\) norm. This definition naturally extends to linear operators \cite{AvronKapralovMusco:2019,ErdelyiMuscoMusco:2020}, and we can define
\[
    \tau[\cP](t) \defeq \max_{\vx\in\bbR^{d+1},\normof\vx_2>0} \frac{([\cP\vx](t))^2}{\normof{\cP\vx}_2^2}.
\]
For finite matrices the sum of leverage scores is always equal to the rank of $\mA$, and similarly we have that $\int_{-1}^1 \tau[\cP](t) dt = d+1$.
Recalling the particular definition of \cP, we can write \(\tau[\cP](t) = \max_{\deg(q)\leq d} \frac{(q(t))^2}{\normof{q}_2^2}\).
It turns out that this maximum is well-studied in the orthogonal polynomial literature, as it is equal to the reciprocal of the \emph{Christoffel function} \(\lambda_d(t) \defeq \min_{\deg(q)\leq d} \frac{\normof{q}_2^2}{(q(t))^2}\). While difficult to compute exactly, it can be shown that $\lambda_d(t) \geq \frac{c\sqrt{1-t^2}}{d}$ \cite{Nevai:1986}.
This directly implies that, with appropriate scaling, the Chebyshev density upper bounds the leverage function.
That is, we have \(\tau[\cP](t) \leq C \, v(t)\), where \(v(t) \defeq \frac{d+1}{\pi\sqrt{1-t^2}}\) is an appropriate scaling of the Chebyshev measure.
Moreover, since \(\int_{-1}^1 v(t) dt = d+1\), we know that this upper bound is tight up to constants.
Therefore, sampling from the Chebyshev density can be used to solve the \(L_2\) polynomial approximation problem with $O(d\log d + d/\epsilon)$ samples -- only a constant factor more than would be required if sampling by the true leverage scores, which integrate to $d+1$ \cite{ChenPrice:2019}.

\subsection{Bounding Lewis Weights of the Polynomial Operator}
\label{sec:overview-bounding-lewis-weights}

It has recently been shown that active regression results for finite matrices under general $\ell_p$ norms can be obtained by sampling by the Lewis weights, a generalization of the $\ell_2$ leverage scores \cite{ChenD21,MuscoMWY21}.
For a matrix \(\mA\in\bbR^{m \times d}\), the \(\ell_p\) Lewis weights for \mA are the unique numbers \(w_1,\cdots,w_m\) such that
\[
    \tau_i(\mW^{\frac12-\frac1p}\mA) = w_i
    \hspace{1cm}
    \text{for all ~} i\in[m],
\]
where \(\mW\in\bbR^{m \times m}\) is the diagonal matrix with \(\mW_{ii} = w_i\).
As for leverage scores, there are algorithms that compute the Lewis weights for finite matrices.
But since we want to apply the weights to sample from infinite operators, it is necessary to obtain closed form bounds.
It is much less clear how to do so: unlike the leverage scores, the Lewis weights are defined in a circular fashion, instead of as the solution of a natural optimization problem.

To handle this challenge, we turn to the definition of \(C\)-almost Lewis weights for matrices given in \cite{CohenP15}.
Specifically, we say that \(w_1,\cdots,w_m\) are \emph{\(C\)-almost} Lewis weights for \mA if
\begin{align}
    \frac1C w_i \leq \tau[\mW^{\frac12-\frac1p}\mA](i) \leq C w_i
    \hspace{1cm}
    \text{for all ~} i\in[m]
    \label{eq:almost-lewis-defn-matrix}
\end{align}
where \mW is again the matrix with \(w_i\) on its diagonal.
\cite{CohenP15} prove that, for $0 < p \leq 2$, after scaling by a factor of $C$, the \(C\)-almost Lewis weights are upper bounds for the true Lewis weights of a matrix.

This suggests a natural approach to bound the Lewis weights of a matrix: exhibit some weights \(w_1,\cdots,w_m\) and verify that the inequality above holds. In the case of the infinite operator $\cP$, our goal is to find a function \(w(t):[-1,1]\rightarrow\bbR\) such that
\begin{align}
    \frac1C w(t) \leq \tau[\cW^{\frac12-\frac1p}\cP](t) \leq C w(t)
    \hspace{1cm}
    \text{for all ~} t\in[-1,1]
    \label{eq:almost-lewis-defn-operator}
\end{align}
where \([\cW f](t) \defeq w(t) f(t)\) is the linear operator equivalent to a diagonal matrix.

As a first possible candidate for the Lewis weight function, we consider the Chebyshev density \(v(t)\) itself. To do so, we have to bound the leverage function \(\tau[\cV^{\frac12-\frac1p}\cP](t)\), where \([\cW f](t) \defeq w(t) f(t)\).
We establish a surprisingly direct bound based on the fact that for each $p$, the weighting \(\cV^{\frac12-\frac1p}\) aligns with the orthogonalization measure of certain Jacobi orthogonal polynomials. Specifically, we prove:
\begin{theorem}
Let \(J_i^{(\alpha,\beta)}(t)\) denote the degree \(i\) Jacobi Polynomial with parameters \(\alpha\) and \(\beta\).
Then, letting \(\alpha=\beta=\frac1p-\frac12\), we have
\[
    \tau[\cV^{\frac12-\frac1p}\cP](t) = \frac{1}{(1-t^2)^{\frac12-\frac1p}} \sum_{i=0}^d (J_i^{(\alpha,\beta)} (t))^2
\]
\end{theorem}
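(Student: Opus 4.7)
The plan is to unfold the definition of the leverage function, reduce it to a weighted polynomial extremal problem, and then recognize the resulting weighted inner product as precisely the one diagonalized by the Jacobi polynomials $J_i^{(\alpha,\beta)}$ with $\alpha=\beta=\tfrac{1}{p}-\tfrac{1}{2}$.

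First, I would expand the definition. Since $\cV$ is multiplication by $v(t)=\frac{d+1}{\pi\sqrt{1-t^2}}$ and $\cP\vx$ ranges over all polynomials of degree at most $d$ as $\vx$ ranges over $\bbR^{d+1}$, the leverage function satisfies
\[
\tau[\cV^{\frac12-\frac1p}\cP](t) \;=\; \max_{\deg(q)\le d}\frac{\bigl(v(t)^{\frac12-\frac1p}q(t)\bigr)^2}{\int_{-1}^1 \bigl(v(s)^{\frac12-\frac1p}q(s)\bigr)^2\,ds} \;=\; v(t)^{1-\frac2p}\,\max_{\deg(q)\le d}\frac{q(t)^2}{\int_{-1}^{1} v(s)^{1-\frac2p}\,q(s)^2\,ds}.
\]
A direct computation gives $v(s)^{1-\frac2p}=C_{d,p}(1-s^2)^{\frac1p-\frac12}$, where $C_{d,p}=\bigl(\tfrac{d+1}{\pi}\bigr)^{1-\frac2p}$; the same constant appears as a prefactor on the outside via $v(t)^{1-\frac2p}$, and cancels the $C_{d,p}$ inside the integral. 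So the expression reduces to
\[
\tau[\cV^{\frac12-\frac1p}\cP](t) \;=\; (1-t^2)^{\frac1p-\frac12}\,\max_{\deg(q)\le d}\frac{q(t)^2}{\int_{-1}^{1}(1-s^2)^{\frac1p-\frac12}\,q(s)^2\,ds}.
\]

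Next, I would identify the weight $(1-s^2)^{\frac1p-\frac12}=(1-s)^{\alpha}(1+s)^{\beta}$ with $\alpha=\beta=\tfrac1p-\tfrac12$ as the Jacobi weight. For $p\ge 1$ we have $\alpha,\beta>-1$, so the weight is integrable and the (orthonormalized) Jacobi polynomials $\{J_i^{(\alpha,\beta)}\}_{i\ge 0}$ form an orthonormal basis of $L_2([-1,1];(1-s^2)^{\frac1p-\frac12}\,ds)$, with $\{J_0^{(\alpha,\beta)},\ldots,J_d^{(\alpha,\beta)}\}$ an orthonormal basis for the subspace of polynomials of degree at most $d$.

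Finally, I would apply the standard reproducing-kernel identity: for any finite-dimensional inner product space with orthonormal basis $\phi_0,\ldots,\phi_d$, writing $q=\sum_i c_i\phi_i$ gives $\|q\|_w^2=\sum_i c_i^2$ and $q(t)=\sum_i c_i\phi_i(t)$, so Cauchy--Schwarz yields
\[
\max_{q\ne 0}\frac{q(t)^2}{\|q\|_w^2} \;=\; \max_{c\ne 0}\frac{\bigl(\sum_i c_i\phi_i(t)\bigr)^2}{\sum_i c_i^2} \;=\; \sum_{i=0}^{d}\phi_i(t)^2,
\]
with equality attained by $c_i\propto\phi_i(t)$. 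Taking $\phi_i=J_i^{(\alpha,\beta)}$ and combining with the previous display yields exactly
\[
\tau[\cV^{\frac12-\frac1p}\cP](t) \;=\; \frac{1}{(1-t^2)^{\frac12-\frac1p}}\sum_{i=0}^{d}\bigl(J_i^{(\alpha,\beta)}(t)\bigr)^2.
\]
The main obstacle, and really the only nontrivial bookkeeping, is normalization: the theorem presumes $J_i^{(\alpha,\beta)}$ is the \emph{orthonormal} Jacobi polynomial for the weight $(1-s)^\alpha(1+s)^\beta$, so that the reproducing-kernel step is clean; if one instead used the classical Szeg\H{o} normalization, one would pick up the standard squared-norm factors $h_i^{(\alpha,\beta)}$ in the denominator of each summand. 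Matching the Chebyshev weight to the Jacobi weight with $\alpha=\beta=\tfrac1p-\tfrac12$ is the conceptual heart of the argument; everything else is algebra and Cauchy--Schwarz.
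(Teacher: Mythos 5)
Your proof is correct and takes essentially the same route as the paper: both arguments amount to pulling the $v^{1-2/p}$ scaling out, recognizing the leftover weight $(1-s^2)^{1/p-1/2}$ as the Jacobi/Gegenbauer weight, and then invoking orthonormality to reduce the extremal quantity $\max_q q(t)^2/\|q\|_w^2$ to the squared-row-norm $\sum_i J_i^{(\alpha,\beta)}(t)^2$. The paper phrases this as a change-of-basis matrix $\mU$ sending $\cP$ into the Jacobi basis followed by the vector identity $\max_{\|\vx\|_2=1}(\va^\top\vx)^2=\|\va\|_2^2$, whereas you phrase it as a reproducing-kernel / Christoffel-function identity, but these are the same Cauchy--Schwarz step; your explicit remark about orthonormal versus Szeg\H{o}-normalized Jacobi polynomials is also exactly the normalization the paper adopts in its definition of $J_i^{(\alpha,\beta)}$.
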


\pgfplotsset{
    compat=newest,
    /pgfplots/legend image code/.code={        \draw[mark repeat=2,mark phase=2,#1] 
            plot coordinates {
              (0cm,0cm) 
              (0.2cm,0cm)
              (0.4cm,0cm)        };
    },
}

\contourlength{0.2em}

\begin{figure}
	\centering
	\begin{minipage}{0.49\columnwidth}
    	\centering
    	\begin{tikzpicture}
\begin{axis}[
	title={},
	xlabel={\(t\)},
	ylabel={},
	grid=none,
	xmin=-1,
	xmax=1,
	ymin=-1,
	ymax=20,
	xtick = {-1,1,0.8, -0.8},
	xticklabels = {-1,1,\(({\scriptstyle 1-\frac1{d^2}})\)~~~~, \(~~~~~{\scriptstyle-}({\scriptstyle 1-\frac1{d^2}})\)},
	ytick = {-1, 4, 9, 14, 19},
	yticklabels = {{}},
	xtick pos=left,
	ytick pos=left,
	axis lines=left,
	axis line style=-,
	legend pos = north east,
	every axis plot/.append style={thick},
	width=0.85\columnwidth
]

\addplot [dashed, gray, mark=none,mark options={scale=2,solid}, style={thick}] coordinates {(0.8,-1) (0.8,20)};
\addplot [dashed, gray, mark=none,mark options={scale=2,solid}, style={thick}] coordinates {(-0.8,-1) (-0.8,20)};
\addplot [cb_orange] table [mark=none, x=times, y=v, col sep=comma]{levs_data.csv}; \label{legend:cheby}
\addplot [mark=square*, cb_orange, mark size=0.55pt] coordinates {(0.99,15.795072697882215)} node [black, left] {\contour{white}{\(v(t)\)}};
\addplot [cb_blue] table [mark=none, x=times, y=tau_v, col sep=comma]{levs_data.csv}; \label{legend:cheby_levs}
\addplot [mark=square*, cb_blue, mark size=0.55pt] coordinates {(0.75,3.9711731447822496)} node [black, above left] {\(\tau[\mathcal{V}^{\nicefrac12-\nicefrac1p}\mathcal{P}](t)\)};

\end{axis}

\end{tikzpicture}
		\caption{Plot of the scaled Chebyshev Measure (\ref{legend:cheby}) and corresponding reweighted leverage function $\tau[\cV^{\frac12-\frac1p}\cP](t)$ (\ref{legend:cheby_levs}) on \([-1,1]\) for $d=6$, $p=1$.
		For most values of \(t\) both curves are close, but for \(\abs{t}>1-\frac1{d^2}\) the curves diverge.
		This means that the Chebyshev density itself does not directly approximate the $L_p$ Lewis weights, motivating our study of a clipped version of the measure, denoted \(w(t)\).
		}
		\label{fig:cheby_levs_scores}
	\end{minipage}\hfill
	\begin{minipage}{0.49\columnwidth}
    	\centering
    	\begin{tikzpicture}
\begin{axis}[
	title={},
	xlabel={\(t\)},
	ylabel={},
	grid=none,
	xmin=0.5,
	xmax=1,
	ymin=-1,
	ymax=20,
	xtick = {-1,1,0.8, -0.8,0.5},
	xticklabels = {-1,1,\(({\scriptstyle 1-\frac1{d^2}})\)~~~~, \(~~~~~{\scriptstyle-}({\scriptstyle 1-\frac1{d^2}})\),0.5},
	ytick = {-1, 4, 9, 14, 19},
	yticklabels = {{}},
	xtick pos=left,
	ytick pos=left,
	axis lines=left,
	axis line style=-,
	legend pos = north east,
	every axis plot/.append style={thick},
	width=0.85\columnwidth
]

\addplot [dashed, gray, mark=none,mark options={scale=2,solid}, style={thick}] coordinates {(0.8,-1) (0.8,20)};
\addplot [dashed, gray, mark=none,mark options={scale=2,solid}, style={thick}] coordinates {(-0.8,-1) (-0.8,20)};
\addplot [cb_orange] table [mark=none, x=times, y=w, col sep=comma]{levs_data.csv}; \label{legend:clipped}
\addplot [mark=square*, cb_orange, mark size=0.75pt] coordinates {(0.75,3.3686751947823392)} node [black, below left] {\contour{white}{\(w(t)\)}}; 
\addplot [cb_blue] table [mark=none, x=times, y=tau_w, col sep=comma]{levs_data.csv}; \label{legend:clipped_levs}
\addplot [mark=square*, cb_blue, mark size=0.75pt] coordinates {(0.75,3.944238316263322)} node [black, above left] {\(\tau[\mathcal{W}^{\nicefrac12-\nicefrac1p}\mathcal{P}](t)\)};
\addplot [cb_red] table [mark=none, x=times, y=bump_poly, col sep=comma]{levs_data.csv}; \label{legend:spike_poly}
\addplot [mark=square*, cb_red, mark size=0.75pt] coordinates {(0.9,3.0247702891516717)} node [black, below right] {\(q(t)\)};

\addplot [cb_green] coordinates {(0.8, 16) (1, 16)};
\addplot [cb_green] coordinates {(-0.8, 16) (-1, 16)}; \label{legend:upper_bound}
\addplot [mark=square*, cb_green, mark size=0.75pt] coordinates {(0.8, 16)} node [black, above right] {\(O(d^2)\)};
\addplot [mark=square*, cb_green, mark size=0.75pt] coordinates {(-0.8, 16)};

\end{axis}

\end{tikzpicture}
		\caption{
		Plot of the clipped Chebyshev Measure (\ref{legend:clipped}) and corresponding reweighted leverage function (\ref{legend:clipped_levs}) for \(t\in[0.5,1]\) and \(d=6\), $p=1$.
		As proven in \theoremref{chebyshev:ratio}, these functions are within a constant factor for all \(t\), so the clipped measure approximates the $L_p$ Lewis weights for $p \leq 2$. 
		We also visualize the ``spike'' polynomial $q(t)$ (\ref{legend:spike_poly}) and the upper bound (\ref{legend:upper_bound}) used in the proof of \theoremref{chebyshev:ratio}. 
		}
		\label{fig:clipped_lev_scores}
	\end{minipage}
\end{figure}

That is, we can \emph{exactly} characterize this Chebyshev-reweighted leverage function in terms of Jacobi polynomials. 
Further, because Jacobi polynomials are well-studied in the orthogonal polynomial literature, we can appeal to prior work on uniformly upper bounding these polynomials to bound the above sum of squares.
Overall, in \sectionref{lp-regression} we prove:
\begin{align}
    \frac1C v(t) \leq \tau[\cV^{\frac12-\frac1p}\cP](t) \leq C v(t)
    \hspace{1cm}
    \text{for all ~} \abs{t} \leq 1-\frac{1}{d^2}
    \label{eq:fixpoint-guarantee}
\end{align}
This is very close to what we need to show, but unfortunately the almost-Lewis weight property does not hold for large \(|t|>1-\frac1{d^2}\).
\figureref{cheby_levs_scores} shows what goes wrong: the Chebyshev density \(v(t)\) diverges to \(+\infty\) while the weighted leverage function $\tau[\cV^{\frac12-\frac1p}\cP](t)$ remains bounded. To resolve this issue, we adjust our proposed Lewis weight function, and instead consider \(w(t) \defeq \max\{c_1 (d+1)^2, v(t)\}\), which clips the Chebyshev density so that it cannot diverge to \(+\infty\).
We can then show the following core theorem for small $p$:
\begin{theorem}
\label{thm:chebyshev:ratio}
There are fixed constants $c_1,c_2,c_3$ such that, letting $w(t)=\min\left\{c_1(d+1)^2,v(t)\right\}$ be the clipped Chebyshev measure on $[-1,1]$ and letting $\cW$ be the corresponding diagonal operator with $[\cW f](t) = w(t) \cdot f(t)$, for any $p \in [\frac23,2]$ and $t \in [-1,1]$,
\[
    \frac{c_2}{\log^3 d} w(t) \leq \tau[\cW^{1/2-1/p}\cP](t) \leq c_3 w(t)
\]
\end{theorem}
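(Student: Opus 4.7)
My plan is to partition $[-1,1]$ into the \emph{interior} region $I = \{t : |t| \le 1 - 1/d^2\}$, where by choice of $c_1$ we have $w(t) = v(t)$, and the \emph{edge} region $E = \{t : |t| > 1 - 1/d^2\}$, where $w(t) = c_1(d+1)^2$. The key algebraic fact I exploit throughout is that for $p \in [2/3, 2]$ the exponent $1 - 2/p$ is non-positive, so pointwise $w^{1-2/p} \ge v^{1-2/p}$ and $w^{1-2/p} \ge (c_1(d+1)^2)^{1-2/p}$, since $w \le v$ and $w \le c_1(d+1)^2$.

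\textbf{Upper bound.} From the two pointwise inequalities above, for every degree-$d$ polynomial $q$,
\[
\int_{-1}^1 w(s)^{1-2/p} q(s)^2\, ds \;\ge\; \max\!\left\{\int_{-1}^1 v(s)^{1-2/p} q(s)^2\, ds,\ (c_1(d+1)^2)^{1-2/p} \int_{-1}^1 q(s)^2\, ds\right\}.
\]
For $t \in I$, using $w(t)=v(t)$ and the first lower bound yields $\tau[\cW^{1/2-1/p}\cP](t) \le \tau[\cV^{1/2-1/p}\cP](t) \le C\, v(t) = C\, w(t)$ by~(\ref{eq:fixpoint-guarantee}). For $t \in E$, using $w(t) = c_1(d+1)^2$ and the second lower bound, the weight factors cancel and the ratio collapses to the unweighted leverage function, which is bounded by the Christoffel estimate $\tau[\cP](t) \le C\, d/\sqrt{1-t^2} = O(d^2) = O(w(t))$.

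\textbf{Lower bound.} I will exhibit near-extremal polynomials in each region. On $E$, take a ``spike polynomial'' concentrated near $t$ with width $\sim 1/d^2$ (for example, a Chebyshev polynomial of degree $d$ shifted and scaled so its peak sits at $t$), as visualized in~\figureref{clipped_lev_scores}. Such a $q^{*}$ satisfies $q^{*}(t)^2/\int (q^{*})^2 \gtrsim d^2$, and because its support lives in $E$ where $w(s) = c_1(d+1)^2$ is constant, the reweighting factors cancel and $\tau[\cW^{1/2-1/p}\cP](t) \gtrsim d^2 \asymp w(t)$. On $I$, I plug the extremizer $q^{*}$ of $\tau[\cV^{1/2-1/p}\cP](t)$ (given explicitly by the Jacobi polynomial formula already derived) into the ratio for $\tau[\cW^{1/2-1/p}\cP](t)$; since $w(t) = v(t)$ on $I$, the numerator is preserved, and it remains to bound
\[
\int_E \bigl(w(s)^{1-2/p} - v(s)^{1-2/p}\bigr)\, q^{*}(s)^2\, ds
\]
relative to $\int v^{1-2/p} (q^{*})^2$. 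Appealing to standard pointwise Szeg\H{o}/Nevai estimates on Jacobi polynomials with parameters $\alpha=\beta=1/p-1/2$, the $q^{*}$ mass contained in the length-$O(1/d^2)$ set $E$ can be controlled, and tracking the exponent $1-2/p \in [-1,0]$ should cost at most $\log^{O(1)} d$ factors, matching the $\log^3 d$ slack in the theorem.

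The main obstacle is this interior lower bound: $q^{*}$ does not vanish on $E$, so its weighted mass under $w^{1-2/p}$ strictly exceeds that under $v^{1-2/p}$, and quantifying this gap sharply requires uniform pointwise control of $|J_i^{(\alpha,\beta)}(s)|$ for all $i \le d$ and $s \in E$. This lies in the ``hard edge'' regime where Jacobi polynomial bounds are most delicate, and it is the source of the polylogarithmic factor in the theorem.
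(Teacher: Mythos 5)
Your decomposition into interior and edge regions, and your upper bound arguments (pointwise comparison of the reweightings, reduction to the unweighted Christoffel bound on the edge), are correct and match the paper's structure. The problem is in the lower bound, where you have the difficulty in exactly the wrong place.

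\textbf{Edge lower bound.} Your proposed ``spike'' --- a Chebyshev polynomial shifted to peak at $t$ --- does not achieve the bound you claim even in the unweighted case: $T_d$ oscillates with amplitude $1$ across all of $[-1,1]$, so $\int T_d^2 = \Theta(1)$ and $T_d(1)^2/\|T_d\|_2^2 = \Theta(1)$, not $\Theta(d^2)$. More fundamentally, the assertion that a degree-$d$ polynomial can have ``its support'' in $E$ is false --- polynomials have no compact support --- and no degree-$d$ polynomial of unit peak can concentrate \emph{all} of its $L_2$ mass inside an interval of width $\Theta(1/d^2)$. The paper's actual spike (a low-degree approximation to $t^q$ with $q = \Theta(d^2/\log d)$, via Sachdeva--Vishnoi) concentrates its mass only in an interval of width $\Theta(\log d/d^2)$, which strictly contains $E$. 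The $\log^3 d$ factor in the theorem comes entirely from accounting for the $L_2$ mass that spills into the annulus $1 - O(\log^2 d/d^2) \le |s| \le 1 - O(1/d^2)$, where $w$ is no longer constant and $1/w$ is up to a factor $\log d$ larger than at the boundary. Your claim of a clean $\Omega(d^2)$ edge lower bound with ``reweighting factors cancel'' skips over precisely the step that produces the polylogarithmic loss.

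\textbf{Interior lower bound.} Conversely, you overestimate the difficulty here. Your plan to bound $\int_E (w^{1-2/p}-v^{1-2/p})(q^*)^2$ does not require uniform edge-regime control of Jacobi polynomials: since $w^{1-2/p} - v^{1-2/p} \le w^{1-2/p} = (c_1(d+1)^2)^{1-2/p}$ on $E$, the crude bound $(q^*(s))^2 \le \frac{(d+1)^2}{2}\|q^*\|_2^2$ from the Markov-brothers / Christoffel estimate combined with $|E| = O(1/d^2)$ already yields a \emph{constant}-factor perturbation, not a polylog one. This is exactly the argument in the paper's Lemma on operator norm approximation, and it shows $\|\cW^{1/2-1/p}\cP\vx\|_2^2 \approx_{O(1)} \|\cV^{1/2-1/p}\cP\vx\|_2^2$ for \emph{all} $\vx$ simultaneously, so the interior region costs nothing beyond a constant. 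Redirecting your effort toward quantifying the spill-over of the endcap spike polynomial --- rather than toward Jacobi edge asymptotics --- is what closes the proof.
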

\theoremref{chebyshev:ratio} shows that the clipped Chebyshev density gives a set of \(O(\log^3 d)\)-almost Lewis weights for the polynomial operator. So we can upper bound the true Lewis weights by the clipped measure, and only gain a \(\polylog(d)\) factor in the final sample complexity in comparison to exact Lewis weight sampling. Moreover, we can obtain the same bound via sampling by the Chebyshev measure itself, which tightly upper bounds the clipped measure after scaling (i.e., it has the same integral on $[-1,1]$ up to a constant factor).
We also reiterate that when \(p\in[1,2]\) we will directly appeal to this theorem for this value of \(p\), but when \(p>2\) we will appeal to this theorem for a different \(p'\in[\frac23,2]\), which is why we prove \theoremref{chebyshev:ratio} for some values of \(p<1\).

We prove \theoremref{chebyshev:ratio} by separately considering the case when \(\abs{t} \leq 1-\frac1{d^2}\) and when \(\abs{t} > 1-\frac1{d^2}\). The first case is easier: we show that for such values of $t$, the reweighted leverage function corresponding to the clipped Chebyshev measure -- i.e.  \(\tau[\cW^{1/2-1/p}\cP](t)\) -- very closely approximates the reweighted leverage function corresponding to the unclipped measure. We can then directly appeal to \equationref{fixpoint-guarantee}. The second case is more challenging: when \(\abs{t} > 1-\frac1{d^2}\), the density at $t$ is different in the clipped and unclipped measure, so the reweighted leverage scores differ significantly. To deal with this hard case, we separately prove an upper and lower bound as follows:
\begin{description}
        \item[Upper Bound:] Because \(w(t)\) itself is bounded, we can bound \(\tau[\cW^{1/2-1/p}\cP](t) \leq \tau[\cP](t)\), and we use the Markov Brothers' inequality to bound \(\tau[\cP](t) \leq O(d^2)\).
        \item[Lower Bound:] Because \(\tau[\cW^{1/2-1/p}\cP](t)\) is a maximization over degree \(d\) polynomials, we can prove a lower bound by exhibiting a specific ``spike'' polynomial \(q(t)\) which has \({([\cW^{\frac12-\frac1p} q](t))^2}/{\normof{\cW^{\frac12-\frac1p}q}_2^2} = \Omega(\frac{d^2}{\log^3 d})\).
\end{description}
The detailed proof can be found in \sectionref{lp-regression}. The final result of \theoremref{chebyshev:ratio} is visualized in \figureref{clipped_lev_scores}.

\subsection{Active \texorpdfstring{\(L_p\)}{Lp} Regression via Chebyshev Sampling}
\label{sec:two-stage}

Now that we have now bounded the \(L_p\) Lewis weight function of the polynomial operator \cP by the Chebyshev density for \(p\in[\frac23,2]\), in order to prove \theoremref{main} for \(p\in[1,2]\), we can almost directly apply existing Lewis weight sampling guarantees for active \(\ell_p\) regression \cite{MuscoMWY21,ChenD21}.
However, there remains an outstanding challenge.
Na\"ive Lewis weight sampling for \(\ell_p\) regression on an \(m \times d\) matrix incurs a \(\log(m)\) dependence in the sample complexity\footnote{
    Some work on Lewis weight sampling, including by Cohen and Peng \cite{CohenP15}, implicitly assumes \(\log m = O(\log d)\).
    This is reasonable in the finite matrix setting, but does not apply when \(m\) is infinite.
}.
This rules out directly applying Lewis weight sampling to our infinite operator \cP, for which \(m\) is infinite (recall \figureref{operator-as-matrix}).

We address this challenge with a simple observation: sampling rows of \cP by the Chebyshev measure is essentially equivalent\footnote{
    Two subtleties emerge here.
    First, we say ``essentially equivalent'' since this two-stage sampling scheme is \(O(\frac{1}{\poly(d)})\) close to our actual Chebyshev sampling in total variation, so these schemes are indistinguishable but not the same.
    Second, analyzing the two-stage procedure will require a random choice of the sample number \(n\) -- see the footnote on \theoremref{main}.
} to collecting a large \emph{uniform sample} of rows of \cP and then subsampling those rows according to the Chebyshev measure.
We visualize this ``two-stage'' decomposition of our sampling method in \figureref{two-stage}, and emphasize that we do not algorithmically generate the first uniformly sampled matrix\footnote{
    In principal, we \emph{could} algorithmically generate the uniform subsampled matrix and numerically compute its \(\ell_p\) Lewis weights, although this would incur a much higher polynomial runtime dependence on $d$ than our simpler approach of sampling directly from the Chebyshev measure.
}.
Instead, so long as this hypothetical two-stage algorithm is correct, by the equivalence of these sampling schemes, we know that our actual algorithm is correct.

Proving correctness requires two key ingredients.
Let \(\mA\in\bbR^{n_0 \times d+1}\) be this matrix created by uniformly sampling \(n_0\) rows of \cP.
First, we show that taking \(n_0 = \frac{\poly(d)}{\poly(\eps)}\) suffices to recover a \((1+\eps)\) error solution to the full regression problem on \cP.
Second, we prove that the Chebyshev measure evaluated at \mA's rows tightly upper bounds \mA's Lewis weight distribution. So, by prior work \cite{MuscoMWY21,ChenD21}, this can be used to show that sampling by the measure suffices to obtain a \((1+\eps)\) error solution to the regression problem involving \mA.
This Lewis weight sampling stage only has a dependence on \(\log(n_0) = \log(\frac d\eps)\), avoiding the \(\log(m)\) issue.
Overall, combining the error guarantees of both stages ensures that our hypothetical two-stage algorithm samples rows of \cP in the same way as \algorithmref{chebyshev-const:Lp} and with the same sample complexity as \theoremref{main}.

To prove the first point, that uniform sampling a large number of rows preserves a near-optimal solution, we turn to a different tool from the matrix sampling literature: \(L_p\) \emph{sensitivity sampling}.
The \(L_p\) sensitivities are a natural generalization of the \(L_2\) leverage scores, defined as
\[
    \psi_p[\cP](t)
    \defeq \max_{\vx\in\bbR^{d+1}} \frac{\abs{[\cP\vx](t)}^p}{\normof{\cP\vx}_p^p}
    = \max_{\text{deg}(q)\leq d} \frac{\abs{q(t)}^p}{\normof{q}_p^p}
\]
The value of using \(L_p\) sensitivity sampling is that standard concentration bounds and an \(\eps\)-net argument show that sampling \(n_0 = \frac{\poly(d)}{\poly(\eps)}\) rows proportionally to their sensitivities suffices to recover a \((1+\eps)\) error solution to the full \(L_p\) regression problem.
While the dependence on \(d\) is polynomially worse than that of Lewis weight sampling, it has no dependence on \(m\).
Since we want to sample rows of \cP uniformly, we will need to show a uniform bound on \(\psi_p[\cP](t)\) (i.e., an upper bound that does not depend on \(t\)).
Using a classical result on the smoothness of polynomials (specifically the Markov brothers' inequality), we can indeed show \(\psi_p[\cP](t) \leq d^2(p+1)\), which in turn implies that \(n_0 = \frac{\poly(d)}{\poly(\eps)}\) uniform samples suffice.

\begin{figure}
    \centering
    \includegraphics[width=0.75\columnwidth]{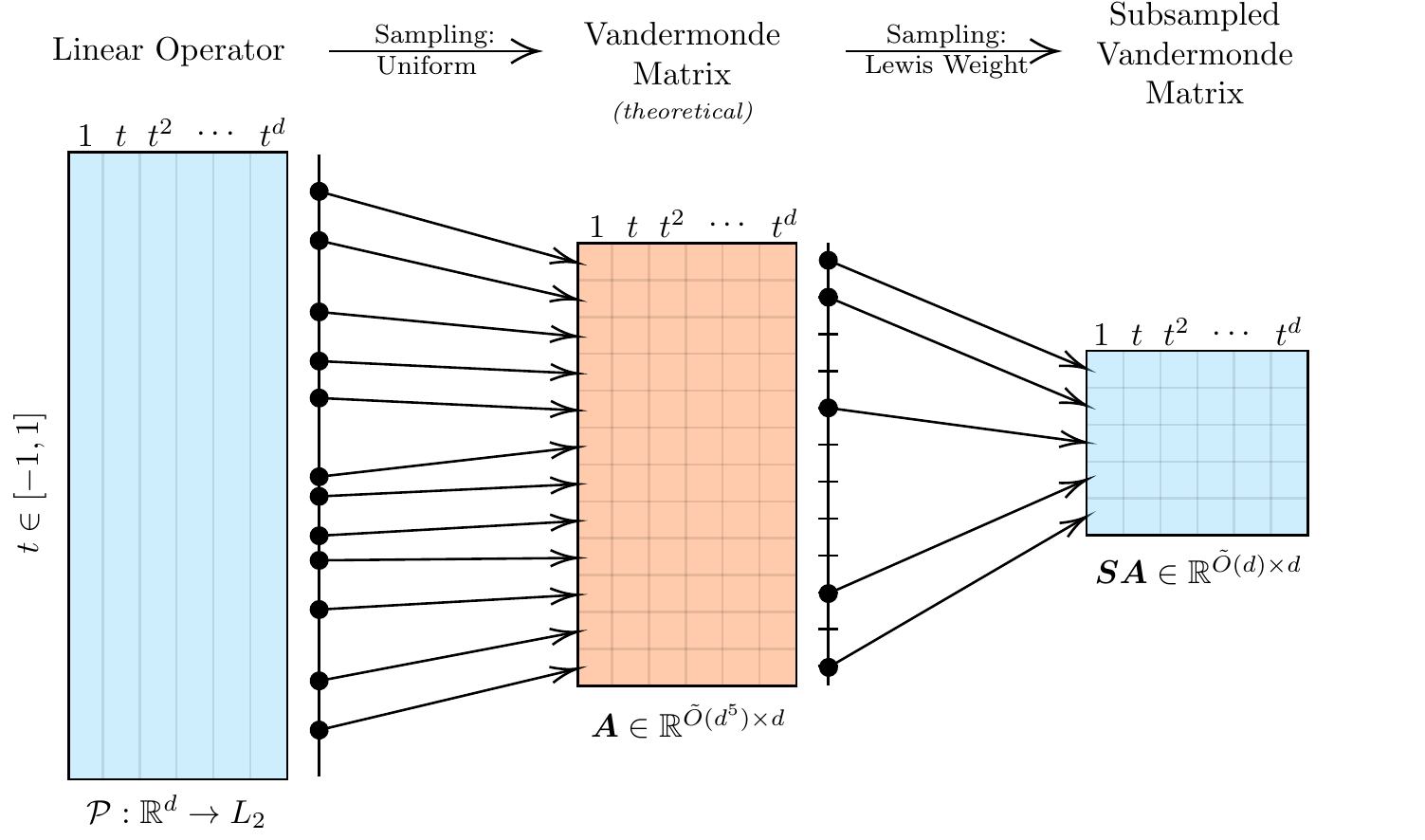}
            \parbox{0.975\textwidth}{
        \caption{
            Sketch of the two-stage proof technique described in Section \ref{sec:two-stage}.
            We show that the Chebyshev measure sampling of \algorithmref{chebyshev-const:Lp} is equivalent to a hypothetical two stage sampling procedure that first 
                                    uniformly samples \(O\left (\frac{\poly(d)}{\poly(\varepsilon)}\right )\) query points from \([-1,1]\) to form Vandermonde matrix \mA, and then further samples the rows of $\bv A$ by the Chebyshev measure, which approximates $\bv A$'s Lewis weight distribution. Since we can uniformly bound the {$L_p$ sensitivities} of the original regression problem by $\poly(d)$, we can argue that both stages of sampling preserve the solution of the $L_p$ regression problem, and thus that our final solution gives a $(1+\varepsilon)$ approximation to the optimal.
                                                        }
        \label{fig:two-stage}
    }
\end{figure}

To prove the second point, we need to show that the Chebyshev measure upper bounds $\mA$'s Lewis weights. To do so, we prove that the clipped Chebyshev measure, which is an almost-Lewis weight measure for \cP, is also an almost-Lewis weight distribution for \mA. Again the proof mostly follows from standard concentration results combined with an \(\eps\)-net argument, although we also need to use the fact that the clipped Chebyshev measure is bounded.

We visualize the structure of our two-stage proof in \figureref{two-stage}. Overall, the arguments above complete the analysis of \(L_p\) polynomial regression for \(p\in[1,2]\).

\subsection{Near-Linear Sample Complexity for \texorpdfstring{$p > 2$}{p>2}}

The next challenge is to extend our results to \(p>2\).
We could use a similar approach as in \sectionref{overview-bounding-lewis-weights} and \sectionref{two-stage}, but doing so would lead to suboptimal sample complexity.
In particular, \(\ell_p\) matrix Lewis weight sampling algorithms have a very different sample complexity for \(p\leq2\) and \(p>2\).
For \(p\in[0,2]\), Lewis weight sampling requires \(\tilde O(d)\) samples.
For \(p>2\), Lewis weight sampling requires \(\tilde O(d^{p/2})\) samples, and there are worst-case matrices that necessitate this sample complexity.
So to achieve \(\tilde O(d)\) sample complexity, we require a novel analysis of \(\ell_p\) Lewis weight sampling for active regression that leverages the structure of the polynomial operator \cP.
Concretely, within the framing of \sectionref{two-stage}, we keep the uniform sensitivity sampling stage but provide a new analysis for the second Lewis weight sampling stage.

We start by describing a simple approach for achieving constant factor error (but not \((1+\epsilon)\) factor) which follows from an observation in \cite{meyer2021}.
In particular, if we only want constant factor error, it suffices to find a subsampling matrix \(\mS\) that satisfies an \(\ell_p\) \emph{subspace embedding} property.
Specifically, we need that  for all \(\vx\in\bbR^{d+1}\), \(\normof{\mS\mA\vx}_p^p \approx \normof{\mA\vx}_p^p\).
We argue that such a matrix can be constructed with a number of rows linear in \(d\) (for any constant \(p\)) as follows:
Let \(f\) be a degree \(d\) polynomial, and let \(r\) be an integer such that \(q \defeq \frac{p}{r} \in [1,2]\).
Then, \(t \mapsto (f(t))^r\) is some degree \(rd\) polynomial.
So, if \(\mA\in\bbR^{n_0 \times d+1}\) is our Vandermonde matrix resulting from uniform sampling, we can let \(\mB\in\bbR^{n_0 \times rd+1}\) be another Vandermonde matrix generated by the same time samples but describing polynomials of degree \(rd\).
Then, for all \(\vx\in\bbR^{d+1}\) there exists some \(\vy\in\bbR^{rd+1}\) such that \((\mA\vx)^r = \mB\vy\), where we define the exponentiation elementwise.
In particular, we have \(\normof{\mA\vx}_p^p = \normof{\mB\vy}_{q}^{q}\).
Therefore, if we know that \mS provides an \(\ell_q\) norm subspace embedding for \mB, so that \(\normof{\mS\mB\vy}_{q}^{q} \approx \normof{\mB\vy}_{q}^{q}\) for all \(\vy\in\bbR^{rd+1}\), we also know that \mS is a subspace embedding for \(\mA\): \(\normof{\mS\mA\vx}_p^p \approx \normof{\mA\vx}_p^p\) for all \(\vx\in\bbR^{d+1}\).
Since \(\mB\) is exactly the Vandermonde matrix we would have generated from uniformly sampling in \sectionref{two-stage} with degree \(rd\) and \(\ell_{q}\) norm, we know that the Chebyshev measure bounds the Lewis weights of \(\mB\), and that the Lewis weight subsampling matrix \mS is a subspace embedding for \mB, and therefore also for \mA.

Achieving \((1+\eps)\) error regression is harder but takes a similar approach.
In order to have Lewis weight sampling imply \((1+\eps)\) error regression, a subspace embedding does not suffice and a more detailed argument is needed \cite{MuscoMWY21}.
A crucial step in this analysis is showing an \emph{affine embedding}: that \(\normof{\mS(\mA\vx-\vb)}_p \approx \normof{\mA\vx-\vb}_p\) for all \(\mA\vx\) with small \(\ell_p\) norm.
\cite{bourgain1989approximation} and \cite{MuscoMWY21} provide a way to prove this affine embedding via a \emph{compact rounding argument}, which designs a structured set of \(\eps\)-nets which allow for a tight  \(\tilde O(d^{\max\{1,p/2\}})\) sample complexity to be obtained from Lewis weight sampling.
To obtain a linear dependence in \(d\) for all \(p\), we reduce from the \(\ell_p\) case to the \(\ell_q\) case for \(q\leq 2\), as discussed above, but in a less direct way.
In particular, we show that a compact rounding for the range of \mB can be directly transformed to construct a compact rounding of the same size for the range of \mA.

This approach is elaborated on in Section \ref{sec:affine-embed}. Critically, we will now enforce that \(r\) is also an odd integer, so that we not only get \((\mA\vx)^r = \mB\vy\) but also have \(\mA\vx = (\mB\vy)^{1/r}\).
This does not hold when \(r\) is even since negative entries of \(\mA\vx\) get turned positive.
For \(p\geq 3\), we let \(r\) be the largest odd integer smaller than \(p\), so that \(q = \frac pr \in [1,2]\).
For \(p\in(2,3)\), this would pick \(r=1\) which would not be helpful, so we instead take \(r=3\), so that \(q = \frac pr \in [\frac23,1]\).
Once we construct this compact rounding, we find that sampling the rows of \mA by the \(\ell_q\) Lewis weights of \mB achieves the affine embedding with sample complexity linear in \(d\).
And since \sectionref{overview-bounding-lewis-weights} bounds the Lewis weights of \mB by the Chebyshev measure, we conclude that \algorithmref{chebyshev-const:Lp:eps} achieves \theoremref{main} for all \(p > 2\).

\subsection{Lower Bounds and \texorpdfstring{\(L_\infty\)}{L∞} Polynomial Approximation}
The linear dependence on $d$ in \theoremref{main} cannot be improved: when $f$ is exactly equal to a degree $d$ polynomial, if we do not take at least $d+1$ samples it is not possible to recover a zero-error approximation to the function.
A natural question is if the $1/\epsilon^{O(p)}$ dependence in the theorem is also tight -- i.e., is it necessary for the accuracy to depend exponentially on $p$? 

We answer this question in the affirmative with the lower bound of \theoremref{lb}, which has a short and direct proof.
For any algorithm that queries \(f\) at most \(n \leq O(\frac1{\eps^{p-1}})\) times, there must exist an interval $\cI\subset [-1,1]$ of width \(\eps^{p-1}\) such that none of the algorithm's queries lie in \cI with probability \(\frac23\).
We then randomly select a function \(f\) that is either $+1$ or $-1$ on \cI with equal probability, and $0$ elsewhere. To obtain a $1+\epsilon$ approximation in the $L_p$ norm, the algorithm must distinguish between these two cases, but with probability \(\frac23\), it does not even obtain a sample from the non-zero region.

Finally, we note that our techniques can be extended to give a constant factor approximation to the $L_\infty$ polynomial approximation problem with $O(d\polylog(d))$ samples.
Details are discussed in \sectionref{L_infty}, where we relate the \(L_\infty\) problem to the \(L_p\) problem with \(p = O(\log d)\).
Results for the \(L_\infty\) norm were already shown in \cite{KaneKP17} using a different approach but the same Chebyshev measure sampling distribution.

\begin{figure*}[!thb]
\centering
\begin{tikzpicture}[scale=1.25]

\node at (0,0){Markov Brothers'};
\node at (0,-0.4){Inequality};
\draw[dashed] (-1.2,-0.7) rectangle+(2.4,1); 

\node at (3.75,0){Matrix Lewis weights};
\node at (3.75,-0.4){/ leverage scores};
\draw[dashed] (3.75-1.5,-0.7) rectangle+(3,1);

\node at (8,0){Orthogonal polynomials};
\node at (8,-0.4){(Chebyshev / Jacobi)};
\draw[dashed] (8-1.6,-0.7) rectangle+(3.2,1);

\draw[->] (0.6,-0.7) -- (0.8,-1.2);
\draw[->] (-0.45,-1.7) -- (0.45,-2.7);

\draw[->] (4.5,-0.7) -- (4.5,-1.2);
\draw[->] (2.7,-1.7) -- (3.3,-1.7);

\draw[->] (5.25,-0.2) to[out=-70,in=150] (7,-1.2);
\draw[->] (7.5,-0.7) -- (7.5,-1.2);

\node at (-1.4,-1.5+0){Reduction};
\node at (-1.4,-1.5-0.4){to \(\ell_q\) space};
\draw[dashed] (-1.55-0.8,-1.5-0.7) rectangle+(1.9,1);

\node at (1.5,-1.5+0){$L_p$ Sensitivity};
\node at (1.5,-1.5-0.4){Bounds, $p\ge 1$};
\draw (1.5-1.2,-1.5-0.7) rectangle+(2.4,1);

\node at (4.5,-1.5+0){Uniform Sampling};
\node at (4.5,-1.5-0.4){(two stage)};
\draw (4.5-1.2,-1.5-0.7) rectangle+(2.4,1);

\node at (7.9,-1.5+0){Almost Lewis Weight};
\node at (7.9,-1.5-0.4){Bounds, $p\in\left[\frac{2}{3},2\right]$};
\draw (7.9-1.6,-1.5-0.7) rectangle+(3.2,1);

\draw[->] (8,-2.2) to[out=-160,in=20] (1.5,-2.7);
\draw[->] (-0.2,-3.2) -- (0.45,-3.2);
\draw[->] (2.05,-3.2) -- (2.65,-3.2);
\draw[->] (9,-2.2) -- (9,-2.7);
\draw[->] (4.5,-2.2) -- (4.5,-2.7);
\draw[->] (5.7,-1.7) to[out=-70,in=150] (7.5,-2.7);

\node at (-1,-1.5*2+0){Compact};
\node at (-1,-1.5*2-0.4){Rounding};
\draw[dashed] (-1-0.8,-1.5*2-0.7) rectangle+(1.6,1);

\node at (1.25,-1.5*2+0){Affine};
\node at (1.25,-1.5*2-0.4){Embedding};
\draw (1.25-0.8,-1.5*2-0.7) rectangle+(1.6,1);

\node at (4.25,-1.5*2+0){$(1+\eps)$-approximation,};
\node at (4.25,-1.5*2-0.4){$p\ge2$ (one stage)};
\draw[color=purple] (4.25-1.6,-1.5*2-0.7) rectangle+(3.2,1);

\node at (8,-1.5*2+0){$(1+\eps)$-approximation,};
\node at (8,-1.5*2-0.4){$p\in[1,2]$ (one stage)};
\draw[color=purple] (8-1.6,-1.5*2-0.7) rectangle+(3.2,1);

\end{tikzpicture}
\caption{Flowchart of proofs: dashed rectangles represent existing results, solid rectangles represent our technical contributions.}
\label{fig:flowchart}
\end{figure*}
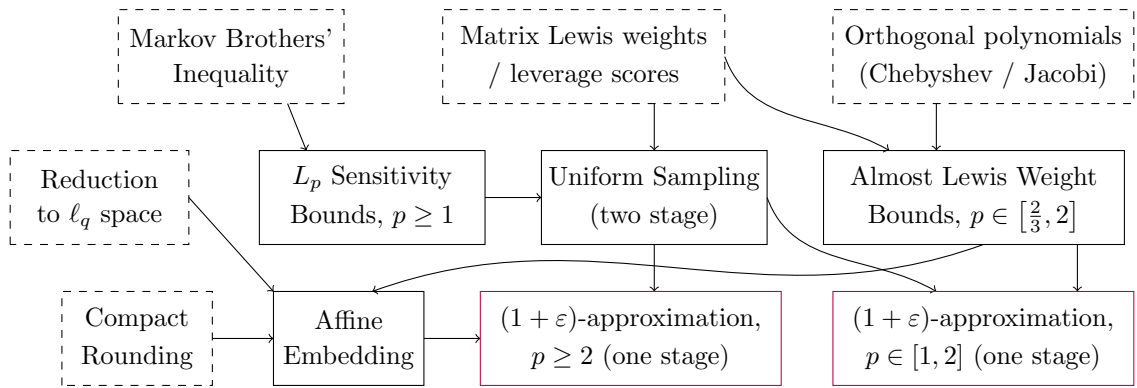

\paragraph{Organization of the rest of the paper.}
We first consider the $L_p$ regression problem for $p\in[1,2]$ in \sectionref{lp-regression}.
Specifically, we start by relating the Chebyshev density to the $L_p$ Lewis weights for all $p\in[\frac23,2]$.
We first outline the proof for $p=1$ in \sectionref{one:lewis:ratio} and defer the proof for general $p\in[\frac23,2]$ to \sectionref{lp-lewis-weights} and \sectionref{clipped-analysis}. 
We then prove correctness of \algorithmref{chebyshev-const:Lp} for $p\in [1,2]$ in \sectionref{smallp:regression:constant} and \sectionref{smallp:regression:eps}.

We handle $p>2$ in \sectionref{lp-regression:largep}.
We first prove the correctness of constant-factor regression in \sectionref{large-p-const-factor}, prove the majority of \((1+\eps)\) error analysis in \sectionref{largep:regression:eps}, and prove a core technical claim for \(p>2\) in \sectionref{affine-embed}.
We present the lower bound \theoremref{lb} in \sectionref{largep:regression:eps}. 
Finally, we address $L_\infty$ regression in \sectionref{L_infty}. 
A summary of our high-level ideas and their dependencies is shown in \figureref{flowchart}.

\section{Preliminaries}
For an integer $n>0$, we use $[n]$ to denote the set $\{1,\ldots,n\}$. 
We use $\poly(n)$ to denote a constant degree polynomial in $n$
and $\polylog(n)$ to denote a polynomial in $\log n$. 

Throughout this paper, unbold lowercase letters are scalars or functions, bold lowercase letters are vectors, bold uppercase letters are matrices, and calligraphic uppercase letters are linear operators.
The norm \(\normof{\cdot}_p\) will interchangeably refer to the vector norm, defined by \(\normof{\vx}_p^p = \sum_{i=1}^d |x_i|^p\), and the continuous norm \(\normof{f}_p^p = \int_{-1}^1 \abs{f(t)}^p dt\).
We say that a matrix \mA is a subspace embedding for another matrix or linear operator \cA if for all \vx we have \(\frac{1}{\alpha} \normof{\cA\vx}_p^p \leq \normof{\mA\vx}_p^p \leq \alpha \normof{\cA\vx}_p^p\) for some constant \(\alpha \geq 1\).
More broadly, if two scalars \(x\) and \(y\) have \(\frac1\alpha x \leq y \leq \alpha x\), then we write \(x \approx_\alpha y\).
For instance, the subspace embedding guarantee can be written as \(\normof{\cA\vx}_p^p \approx_\alpha \normof{\mA\vx}_p^p\) for all \vx.
We use brackets for indexing on both vectors and functions.

The \(i^{th}\) entries of the vectors \vx and \(\mA\vx\) are denoted \(\vx(i)\) and \([\mA\vx](i)\).
The \(\ell_2\) leverage score of the \(i^{th}\) row of matrix \mA is denoted \(\tau[\mA](i)\).
The \(\ell_p\) Lewis weight of the \(i^{th}\) row of matrix \mA is denoted \(w_p[\mA](i)\).
The \(\ell_p\) sensitivities of the \(i^{th}\) row of matrix \mA is denoted \(\psi_p[\mA](i)\).
We similarly denote the leverage function, Lewis weight function, and sensitivity of an operator \cA at time \(t\) as \(\tau[\cA](t)\), \(w_p[\cA](t)\), and \(\psi_p[\cA](t)\).

Let \cP denote the polynomial operator of degree \(d\):
\[
    \cP:\bbR^{d+1} \rightarrow L_p([-1,1])
    \hspace{2cm}
    [\cP\vx](t) \defeq \sum_{i=0}^d x_i t^i
\]
Note that the maximum degree of a polynomial is \(d\), but that the \emph{rank} of \cP is \(d+1\) because of the constant degree-0 polynomial.

We recall the Markov brothers' inequality that bounds the magnitude of the derivative of a polynomial of degree $d$ whose magnitude inside the interval $[-1,1]$ is bounded by $1$. 
\begin{theorem}[Markov brothers' inequality, e.g., Theorem 2.1 in \cite{GovilM99}]
\label{thm:markov:bros}
Suppose $q(t)$ is a polynomial of degree at most $d$ such that $|q(t)|\le 1$ for $t\in[-1,1]$. 
Then for all $t\in[-1,1]$, $|q'(t)|\le d^2$. 
\end{theorem}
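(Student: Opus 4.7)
The plan is to establish Markov brothers' inequality through the classical two-regime approach that splits $[-1,1]$ according to distance from the endpoints. The argument hinges on the extremal role played by the Chebyshev polynomial $T_d(t) = \cos(d \arccos t)$, which satisfies $|T_d(t)| \leq 1$ on $[-1,1]$ yet achieves $|T_d'(\pm 1)| = d^2$, demonstrating both that the bound is tight and that any proof must accommodate this specific extremizer.

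For the interior regime $|t| \leq \sqrt{1 - 1/d^2}$, I would invoke Bernstein's inequality: any degree-$d$ polynomial $q$ with $\|q\|_\infty \leq 1$ on $[-1,1]$ satisfies $|q'(t)| \leq d/\sqrt{1-t^2}$, which immediately yields $|q'(t)| \leq d^2$ throughout this regime. Bernstein's inequality is itself proved by the substitution $t = \cos\theta$: the function $\theta \mapsto q(\cos\theta)$ is a trigonometric polynomial of degree at most $d$ with supremum bounded by $1$, its derivative is bounded by $d$ via the trigonometric Bernstein inequality (a short argument via the Fejér kernel or a direct interpolation identity), and pulling back through $dt/d\theta = -\sin\theta = -\sqrt{1-t^2}$ introduces the factor $1/\sqrt{1-t^2}$.

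For the endpoint regime $|t| \geq \sqrt{1 - 1/d^2}$, Bernstein's bound degenerates and a qualitatively different argument is required. Here I would follow V.A. Markov's original extremal approach via Chebyshev equioscillation. Proceeding by contradiction, suppose some degree-$d$ polynomial $q$ with $\|q\|_\infty \leq 1$ satisfied $|q'(t_0)| > d^2$ for some $t_0$ near $\pm 1$. By considering a linear combination $T_d - c \cdot q$ with $c$ chosen so that the combination is forced to match the sign pattern of $T_d$ at its $d+1$ extremal points $t_k = \cos(k\pi/d)$ while simultaneously exploiting the inflated derivative at $t_0$, one can show that this degree-$d$ polynomial must have more than $d$ zeros (counted with multiplicity) in $[-1,1]$, contradicting the fundamental theorem of algebra.

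The main obstacle is executing the endpoint argument cleanly. Bernstein's inequality is genuinely inadequate near $\pm 1$, and any proof of the $d^2$ bound must exploit the rigid discrete structure of $T_d$: its $d+1$ equioscillation configuration is the unique pattern that achieves the extremal derivative at the endpoints, and the argument has to leverage this structure either through the zero-counting argument sketched above or, equivalently, through an explicit Lagrange-type interpolation formula that expresses $q'(\pm 1)$ as a weighted sum of $q(t_k)$ over the Chebyshev extremal nodes and shows the sum of weight magnitudes is exactly $d^2$. Historically, this is why Markov's inequality appeared roughly two decades after Bernstein's and is substantially more delicate.
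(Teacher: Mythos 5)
This statement is imported in the paper as a classical result with a citation (Theorem~2.1 of Govil--Mohapatra); the paper gives no proof, so there is no in-paper argument to compare against. Evaluated on its own terms, your two-regime sketch is the standard and correct route: for $|t|\le\sqrt{1-1/d^2}$ you have $1-t^2\ge 1/d^2$, so Bernstein's bound $|q'(t)|\le d/\sqrt{1-t^2}$ does give $d^2$; and near the endpoints you correctly identify that Bernstein degenerates and one must exploit the rigid structure of $T_d$'s equioscillation, either by a zero-counting comparison with $T_d$ or by the Lagrange interpolation identity at the Chebyshev extrema $t_k=\cos(k\pi/d)$, where $\sum_k|\ell_k'(\pm 1)|=T_d'(1)=d^2$. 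One small clarification that makes the gluing of the two regimes rigorous: the interpolation bound $\sum_k|\ell_k'(t)|\le d^2$ is available for $|t|\ge\cos(\pi/(2d))$, and since $\cos(\pi/(2d))\approx 1-\pi^2/(8d^2) < 1 - 1/(2d^2)\approx\sqrt{1-1/d^2}$, the two regimes genuinely overlap and cover all of $[-1,1]$. Also, the contradiction in the zero-counting argument is really ``a nonzero degree-$d$ polynomial has at most $d$ zeros with multiplicity,'' rather than the fundamental theorem of algebra per se, but that is a terminological quibble. Your acknowledgment that the endpoint argument is the delicate part is accurate; fleshing it out would be the bulk of a full proof.
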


Throughout this paper, we will be analyzing \algorithmref{chebyshev-const:Lp}, and showing that this algorithm satisfies \theoremref{main}.

\section{Active \texorpdfstring{\(L_p\)}{Lp} Regression for \texorpdfstring{$p\in[1,2]$}{p in [1,2]}}
\label{sec:lp-regression}

In this section, we start with the definition of leverage scores and prove that the \(L_1\) Lewis weights for the polynomial operator are bounded by the Chebyshev measure.
In particular, this section shows the relationship between Lewis weights and uniform bounds on orthogonal polynomials.
We then use this Lewis weight bound to show that \(\tilde O(d)\) samples suffice for robust \(L_1\) regression.

\subsection{Warm Up: Bounding the Leverage Scores for \texorpdfstring{\(p=2\)}{p=2}}
\label{sec:l2-leverage-bound}
We first start with leverage scores, which are a key building block underpinning Lewis weights.
Before discussing Lewis weights, we will look at bounding the leverage scores of \cP, which relates to solving \(L_2\) regression.
We first look at the properties of Leverage Scores for matrices:
\begin{definition}
For a matrix \(\mA\in\bbR^{n \times d}\), the leverage score for row \(i\in[n]\) is:
\[
    \tau[\mA](i) \defeq \max_{\vx\in\bbR^d,\,\norm{\mA\vx}_2 >0} \frac{([\mA\vx](i))^2}{\normof{\mA\vx}_2^2}
\]
\end{definition}
\noindent The leverage scores of a matrix are well studied, and we will rely on two of their properties:
\begin{enumerate}
    \item
    Leverage Scores are invariant to change of basis: for full-rank \(\mU\in\bbR^{d \times d}\), we have \(\tau[\mA\mU](i)=\tau[\mA](i)\).
    \item
    If \mA has orthonormal columns, then \(\tau[\mA](i)=\normof{\va_i}_2^2\) where \(\va_i\) is the \(i^{th}\) row of \mA.
\end{enumerate}
So, if we can find a matrix \mU such that \(\mA\mU\) has orthonormal columns, then we can compute \(\tau_i(\mA)=\normof{[\mA\mU](i)}_2^2\).
We can use this argument to bound the \textit{Leverage Function} of the polynomial operator:
\begin{definition}
For an operator \(\cA:\bbR^{d+1}\rightarrow L_2([-1,1])\), the leverage function for \cA at time \(t\in[-1,1]\) is
\[
    \tau[\cA](t) \defeq \max_{\vx\in\bbR^{d+1},\,\norm{\cA\vx}_2 >0} \frac{([\cA\vx](t))^2}{\normof{\mathcal{A}\vx}_2^2}
\]
\end{definition}
We can easily see that the leverage function is also rotationally invariant.
As shown in \figureref{operator-as-matrix}, \cP has columns that represent the first degree \(d\) monomials.
That is, we think of the \(i^{th}\) column of \cP as the polynomial \(p_i(t) = t^{i-1}\).
Since \(\int_{-1}^1 p_i(t) p_j(t) dt \neq 0\) in general, these columns are not orthogonal.

While the first degree \(d\) monomials are not orthogonal, the Legendre polynomials are.
So, we can find a change-of-basis matrix \mU such that the columns of \(\cP\mU\) are Legendre polynomials instead.
Under this basis, we have \(\normof{\cP\mU\vx}_2^2 = \normof{\vx}_2^2\), which lets us simplify the leverage function.
Letting \(L_i(t)\) denote the degree \(i\) Legendre polynomial, normalized so that \(\int_{-1}^1 (L_i(t))^2 dt=1\), we have
\begin{align}
    \tau[\cP](t)
        = \max_{\vx\in\bbR^{d+1}} \frac{([\cP\mU\vx](t))^2}{\normof{\cP\mU\vx}_2^2}
        = \max_{\normof\vx_2=1} ([\cP\mU\vx](t))^2
    = \max_{\normof\vx_2=1} \left(\sum_{i=0}^d x_i L_i(t) \right)^2
    = \sum_{i=0}^d (L_i(t))^2
    \label{eq:leverage-row-norm}
\end{align}
The last equality follows because \(\max_{\normof\vx_2=1} (\va^\top\vx)^2 = \normof\va_2^2\) for any \va.
If we view \(\cP\mU\) as an infinite matrix whose rows correspond to \(t\in[-1,1]\) and whose columns correspond to the Legendre polynomials, then \equationref{leverage-row-norm} shows that \(\tau[\cP](t)\) equals the row-norm-squared of this matrix, matching the second property we mentioned for matrix leverage scores.

So, to bound the leverage function for \cP, we now need to bound the sum-of-squared Legendre polynomials.
Here we appeal to existing uniform bounds on orthogonal polynomials.
For instance, Lorch proved in 1983 that \(\abs{L_i(t)} \leq \sqrt{\frac{2}{\pi\sqrt{1-t^2}}}\) for all \(t\in[-1,1]\) \cite{lorch1983alternative}.
So we conclude the bound
\[
    \tau[\cP](t)
    = \sum_{i=0}^d (L_i(t))^2
    \leq \sum_{i=0}^d \frac{2}{\pi\sqrt{1-t^2}}
    = \frac{2(d+1)}{\pi\sqrt{1-t^2}}
    = 2v(t)
\]
That is, the leverage function is upper bounded by the Chebyshev measure, which intuitively implies that \(O(d \log d)\) samples from the Chebyshev measure suffice to recover a polynomial for \(L_2\) regression.
Formally, for \(L_2\) regression, this technique can be analyzed using the tools in \cite{ChenPrice:2019a} or \cite{RauhutWard:2012}.

\subsection{Bounding the Lewis Weights for \texorpdfstring{\(p=1\)}{p=1}}
\label{sec:one:lewis:ratio}

Having covered the \(L_2\) case, we now focus on \(p=1\), where the leverage function is no longer sufficient.
We turn to Lewis weights, and start by considering the standard matrix setting:
\begin{definition}
    Let \(\mA\in\bbR^{n \times d}\), and \(p \ge 0\).
    Then the \(\ell_p\) Lewis weights for \mA are the unique weights \(w_p[\mA](1),\ldots,w_p[\mA](n)\) such that
    \[
        \tau\big[\overline{\mW}^{\frac12-\frac1p}\mA\big](i) = w_p[\mA](i)
    \]
    for all \(i\in[n]\), where \(\overline\mW\in\bbR^{n \times n}\) is the corresponding diagonal matrix with \(\overline\mW_{i,i} = w_p[\mA](i)\).
\end{definition}
\noindent \cite{CohenP15} show several important properties of Lewis weights:
\begin{enumerate}
    \item When \(p\in[1,2]\), sampling \(O(d \log d)\) rows of \mA with respect to its Lewis weights suffice to recover an \(\ell_p\) subspace embedding.
    \item If some other weights \(w_1,\ldots,w_n\) have \(\frac1C \leq \frac{\tau[\mW^{\frac12-\frac1p}\mA](i)}{w_i} \leq C\) for all \(i\in[n]\) and some constant \(C\), where \(\mW_{i,i}=w_i\), then \(w_1,\ldots,w_n\) are close to the true Lewis weights.
\end{enumerate}
In particular, if we can find any such \(w\)'s, then we can sample \(O(d \log d)\) rows of \mA with respect to \(w_1,\ldots,w_n\) and still get an \(\ell_p\) subspace embedding, which suffices to recover a near-optimal solution to \(\ell_p\) regression.
This motivates our approach, where we show that the Chebyshev Measure \(v(t)\) nearly satisfies this guarantee.

We start by defining Lewis weights for operators:
\begin{definition}
    For an operator \(\cA:\bbR^{d+1}\rightarrow L_1([-1,1])\), a Lewis weight function for \cA satisfies
    \[
        w_p[\cA](t) = \tau[\overline{\cW}^{\frac12-\frac1p}\cA](t)
    \]
    for all \(t\in[-1,1]\), where \(\overline\cW\) is the corresponding diagonal operator such that \([\overline{\cW} x](t) = w_p[\cA](t) \cdot x(t)\) for any function \(x\).
\end{definition}
The Chebyshev Measure will not satisfy this strict equality criteria, so we instead consider the approximate criteria:
\begin{definition}
    For an operator \(\cA:\bbR^{d+1}\rightarrow L_1([-1,1])\), a function \(w(t)\) is a \emph{\(C-\)Almost Lewis Weight Function} for \cA if
    \[
        \frac1C \leq \frac{\tau[\cW^{\frac12-\frac1p}\cA](t)}{w(t)} \leq C
    \]
    for all \(t\in[-1,1]\), where \(\cW\) is the corresponding diagonal operator such that \([\cW x](t) = w(t) \cdot x(t)\) for any function \(x\).
    We often refer to \(\frac{\tau[\cW^{\frac12-\frac1p}\cA](t)}{w(t)}\) as the \emph{Lewis Weight Fixpoint Ratio}.
\end{definition}
Similarly to the \(L_2\) case, we relate the leverage function to a class of orthogonal polynomials.
However, for \(p\neq2\), the Legendre polynomials do not make the columns of \(\cW^{\frac12-\frac1p}\cA\) orthogonal.
For \(p=1\), we turn to Chebyshev Polynomials of the Second Kind, denoted \(U_i(t)\), which satisfy \(\int_{-1}^1 U_i(t) U_j(t) \sqrt{1-t^2} dt = \frac\pi2\mathbbm1_{[i=j]}\).
\begin{theorem}
    \label{thm:cheby-lewis-weight-eq}
    Let \(v(t)\defeq \frac{d+1}{\pi\sqrt{1-t^2}}\), \(\cV\) be the diagonal operator for \(v(t)\), and \(U_i(t)\) be the degree \(i\) Chebyshev polynomial of the second kind.
    Then,
    \[
        \frac{\tau[\cV^{-\frac12}\cP](t)}{v(t)} = 1 + \frac{1-U_{2(d+1)}(t)}{2(d+1)}
    \]
\end{theorem}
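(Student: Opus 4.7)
The plan is to mirror the $L_2$ leverage score computation from \sectionref{l2-leverage-bound}, but using the Chebyshev polynomials of the second kind $U_i$ in place of the Legendre polynomials. The key point is that the weighting $\cV^{-1/2}$ turns the standard $L_2([-1,1])$ inner product on the columns of $\cP$ into an inner product with weight $\frac{1}{v(t)} \propto \sqrt{1-t^2}$, which is exactly the measure with respect to which the $U_i$ are orthogonal. First I would find a change-of-basis matrix $\mU$ so that $\cV^{-1/2}\cP\mU$ has $L_2$-orthonormal columns. Using $\int_{-1}^1 U_i(t)U_j(t)\sqrt{1-t^2}\,dt = \frac{\pi}{2}\mathbbm{1}_{[i=j]}$ together with $v(t) = \frac{d+1}{\pi\sqrt{1-t^2}}$, the scaled polynomials $q_i(t) \defeq \frac{\sqrt{2(d+1)}}{\pi}U_i(t)$ satisfy $\int_{-1}^1 \frac{q_i(t)q_j(t)}{v(t)}\,dt = \delta_{ij}$. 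So taking $\mU$ to be the (invertible) matrix that expresses each $q_i$ in the monomial basis makes the columns of $\cV^{-1/2}\cP\mU$ the $L_2$-orthonormal functions $\frac{q_i(t)}{\sqrt{v(t)}}$.

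Second, by rotational invariance of the leverage function and the ``squared row norm'' identity for orthonormal columns (both already established in \sectionref{l2-leverage-bound}),
\begin{align*}
\tau[\cV^{-1/2}\cP](t) \;=\; \sum_{i=0}^d \left(\frac{q_i(t)}{\sqrt{v(t)}}\right)^2 \;=\; \frac{1}{v(t)}\cdot\frac{2(d+1)}{\pi^2}\sum_{i=0}^d U_i(t)^2.
\end{align*}
Dividing by $v(t)$ and using $v(t)^2 = \frac{(d+1)^2}{\pi^2(1-t^2)}$ gives
\begin{align*}
\frac{\tau[\cV^{-1/2}\cP](t)}{v(t)} \;=\; \frac{2(1-t^2)}{d+1}\sum_{i=0}^d U_i(t)^2.
\end{align*}

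Third, I would evaluate the trigonometric sum by substituting $t = \cos\theta$ and using $U_i(\cos\theta) = \frac{\sin((i+1)\theta)}{\sin\theta}$, which yields $(1-t^2)U_i(t)^2 = \sin^2((i+1)\theta)$. Hence $(1-t^2)\sum_{i=0}^d U_i(t)^2 = \sum_{k=1}^{d+1}\sin^2(k\theta)$. Applying $\sin^2(k\theta) = \tfrac{1-\cos(2k\theta)}{2}$ together with the Dirichlet-kernel identity $1 + 2\sum_{k=1}^{N}\cos(2k\theta) = \frac{\sin((2N+1)\theta)}{\sin\theta} = U_{2N}(\cos\theta)$ for $N = d+1$ gives
\begin{align*}
(1-t^2)\sum_{i=0}^d U_i(t)^2 \;=\; \frac{d+1}{2} + \frac{1 - U_{2(d+1)}(t)}{4}.
\end{align*}
Multiplying by $\frac{2}{d+1}$ produces exactly $1 + \frac{1-U_{2(d+1)}(t)}{2(d+1)}$, as claimed.

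I do not expect a genuine obstacle here: the argument is essentially a templated version of the $L_2$/Legendre calculation, with the Chebyshev weight $\sqrt{1-t^2}$ chosen precisely to match $\cV^{-1/2}$. The only care needed is in tracking normalizing constants (the $\sqrt{2(d+1)}/\pi$ factor and the $\sin\theta$ cancellations) and in picking the right trigonometric identity to collapse $\sum_{k=1}^{d+1}\cos(2k\theta)$ into a single $U_{2(d+1)}$ term; the Dirichlet-kernel identity is the natural choice because $U_n$ itself has a ratio-of-sines form, so the identification is immediate.
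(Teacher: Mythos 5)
Your proposal is correct and follows essentially the same route as the paper: both change basis to Chebyshev polynomials of the second kind so that $\cV^{-1/2}\cP\mU$ has orthonormal columns, both reduce $\tau[\cV^{-1/2}\cP](t)/v(t)$ to $\frac{2(1-t^2)}{d+1}\sum_{i=0}^d U_i(t)^2$, and both collapse the sum using the identity that (in the paper's notation) $\sum_{i=0}^d T_{2(i+1)} = \tfrac12 U_{2(d+1)} - \tfrac12$, which is exactly the Dirichlet-kernel identity you invoke in sine/cosine form.
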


\begin{proof}
Let \(\mU\) be the change-of-basis matrix such that \(\cP\mU\) has columns that are Chebyshev polynomials of the second kind.
We first verify the orthogonality by simplifying the denominator of \(\tau[\cV^{-\frac12}\cP](t) = \max_{\vx}\frac{([\cV^{-\frac12}\cP\mU\vx](t))^2}{\normof{\cV^{-\frac12}\cP\mU\vx}_2^2}\):
\begin{align*}
	\normof{\cV^{-\frac12}\cP\mU\vx}_2^2
	&= \int_{-1}^1 \left(\sum_{i=0}^d x_i U_i(s) \frac1{\sqrt{v(s)}}\right)^2 ds \\
	&= \frac{\pi}{d+1} \sum_{i=0}^d\sum_{j=0}^d x_ix_j \int_{-1}^1 U_i(s)U_j(s) \sqrt{1-s^2} ds \\
	&= \frac{\pi^2}{2(d+1)} \normof\vx_2^2
\end{align*}
With this orthogonality, we can rewrite the rescaled leverage scores as a squared row-norm:
\begin{align*}
	\tau[\cV^{-\frac12}\cP](t)
	&= \max_{\vx\in\bbR^{d+1}}
		\frac{
			\frac{1}{v(t)}([\cP\mU\vx](t))^2
		}{
			\normof{\cV^{-\frac12}\cP\mU\vx}_2^2
		} \\
	&= \frac{2(d+1)}{\pi^2v(t)} \max_{\vx\in\bbR^{d+1}}
		\frac{
			([\cP\mU\vx](t))^2
		}{
			\normof{\vx}_2^2
		} \\
	&= \frac{2(d+1)}{\pi^2v(t)} \max_{\normof{\vx}_2=1} \left(\sum_{i=0}^d x_i U_i(t)\right)^2 \\
	&= \frac{2(d+1)}{\pi^2v(t)} \sum_{i=0}^d (U_i(t))^2
\end{align*}
We now simplify this sum-of-squares term by using the specialized trigonometric structure of the Chebyshev polynomials of the second kind.
Letting \(\theta\defeq\cos(t)\), note that \(U_i(t) = \frac{\sin((i+1)\theta)}{\sqrt{1-t^2}}\) and the Chebyshev polynomials of the \textit{first} kind have \(T_i(t) = \cos(i\theta)\).
Then,
\[
    (U_i(t))^2
    = \frac{\sin^2((i+1)\theta)}{1-t^2}
    = \frac{\frac12 - \frac12 \cos(2(i+1)\theta)}{1-t^2}
    = \frac{\frac12 - \frac12 T_{2(i+1)}(\theta)}{1-t^2}
    = \frac{1}{2(1-t^2)} \cdot (1 - T_{2(i+1)}(t))
\]
so that \(\sum_{i=0}^d (U_i(t))^2 = \frac{1}{2(1-t^2)} \left((d+1) - \sum_{i=0}^d T_{2(i+1)}(t)\right)\).
Using the relation \(U_k(t) = 2 \sum_{\text{even } j=1}^k T_{j}(t) - 1\) for even \(k\), we simplify this summation as \(\sum_{i=0}^d T_{2(i+1)}(t) = \frac12 U_{2(d+1)}(t) + \frac12 - T_0(t)\).
Since \(T_0(t)=1\), \(\sum_{i=0}^d T_{2(i+1)}(t) = \frac12 U_{2(d+1)}(t) - \frac12\).
Returning to the rescaled leverage function,
\begin{align*}
    \tau[\cV^{-\frac12}\cP](t)
    &= \frac{2(d+1)}{\pi^2v(t)} \sum_{i=0}^d (U_i(t))^2 \\
    &= \frac{2(d+1)}{\pi^2v(t)} \cdot \frac{d+1}{2(1-t^2)} \left(1 + \frac{1 - U_{2(d+1)}(t)}{2(d+1)} \right) \\
    &= v(t) \left(1 + \frac{1 - U_{2(d+1)}(t)}{2(d+1)} \right),
\end{align*}
which completes the proof.
\end{proof}

Recall that for \(v(t)\) to be almost Lewis weights for \cP, we need \(\frac{\tau[\cV^{-\frac12}\cP](t)}{v(t)} = \Theta(1)\) for all \(t\in[-1,1]\).
Since \(\frac{-1}{\sqrt{1-t^2}} \leq U_i(t) \leq \frac{1}{\sqrt{1-t^2}}\), we can see that \theoremref{cheby-lewis-weight-eq} satisfies this criteria for almost all \(t\):
\begin{corollary}
\label{corol:cheby-lp-ratio}
For \(\abs{t} = 1 - O(\frac{1}{d^2})\), we have \(\frac1C \leq \frac{\tau[\cV^{-\frac12}\cP](t)}{v(t)} \leq C\) for some constant \(C\).
\end{corollary}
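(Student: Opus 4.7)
The plan is to apply \theoremref{cheby-lewis-weight-eq} and control the size of the Chebyshev polynomial term using the standard uniform bound on Chebyshev polynomials of the second kind. By \theoremref{cheby-lewis-weight-eq}, the ratio of interest satisfies
\[
    \frac{\tau[\cV^{-\frac12}\cP](t)}{v(t)} = 1 + \frac{1 - U_{2(d+1)}(t)}{2(d+1)},
\]
so it suffices to show that the second term is bounded in absolute value by a constant strictly less than $1$ on the stated range of $t$.

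The main ingredient is the bound $|U_k(t)| \leq \tfrac{1}{\sqrt{1-t^2}}$, which follows immediately from the trigonometric identity $U_k(\cos\theta) = \sin((k+1)\theta)/\sin\theta$ (already used implicitly in the proof of \theoremref{cheby-lewis-weight-eq}). I would then combine this with a direct estimate: if $|t| \leq 1 - c/d^2$ for a sufficiently large absolute constant $c$, then $1 - t^2 = (1-t)(1+t) \geq c/d^2$, and hence $\sqrt{1-t^2} \geq \sqrt{c}/d$. Plugging in gives $|U_{2(d+1)}(t)| \leq d/\sqrt{c}$.

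Combining these estimates,
\[
    \left|\frac{1 - U_{2(d+1)}(t)}{2(d+1)}\right| \leq \frac{1 + d/\sqrt{c}}{2(d+1)} \leq \frac{1}{\sqrt{c}} + o(1).
\]
Choosing the hidden constant $c$ in the hypothesis $|t| \leq 1 - c/d^2$ large enough (say $c \geq 16$) makes the right-hand side at most $\tfrac12$, so the ratio lies in $[\tfrac12, \tfrac32]$, which proves the desired two-sided bound with $C = 2$.

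I do not expect any real obstacle here: the statement is a direct corollary of the exact formula in \theoremref{cheby-lewis-weight-eq} together with the pointwise Chebyshev bound, and the only step requiring care is making the implicit constant in $O(1/d^2)$ explicit enough that the perturbation $\tfrac{1-U_{2(d+1)}(t)}{2(d+1)}$ is bounded away from $-1$.
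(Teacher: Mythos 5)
Your proof is correct and takes essentially the same approach as the paper's (Lemma~\ref{lem:v:ratio:mid}): start from the exact identity of \theoremref{cheby-lewis-weight-eq}, apply the pointwise bound $|U_k(t)| \leq 1/\sqrt{1-t^2}$, and use $1-t^2 \geq c/d^2$ to show the correction term $\frac{1 - U_{2(d+1)}(t)}{2(d+1)}$ stays bounded away from $-1$. The paper frames this as determining the interval $\cI_\gamma$ on which $\tau[\cV^{-1/2}\cP](t) \approx_\gamma v(t)$ and checking $\cI_{mid}\subset\cI_\gamma$, while you fix the interval and derive the constant directly — but the two are equivalent. (Minor slip: $\frac{1+d/\sqrt{c}}{2(d+1)}$ is bounded by $\frac{1}{2\sqrt{c}} + o(1)$, not $\frac{1}{\sqrt{c}} + o(1)$, which only helps your argument.)
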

We prove this formally in \sectionref{tauV:vt}.
For \(\abs{t}\rightarrow1\), we know that \(\abs{U_{2(d+1)}(t)}\rightarrow2(d+1)\), so that \(\frac{\tau[\cV^{-\frac12}\cP](t)}{v(t)} \rightarrow 0\), meaning that the almost Lewis weight property does not hold.
So, while the Chebyshev measure seems to match the Lewis weights for most \(t\), it is wrong for \(t\) close to the ``endcaps'' at \(-1\) and \(1\).

To understand why the Chebyshev measure fails at the endcaps, we note an important property of the leverage function.
By the Markov Brother's Inequality, the leverage function is at most \(O(d^2)\) for all \(t\in[-1,1]\).
However, the Chebyshev measure is unbounded as \(\abs{t} \rightarrow 1\).
So, there must be a gap between these two distributions.

To resolve this gap, we analyze the \emph{Clipped Chebyshev Measure} \(w(t)\), shown in \figureref{clipped_cheby}, which lies below the true Chebyshev measure \(v(t)\), and which only differs in this endcap region:
\begin{definition}
The Clipped Chebyshev Measure is the function \(w(t) \defeq \min\{C(d+1)^2, \frac{(d+1)}{\pi\sqrt{1-t^2}}\}\).
\end{definition}
With a more involved analysis relegated to \sectionref{clipped-analysis}, we show that 1) \(\tau[\cW^{-\frac12}\cP](t) = \tilde\Theta(d^2)\) in the endcaps and 2) \(\tau[\cW^{-\frac12}\cP](t) = \Theta(\tau[\cV^{-\frac12}\cP](t))\) for \(\abs{t} \leq 1 - O(\frac1{d^2})\).
This final step completes our first major technical claim:
\begin{lemma}[\theoremref{chebyshev:ratio} for $p=1$]
\label{lem:chebyshev:ratio:one}
There are fixed constants $c_1,c_2$ such that, for $p=1$ and $t \in [-1,1]$,
\[
	\frac{c_1}{\log^3 d}\le\frac{\tau[\cW^{1/2-1/p}\cP](t)}{w(t)}\le c_2.
\]
\end{lemma}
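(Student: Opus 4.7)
The plan is to case-split on whether the clip defining $w(t)$ is active. Let $t^\star\in(0,1)$ be defined by $v(t^\star)=c_1(d+1)^2$; a short calculation shows $1-(t^\star)^2=\Theta(1/d^2)$, and $w(t)=v(t)$ on the \emph{bulk} $|t|\le t^\star$ while $w(t)=c_1(d+1)^2$ on the \emph{endcap} $|t|>t^\star$. I would verify the two-sided bound separately in each region.

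On the bulk, $w(t)=v(t)$, so \corolref{cheby-lp-ratio} already pins the ratio $\tau[\cV^{-1/2}\cP](t)/v(t)$ to within constants; what remains is to swap $\cV$ for $\cW$ in the denominator of the leverage quotient. The two weighted norms differ only on endcaps of total length $O(1/d^2)$, where $|w(s)^{-1}-v(s)^{-1}|=O(1/(d+1))$. Combining the pointwise Christoffel estimate $q(s)^2\le O(d^2)\|q\|_2^2$ (Nevai) with the lower bound $\int q^2/v\gtrsim \|q\|_2^2/(d+1)$ shows this perturbation is an $O(1)$ multiplicative factor for every degree-$d$ polynomial, so $\tau[\cW^{-1/2}\cP](t)=\Theta(\tau[\cV^{-1/2}\cP](t))$ across the bulk and both directions follow from the corollary.

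On the endcap, $w(t)=\Theta(d^2)$. For the upper bound, $w(s)\le c_1(d+1)^2$ pointwise gives $\int q^2/w\ge \|q\|_2^2/(c_1(d+1)^2)$, whence $\tau[\cW^{-1/2}\cP](t)\le \tau[\cP](t)=1/\lambda_d(t)\le O(d^2)$ by Nevai's estimate $\lambda_d(t)\gtrsim \sqrt{1-t^2}/d$ together with $\sqrt{1-t^2}=O(1/d)$ in the endcap; dividing by $w(t)=\Theta(d^2)$ produces the constant $c_2$.

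The hard part, and the step that forces the $\log^3 d$ slack, is the endcap lower bound. I would exhibit a degree-$d$ ``spike'' polynomial $q$ concentrated at the endpoint $t$ with $([\cW^{-1/2}q](t))^2/\|\cW^{-1/2}q\|_2^2=\Omega(d^2/\log^3 d)$. The obvious candidate, the Christoffel--Darboux reproducing kernel $K(s,t)=\sum_{i=0}^d L_i(s)L_i(t)$, saturates $q(t)^2/\|q\|_2^2=\Theta(d^2)$, but the crude bound $1/w\le \pi/(d+1)$ only yields $\Omega(d)$, short by a factor of $d$. The recovery exploits that $K(\cdot,t)$ is spatially localized on the Christoffel scale $1/d$ around the endpoint, precisely the window where the clipped weight makes $1/w$ of order $1/(d+1)^2$ rather than $1/(d+1)$. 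Bounding $\int q^2/w$ via the Christoffel--Darboux formula and accounting for the logarithmic tails of $K$ outside this window should give $\int q^2/w=\tO(\log^3 d/d^2)\,\|q\|_2^2$, which combined with $q(t)^2/w(t)=\Omega(1)$ delivers the claimed ratio. This concentration-versus-weight estimate is the genuine technical content; the $\log^3 d$ factor is the price of controlling the kernel's polynomial tails against the clipped weight.
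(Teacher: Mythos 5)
Your decomposition into bulk and endcap, the bulk argument (replace $\cW$ by $\cV$ at the cost of an $O(1)$ factor because they differ only on the length-$O(1/d^2)$ endcap, then cite \corolref{cheby-lp-ratio}), and the endcap upper bound (bound $\normof{\cW^{-1/2}\cP\vx}_2^2 \geq \frac1{C(d+1)^2}\normof{\cP\vx}_2^2$ so that $\tau[\cW^{-1/2}\cP](t)\le\tau[\cP](t)\le O(d^2)$) all coincide with the paper's \lemmaref{mid}, \lemmaref{cap:upper}. The divergence is entirely in the endcap lower bound, and here you take a genuinely different spike.

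The paper does not use the Christoffel--Darboux kernel. It instead invokes \theoremref{poly:approx} (Sachdeva--Vishnoi) to take a degree-$d$ polynomial $f$ with $\sup_{[-1,1]}|f(t)-t^{q}| \leq d^{-\gamma}$ for $q = O(d^2/\log d)$. The reason this choice matters: $t^q$ decays like a Gaussian in the angular variable $\theta=\arccos t$, with effective support of width $\Theta(\log^2 d/d^2)$, and crucially the \emph{uniform} $d^{-\gamma}$ error bound means the tails of $f$ are controlled by a constant that can be made as small as you like, so $\int_{\text{tail}}f^2/w = O(d^{-2\gamma})$ is negligible. The $\log^3 d$ then arises entirely from the interior window $\cI_2$ of width $O(\log^2 d/d^2)$, on which $1/w = O(\log d/d^2)$ — three log factors, no delicate kernel asymptotics.

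Your proposal instead takes $q = K(\cdot,t)$, the CD kernel. There is a real gap here that the phrase ``logarithmic tails'' masks: the CD kernel for normalized Legendre polynomials has \emph{power-law} tail decay, not logarithmic. The crude bound from the CD formula and Lorch's inequality gives $|K(s,t)| \lesssim \sqrt{d}/(|s-t|\,(1-s^2)^{1/4})$, and plugging this into $\frac{\pi}{d+1}\int K^2\sqrt{1-s^2}\,ds$ yields $\Theta(d^2)$, which only recovers the trivial $\Omega(1/d^2)$ lower bound on the Lewis weight ratio — not $\Omega(1/\log^3 d)$. To rescue the CD kernel one must exploit the cancellation in the numerator $L_{d+1}(s)L_d(t) - L_d(s)L_{d+1}(t)$: since $L_{d+1}(1)/L_d(1) = 1+O(1/d)$ and $L_{d+1}(\cos\theta) - L_d(\cos\theta) \sim \theta\cdot L_d(\cos\theta)$ by the Szegő phase-shift, the numerator picks up an extra factor of $\theta$, improving the decay to $|K(\cos\theta,t)| \lesssim \sqrt d/\theta^{3/2}$, which \emph{would} make $\int K^2/w = O(\log d)$. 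But this cancellation must be justified uniformly down to the turning point $\theta\sim 1/d$, where the Szegő (Laplace) asymptotic breaks down and the uniform Bessel/Olver asymptotic is needed, and the relative error terms in the Laplace formula (of order $1/(d\theta)$) are only negligible when $\theta \gg 1/\sqrt d$. This is the step you have not done, and it is not a one-liner. The paper's $t^q$ spike is deliberately ``softer'' than the CD kernel — wider by a $\log^2 d$ factor — precisely to avoid wrestling with these power-law tails; its Gaussian-like decay with uniform error bound trivializes the tail estimate, at the cost of the $\log^3 d$ slack.
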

The full proof of \theoremref{chebyshev:ratio} for general $p\in[\frac23,2]$, is discussed next, in \sectionref{lp-lewis-weights}.

\subsection{Bounding the Lewis Weights for \texorpdfstring{\(p\in(\frac23,2)\)}{p in (2/3,2)}}
\label{sec:lp-lewis-weights}
To generalize the Lewis weight analysis for \(p=1\), we find a different orthogonal polynomial that nearly achieves the \(C-\)almost Lewis weight property.
We turn to Jacobi Polynomials:
\begin{definition}
The normalized Jacobi Polynomial of degree \(d\) with parameters \(\alpha\) and \(\beta\), denoted \(J_d^{(\alpha,\beta)}\), defines the polynomials orthogonal with \(\int_{-1}^{1} J_i^{(\alpha,\beta)}(t) J_j^{(\alpha,\beta)}(t) (1-t)^\alpha (1+t)^\beta = \mathbbm1_{[i=j]}\).
\end{definition}
In particular, we look at the subclass of \emph{Gegenbauer/Ultraspherical polynomials} which have \(\alpha=\beta\), so we use the truncated notation \(J_d^{(\alpha)}\) and note they are orthogonal with\(\int_{-1}^{1} J_i^{(\alpha)}(t) J_j^{(\alpha)}(t) (1-t^2)^\alpha = \mathbbm1_{[i=j]}\).
Note that Legendre polynomials coincide with \(\alpha=0\), while Chebyshev polynomial of the second kind coincide with \(\alpha=\frac12\), so this class of polynomials certainly interpolates between the \(p=1\) and \(p=2\) orthogonal polynomials.
We now show that Gegenbauer polynomials are the correct orthogonal polynomial for \(L_p\) Lewis weights:
\begin{theorem}
For all \(p \in [\frac23,2]\) and \(\abs{t} \leq 1-O(\frac1{d^2})\), we have \(\frac1{C_0} \leq \frac{\tau[\cV^{\frac12-\frac1p}\cP](t)}{v(t)} \leq C_0\) for some universal constant \(C_0\).
\end{theorem}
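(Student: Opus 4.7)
The plan is to generalize the $p=1$ argument of \sectionref{one:lewis:ratio} by replacing the Chebyshev polynomials of the second kind with the family of Gegenbauer (ultraspherical) polynomials $J_i^{(\alpha)}$ that are orthonormal with respect to the weight $(1-t^2)^\alpha$, taking $\alpha \defeq \frac{1}{p}-\frac{1}{2}$. For $p \in [\tfrac{2}{3},2]$ this parameter ranges over the compact interval $\alpha \in [0,1]$, with the endpoints already discussed: $\alpha=0$ gives Legendre polynomials ($p=2$, \sectionref{l2-leverage-bound}) and $\alpha=\tfrac{1}{2}$ gives the Chebyshev second-kind polynomials ($p=1$, \sectionref{one:lewis:ratio}). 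This way the parameter choice is precisely the one that makes the reweighting $v(s)^{1-2/p}$ coincide, up to a scalar, with the natural orthogonality measure.

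First, I would perform the change-of-basis step exactly as in the earlier sections. Let $\mU$ be the invertible matrix whose columns express the first $d+1$ monomials in the Gegenbauer basis. Because $v(s)^{1-2/p}$ is a constant multiple of $(1-s^2)^{\alpha}$, the orthonormality relation $\int_{-1}^{1} J_i^{(\alpha)}(s) J_j^{(\alpha)}(s)(1-s^2)^\alpha \, ds = \mathbbm{1}_{[i=j]}$ gives $\normof{\cV^{\frac{1}{2}-\frac{1}{p}}\cP\mU \vx}_2^2 = c_\alpha \normof{\vx}_2^2$ for an explicit constant $c_\alpha$. Running the same chain of equalities as in the proofs of \sectionref{l2-leverage-bound} and \sectionref{one:lewis:ratio} then yields the clean identity
\[
    \tau[\cV^{\frac{1}{2}-\frac{1}{p}}\cP](t) \;=\; \frac{v(t)^{1-\frac{2}{p}}}{c_\alpha} \sum_{i=0}^{d} \bigl(J_i^{(\alpha)}(t)\bigr)^2,
\]
and since $v(t)^{1-\frac{2}{p}}$ is itself proportional to $(1-t^2)^{\alpha}$, the conclusion reduces to the Christoffel-type two-sided estimate
\[
    \sum_{i=0}^{d}\bigl(J_i^{(\alpha)}(t)\bigr)^2 \;\asymp\; \frac{d+1}{(1-t^2)^{\alpha+1/2}} \qquad \text{for all } |t|\le 1-\Theta(1/d^2).
\]

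Second, I would establish this estimate using tools from the orthogonal polynomial literature. The upper bound follows from the classical Bernstein--Szeg\H{o} inequality for normalized Jacobi polynomials, which in the bulk gives $\bigl(J_i^{(\alpha)}(t)\bigr)^2 \le C (1-t^2)^{-\alpha-1/2}$; summing the $d+1$ terms yields the required upper bound. The lower bound follows by recognizing the sum as the reciprocal of the Christoffel function for the weight $(1-t^2)^\alpha$, and invoking Christoffel-function asymptotics for smooth Jacobi-type weights (as in \cite{Nevai:1986}), which give $\lambda_d(t) \asymp (d+1)^{-1}(1-t^2)^{\alpha+1/2}$ uniformly in the bulk. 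Combining the two bounds with the displayed identity and the definition of $v(t)$ completes the proof.

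The main obstacle will be checking that the implicit constants in both the Bernstein--Szeg\H{o} upper bound and the Christoffel lower bound can be made \emph{uniform} in $\alpha \in [0,1]$, so that $C_0$ depends only on the range of $p$ and not on individual values. Pointwise upper bounds on $|J_i^{(\alpha)}(t)|$ alone do not produce the matching lower bound (they degenerate near zeros of $J_d^{(\alpha)}$), which is why the Christoffel formulation is essential: the sum of squares is much better behaved than its individual summands. Since $\alpha$ lies in a fixed compact interval and the standard asymptotic expansions for Jacobi polynomials depend continuously on $\alpha$, uniformity should follow by carefully extracting the $\alpha$-dependence from these references. The already-handled boundary values $\alpha=0$ (Legendre) and $\alpha=\tfrac{1}{2}$ (Chebyshev second-kind, via the explicit trigonometric identity in \theoremref{cheby-lewis-weight-eq}) serve as useful sanity checks that the constants obtained do not blow up at the endpoints.
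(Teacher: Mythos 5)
Your proposal matches the paper's proof in all essentials: you perform the same change of basis to Gegenbauer polynomials with $\alpha = \tfrac1p - \tfrac12$, reduce $\tau[\cV^{\frac12-\frac1p}\cP](t)$ to a weighted sum of squared Gegenbauer polynomials, and establish the two-sided bound with the same pair of tools — a uniform pointwise bound on $(J_i^{(\alpha)}(t))^2$ for the upper direction (the paper cites \cite{nevai1994generalized}; your Bernstein--Szeg\H{o} reference supplies the equivalent estimate) and a Christoffel-function lower bound for the weight $(1-t^2)^\alpha$ for the lower direction (the paper uses Theorem 2.1 of \cite{erdelyi1992generalized}, proved uniformly over $\alpha$ in \appendixref{christoffel-bound}). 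Your observation that the constants must be tracked uniformly over the compact range $\alpha\in[0,1]$ is exactly the technical point the paper handles via its explicit $C_\alpha$ and the appendix lemma, so the proposal is correct and not a materially different route.
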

\begin{proof}
We first show that fixing \(\alpha=\frac1p-\frac12\) and letting \mU be the corresponding change-of-basis matrix makes \(\cV^{\frac12-\frac1p}\cP\mU\) have orthogonal columns:
\begin{align*}
	\normof{\cV^{\frac12-\frac1p}\cP\mU\vx}_2^2
	&= \int_{-1}^1 \left( \left(\frac{d+1}{\pi\sqrt{1-s^2}}\right)^{\frac12 - \frac1p} \sum_{i=0}^d x_i J_i^{(\alpha)}(s) \right)^2 ds \\
	&= \left(\frac{d+1}{\pi}\right)^{1-\frac2p} \int_{-1}^1 \left(\sum_{i=0}^d x_i J_i^{(\alpha)}(s) \left((1-s^2)^{-\frac12}\right)^{\frac12 - \frac1p}\right)^2 ds \\
	&= \left(\frac{d+1}{\pi}\right)^{1-\frac2p} \sum_{i=0}^d \sum_{j=0}^d x_ix_j \int_{-1}^1 J_i^{(\alpha)}(s) \ J_j^{(\alpha)}(s) \ (1-s^2)^{(\frac1p-\frac12)} ds \\
	&= \left(\frac{d+1}{\pi}\right)^{1-\frac2p} \sum_{i=0}^d \sum_{j=0}^d x_ix_j \mathbbm1_{[i=j]} \\
	&= \left(\frac{d+1}{\pi}\right)^{1-\frac2p} \normof\vx_2^2
\end{align*}
and so we can reduce \(\tau[\cV^{\frac12-\frac1p}\cP](t)\) to a squared row-norm:
\begin{align*}
	\tau[\cV^{\frac12 - \frac1p}\cP](t)
	&= \max_{\vx\in\bbR^{d+1}} \frac{([\cV^{\frac12 - \frac1p} \cP\vx](t))^2}{\normof{\cV^{\frac12-\frac1p}\cP\vx}_2^2} \\
	&= (\tsfrac{\pi}{d+1})^{1-\frac2p} \max_{\normof\vx_2=1} ([\cV^{\frac12 - \frac1p} \cP\vx](t))^2 \\
	&= (\tsfrac{\pi}{d+1})^{1-\frac2p} (\tsfrac{d+1}{\pi\sqrt{1-t^2}})^{1-\frac2p} \max_{\normof\vx_2=1} ([\cP\vx](t))^2 \\
	&= (1-t^2)^{-(\frac12 - \frac1p)} \sum_{i=0}^d (J_i^{(\alpha)}(t))^2 \\
\end{align*}
Unlike the \(p=1\) case, we are not aware of any way to simplify this sum of squares exactly, so we instead provide nearly matching upper and lower bounds.
For the upper bound, Theorem 1 from \cite{nevai1994generalized} says that \((J_i^{(\alpha)}(t))^2 \leq \frac{C_\alpha}{\pi} \cdot (1-t^2)^{-(\alpha+\frac12)}\).
We then bound
\begin{align*}
    \tau[\cV^{\frac12 - \frac1p}\cP](t)
	&\leq (1-t^2)^{-(\frac12 - \frac1p)} \sum_{i=0}^d \frac{C_\alpha}{\pi} (1-t^2)^{-(\alpha + \frac12)} \\
	&= (1-t^2)^{-(\frac12 - \frac1p)} \sum_{i=0}^d \frac{C_\alpha}{\pi} (1-t^2)^{-\frac1p} \\
	&= (1-t^2)^{-\frac12} (d+1) \frac{C_\alpha}{\pi} \\
	&= C_\alpha \frac{d+1}{\pi\sqrt{1-t^2}} \\
	&= C_\alpha v(t)
\end{align*}
To achieve the lower bound, we appeal to a different form of an orthogonal polynomial guarantee.
We rephrase \(\tau[\cV^{\frac12 - \frac1p}\cP](t)\) in terms of the Generalized Christoffel Function \(\lambda_d(z,2,t) \defeq \min_{q:\text{deg}(q)\leq d} \frac{\int_{-1}^1 (q(s))^2 z(s) ds}{(q(t))^2}\), where \(z(s)\defeq(1-s^2)^{\frac1p-\frac12}\), as defined in Equation 1.5 of \cite{erdelyi1992generalized}.
\begin{align*}
	\tau[\cV^{\frac12 - \frac1p}\cP](t)
	&= \max_{\vx\in\bbR^{d+1}} \frac{([\cV^{\frac12-\frac1p}\cP\vx](t))^2}{\normof{\cV^{\frac12-\frac1p}\cP\vx}_2^2} \\
	&= (\tsfrac\pi{d+1})^{1-\frac2p} (v(t))^{1-\frac2p} \max_{q:\text{deg}(q)\leq d} \frac{(q(t))^2}{\int_{-1}^1 \left(q(s)\right)^2 z(s) ds} \\
	&= (\tsfrac\pi{d+1})^{1-\frac2p} (\tsfrac{d+1}{\pi})^{1-\frac2p} (1-t^2)^{\frac{-1}2(1-\frac2p)} \max_{q:\text{deg}(q)\leq d} \frac{(q(t))^2}{\int_{-1}^1 \left(q(s)\right)^2 z(s) ds} \\
	&= (1-t^2)^{\frac1p - \frac12} \frac1{\min_{q:\text{deg}(q)\leq d} \frac{\int_{-1}^1 \left(q(s)\right)^2 z(s) ds}{(q(t))^2}} \\
	&= (1-t^2)^{\frac1p - \frac12} \frac1{\lambda_d(z,2,t)}
\end{align*}
In \appendixref{christoffel-bound} we adapt Theorem 2.1 of \cite{erdelyi1992generalized} to show that \(\lambda_d(z,2,t) \leq \frac{C}{d-1}(1-t^2)^{\frac1p}\) for some universal constant \(C\) when \(\abs{t}\leq1-O(\frac1{d^2})\).
With this bound, we get \(\tau[\cV^{\frac12 - \frac1p}\cP](t) \geq (1-t^2)^{-\frac12} \frac{d-1}{C}\), so we can show the lower bound required by the almost Lewis weight property:
\[
	\frac{\tau[\cV^{\frac12 - \frac1p}\cP](t)}{v(t)}
	\geq \frac{(1-t^2)^{- \frac12}  \frac{d-1}{C}}{(1-t^2)^{-\frac12} \frac{d+1}{\pi}}
	= \frac{\pi(d-1)}{C(d+1)}
	\geq \frac{\pi}{3C}
\]
And so, we find that \(\frac{\pi}{3C} \leq \frac{\tau[\cV^{\frac12 - \frac1p}\cP](t)}{v(t)} \leq C_\alpha\), completing the proof.
\end{proof}
Again, we see that the Chebyshev measure satisfies the almost Lewis weight property for most \(t\in[-1,1]\), but this does not work in the endcaps.
To remedy this issue, we again appeal to the clipped Chebyshev measure, resulting in \theoremref{chebyshev:ratio}
\begin{reptheorem}{chebyshev:ratio}
There are universal constants $c_1,c_2$ such that, for all $p\in[\frac23,2]$ and $t \in [-1,1]$,
\[
	\frac{c_1}{\log^3 d}\le\frac{\tau[\cW^{1/2-1/p}\cP](t)}{w(t)}\le c_2.
\]
\end{reptheorem}
\noindent
The full proof using this clipped measure is deferred to \sectionref{clipped-analysis}.

\subsection{Constant-Factor Approximation}
\label{sec:smallp:regression:constant}

In order to achieve a constant-factor approximation to the \(L_p\) polynomial regression problem, we want to use \theoremref{chebyshev:ratio} to create a subspace embedding guarantee.
However, as discussed in \sectionref{two-stage}, Lewis weight guarantees have a logarithmic dependence on the number of rows of the full matrix, which is infinite for \cP.

Beyond Lewis weight sampling, it is known that matrix \(L_p\) sensitivity sampling can be done with a suboptimal dependence on the dimension \(d\), but without any dependence on the number \(m\) of rows within the analysis.
So, we bound the \(L_p\) sensitivity function of \cP, showing that \(\tilde O(d^5)\) samples drawn \emph{uniformly} from \([-1,1]\) creates a subspace embedding from the \cP operator to a tall-and-skinny matrix \mA.
With this sensitivity result, we can solve the problem in \theoremref{main} with \(\tilde O(d^5)\) samples:

\begin{definition}[\(L_p\) sensitivity function]
For an operator \(\cA:\bbR^{d+1}\rightarrow L_p([-1,1])\), the \(L_p\) sensitivity function for \cA at time \(t\in[-1,1]\) is
\[
    \psi_p[\cA](t) \defeq \max_{\vx\in\bbR^{d+1}} \frac{\abs{[\cA\vx](t)}^p}{\normof{\cA\vx}_p^p}.
\]
\end{definition}
We show that the sensitivities of \(L_p\) regression are bounded.
\begin{lemma}[Uniform sensitivity bound]
\label{lem:uniform-sensitivity-bound}
For all \(t\in[-1,1]\) and \(p \geq 1\), we have \(\psi_p[\cP](t) \leq d^2(p+1)\)
\end{lemma}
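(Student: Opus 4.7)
The plan is to reduce the statement to a Nikolskii-type inequality for polynomials and then apply \theoremref{markov:bros}. Rewriting the sensitivity in terms of polynomials,
\[
\psi_p[\cP](t) = \max_{\vx \in \bbR^{d+1}} \frac{|[\cP\vx](t)|^p}{\|\cP\vx\|_p^p} = \max_{q:\,\deg(q) \le d} \frac{|q(t)|^p}{\|q\|_p^p},
\]
so it suffices to show that for every degree $d$ polynomial $q$ and every $t \in [-1,1]$, $|q(t)|^p \le d^2(p+1)\,\|q\|_p^p$. Notice that the particular point $t$ does not really matter for the proof: it is enough to control $\|q\|_\infty^p$ by $\|q\|_p^p$.

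Fix an arbitrary polynomial $q$ of degree at most $d$. Let $t^* \in [-1,1]$ be a point where $|q|$ attains its maximum on $[-1,1]$, and set $M = |q(t^*)|$. Scaling $q/M$ so it has magnitude at most $1$ on $[-1,1]$ and invoking the Markov brothers' inequality (\theoremref{markov:bros}) yields $|q'(s)| \le d^2 M$ for all $s \in [-1,1]$. Integrating this derivative bound along any ray emanating from $t^*$ gives the pointwise lower bound $|q(s)| \ge M\bigl(1 - d^2\,|s - t^*|\bigr)$ for all $s$ within distance $1/d^2$ of $t^*$ that still lie in $[-1,1]$.

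Next, I would lower bound $\|q\|_p^p$ by integrating this estimate on \emph{one} side of $t^*$. Since $t^* \in [-1,1]$, at least one of the two intervals $[t^* - 1/d^2, t^*]$ or $[t^*, t^* + 1/d^2]$ is entirely contained in $[-1,1]$ (indeed, one side always has room at least $1 \ge 1/d^2$). Restricting to such a side and using the substitution $u = d^2\,(s - t^*)$,
\[
\|q\|_p^p \;\ge\; \int_{t^*}^{t^* + 1/d^2} M^p \bigl(1 - d^2\,(s - t^*)\bigr)^p\, ds \;=\; \frac{M^p}{d^2} \int_0^1 (1-u)^p\, du \;=\; \frac{M^p}{d^2(p+1)}.
\]
Combining this with the trivial inequality $|q(t)|^p \le M^p$ gives $|q(t)|^p \le d^2(p+1)\,\|q\|_p^p$, as desired. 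Taking the supremum over $q$ yields $\psi_p[\cP](t) \le d^2(p+1)$.

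The only real subtlety is the boundary case where $t^*$ is very close to $\pm 1$; this is handled by always choosing whichever of the two one-sided intervals of length $1/d^2$ around $t^*$ stays inside $[-1,1]$, which is guaranteed to exist. Otherwise the argument is short and direct — the heavy lifting is done by the Markov brothers' inequality, which converts a pointwise maximum bound on $|q|$ into a local lower bound, and integration converts that local lower bound into an $L_p$ norm bound.
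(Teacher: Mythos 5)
Your proposal is correct and follows essentially the same approach as the paper: apply the Markov brothers' inequality at the point where $|q|$ attains its maximum to get a linear lower bound on $|q|$ nearby, then integrate $(1-d^2|s-t^*|)^p$ over a one-sided interval of length $1/d^2$ to lower bound $\|q\|_p^p$ by $M^p/(d^2(p+1))$. You are slightly more careful than the paper's write-up about explicitly noting that one of the two one-sided intervals around $t^*$ always fits inside $[-1,1]$, but this is the same argument.
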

\begin{proof}
Note that \(\psi_p[\cP] \defeq \max_{\vx\in\bbR^{d+1}} \frac{\abs{[\cP\vx](t)}^p}{\normof{\cP\vx}_p^p} = \max_{q:\text{deg}(q)\leq d} \frac{\abs{q(t)}^p}{\int_{-1}^1 \abs{q(s)}^p ds}\)
Without loss of generality we take \(q(t)=1\).
Let \(C \defeq \max_{s\in[-1,1]} \abs{q(x)}\) and \(s^* \defeq \argmax_{s\in[-1,1]} \abs{q(x)}\).
By the Markov brothers' inequality, we have \(\abs{q(s^* + s)} \geq C - Cd^2s \geq 0\) for any \(\abs{s} \leq \frac1{d^2}\).
Then we can lower bound the integral in the denominator of \(\psi_p\) by
\[
    \int_{-1}^1 \abs{q(s)}^p\,ds
    \geq \int_{0}^{\frac1{d^2}} (C - C d^2 s)^p\,ds
    = \frac{-1}{Cd^2(p+1)}(C-Cd^2x)^{p+1} \bigg|_0^{1/d^2}
    \geq \frac{1}{d^2(p+1)}
\]
so that
\[
    \psi_p[\cP](t)
    = \frac{\abs{q(t)}^p}{\int_{-1}^1\abs{q(x)}^p\,dx}
    \leq d^2(p+1)
\]
\end{proof}
Next we show that since uniform sampling is oversampling with respect to the sensitivities, we can get an \(L_p\) subspace embedding with \(\tilde O(d^5)\) samples:
\begin{theorem}
\label{thm:sensitivity-correctness-Lp}
Let \(p \geq 1\) and suppose \(s_1,\ldots,s_{n_0}\) are drawn uniformly from \([-1,1]\). 
Let \(\mA\in\bbR^{n_0 \times (d+1)}\) be the associated Vandermonde matrix, so that \(\mA_{i,j} = s_i^{j-1}\).
Let \(\vb\in\bbR^{n_0}\) be the evaluations of \(f\), so that \(\vb(i) = f(s_i)\).
For \(n_0 = O\left(d^5 p^2 2^p \log d\right)\), there exists a universal constant \(c\) such that the sketched solution \(\vx_c = \argmin_{\vx} \normof{\mA\vx-\vb}_p\) satisfies
\[
    \normof{\cP\vx_c-f}_p \leq c \min_{\vx\in\bbR^{d+1}}\normof{\cP\vx-f}_p
\]
with probability at least \(\frac{11}{12}\). 

\noindent Further, let \(\eps\in(0,1)\) and suppose \(\normof f_p \leq C \, \min_{\vx}\normof{\cP\vx-f}_p\).
If \(n_0 = O\left(\frac{1}{\eps^{O(p^2)}}\,d^5p^{O(p)}\log\frac{d}{\eps}\right)\), then
\[
    \normof{\cP\hat\vx - f}_p^p \leq (1+\eps) \min_{\vx} \normof{\cP\vx-f}_p^p
\]
with probability at least \(\frac {11}{12}\).
In particular, suppose \(\vx_c\) is computed from sampling \(f\) uniformly at least \(O(d^5 p^2 2^p \log(d))\)times, we let \(\hat f(t) \defeq f(t) - [\cP\vx_c](t)\), and compute \(\hat\vx\) by sampling \(\hat f\) uniformly at least \(O\left(\frac{1}{\eps^{O(p^2)}}\,d^5p^{O(p)}\log\frac{d}{\eps}\right)\) times.
Then, if we let \(\tilde \vx \defeq \vx_c + \hat\vx\), we have
\[
    \normof{\cP\tilde\vx - f}_p^p \leq (1+\eps) \min_{\vx\in\bbR^{d+1}} \normof{\cP\vx-f}_p^p
\]
\end{theorem}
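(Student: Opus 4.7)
The plan is to reduce uniform sampling on $[-1,1]$ to $L_p$ sensitivity (over)sampling via \lemmaref{uniform-sensitivity-bound}, apply standard matrix-sensitivity-sampling guarantees in the style of \cite{MuscoMWY21,BravermanDMMUWZ20} to the Vandermonde sketch $\mA$, and combine a constant-factor subspace-embedding bound with the $(1+\eps)$-error affine-embedding bound via a residual (two-stage) argument. Since $\psi_p[\cP](t)\le d^2(p+1)$ uniformly in $t$ by \lemmaref{uniform-sensitivity-bound}, the constant function $u(t)\equiv d^2(p+1)$ is a valid pointwise upper bound on the $L_p$ sensitivities with total mass $T \defeq \int_{-1}^1 u(t)\,dt = 2d^2(p+1)$. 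Drawing $s_i$ uniformly from $[-1,1]$ is identical to drawing from the density $u/T$, so uniform sampling is precisely sensitivity oversampling with bound $u$; the usual sensitivity-sampling weights $(n_0 u(s_i)/T)^{-1/p}$ are then global constants that do not change $\argmin_\vx\|\mA\vx-\vb\|_p$, so the unweighted sketched problem in the theorem is the correct one to analyze.

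For the constant-factor claim I would prove a subspace embedding $\|\mA\vx\|_p^p \approx_c \|\cP\vx\|_p^p$ by the standard covering argument. Fix $\vx$: each sample contributes $|[\cP\vx](s_i)|^p$, an unbiased (up to a factor of $\tfrac12$) estimate of $\|\cP\vx\|_p^p$ whose per-sample magnitude is at most $\psi_p[\cP](s_i)\,\|\cP\vx\|_p^p \le d^2(p+1)\,\|\cP\vx\|_p^p$, so a Bernstein-type bound gives constant-factor concentration. To make this uniform over $\vx\in\bbR^{d+1}$, I would cover the unit $L_p$-ball in the range of $\cP$ with an $\eps_0$-net -- whose log-cardinality is $O(d\log d)$ by standard net bounds on $(d+1)$-dimensional subspaces combined with Markov brothers' inequality (\theoremref{markov:bros}) and the uniform sensitivity bound -- and take a union bound. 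Tracking all factors yields $n_0 = O(d^5 p^2 2^p \log d)$, after which $\|\mA\vx\|_p \approx_c \|\cP\vx\|_p$ plus the standard sketch-and-solve argument gives $\|\cP\vx_c - f\|_p \le c\,\mathrm{OPT}$, where $\mathrm{OPT} \defeq \min_\vx\|\cP\vx - f\|_p$.

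For the $(1+\eps)$ part I would upgrade the subspace embedding to an affine embedding $\|\mA\vx - \vb\|_p^p \approx_{1+\eps} \|\cP\vx - f\|_p^p$ valid for all $\vx$ in a neighborhood of the optimum. The hypothesis $\|f\|_p\le C\,\mathrm{OPT}$ is what confines the relevant $\vx$ to a cone of bounded $L_p$-radius around $\vx^*$; I would cover this cone with an $\eps$-net and apply, at each net point, a higher-order moment concentration bound (Rosenthal/Bernstein for $p$-th moments), where each summand is again bounded via $\psi_p[\cP]\le d^2(p+1)$. The $p$-th-moment losses cost $2^{O(p)}$ per sample and $\eps^{-O(p)}$ in the variance; combined with the refined net and careful bookkeeping of the $p$-dependence this yields $n_0 = O(d^5 p^{O(p)}\log(d/\eps)/\eps^{O(p^2)})$, and the sketch-and-solve paradigm applied to the affine embedding then gives the single-stage $(1+\eps)$ guarantee on $\hat\vx$ under the assumption on $\|f\|_p$. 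For the final two-stage composition, set $\hat f \defeq f - \cP\vx_c$. The constant-factor result gives $\|\hat f\|_p \le c\,\mathrm{OPT}$, and the substitution $\vy = \vx + \vx_c$ shows $\min_\vx\|\cP\vx - \hat f\|_p = \mathrm{OPT}$, so $\hat f$ satisfies the hypothesis of the $(1+\eps)$ part with $C = c$. Applying that result to $\hat f$ produces $\hat\vx$ with $\|\cP\hat\vx - \hat f\|_p^p \le (1+\eps)\,\mathrm{OPT}^p$, and $\tilde\vx \defeq \vx_c + \hat\vx$ satisfies $\cP\tilde\vx - f = \cP\hat\vx - \hat f$, completing the proof.

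The main obstacle is the affine embedding: a subspace embedding is essentially immediate from a uniform sensitivity bound, but passing to a relative-error regression guarantee requires uniform concentration over an $f$-dependent cone, which is what forces the $\eps^{-O(p^2)}$ and $p^{O(p)}$ blow-ups. Because $\cP$ is infinite-dimensional, the net must be constructed in the range of $\cP$ (an $L_p$-function space on $[-1,1]$) rather than its finite-dimensional domain, and the log-covering numbers must be controlled via polynomial smoothness; the uniform sensitivity bound and Markov brothers' are precisely what make this possible, and the careful tracking of $p$-dependence through moment concentration and net cardinality is what produces the final exponents.
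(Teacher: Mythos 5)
Your constant-factor argument matches the paper's Lemma~\ref{lem:oper:const-factor-err}: use the uniform sensitivity bound \(\psi_p[\cP](t)\le d^2(p+1)\) from \lemmaref{uniform-sensitivity-bound}, apply Bernstein/bounded-differences concentration per fixed \(\vx\), union-bound over a net in the range of \cP, and conclude by sketch-and-solve. Your two-stage composition at the end (\(\hat f \defeq f - \cP\vx_c\), substitute \(\vy = \vx+\vx_c\), apply the second part to \(\hat f\)) is also exactly what the paper does.

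The gap is in your \((1+\eps)\) step. You claim an affine embedding \(\|\mA\vx-\vb\|_p^p \approx_{1+\eps} \|\cP\vx-f\|_p^p\) over a cone around the optimum, asserting that ``each summand is again bounded via \(\psi_p[\cP]\le d^2(p+1)\).'' This is not true: the summand is \(|[\cP\vx](s_i) - f(s_i)|^p\), and the sensitivity bound only controls \(|[\cP\vx](s_i)|\) (i.e.\ vectors in the range of \cP). It says nothing about \(f(s_i)\), which can be arbitrarily large pointwise even when \(\|f\|_p\) is small. Consequently neither the per-sample magnitude nor the variance of the residual summand is bounded, and no direct Bernstein/Rosenthal argument gives relative-error concentration of \(\|\mA\vx-\vb\|_p^p\) around \(\|\cP\vx-f\|_p^p\). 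Indeed, a single draw landing where \(f\) is huge can swing the sketched objective by a constant factor with constant probability, so the affine embedding as you state it is false for heavy-tailed \(f\).

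The paper's Lemma~\ref{lem:uni:oper:rel:lp} sidesteps this with a decomposition of \([-1,1]\) into a ``good'' set \(G=\{t:|f(t)|^p\le \tau p^{2p}/\eps^{p^2}\}\) and a ``bad'' set \(B\). On \(B\), \(|f|\) so dominates any \(|[\cP\vz]|\) with \(\|\cP\vz\|_p=O(\mathrm{OPT})\) that \(|[\cP\vz](u)-f(u)|^p = (1\pm O(\eps))|f(u)|^p\) \emph{independently of} \(\vz\); this term need not concentrate, because it is approximately the same additive constant for every candidate and cancels when comparing \(\hat\vx\) to \(\vx^*\). On \(G\), \(|f|\) is uniformly bounded by construction, so the residual summand \emph{is} bounded and concentration plus a net works. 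Your proposal omits this split, and without it the \((1+\eps)\) bound does not go through. The fix is to introduce the \(G/B\) decomposition (or an equivalent truncation of \(f\)) before applying any concentration inequality.
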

The proof of this theorem is a standard sensitivity sampling analysis combined with our bounds on the $L_p$ sensitivities, so it is deferred to \appendixref{sensitivity-sampling}.

To decrease this sample complexity further, we apply Lewis weight subsampling to the matrix \mA.
Since the rows of \mA are drawn \emph{uniformly} from \([-1,1]\), we can show that the Lewis weights of \mA closely approximate the Lewis weights of \cP.
So, by \theoremref{chebyshev:ratio}, we know that the Chebyshev measure upper bounds the Lewis weights of \mA.
That is, \textit{we can bound the Lewis weights of \mA without ever even building the matrix}.
Formally, we give the following guarantee:

\begin{theorem}
\label{thm:sens:lewis:approx}
Let \mA, and \(n_0\) as in either part of \theoremref{sensitivity-correctness-Lp}.
Then, with probability \(\frac{11}{12}\), for all \(i\in[n_0]\), the \(\ell_p\) Lewis weight of \mA at row \(i\) is at most \(\frac{1}{n_0} v(s_i) \polylog(d)\) and at least \(\frac{1}{n_0 \polylog(d)} w(s_i)\).
\end{theorem}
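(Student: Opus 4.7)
The plan is to set $\widetilde{w}_i \defeq w(s_i)/n_0$ and argue that, with probability at least $11/12$ over the uniform samples $s_1,\ldots,s_{n_0}$, this sequence is a $\polylog(d)$-almost $\ell_p$ Lewis weight sequence for $\mA$. The Cohen--Peng comparison for $p\in(0,2]$ (the matrix analogue of the almost-Lewis bound invoked in Section~\ref{sec:overview-bounding-lewis-weights}) then implies the true Lewis weights $w_p[\mA](i)$ agree with $\widetilde{w}_i$ up to a $\polylog(d)$ factor. The lower bound of the theorem is then immediate, and the upper bound follows from the pointwise inequality $w(t)\le v(t)$.

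The computational core is the identity (after the $n_0^{1-2/p}$ factors cancel between numerator and denominator)
\[
\tau\!\bigl[\widetilde{\mathbf W}^{\frac12-\frac1p}\mA\bigr](i) \;=\; \max_{\deg q\le d}\ \frac{w(s_i)^{1-2/p}\,q(s_i)^2}{\sum_{j=1}^{n_0} w(s_j)^{1-2/p}\,q(s_j)^2}.
\]
If one shows the denominator concentrates to $\frac{n_0}{2}\int_{-1}^1 w(t)^{1-2/p}q(t)^2\,dt = \frac{n_0}{2}\|\cW^{\frac12-\frac1p}\cP\vx\|_2^2$ uniformly over all degree-$d$ polynomials $q$, then the displayed ratio equals $\Theta(1/n_0)\cdot \tau[\cW^{\frac12-\frac1p}\cP](s_i)$, which by Theorem~\ref{thm:chebyshev:ratio} becomes $\Theta_{\polylog(d)}(w(s_i)/n_0) = \Theta_{\polylog(d)}(\widetilde{w}_i)$---exactly the almost-Lewis property.

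For the uniform concentration, I would normalize $q$ so that $\int w^{1-2/p}q^2 = 1$. The pointwise bound from Theorem~\ref{thm:chebyshev:ratio} gives $w(t)^{1-2/p}q(t)^2 \le c_2\,w(t) \le c_1 c_2 (d+1)^2$, using the clipping $w(t)\le c_1(d+1)^2$. Hence each summand lies in $[0,O(d^2)]$ with mean $\tfrac12$, and Bernstein's inequality yields constant-factor concentration at a single $q$ once $n_0 \gtrsim d^2 \log(1/\delta)$. A $\gamma$-net of size $(C/\gamma)^{d+1}$ on the $(d{+}1)$-dimensional unit sphere of $\|\cW^{\frac12-\frac1p}\cP\,\cdot\,\|_2$, combined with a union bound, handles the net; the same $O(d^2)$ uniform bound doubles as a Lipschitz estimate to extend the guarantee to every $\vx$. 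The budget $n_0 = \poly(d)/\poly(\epsilon)$ supplied by Theorem~\ref{thm:sensitivity-correctness-Lp} comfortably absorbs the resulting $\tilde O(d^3)$ requirement.

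The main obstacle is managing the aggregate $\polylog(d)$ distortion through all three stages (Bernstein concentration, $\epsilon$-net union bound, and off-net Lipschitz extension), since any polynomial-in-$d$ loss would break the final polylog comparison with the true Lewis weights. This is precisely where the \emph{clipped} measure is essential: the unclipped Chebyshev density $v(t)$ diverges at $t=\pm1$, so the summands $v(s_j)^{1-2/p}q(s_j)^2$ would be unbounded and Bernstein would fail. Clipping at $c_1(d+1)^2$ converts this to a bounded-summand problem, and the matching pointwise leverage bound from Theorem~\ref{thm:chebyshev:ratio} guarantees that the proposed $\widetilde{w}_i$ are consistent with this clipping so that the Cohen--Peng step closes out the argument.
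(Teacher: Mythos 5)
Your proposal is correct and takes essentially the same route as the paper's own proof: both set the candidate weights proportional to $w(s_i)/n_0$, establish a uniform concentration of the empirical quadratic form $\sum_j w(s_j)^{1-2/p}q(s_j)^2$ around $\frac{n_0}{2}\|\cW^{\frac12-\frac1p}\cP\vx\|_2^2$ (the paper states this as the PSD sandwich in Appendix~\ref{app:reweighted-subspace-embedding} and manipulates the inner-product form $\tau[\mW^{\frac12-\frac1p}\mA](i)=(\mW_{ii})^{1-\frac2p}\va_i^\top(\mA^\top\mW^{1-\frac2p}\mA)^{-1}\va_i$, which is algebraically the same as your direct ratio calculation), then apply Theorem~\ref{thm:chebyshev:ratio} and the Cohen--Peng almost-Lewis comparison, with the upper bound following from $w(t)\le v(t)$. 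Your use of Theorem~\ref{thm:chebyshev:ratio} to bound the summands by $O(d^2)$ is a slightly sharper per-sample bound than the cruder $\frakS=O(d^5p)$ the paper's Appendix~B derives, but both are easily absorbed by the $n_0$ budget from Theorem~\ref{thm:sensitivity-correctness-Lp}, so the argument goes through identically.
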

\begin{proof}
Let \(\mW\in\bbR^{n_0 \times n_0}\) be a diagonal matrix that represents our candidate \(\ell_p\) Lewis weights for \mA, with \(\mW_{ii} \defeq \gamma w(s_i)\), where \(\gamma \defeq \frac{2}{n_0}\) is a rescaling factor.
In \appendixref{reweighted-subspace-embedding} we use a standard $\epsilon$-net argument to show the spectral approximation
\[
    \frac12 \cP^\top\cW^{1-\frac2p}\cP \preceq \gamma^{-\frac2p}\mA^\top\mW^{1-\frac2p}\mA \preceq 2\cP^\top\cW^{1-\frac2p}\cP
\]
holds with probability \(\frac{11}{12}\).
We condition on this event.

Then note the inner-product form of the leverage scores: \(\tau[\mA]_i = \va_i^\top(\mA^\top\mA)^{-1}\va_i\) where \(\va_i\) is the \(i^{th}\) row of \mA, and \(\tau[\cP](t) = \vp_t^\top(\cP^\top\cP)^{-1}\vp_t\) where \(\vp_t \defeq [1~t~t^2~\ldots~t^d]\) is the row of \cP at time \(t\) (Theorem 5 from \cite{AvronKapralovMusco:2019} or Lemma 1 from \cite{meyer2022basic}).
Then we can examine the rescaled leverage scores:
\begin{align*}
    \tau[\mW^{\frac12-\frac1p}\mA](i)
    &= (\mW_{ii})^{1-\frac2p} \va_i^\top(\mA^\top\mW^{1-\frac2p}\mA)^{-1}\va_i \\
    &\leq 2 (\mW_{ii})^{1-\frac2p} \gamma^{\frac2p} \va_i^\top(\cP^\top\cW^{1-\frac2p}\cP)^{-1}\va_i \\
    &= 2 (\gamma w(s_i))^{1-\frac2p} \gamma^{\frac2p} \vp_{s_i}^\top(\cP^\top\cW^{-1}\cP)^{-1}\vp_{s_i} \\
    &= 2 \gamma\ \tau[\cW^{-1}\cP](s_i)
\end{align*}
and we can similarly show that \(\tau[\mW^{\frac12-\frac1p}\mA](i) \geq \frac12 \, \gamma \tau[\cW^{\frac12-\frac1p}\cP](s_i)\).
So now we can use \theoremref{chebyshev:ratio} to show the almost Lewis weight property holds on \mA:
\[
    \frac{\tau[\mW^{\frac12-\frac1p}\mA](i)}{\mW_{ii}}
    \leq \frac{2\gamma\ \tau[\cW^{\frac12-\frac1p}\cP](s_i)}{\gamma\ w(s_i)}
    = 2\frac{\tau[\cW^{\frac12-\frac1p}\cP](s_i)}{w(s_i)}
    \leq \log^3(d)
\]
and similarly we can show the lower bound \(\frac{\tau[\mW^{\frac12-\frac1p}\mA](i)}{\mW_{ii}} \geq \log^3(d)\).
Therefore, \(\mW_{ii} = \frac{2}{n_0} w(t)\) are \(\ell_p\) almost Lewis weights for \mA.
Further, since \(v(t) \geq w(t)\), we have that \(\frac{C}{n_0} v(t)\) upper bound the \(\ell_p\) Lewis weights for \mA for some constant \(C\).
\end{proof}

This na\"ively suggests an $\tilde{O}(d^5)$ runtime algorithm to pick \(O(d \polylog d)\) samples that give optimal \(L_p\) regression: sample $n_0 = O(d^5\log d)$ times uniformly from \([-1,1]\), and for each sample, throw it away with probability \(1-\min\{\frac{1}{n_0} v(s_i) \polylog(d), 1\}\).
Then, with high probability, \(O(d)\) samples remain and the resulting subsampled matrix is an \(L_p\) subspace embedding. 
Formally, this argument uses the following result from \cite{CohenP15}:

\begin{theorem}[Theorem 7.1 from \cite{CohenP15}\footnote{This is stated with a slightly different sampling method as in \cite{CohenP15}, but the theorem holds by applying standard methods to their Rademacher analysis}]
\label{thm:lewis-weight-accept-reject-l1}
Let \(\mA\in\bbR^{n_0 \times d+1}\) and \(p\in[1,2]\).
Let \(w_p[\mA](1),\ldots,w_p[\mA](n_0)\) be the \(\ell_p\) Lewis weights of \mA, and let \(\tilde w_i \geq C w_p[\mA](i)\) for all \(i\) such that \(\sum_i \tilde w_i = \tilde O(d)\).
Define probabilities \(p_i \defeq \min\{1, \frac{m}{n_0} \tilde w_i\}\), and build the diagonal matrix \(\mS\in\bbR^{n_0 \times n_0}\) such that \(\mS_{ii}\) takes value \(\frac{1}{(p_i)^{1/p}}\) with probability \(p_i\) and is \(0\) otherwise.
Remove the rows of \mS that are all zero.
Suppose we pick \(m\), the expected number of remaining rows, to be \(m = O(d \polylog(d))\).
Then with probability \(\frac{11}{12}\), for all \(\vx\in\bbR^{d+1}\), we have \(\normof{\mS\mA\vx}_p \approx_2 \normof{\mA\vx}_p\).
\end{theorem}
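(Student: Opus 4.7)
The plan is to follow the classical Rademacher symmetrization and chaining argument of Talagrand and Bourgain, in the form adapted by Cohen and Peng. The proof decomposes into two parts: a pointwise concentration bound showing $\normof{\mS\mA\vx}_p^p \approx_2 \normof{\mA\vx}_p^p$ at any fixed $\vx$, and a uniform extension to all $\vx$ in the column span of $\mA$ via an $\epsilon$-net/chaining argument.

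First I would introduce the reweighted matrix $\mM \defeq \tilde{\mW}^{1/2-1/p}\mA$, where $\tilde{\mW}$ is diagonal with $\tilde{\mW}_{ii} = \tilde w_i$. The key structural fact -- which is what makes \emph{approximate} Lewis weights usable in place of exact ones -- is that the $\ell_2$ leverage scores $\tau_i(\mM)$ are, up to the constant $C$ and the polylog slack implied by $\sum_i \tilde w_i = \tilde O(d)$, bounded above by $\tilde w_i$. Consequently the sampling probabilities $p_i = \min\{1,\frac{m}{n_0}\tilde w_i\}$ over-sample the leverage scores of $\mM$, and the expected number of surviving rows is $m = \tilde O(d)$.

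For a fixed $\vx$, the quantity $\normof{\mS\mA\vx}_p^p = \sum_i \frac{\mathbbm{1}[i\text{ sampled}]}{p_i}|[\mA\vx](i)|^p$ is a sum of independent nonnegative random variables with mean $\normof{\mA\vx}_p^p$. The per-term magnitude can be controlled because $|[\mA\vx](i)|^p = \tilde w_i^{p/2} \cdot |[\mM\vy](i)|^p$ for the $\vy$ with $\mA\vx = \tilde{\mW}^{1/p - 1/2}\mM\vy$, and the standard leverage bound $|[\mM\vy](i)|^2 \leq \tau_i(\mM)\normof{\mM\vy}_2^2 \leq \tilde w_i\normof{\mM\vy}_2^2$ ensures that each sampled term is at most a polylog factor times $\normof{\mA\vx}_p^p/m$. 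Bernstein's inequality then yields the constant-factor concentration at fixed $\vx$ with exponentially small failure probability.

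The main technical obstacle is the uniform bound over all $\vx$ in the column span of $\mA$: a naive union bound over an $\epsilon$-net in the $\ell_p$ ball of $\mA$ is too wasteful because the ambient row-count $n_0$ is large. Instead I would symmetrize with Rademacher signs and bound the expected Rademacher complexity $\mathbb{E}_{\varepsilon}\sup_{\vx}\left|\sum_i \varepsilon_i p_i^{-1}\mathbbm{1}[i\text{ sampled}]\,|[\mA\vx](i)|^p\right|$ via Dudley's entropy integral. Since $p \leq 2$, Talagrand's contraction principle applies (via a Lipschitz reparametrization of $|t|^p$) and reduces the chaining to an $\ell_2$-style metric built from $\mM$; the covering numbers of this metric are controlled using $\sum_i \tau_i(\mM) \leq d+1$. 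Evaluating the entropy integral produces the $\polylog(d)$ overhead, and combining this uniform Rademacher bound with standard desymmetrization yields the uniform $\ell_p$ subspace embedding at sample complexity $m = O(d\polylog(d))$.
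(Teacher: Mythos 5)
The paper does not actually prove this statement: it imports it from Cohen and Peng (Theorem~7.1 of [CohenP15]), and the footnote only observes that the Bernoulli-keep sampling model used here differs from Cohen--Peng's in an immaterial way, so that "the theorem holds by applying standard methods to their Rademacher analysis." Your sketch is a reasonable reconstruction of exactly what that remark alludes to: reweight by $\tilde{\mW}^{1/2-1/p}$, get a pointwise Bernstein bound, then symmetrize, apply a contraction step to pass from $|t|^p$ to a quadratic surrogate, and run a Dudley entropy-integral chaining bound over the range of the reweighted matrix. In that sense you are following the same route the paper gestures at rather than inventing a different one.

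Two places need repair before this would constitute a proof. First, an algebra slip: from $\mM = \tilde{\mW}^{1/2-1/p}\mA$ you get $[\mA\vx](i) = \tilde w_i^{1/p-1/2}[\mM\vx](i)$, so $|[\mA\vx](i)|^p = \tilde w_i^{\,1-p/2}\,|[\mM\vx](i)|^p$, not $\tilde w_i^{\,p/2}|[\mM\vx](i)|^p$; the two exponents coincide only at $p=1$, and the error propagates into your per-term magnitude bound. Second, and more substantively, the "key structural fact" you assert --- that $\tau_i(\mM) \lesssim \tilde w_i$ when the $\tilde w_i$ are merely overestimates of the true Lewis weights --- is not obvious and is stated with no justification. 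For $p<2$ the map $w \mapsto \tau_i\bigl(\mW^{1/2-1/p}\mA\bigr)$ is not pointwise monotone in $w$ in any straightforward way (increasing $w_i$ decreases the prefactor $w_i^{1-2/p}$ but also increases $(\mA^\top\mW^{1-2/p}\mA)^{-1}$ in PSD order, so the net effect is unclear). You should either cite and invoke the corresponding lemma of Cohen--Peng about approximate Lewis weights, or sidestep it entirely by working directly with the $\ell_p$ sensitivity bound $|[\mA\vx](i)|^p \leq w_p[\mA](i)\,\normof{\mA\vx}_p^p \leq \frac{1}{C}\tilde w_i\,\normof{\mA\vx}_p^p$, which holds for $p\le 2$ since $\ell_p$ sensitivities are dominated by Lewis weights in that range and gives the needed per-term control without passing through $\tau_i(\mM)$ at all.
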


This \(\tilde{O}(d^5)\) time algorithm certainly suffices to give the near-optimal sample complexity for constant \(\eps\), but we can improve the time complexity.
In particular, since we exactly know the distribution of \(s_1,\ldots,s_{n_0}\) and the probabilities of the coins \(p_1,\ldots,p_{n_0}\), we can directly compute the marginal distribution of times that result from both sampling procedures:

\begin{lemma}
\label{lem:stage-squashing}
Suppose \(n_0\) time samples are drawn uniformly from \([-1,1]\), and each sample is thrown away with probability \(1-\min\{\frac{m}{n_0} \frac{1}{\sqrt{1-s_i^2}}, 1\}\).
Let \(n\) denote the number of remaining samples.
Then \(n\) is distributed as \(B(n_0, O(\frac{m}{n_0}))\), and with probability \(\frac{99}{100}\) the resulting samples cannot be distinguished from iid samples from the Chebyshev measure.
\end{lemma}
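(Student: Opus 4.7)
The plan is to observe that this is a rejection sampling calculation: since each sample $s_i$ is drawn independently and retained with probability depending only on $s_i$, the retention events $B_1,\ldots,B_{n_0}$ are independent Bernoullis with a common mean, and the accepted samples are iid draws from the induced conditional density. Both halves of the lemma follow by computing that mean and comparing the induced density to the true Chebyshev density.

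First I would establish the binomial claim. Write $q(s)\defeq\min\{\frac{m}{n_0\sqrt{1-s^2}},1\}$ and let $a \defeq \sqrt{1-(m/n_0)^2}$, so that $q(s)=1$ precisely on $|s|\ge a$. Setting $p_{\text{keep}}\defeq \frac{1}{2}\int_{-1}^1 q(s)\,ds$, I split the integral as
\[
p_{\text{keep}} = (1-a) + \frac{m}{n_0}\arcsin(a).
\]
Using $1-a=\frac{1}{2}(m/n_0)^2+O((m/n_0)^4)$ and $\arcsin(a)=\frac{\pi}{2}-O(m/n_0)$, I obtain $p_{\text{keep}}=\frac{\pi m}{2n_0}+O((m/n_0)^2)=\Theta(m/n_0)$. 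Since the $s_i$ are iid and each acceptance indicator $B_i$ depends only on $s_i$, the $B_i$ are iid Bernoullis with mean $p_{\text{keep}}$, so $n=\sum_i B_i \sim B(n_0,p_{\text{keep}})=B(n_0,O(m/n_0))$.

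Next I would handle the indistinguishability claim via a total variation bound. The conditional density of a single accepted sample is $f(s)\defeq \frac{q(s)}{2p_{\text{keep}}}$. On $|s|\le a$, $f(s) = \frac{m}{2n_0 p_{\text{keep}}} \cdot \frac{1}{\sqrt{1-s^2}}$ and the ratio $f(s)/g_{\text{Cheb}}(s) = \frac{\pi m}{2n_0 p_{\text{keep}}} = 1+O(m/n_0)$, so the contribution to TV on this region is at most $O(m/n_0)\cdot\int_{|s|\le a} g_{\text{Cheb}}(s)\,ds=O(m/n_0)$. On $|s|>a$, both $f$ and $g_{\text{Cheb}}$ have total mass $O(m/n_0)$, so the TV contribution is again $O(m/n_0)$. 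Hence the per-sample TV distance is $\delta=O(m/n_0)$. Crucially, conditional on the total count $n$ being any fixed value, the accepted samples are iid draws from $f$: this follows because the joint $(s_i,B_i)$ are iid, so conditioning on which indices satisfy $B_i=1$ leaves the $s_{i_j}$ conditionally independent with density $f$. By the standard tensorization inequality $\mathrm{TV}(f^{\otimes n},g_{\text{Cheb}}^{\otimes n})\le n\delta$, so averaging over $n$,
\[
\mathrm{TV}\bigl(\text{two-stage output},\text{Cheb samples}\bigr)\le \E[n]\cdot\delta = n_0 p_{\text{keep}} \cdot O(m/n_0) = O(m^2/n_0).
\]
With the parameter choices from the footnote on Theorem~\ref{thm:main} ($n_0 \ge \poly(d)$ much larger than $m^2$), this is smaller than $1/100$, giving the $99/100$ indistinguishability bound.

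The only subtle step is justifying that the accepted samples are exactly iid (not merely exchangeable) when we condition on the count $n$; once that is in place the rest is elementary calculus and a tensorization bound. The main ``obstacle'' is therefore bookkeeping: verifying the clipping region contributes only $O(m/n_0)$ to the per-sample TV and ensuring that the parameter regime guarantees $m^2 \ll n_0$, both of which hold under the parameter choices announced in Theorem~\ref{thm:main}.
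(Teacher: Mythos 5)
Your proof is correct and follows the same two-part structure as the paper's: first isolate the fact that, by the rejection-sampling interpretation, the number of retained samples is $B(n_0, p_{\text{keep}})$ and the accepted samples are (conditionally on the count) iid from the normalized retention density; then bound the TV distance from that density to the Chebyshev density and tensorize. Your handling of the iid-conditional-on-count step, the convexity/coupling averaging over $n$, and the tensorization $\mathrm{TV}(f^{\otimes n}, g^{\otimes n}) \le n\delta$ are all sound.

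One point worth flagging: your per-sample TV bound of $O(m/n_0)$ differs from the paper's claimed $O((m/n_0)^2)$, and yours is actually the correct order. The $(m/n_0)^2$ figure is the \emph{unnormalized} $L^1$ gap $\int \big| \min\{1, \tfrac{m}{n_0\sqrt{1-s^2}}\} - \tfrac{m}{n_0\sqrt{1-s^2}} \big| \, ds$, but once both functions are normalized to probability densities the two normalizing constants $Z_f = \pi m/n_0 - \Theta((m/n_0)^2)$ and $Z_v = \pi m/n_0$ differ by a relative factor $1 + \Theta(m/n_0)$, and this shift over the bulk of the interval dominates. So the final TV is $\Theta(m^2/n_0)$, not $O(m^3/n_0^2)$. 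Fortunately, with $n_0 = \tilde O(d^5)$ and $m = \tilde O(d)$ both estimates are polynomially small in $d$ and comfortably below $1/100$, so the lemma's conclusion is unaffected; your version is simply the more careful calculation.
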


This short lemma is proven in \appendixref{stage-squashing}.
Taking \(n_0 = O(d^5 \polylog d)\) and \(m = O(d \polylog d)\), we get \(n \sim B(n_0, 1/\tilde{O}(d^4))\) so that \(n=d\,\polylog d\) with very high probability.
So, this lemma tells us that instead of sampling \(\tilde{O}(d^5)\) times uniformly, we can just sample \(d\,\polylog(d)\) samples from the Chebyshev distribution.
In summary, we arrive at the following:
\begin{corollary}
\label{corol:Lp-subspace-embedding}
Let \mA, and \(n_0\) as in either part of \theoremref{sensitivity-correctness-Lp}.
Let \(m = O(d \polylog d)\).
Suppose an algorithm samples \(n \sim B(n_0, O(\frac{m}{n_0}))\) and runs \algorithmref{chebyshev-const:Lp}.
Then, the matrix \(\mS\mA\) on line 4 of the algorithm is a subspace embedding for \(\cP\): \(\frac1C\normof{\cP\vx}_p^p \leq \normof{\mS\mA\vx}_p^p \leq C \normof{\cP\vx}_p^p\) for all \(\vx\in\bbR^{d+1}\).
\end{corollary}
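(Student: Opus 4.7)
The plan is to combine three results developed earlier in this section: the uniform sampling subspace embedding of Theorem \ref{thm:sensitivity-correctness-Lp}, the bound on $\mA$'s Lewis weights from Theorem \ref{thm:sens:lewis:approx}, and the Lewis weight accept/reject guarantee of Theorem \ref{thm:lewis-weight-accept-reject-l1}, and then reduce the actual Chebyshev sampling scheme of Algorithm \ref{alg:chebyshev-const:Lp} to this two-stage procedure via Lemma \ref{lem:stage-squashing}.

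First I would analyze a hypothetical ``two-stage'' algorithm: (a) draw $s_1,\ldots,s_{n_0}$ uniformly from $[-1,1]$ and form the associated Vandermonde matrix $\mA$, then (b) independently keep row $i$ with probability $p_i = \min\{1, \tfrac{m}{n_0}\cdot c\, v(s_i)\}$ and rescale the surviving rows by $1/p_i^{1/p}$ to obtain the diagonal sketch $\mS$. Theorem \ref{thm:sensitivity-correctness-Lp} guarantees that $\normof{\mA\vx}_p^p \approx_{C_1} \normof{\cP\vx}_p^p$ for all $\vx\in\bbR^{d+1}$ with probability at least $\tfrac{11}{12}$. Conditioned on this event, Theorem \ref{thm:sens:lewis:approx} says that the values $\tilde w_i \defeq \tfrac{c}{n_0}\,v(s_i)\,\polylog(d)$ upper bound the $\ell_p$ Lewis weights of $\mA$; moreover, since $\int_{-1}^1 v(t)\,dt = d+1$ and the $s_i$ are uniform, $\sum_i \tilde w_i = \tilde O(d)$ with high probability by a simple concentration argument. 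Theorem \ref{thm:lewis-weight-accept-reject-l1} then applies to the accept/reject sketch $\mS$ built from $\tilde w_i$ with expected budget $m = O(d\polylog d)$, yielding $\normof{\mS\mA\vx}_p \approx_2 \normof{\mA\vx}_p$ for all $\vx$, with probability $\tfrac{11}{12}$. Composing the two approximations and taking a union bound over the $O(1)$ failure events delivers $\normof{\mS\mA\vx}_p^p \approx_C \normof{\cP\vx}_p^p$ for a universal constant $C$.

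Finally, I would invoke Lemma \ref{lem:stage-squashing} to identify the time samples and rescaling used by this hypothetical two-stage procedure with those produced by Algorithm \ref{alg:chebyshev-const:Lp} when $n \sim B(n_0, O(m/n_0))$. The lemma shows that the joint distribution of kept samples after uniform sampling followed by Chebyshev-proportional rejection is indistinguishable (up to $1/\poly(d)$ total variation) from $n$ i.i.d.\ draws from the Chebyshev density, with $n$ binomially distributed as in the corollary. Since the subspace embedding property holds with constant probability under the two-stage distribution, and total variation is preserved under post-processing, the same property holds under the actual Chebyshev sampling at only an $O(1/\poly(d))$ loss in the success probability.

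The only genuinely computational step, and the main place one can slip, is matching the algebraic reweightings: Algorithm \ref{alg:chebyshev-const:Lp} places $[\mS]_{ii} = (1-t_i^2)^{1/(2p)}$ on the surviving rows, whereas Theorem \ref{thm:lewis-weight-accept-reject-l1} prescribes the factor $1/p_i^{1/p}$. One must check that, on the sub-event where the $\min$ in $p_i$ is not active (which holds for all retained rows with high probability, since truncated rows correspond to $|s_i|$ extremely close to $\pm 1$ and are rare under the uniform distribution), these two factors agree up to the global constant that is already absorbed into $C$. The rest is a straightforward chaining of existing lemmas with a union bound.
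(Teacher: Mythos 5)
Your proposal matches the paper's (implicit) argument: the paper does not write out a proof of this corollary, but the intended chain is exactly the one you describe — establish the $\ell_p$ subspace embedding $\normof{\mA\vx}_p^p \approx \normof{\cP\vx}_p^p$ from the uniform-sampling/sensitivity analysis underlying \theoremref{sensitivity-correctness-Lp} (this subspace embedding is proved inside \lemmaref{oper:const-factor-err}, not stated in the theorem itself, which only asserts a regression guarantee — a small imprecision in your citation), then apply \theoremref{sens:lewis:approx} to bound $\mA$'s Lewis weights by the Chebyshev density, invoke \theoremref{lewis-weight-accept-reject-l1} for the accept/reject embedding of $\mA$, and finally collapse the two stages to one Chebyshev sample via \lemmaref{stage-squashing}. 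Your final paragraph about the rescaling mismatch is the right thing to flag; just note that the factor $(n_0/m)^{1/p}$ by which Algorithm~\ref{alg:chebyshev-const:Lp}'s $\mS_{ii}=(\sqrt{1-t_i^2})^{1/p}$ and the accept/reject weight $p_i^{-1/p}$ differ is a \emph{per-instance} scalar that grows polynomially in $d$, not a universal constant — the reason it is harmless is that it multiplies every row identically and so cancels when one only cares about the ratio $\normof{\mS\mA\vx}_p^p / \normof{\mS\mA\vx'}_p^p$, which is all the downstream regression argument needs (and is all the corollary's statement is really asserting, up to the same looseness in the paper). Verifying the $\min$ in $p_i$ is inactive is done exactly as you say, via \lemmaref{max-unif-cheby}.
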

We now state the overall correctness of the algorithm for constant factor approximation for \(p\geq1\):
\begin{theorem}
\label{thm:constant:approx}
Let \(p\geq1\) and \(n_0 = O\left(d^5 p^2 2^p \log d\right)\).
Suppose an algorithm samples \(n \sim B(n_0, 1/\tilde{O}(d^4))\) and runs \algorithmref{chebyshev-const:Lp}.
Then, with probability \(\frac23\), the resulting polynomial \(\hat q\) satisfies
\[
    \normof{\hat q - f}_p^p \leq O(1) \min_{q:\text{deg}(q)\leq d} \normof{q-f}_p^p
\]
\end{theorem}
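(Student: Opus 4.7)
\begin{proofof}{\theoremref{constant:approx}}
The plan is to route through the two-stage decomposition sketched in \sectionref{two-stage}. By \lemmaref{stage-squashing}, drawing $n \sim B(n_0, O(m/n_0))$ samples from the Chebyshev density, as \algorithmref{chebyshev-const:Lp} does, is distributionally indistinguishable (up to a $\frac{1}{\poly(d)}$ event) from running the following hypothetical procedure: first draw $n_0$ points $s_1,\ldots,s_{n_0}$ uniformly from $[-1,1]$, forming the Vandermonde matrix $\mA \in \R^{n_0 \times (d+1)}$ with response vector $\vb_i = f(s_i)$, and then independently keep each row $i$ with probability $p_i = \min\{1, \frac{m}{n_0}\cdot v(s_i)\,\polylog(d)\}$ using the reweighting scheme of \theoremref{lewis-weight-accept-reject-l1}. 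The resulting subsampled, reweighted problem is precisely what \algorithmref{chebyshev-const:Lp} solves (with the diagonal $\mS$ absorbing the importance weights $(1-s_i^2)^{1/(2p)}$). So it suffices to show the hypothetical two-stage procedure returns a constant factor approximation with probability at least $\frac23$.

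For the first (uniform) stage, \theoremref{sensitivity-correctness-Lp} applied with $n_0 = O(d^5 p^2 2^p \log d)$ tells us that, with probability $\tfrac{11}{12}$, the minimizer $\vx_{\mA} \defeq \argmin_{\vx} \|\mA\vx-\vb\|_p$ of the uniformly sampled regression problem already satisfies $\|\cP\vx_{\mA}-f\|_p \leq c\cdot \min_{\vx}\|\cP\vx-f\|_p$ for some universal constant $c$. For the second (Chebyshev accept-reject) stage, I would apply \theoremref{sens:lewis:approx} to conclude that $\frac{1}{n_0}v(s_i)\polylog(d)$ is a valid overestimate of the $\ell_p$ Lewis weights of $\mA$, so that the subsampling matrix $\mS$ produced by the accept-reject procedure satisfies the hypotheses of \theoremref{lewis-weight-accept-reject-l1}. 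That theorem then gives, with probability at least $\tfrac{11}{12}$, the $\ell_p$ subspace embedding $\|\mS\mA\vx\|_p \approx_2 \|\mA\vx\|_p$ for every $\vx \in \R^{d+1}$, which is exactly the statement of \corolref{Lp-subspace-embedding}.

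Given the subspace embedding, I would conclude via a standard $\ell_p$ sketching argument. Let $\vx^* \defeq \argmin_{\vx}\|\mA\vx-\vb\|_p$ and $\vr \defeq \mA\vx^* - \vb$ be the residual. Because rows are kept independently and reweighted by $1/p_i^{1/p}$, we have $\E[\|\mS\vr\|_p^p] = \|\vr\|_p^p$, so Markov's inequality gives $\|\mS\vr\|_p \leq O(1)\cdot \|\vr\|_p$ with probability at least $\tfrac{11}{12}$. Then, using that $\hat\vx$ minimizes $\|\mS\mA\vx - \mS\vb\|_p$,
\[
    \|\mA\hat\vx - \vb\|_p
    \leq \|\mA(\hat\vx-\vx^*)\|_p + \|\vr\|_p
    \leq 2\|\mS\mA(\hat\vx-\vx^*)\|_p + \|\vr\|_p
    \leq 2\bigl(\|\mS(\mA\hat\vx - \vb)\|_p + \|\mS\vr\|_p\bigr) + \|\vr\|_p
    \leq 4\|\mS\vr\|_p + \|\vr\|_p = O(1)\cdot \|\vr\|_p.
\]
Composing this constant factor guarantee on the sketched problem with the constant factor guarantee of the uniform stage (first stage), and invoking the two-stage equivalence via a union bound over the $O(1)$ bad events, yields $\|\cP\hat\vx - f\|_p \leq O(1)\cdot \min_{\vx}\|\cP\vx-f\|_p$ with probability at least $\tfrac23$.

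The most delicate step will be ensuring all the probabilistic guarantees compose cleanly: we need the Lewis-weight approximation of \theoremref{sens:lewis:approx}, the subspace embedding of \theoremref{lewis-weight-accept-reject-l1}, the first-stage constant factor guarantee of \theoremref{sensitivity-correctness-Lp}, the Markov-based residual bound, and the total-variation indistinguishability of \lemmaref{stage-squashing}, all to hold simultaneously. Fortunately, each succeeds with probability at least $\tfrac{11}{12}$ (or $\tfrac{99}{100}$ in the case of the indistinguishability), so a union bound leaves constant success probability comfortably above $\tfrac23$.
\end{proofof}
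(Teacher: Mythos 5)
Your proposal follows the same two-stage decomposition the paper uses (uniform sensitivity sampling followed by Chebyshev accept-reject, then invoking the subspace embedding), and the ingredients you assemble — \lemmaref{stage-squashing}, \theoremref{sensitivity-correctness-Lp}, \theoremref{sens:lewis:approx}, \theoremref{lewis-weight-accept-reject-l1} — are exactly the ones the paper's \corolref{Lp-subspace-embedding} is built from. The paper's own proof is terser: it cites \corolref{Lp-subspace-embedding} plus a black-box lemma from \cite{meyer2021} saying that an unbiased subspace embedding implies constant-factor regression. You re-derive that black-box result in the final chain of inequalities, which is fine. However, there are two concrete gaps.

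First, the scope in $p$. \theoremref{lewis-weight-accept-reject-l1} is stated only for $p\in[1,2]$, but the theorem you are proving covers all $p\geq 1$. For $p>2$ the subspace embedding does not follow from that result; one needs the reduction through the expanded Vandermonde matrix $\mB$ with $\ell_q$ norm, $q=p/r\in[1,2]$ (\theoremref{vander:p:q:sens} and \lemmaref{lp-subspace-embed}). The paper relies on this via a forward reference (``we will ultimately also use this algorithm as a subroutine for $p\geq 2$''); your argument simply does not address $p>2$ at all.

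Second, the final composition step is not valid as stated. Your second-stage chain establishes $\|\mA\hat\vx-\vb\|_p \leq O(1)\min_\vx\|\mA\vx-\vb\|_p$, which is a guarantee about a \emph{matrix} residual for $\hat\vx$. Your first-stage statement (\theoremref{sensitivity-correctness-Lp}) is a \emph{continuous} guarantee for the different vector $\vx_\mA = \argmin_\vx\|\mA\vx-\vb\|_p$. These do not compose directly: knowing $\|\mA\hat\vx-\vb\|_p = O(1)\cdot OPT$ does not by itself give $\|\cP\hat\vx-f\|_p = O(1)\cdot OPT$, since $\|\mA\vx-\vb\|_p$ and $\|\cP\vx-f\|_p$ are not related by a subspace embedding (the embedding controls $\|\mA\vx\|_p$ versus $\|\cP\vx\|_p$, not the affine residuals). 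You need the separate back-conversion lemma \lemmaref{oper:const}, which says that any $\hat\vx$ with small matrix residual also has small continuous residual. Alternatively, and more in line with the paper's one-step approach, use the $\cP$-to-$\mS\mA$ subspace embedding of \corolref{Lp-subspace-embedding} directly: with $\vx^*=\argmin_\vx\|\cP\vx-f\|_p$, write $\|\cP\hat\vx-f\|_p \leq \|\cP(\hat\vx-\vx^*)\|_p + OPT \leq C\|\mS\mA(\hat\vx-\vx^*)\|_p + OPT \leq C(\|\mS\mA\hat\vx-\mS\vb\|_p + \|\mS\mA\vx^*-\mS\vb\|_p) + OPT$, then use optimality of $\hat\vx$ on the sketched problem and Markov on $\|\mS\mA\vx^*-\mS\vb\|_p^p$ (whose expectation is $OPT^p$). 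This avoids the two-step composition entirely.

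A smaller point: four events at failure probability $1/12$ plus one at $1/100$ sum to about $0.343 > 1/3$, so the union bound as written misses the $\frac23$ target; you would need to tighten at least one of the Markov bounds (e.g., pay $1/50$ rather than $1/12$). This is easily fixed but should not be waved through.
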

The correctness of this theorem follows from combining \corolref{Lp-subspace-embedding} with Lemma A.1 from \cite{meyer2021}, which says that unbiased subspace embedding suffices for constant-factor error in regression.
While there is randomness in the sample complexity, we have that with very high probability \(n = O(d \polylog d)\).
Finally, we emphasize that \theoremref{constant:approx} holds for all $p\ge 1$ due to the result from \cite{meyer2021}. 
Thus we will ultimately also use this algorithm as a subroutine for $L_p$ polynomial regression for $p\ge 2$. 

\subsection{\texorpdfstring{$(1+\eps)$}{1+eps}-Approximation}
\label{sec:smallp:regression:eps}

Given the constant factor approximation in the previous section, we can now build an algorithm that outputs a $(1+\eps)$-approximation for the $L_p$ regression problem when $p\in[1,2]$. 
First, we recall an algorithm from \cite{MuscoMWY21} that samples \(d \poly(\log d, \frac1\eps)\) rows of a matrix by almost-Lewis weights, reads the corresponding coordinates in the measurement vector \vb, and solves the subsampled \(\ell_p\) matrix regression problem twice, giving a \((1+\eps)\) error solution.
Since we know that the Chebyshev density describes the almost-Lewis weights of \mA, we can directly appeal to this result.
In particular, they prove that \algorithmref{constant:active:Lp:eps} gives the following guarantee:

\begin{theorem}
\label{thm:active:lewis:norep}
Let \(\mA\in\bbR^{m \times d+1}\), \(\vb\in\bbR^m\), and \(p \geq 1\).
Then, with probability \(0.98\), \algorithmref{constant:active:Lp:eps} with \(n = O(d^{\max(1,p/2)} \frac{\log^2(d) \log(m)}{\eps^{\min(2p+5,p+7)}})\) returns a vector \(\tilde\vx\in\bbR^{d+1}\) such that \(\normof{\mA\tilde\vx-\vb}_p \leq (1+\eps) \min_{\vx}\normof{\mA\vx-\vb}_p\).
\end{theorem}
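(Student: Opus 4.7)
The plan is to use the standard two-stage reduction for $\ell_p$ active regression. First, obtain a constant-factor approximation $\vx_c$ satisfying $\normof{\mA\vx_c - \vb}_p \le C \cdot \mathrm{OPT}$ where $\mathrm{OPT} \defeq \min_\vx \normof{\mA\vx - \vb}_p$; this can be done by sampling $\tilde O(d^{\max(1,p/2)})$ rows of $\mA$ proportional to its $\ell_p$ Lewis weights and solving the subsampled problem, using the subspace embedding guarantee of Lewis weight sampling from \cite{CohenP15}. Then define the residual $\hat\vb \defeq \vb - \mA\vx_c$, which satisfies $\normof{\hat\vb}_p = O(\mathrm{OPT})$, draw an independent Lewis-weight sample $\mS$, and solve $\min_\vy \normof{\mS(\mA\vy - \hat\vb)}_p$ to obtain $\hat\vy$. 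Returning $\tilde\vx \defeq \vx_c + \hat\vy$ then gives the $(1+\eps)$ bound, provided the second stage gives a sufficiently faithful estimate of $\normof{\mA\vy - \hat\vb}_p$ on the ``near-optimal'' set of candidate $\vy$.

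The crux of the argument is upgrading from a subspace embedding to an \emph{affine embedding}: showing that $\normof{\mS(\mA\vy - \hat\vb)}_p \approx_{1+\eps} \normof{\mA\vy - \hat\vb}_p$ uniformly for every $\vy$ with $\normof{\mA\vy - \hat\vb}_p = O(\mathrm{OPT})$. Given such a bound, a standard triangle-inequality argument on the subsampled minimizer $\hat\vy$ yields $\normof{\mA(\vx_c + \hat\vy) - \vb}_p \le (1+\eps)\mathrm{OPT}$. I would decompose any candidate residual as $\mA\vy - \hat\vb$, handle the $\mA\vy$ part by subspace embedding and handle the fixed term $\hat\vb$ by direct Bernstein-type concentration on the scalar quantity $\normof{\mS\hat\vb}_p^p$, which is an unbiased sum of independent bounded contributions with the Lewis weight normalization controlling each entry's magnitude.

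The main obstacle is establishing the affine embedding \emph{uniformly} over a continuum of $\vy$, particularly when $p>2$, where a naive $\eps$-net over $\mathrm{range}(\mA)$ would blow up the sample complexity. I would follow the \emph{compact rounding} / Dudley-chaining scheme of \cite{bourgain1989approximation} as adapted in \cite{MuscoMWY21}: build a geometric hierarchy of nets at decreasing scales on $\{\mA\vy : \normof{\mA\vy}_p \le O(\mathrm{OPT})\}$, telescope each $\mA\vy$ across scales, and at each scale apply a matrix Bernstein / non-commutative Khintchine bound. The Lewis weight condition is precisely what makes every scale concentrate, because after reweighting the sampled rows have bounded maximum magnitude, which controls the moment generating function at each level of the chain. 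This is where the $d^{\max(1,p/2)}$ and the $\log^2(d)\log(m)$ factors enter, as well as the $\eps$-dependence whose exponent degrades with $p$; the single $\log(m)$ is paid only once, for a union bound over the $m$ rows of $\mA$ establishing row-wise boundedness.

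For $p\in[1,2]$ the situation simplifies considerably: the covering number of the relevant net is only $\exp(\tilde O(d))$, and a single-scale net argument combined with the subspace embedding already yields the affine embedding with $\tilde O(d/\eps^{O(1)})$ samples, so the chaining apparatus is only needed for the $p>2$ regime. In either case, the final sample count $n$ absorbs the Lewis weight sampling rate, the chaining/union-bound overhead, and the $\eps$-net resolution, yielding the claimed $n = O(d^{\max(1,p/2)} \log^2(d)\log(m)/\eps^{\min(2p+5,p+7)})$ bound by simply importing the analysis of \cite{MuscoMWY21} applied to the matrix $\mA$ and vector $\vb$.
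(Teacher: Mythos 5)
Your approach is essentially the paper's: Theorem~\ref{thm:active:lewis:norep} is proved by importing Theorem 3.4 from the first arXiv version of \cite{MuscoMWY21}, and your outline (two-stage constant-factor-then-residual structure, affine embedding via the compact rounding of \cite{bourgain1989approximation}, Lewis-weight normalization controlling the Bernstein bound at each scale of the chain, single $\log m$ from the one-time union bound) accurately summarizes what that citation is doing under the hood. The one thing the paper explicitly flags that you omit is that \algorithmref{constant:active:Lp} samples rows \emph{without} replacement (Bernoulli keep/drop per row), whereas \cite{MuscoMWY21}'s Algorithm~2 samples a fixed number of rows \emph{with} replacement; the paper notes that the analysis ports over by swapping the sampling matrix inside Claim 3.14 of \cite{MuscoMWY21}, with the Bernstein concentration bounds unaffected. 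You would want to add that remark (or verify it) before the citation is airtight, but it is a small technicality rather than a gap in the argument. Also, be careful with your informal phrase about handling $\hat\vb$ by ``direct Bernstein-type concentration on $\normof{\mS\hat\vb}_p^p$'': the actual argument first truncates $\hat\vb$ to a vector $\bar\vz$ that zeroes out entries too large to be matched by any $\mA\vy$ in the near-optimal ball, and the affine embedding and concentration are proved against $\bar\vz$, not against $\hat\vb$ directly — this truncation is essential and not a cosmetic detail, though you do inherit it by deferring to \cite{MuscoMWY21}.
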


We remark that although \theoremref{active:lewis:norep} matches the guarantee given by Theorem \(3.4\) in \cite{MuscoMWY21}\footnote{This is following the first version of \cite{MuscoMWY21} uploaded to arXiv, which uses an analysis which makes especially simple to see how Bernstein suffices for either sampling scheme.}, \algorithmref{constant:active:Lp:eps} does not quite match the corresponding Algorithm 2 given by \cite{MuscoMWY21}.
Observe that each row is sampled without replacement with probability proportional to its Lewis weight in \algorithmref{constant:active:Lp}, whereas a fixed number of rows are sampled by \cite{MuscoMWY21}, so that each row is sampled with replacement with probability proportional to its Lewis weight.
However, the correctness of \algorithmref{constant:active:Lp} follows from the analysis of Theorem \(3.4\) in \cite{MuscoMWY21} by zooming into Claim \(3.14\) and just using a sampling matrix \mS defined by without-replacement sampling instead of the with-replacement matrix used.
None of the concentrations actually change at the end of the day.
We show an example of such a Bernstein bound later in this paper, in the proof of \lemmaref{preserve-compact-roundings}.

\begin{algorithm}[t]
\caption{Constant factor active \(\ell_p\) matrix regression}
\label{alg:constant:active:Lp}
\begin{algorithmic}[1]
\Require{Vandermonde matrix \(\mA\in\bbR^{n_0 \times d+1}\), response vector \(\vb\in\bbR^{n_0}\), target number of samples \(m\)}
\Ensure{Approximate solution \(\hat\vx\in\bbR^{d+1}\) to \(\min_{\vx}\normof{\mA\vx-\vb}_p\)}
\State{Let \(p_i = \min\{1, \frac{m}{n_0} \frac{1}{\sqrt{1-s_i^2}}\}\) where \(s_i\in[-1,1]\) is the time associated with row \(i\) of \mA}
\State{Let \(\mS\in\bbR^{n_0 \times n_0}\) be a diagonal matrix with \(S_{ii} = \frac{1}{(p_i)^{1/p}}\) with probability \(p_i\), and \(S_{ii}=0\) otherwise}
\State{\Return \(\hat\vx=\argmin_{\vx}\normof{\mS\mA\vx-\mS\vb}_p\)}
\end{algorithmic}
\end{algorithm}

\begin{algorithm}[t]
\caption{Relative error active \(\ell_p\) matrix regression}
\label{alg:constant:active:Lp:eps}
\begin{algorithmic}[1]
\Require{Matrix \(\mA\in\bbR^{n_0 \times d+1}\), response vector \(\vb\in\bbR^{n_0}\), target number of samples \(n\)}
\Ensure{Approximate solution \(\tilde\vx\in\bbR^{d+1}\) to \(\min_{\vx}\normof{\mA\vx-\vb}_p\)}
\State{Run \algorithmref{constant:active:Lp} on vector \(\vb\) with \(\frac n2\) samples to get vector \(\vx_c\)}
\State{Let \(\vz \defeq \vb-\mA\hat\vx_c\)}
\State{Run \algorithmref{constant:active:Lp} on vector \(\vz\) with \(\frac n2\) samples to get vector \(\hat\vx\)}
\State{\Return \(\tilde\vx=\vx_c + \hat\vx\)}
\end{algorithmic}
\end{algorithm}

Overall, \theoremref{active:lewis:norep} show that \algorithmref{constant:active:Lp:eps} finds a near-optimal solution to the uniform-sampled problem for \(p\in[1,2]\).
By the reduction from two-stage to one-stage sampling, this then implies that \algorithmref{chebyshev-const:Lp:eps} finds a near-optimal solution to the \(L_p\) polynomial regression problem.
So, we have now proven our \(L_p\) polynomial approximation guarantee for \(p\in[1,2]\):

\begin{reptheorem}{main}[For $1\le p\le 2$]
For any degree $d$, $p \in[1,2]$, and accuracy parameter $\eps\in(0,1)$, there is an algorithm that queries $f$ at $n = O(\frac{d}{\eps^{2p+5}} \polylog(\frac d\eps))$ points $t_1, \ldots, t_n$, each selected independently at random according to the Chebyshev density on $[-1,1]$, and outputs a degree $d$ polynomial $\hat{q}(t)$ such that, with probability at least $0.9$,
\[\|\hat{q}(t)-f(t)\|_p^p\le(1+\eps)\cdot\min_{q:\deg(q)\le d}\|q(t)-f(t)\|_p^p.\]
\end{reptheorem}

\section{Active \texorpdfstring{\(L_p\)}{Lp} Regression for \texorpdfstring{$p>2$}{p>2}}
\label{sec:lp-regression:largep}

In this section, we analyze $L_p$ regression for $p>2$. Our analysis differs significantly from the case of $p\in[1,2]$. In particular, while we still analyze sampling by the Chebyshev measure, in contrast to $p\in [1,2]$, we are not able to argue that the measure approximates the $L_p$ Lewis weights of the polynomial operate $\cP$. Moreover, even if we could bound them, sampling by $L_p$ Lewis weights requires \(O(d^{p/2})\) samples in the worst case to approximate a $p$-norm regression problem \cite{MuscoMWY21}.
There are matrices which require this rate, so to get sample complexity linear in \(d\), we will leverage special structure of polynomials that lets us avoid these worst-case instances.

We start with a simple but useful observation from \cite{meyer2021}. Ssuppose \(f(t)\) is a polynomial of degree \(d\), and let \(r \approx p\) be an integer with \(q \defeq \frac pr\in[1,2]\).
Then, we know that \(t \mapsto (f(t))^r\) is a degree \(rd\) polynomial.
Since \mA is a Vandermonde matrix, and letting \vx be the coefficient vector for \(f\), we thus have that
\begin{align*}
	\normof{\mA\vx}_p^p = \normof{\mB\vy}_q^q,
\end{align*}
where \(\mB\in\bbR^{n_0 \times rd+1}\) is a Vandermonde matrix generated by the same time points as \mA but with more columns, and where \(\vy\) is the coefficient vector for the degree \(rd\) polynomial \(t \mapsto (f(t))^r\).
This simple observation implies that if some sampling procedure preserves the \(\ell_q\) norm of all degree \(rd\) polynomials, then that sampling procedure also preserves the \(\ell_p\) norm of all degree \(d\) polynomials.
In other words, it suffices to use a sampling matrix \mS that samples rows of \mB with probability proportional to upper bounds on the \(\ell_q\) Lewis weights of \mB.
By \theoremref{sens:lewis:approx} we already know those Lewis weights are bounded by the Chebyshev measure.
So \algorithmref{constant:active:Lp}, which samples rows of \mA by the Chebyshev measure, preserves the \(\ell_p\) norm of \(\mA\vx\) for all \vx \emph{because} it is sampling rows by the \(\ell_q\) Lewis weights of \mB.

This argument suffices to get prove a subspace embedding result -- i.e., that the matrix \mS from \algorithmref{constant:active:Lp} satisfies \(\normof{\mS\mA\vx}_p^p \approx_C \normof{\mA\vx}_p^p\).
This is sufficient to get a constant-factor regression solution, and we formally work through this in \sectionref{large-p-const-factor}.
To achieve error \((1+\eps)\), we need a more refined analysis that builds on the first version of \cite{MuscoMWY21} uploaded to arXiv\footnote{While that version is available on arXiv at time of publishing, since it is unpublished, we include a (slightly shortened and corrected) copy of everything we use in \appendixref{compact-rounding}.}. Our approach still reduces from the general $p > 2$ case to some $q \leq 2$, but in a less direct way than described above. An edge case of our analysis requires that when \(p\in(2,3)\), we use \(r=3\) so that \(q = \frac pr \in [\frac23,1]\).
This is the case where we use the \(\ell_q\) Lewis weight bounds for \(q < 1\).

\subsection{Constant Factor Approximation for \texorpdfstring{\(p>2\)}{p > 2}}
\label{sec:large-p-const-factor}

We start by showing that running \algorithmref{constant:active:Lp} as done in line 1 of \algorithmref{constant:active:Lp:eps} achieves a constant-factor regression guarantee.
Formally, we rely on the following result from \cite{meyer2021}, where \(\psi_p[\mA](i) \defeq \max_{\vx} \frac{\abs{[\mA\vx](i)}^p}{\normof{\mA\vx}_p^p}\) is the \(\ell_p\) sensitivity score of \mA at row \(i\):

\begin{theorem}
\label{thm:vander:p:q:sens}
\cite{meyer2021}
Given \(p>2\), let \(r\) be any integer such that \(q \defeq \frac pr\) is in \([\frac23,2]\).
Given a Vandermonde matrix \(\mA\in\bbR^{n_0\times d+1}\), let \mB be the Vandermonde matrix \(\mA\) extended to have \(rd+1\) columns. 
Then for every vector \(\vx\in\bbR^{d+1}\), there exists a vector \(\vy\in\bbR^{rd+1}\) such that \(\abs{[\mA\vx](i)}^p=\abs{[\mB\vy](i)}^q\). 
Thus if \(\psi_p[\mA](i)\) denotes the \(\ell_p\)-sensitivity of the \(i\)-th row of \mA and \(\psi_q[\mB](i)\) denotes the \(\ell_q\)-sensitivity of the \(i\)-th row of \mB, then \(\psi_p[\mA](i) \leq \psi_q[\mB](i)\). 
\end{theorem}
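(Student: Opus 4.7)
The plan is to exploit a direct polynomial identity combined with the monotonicity of a maximum under set inclusion. Identify each row $i$ with its sample point $t_i$, so that $[\mA\vx](i) = f(t_i)$ where $f(t) = \sum_{j=0}^{d} x_j t^j$ is the degree-$d$ polynomial with coefficient vector $\vx$. Since $r$ is a positive integer, the function $g(t) \defeq f(t)^r$ is itself a polynomial of degree at most $rd$, hence can be written as $g(t) = \sum_{j=0}^{rd} y_j t^j$ for some coefficient vector $\vy \in \bbR^{rd+1}$. Because $\mB$ is the Vandermonde matrix on the same sample points $t_1, \ldots, t_{n_0}$ but extended to $rd+1$ columns, we have $[\mB\vy](i) = g(t_i) = f(t_i)^r = [\mA\vx](i)^r$ for every $i$. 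Raising this to the $q$-th power in absolute value and using $rq = p$ gives $|[\mB\vy](i)|^q = |[\mA\vx](i)|^{rq} = |[\mA\vx](i)|^p$, which is the first claim.

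Summing this identity over $i \in [n_0]$ immediately yields $\normof{\mA\vx}_p^p = \normof{\mB\vy}_q^q$. Substituting into the definition of sensitivity, the ratio is invariant under the map $\vx \mapsto \vy(\vx)$ that sends the coefficients of $f$ to those of $f^r$:
\[
    \psi_p[\mA](i) = \max_{\vx \neq 0} \frac{|[\mA\vx](i)|^p}{\normof{\mA\vx}_p^p} = \max_{\vx \neq 0} \frac{|[\mB\vy(\vx)](i)|^q}{\normof{\mB\vy(\vx)}_q^q}.
\]
The map $\vy(\cdot)$ is nonlinear, but its image $\{\vy(\vx) : \vx \in \bbR^{d+1}\}$ is certainly contained in $\bbR^{rd+1}$. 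Maximizing over this image can only be smaller than maximizing over the entire ambient space, so
\[
    \psi_p[\mA](i) \;\le\; \max_{\vy' \neq 0} \frac{|[\mB\vy'](i)|^q}{\normof{\mB\vy'}_q^q} \;=\; \psi_q[\mB](i).
\]

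There is no significant obstacle here: the argument reduces to the algebraic fact that powers of polynomials are polynomials of larger degree, together with monotonicity of a max under set containment. The only point of care is handling the $\vx = 0$ degenerate case (excluded from the sensitivity definition) and observing that the inequality, rather than an equality, arises precisely because the image of $\vy(\vx)$ is a proper subvariety of $\bbR^{rd+1}$ (coefficient vectors of $r$-th powers of degree-$d$ polynomials form a measure-zero subset for $r \ge 2$).
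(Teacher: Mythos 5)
Your proof is correct, and it takes exactly the approach the paper sketches (informally, in the surrounding prose of Section 5) and attributes to \cite{meyer2021}: observe that $f^r$ is a polynomial of degree $rd$ whose coefficient vector lies in $\bbR^{rd+1}$, use the fact that $\mA$ and $\mB$ are Vandermonde matrices on the same sample points to get the pointwise identity $[\mB\vy](i) = [\mA\vx](i)^r$ and thus $|[\mB\vy](i)|^q = |[\mA\vx](i)|^p$, and conclude the sensitivity inequality because the image of the map $\vx \mapsto \vy(\vx)$ is a subset of $\bbR^{rd+1}$ over which the sensitivity supremum is taken. Your brief check that $\vy(\vx) \neq 0$ whenever $\mA\vx \neq 0$ (so the maximum is over a nonempty, well-defined set) is a sensible clean-up that the paper leaves implicit.
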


For the constant-factor approximation step, given \(p>2\), we let \(r\) be an integer such that \(r \leq p < 2r\), so that \(q \defeq \frac pr \in [1,2]\).
With this choice of \(r\), chosen such that \(\ell_q\) is a valid norm that satisfies the triangle inequality, we will show that \algorithmref{constant:active:Lp}, as run in the first line of \algorithmref{constant:active:Lp:eps}, returns a constant factor solution to \(\min_{\vx} \normof{\mA\vx-\vb}_p\).
Recall that \(\mA\in\bbR^{n_0 \times d+1}\) is a Vandermonde matrix obtained by uniformly sampling \(n_0 = \poly(d,p^p,\frac1{\eps^p})\) points from \([-1,1]\).
Then let \(\mB\in\bbR^{n_0 \times rd+1}\) be an \emph{expanded} Vandermonde matrix, built using the same uniform samples but with maximum degree \(rd\).
Let \(d_B \defeq rd+1\) be the number of columns in \mB.
We also let \(w_q[\mB](i)\) be the \(\ell_q\)-Lewis weight of \mB at row \(i\).
We will analyze sampling rows of \mA with respect to \(w_q[\mB](i)\).

We first show that the sampling matrix \mS from \algorithmref{constant:active:Lp} is a subspace embedding:
\begin{lemma}
\label{lem:lp-subspace-embed}
Let \mA and \mS be the matrices as in \algorithmref{constant:active:Lp}.
Then, with probability \(\frac{99}{100}\), so long as \(m = O(\frac{pd}{\eps^2} \polylog(d))\), we have that \mS is an \(\ell_p\) subspace embedding:
\[
	\normof{\mS\mA\vx}_p^p \in (1\pm\eps) \normof{\mA\vx}_p^p \hspace{2cm} \forall \vx \in \bbR^{d+1}
\]
\end{lemma}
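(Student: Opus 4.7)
The plan is to reduce the $\ell_p$ subspace embedding property for $\mA$ to a standard $\ell_q$ subspace embedding property for the expanded Vandermonde matrix $\mB$, where $q = p/r \in [1,2]$. Concretely, by \theoremref{vander:p:q:sens}, for every $\vx\in\bbR^{d+1}$ there is a $\vy\in\bbR^{rd+1}$ with $|[\mA\vx](i)|^p = |[\mB\vy](i)|^q$ for every $i\in[n_0]$. Since $\mS$ is diagonal, this identity is preserved after left-multiplication by $\mS$, so
\[
\|\mS\mA\vx\|_p^p \;=\; \|\mS\mB\vy\|_q^q \qquad \text{and}\qquad \|\mA\vx\|_p^p \;=\; \|\mB\vy\|_q^q.
\]
Hence it is enough to show that $\mS$ is a $(1\pm\eps)$ $\ell_q$ subspace embedding for $\mB$, and this only needs to hold for all $\vy\in\bbR^{rd+1}$, which is a superset of the set of vectors arising from the map $\vx\mapsto\vy$.

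Next, I would check that the probabilities $p_i$ used to build $\mS$ dominate upper bounds on the $\ell_q$ Lewis weights of $\mB$. Since $\mB$ is obtained from the same uniform samples as $\mA$ but with degree $rd$, \theoremref{sens:lewis:approx} applied to degree $rd$ (using \theoremref{chebyshev:ratio} at norm $q\in[1,2]$, which is legal since $q\ge 1$ here) shows that $w_q[\mB](i) \le \frac{1}{n_0}v(s_i)\polylog(rd)$ with probability $\tfrac{11}{12}$. The sampling probability $p_i = \min\{1,\tfrac{m}{n_0}\tfrac{1}{\sqrt{1-s_i^2}}\}$ in \algorithmref{constant:active:Lp} is proportional to $\frac{1}{n_0}v(s_i)$, so up to the $\polylog(d)$ factor the $p_i$ are a valid oversampling of the $\ell_q$ Lewis weights of $\mB$. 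Moreover, $\sum_i \tfrac{1}{n_0}v(s_i)\polylog(rd)$ concentrates to $\tilde O(d_B) = \tilde O(pd)$ by standard uniform-sampling concentration, so the total expected number of sampled rows is $m = \tilde O(pd/\eps^2)$ with the scaling we choose.

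The final step is to invoke a $(1\pm\eps)$-accurate $\ell_q$ Lewis weight subspace embedding bound for $q\in[1,2]$: sampling $m = O(d_B \eps^{-2} \polylog d_B)$ rows of $\mB$ according to (an upper bound on) its $\ell_q$ Lewis weights yields, with probability $\tfrac{99}{100}$, $\|\mS\mB\vy\|_q^q \in (1\pm\eps)\|\mB\vy\|_q^q$ for all $\vy\in\bbR^{rd+1}$. This is the analog of \theoremref{lewis-weight-accept-reject-l1} in the $(1\pm\eps)$ regime, and it applies because $q\in[1,2]$ and our sampling distribution upper-bounds the $\ell_q$ Lewis weights of $\mB$. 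Substituting $d_B = rd+1 \le pd+1$ gives the stated $m = O(\tfrac{pd}{\eps^2}\polylog(d))$, and combined with the identity from the first paragraph yields $\|\mS\mA\vx\|_p^p \in (1\pm\eps)\|\mA\vx\|_p^p$ for every $\vx\in\bbR^{d+1}$.

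The main obstacle is bookkeeping rather than a deep technical hurdle: I need the Lewis weight sampling theorem used in the last step to be (i) stated for $(1\pm\eps)$ accuracy, (ii) compatible with Algorithm~\ref{alg:constant:active:Lp}'s without-replacement Bernoulli sampling (as opposed to the with-replacement version in some references), and (iii) to only require upper bounds on the Lewis weights whose sum is $\tilde O(d_B)$. All three are handled by routine Bernstein-type arguments of the kind already invoked after \theoremref{active:lewis:norep} (and paralleled in the proof of \lemmaref{preserve-compact-roundings}), but one must verify that the probability-$\tfrac{99}{100}$ event from this step is compatible with the probability-$\tfrac{11}{12}$ event needed to guarantee that the Chebyshev density upper-bounds $\mB$'s Lewis weights, by a union bound.
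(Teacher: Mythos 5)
Your proof takes essentially the same route as the paper's (reduce to $\ell_q$ Lewis weight sampling for the expanded Vandermonde matrix $\mB$ via \theoremref{vander:p:q:sens}, argue the Chebyshev-based probabilities $p_i$ dominate $\mB$'s $\ell_q$ Lewis weights via \theoremref{sens:lewis:approx}, and invoke a Lewis weight subspace embedding theorem). However, there is one genuine error in the first paragraph: the claim that ``since $\mS$ is diagonal, this identity is preserved after left-multiplication by $\mS$, so $\normof{\mS\mA\vx}_p^p = \normof{\mS\mB\vy}_q^q$.'' Diagonality of $\mS$ does \emph{not} preserve the pointwise identity $\abs{[\mA\vx](i)}^p = \abs{[\mB\vy](i)}^q$, because the $i$-th entries get multiplied by $\mS_{ii}^p$ and $\mS_{ii}^q$ respectively, and these differ when $p\neq q$. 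Concretely, with $\mS_{ii}=p_i^{-1/p}$, you get
\[
\normof{\mS\mA\vx}_p^p = \sum_{i:\text{sampled}} \frac{1}{p_i}\abs{[\mA\vx](i)}^p,
\qquad
\normof{\mS\mB\vy}_q^q = \sum_{i:\text{sampled}} \frac{1}{p_i^{q/p}}\abs{[\mB\vy](i)}^q,
\]
which are not equal since $q/p = 1/r \neq 1$. The correct identity is $\normof{\mS\mA\vx}_p^p = \normof{\bar\mS\mB\vy}_q^q$ where $\bar\mS_{ii} \defeq \mS_{ii}^{p/q} = p_i^{-1/q}$ is the sampling matrix with the $\ell_q$-appropriate rescaling. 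This matters because the Lewis weight sampling theorem you then invoke requires exactly this $p_i^{-1/q}$ scaling to preserve the $\ell_q$ norm of $\mB\vy$ in expectation. The fix is small — $\bar\mS$ selects the same rows as $\mS$, just with a different rescaling — but it is necessary; the rest of your plan (the Lewis weight bound for $\mB$ via \theoremref{sens:lewis:approx} at degree $rd$, the sample complexity $m=\tilde O(pd/\eps^2)$, and the union bound between the two probability events) lines up with the paper's argument.
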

\begin{proof}
Recall \theoremref{vander:p:q:sens}, in particular that for any \(\vx\in\bbR^{d+1}\), there exists a vector \(\vy\in\bbR^{rd+1}\) such that \(([\mB\vy](i))^q = ([\mA\vx](i))^p\) for all \(i\in[n_0]\).
We then expand the subspace embedding norm:
\[
	\normof{\mS\mA\vx}_p^p
	= \sum_{i=1}^{n_0} \mS_{ii}^p \abs{[\mA\vx](i)}^p
	= \sum_{i=1}^{n_0} \frac{1}{p_i} \abs{[\mB\vy](i)}^q
	= \normof{\bar\mS\mB\vy}_q^q
\]
where \(\bar\mS_{ii} = (\mS_{ii})^{p/q} = \frac{1}{(p_i)^{1/q}}\) is the sampling matrix we would use when sampling \mB by \(\ell_q\) Lewis weights.
So, we not only have \(\normof{\mA\vx}_p^p = \normof{\mB\vy}_q^q\), but also have \(\normof{\mS\mA\vx}_p^p = \normof{\bar\mS\mB\vy}_q^q\).
Then we are sampling by overestimates of the Lewis weights, since \(w_q[\mB](i) \leq \frac{1}{n_0} \frac{rd+1}{\sqrt{1-s_i^2}} \polylog(d) \leq \frac{m}{n_0} \frac{1}{\sqrt{1-s_i^2}} = p_i\), which holds for \(m \geq d \polylog(d)\).
So, by \theoremref{lewis-weight-accept-reject-l1}, we have that \(\mS\) is a \((1\pm\frac12)\) \(\ell_q\)-subspace embedding for \(\mB\) so long as \(m = O(\frac{rd}{\eps^2} \polylog(d))\), and therefore that \(\bar\mS\) is a \((1\pm\eps)\) \(\ell_p\)-subspace embedding for \(\mA\).
\end{proof}
\begin{lemma}
\label{lem:lp-const-factor}
The vector \(\vx_c\) returned by line 1 of \algorithmref{constant:active:Lp} is a constant-factor solution to the overall optimization problem, with probability \(\frac{99}{100}\):
\[
	\normof{\mA\vx_c-\vb}_p \leq C_z \min_{\vx\in\bbR^{d+1}} \normof{\mA\vx-\vb}_p
\]
For some universal constant \(C_z\).
In particular, this implies that \(\vz\) from line 2 of \algorithmref{constant:active:Lp} has \(\normof{\vz}_p \leq C_z \min_{\vx} \normof{\mA\vx-\vb}_p\).
\end{lemma}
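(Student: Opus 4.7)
The plan is to run the standard sketch-and-solve argument for constant-factor regression from an $\ell_p$ subspace embedding, so I want to feed Lemma \ref{lem:lp-subspace-embed} into a short triangle-inequality calculation. Let $\vx^* \defeq \argmin_{\vx\in\bbR^{d+1}} \normof{\mA\vx - \vb}_p$, let $\vr \defeq \mA\vx^* - \vb$, and write $\mathrm{OPT} \defeq \normof{\vr}_p$. By Lemma \ref{lem:lp-subspace-embed} applied with $\eps = \tfrac12$, the matrix $\mS$ is an $\ell_p$ subspace embedding for $\mA$ with probability $99/100$, in particular $\normof{\mA\vz}_p \leq 2\normof{\mS\mA\vz}_p$ for every $\vz\in\bbR^{d+1}$. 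Note that this requires $m = O(pd\,\polylog(d))$, which is within the sample budget of \algorithmref{constant:active:Lp}.

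The core estimate is then
\begin{align*}
\normof{\mA\vx_c - \vb}_p
&\leq \normof{\mA(\vx_c - \vx^*)}_p + \normof{\vr}_p \\
&\leq 2\,\normof{\mS\mA(\vx_c - \vx^*)}_p + \mathrm{OPT} \\
&\leq 2\,\normof{\mS(\mA\vx_c - \vb)}_p + 2\,\normof{\mS\vr}_p + \mathrm{OPT} \\
&\leq 4\,\normof{\mS\vr}_p + \mathrm{OPT},
\end{align*}
where the first line is the triangle inequality, the second applies the subspace embedding on the vector $\vx_c - \vx^*$, the third is again the triangle inequality (on the residual form $\mA(\vx_c - \vx^*) = (\mA\vx_c - \vb) - (\mA\vx^* - \vb)$), and the fourth uses the fact that $\vx_c$ minimizes the sketched objective, so $\normof{\mS(\mA\vx_c - \vb)}_p \leq \normof{\mS\vr}_p$.

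It remains to bound $\normof{\mS\vr}_p$ by $O(\mathrm{OPT})$. Here I would use the unbiasedness of the sampling: because $\mS_{ii}$ takes the value $p_i^{-1/p}$ with probability $p_i$ and $0$ otherwise, linearity of expectation gives $\E[\normof{\mS\vr}_p^p] = \sum_i p_i \cdot p_i^{-1} |\vr_i|^p = \normof{\vr}_p^p = \mathrm{OPT}^p$, and Markov's inequality then yields $\normof{\mS\vr}_p \leq 100^{1/p}\,\mathrm{OPT} \leq 100\,\mathrm{OPT}$ with probability $99/100$. A union bound with the subspace-embedding event gives both guarantees simultaneously with probability at least $99/100$ (after shrinking the failure probabilities in Lemma \ref{lem:lp-subspace-embed} by a constant), producing $\normof{\mA\vx_c - \vb}_p \leq C_z\,\mathrm{OPT}$ for a universal constant $C_z$. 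Since $\vz = \vb - \mA\vx_c$, the second statement follows immediately.

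There is no real obstacle here beyond tracking constants carefully: the subspace embedding from Lemma \ref{lem:lp-subspace-embed} does all the heavy lifting, and the $\ell_p$ triangle inequality (valid since $p \geq 1$) together with the unbiased Markov bound on $\normof{\mS\vr}_p$ are exactly the two ingredients one always invokes for constant-factor active regression. The only thing worth double-checking is that the sample complexity in Lemma \ref{lem:lp-subspace-embed} (with $\eps = \tfrac12$) is compatible with the budget in \algorithmref{constant:active:Lp}, but this is automatic since that lemma requires only $m = O(pd\,\polylog(d))$.
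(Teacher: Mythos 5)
Your proposal is correct and takes essentially the same route as the paper: triangle inequality, apply the subspace embedding from Lemma \ref{lem:lp-subspace-embed}, triangle inequality again, invoke optimality of $\vx_c$ in the sketched problem, and finish with the unbiasedness of $\mS$ plus Markov's inequality to bound $\normof{\mS\vr}_p$ by $O(\mathrm{OPT})$. The only differences are cosmetic constant choices (the paper uses a factor $200$ in the Markov step and reports $C_z = 801$).
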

\begin{proof}
Recall that \(\vx_c \defeq \argmin_{\vx} \normof{\mS\mA\vx-\mS\vb}_p\), and that \lemmaref{lp-subspace-embed} shows that \(\mS\) is an \(\ell_p\) subspace embedding for \(\mA\).
Let \(\vx^* \defeq \argmin_{\vx} \normof{\mA\vx-\vb}\) be the true optimal regression solution.
Then, by repeated use of the triangle inequality,
\begin{align*}
	\normof{\mA\vx_c - \vb}_p
	&\leq \normof{\mA\vx_c - \mA\vx^*}_p + \normof{\mA\vx^* - \vb}_p \\
	&\leq 2\normof{\mS\mA\vx_c - \mS\mA\vx^*}_p + \normof{\mA\vx^* - \vb}_p \\
	&\leq 2(\normof{\mS\mA\vx_c - \mS\vb}_p + \normof{\mS\mA\vx^* - \mS\vb}_p) + \normof{\mA\vx^* - \vb}_p \\
	&\leq 4\normof{\mS\mA\vx^* - \mS\vb}_p + \normof{\mA\vx^* - \vb}_p
\end{align*}
where the last line follows from the optimality of \(\tilde\vx\).
Then, since \(\E[\normof{\mS\mA\vx^* - \mS\vb}_p^p] = \normof{\mA\vx^*-\vb}_p^p\), by Markov's inequality we bound \(\normof{\mS\mA\vx^* - \mS\vb}_p^p \leq 200 \normof{\mA\vx^*-\vb}_p^p\), and we conclude that
\[
	\normof{\mA\vx_c - \vb}_p \leq 801 \normof{\mA\vx^*-\vb}_p
\]
\end{proof}

\subsection{Relative Error Approximation}
\label{sec:largep:regression:eps}

In this section, we show that the estimator \(\tilde\vx\) recovered on by \algorithmref{constant:active:Lp:eps} is a \((1+\eps)\)-optimal estimator for \(\normof{\mA\vx-\vb}_p^p\).
First, note we assume that \(\eps \leq \frac1p\) in this section, and prove that sampling \(\tilde O(\frac{d \, 2^{O(p)}}{\eps^{6p+2}})\) rows suffices to recover a near-optimal estimator.
If \(\eps > \frac1p\), then we can just run the algorithm when \(\eps = \frac1p\), which yields a \(\tilde O(dp^{O(p)})\) sample complexity, so the sample complexity we promise in \theoremref{main}suffices across all possible \(\eps\in(0,1)\) and \(p \geq 2\).

Much of this section very closely tracks the proof of Theorem \(3.4\) in the first version of \cite{MuscoMWY21} uploaded to arXiv, with the main difference being \lemmaref{trunc} which uses \theoremref{vander:p:q:sens} to define the vector \(\bar{\vz}\) with respect to the \(\ell_q\) Lewis weights of \mB, where the original analysis uses the \(\ell_p\) Lewis weights of \mA.
The core of the novel analysis is used to prove \theoremref{affine-embed}.
While we state and use \theoremref{affine-embed} in this section, we do not prove it until later, in \sectionref{affine-embed}.

Most of this section analyzes the second call to \algorithmref{constant:active:Lp}, from the line 3 of \algorithmref{constant:active:Lp:eps}.
As such, we explicitly write down the notation that will be used throughout most of this section:

\begin{setting}
\label{setting:large-p-eps}
\(\mA\in\bbR^{n_0 \times d+1}\) is a Vandermonde matrix formed by sampling \(n_0 = O(\frac{1}{\eps^{O(p^2)}} d^5 p^{O(p^2)} \log \frac d\eps)\) times \(s_1,\ldots,s_{n_0}\) uniformly at random from \([-1,1]\).
\(r\) is an integer such that \(\frac12p \leq r < \frac32 p\), and \(q \defeq \frac{p}{r} \in [\frac23,2]\).
\(\mB\in\bbR^{n_0 \times d_B}\) is a Vandermonde matrix formed from the same time samples \(s_1,\ldots,s_{n_0}\), but with \(d_B \defeq rd+1\) columns.
\(w_q[\mB](i)\) denotes the \(\ell_q\) Lewis Weight of \mB at row \(i\), and \(\psi_p[\mA](i) \defeq \max_{\vx} \frac{\abs{[\mA\vx](i)}^p}{\normof{\mA\vx}_p^p}\) denotes the \(\ell_p\) sensitivity of row \(i\) of \mA.
\(\vz \defeq \vb - \mA\vx_c\) is the vector generated by line 2 of \algorithmref{constant:active:Lp}.
By \lemmaref{lp-const-factor}, \(\normof{\vz}_p \leq C_z OPT\), where \(OPT = \min_{\vx} \normof{\mA\vx-\vb}_p\).
\(\bar\vz\) is equal to \(\vz\) except that it has several entries zeroed out:
\[
	\bar\vz(i) \defeq \begin{cases}
		\vz(i) & \abs{\vz(i)} \leq \tsfrac{OPT}{\eps} (w_q[\mB](i))^{1/p} \\
		0 & \text{otherwise}
	\end{cases}
\]
Let \(\mS\in\bbR^{n_0 \times n_0}\) be the sample-and-rescale matrix generated in step 3 of \algorithmref{constant:active:Lp:eps} with \(m = O(\frac{d}{\eps^{6.5p+2}} \polylog(\frac d\eps))\).
\(C_0 \defeq 400C_z\) is a large enough constant.
\end{setting}

Note that \(r\) in this section might not be the same value of \(r\) taken in the constant factor analysis of \sectionref{large-p-const-factor}.
We explain this new choice of \(r\) in \sectionref{affine-embed} in full detail, but at a high level, we will eventually want \(r\) to be odd for this analysis to go through, which will sometimes require \(q \in [\frac23,1]\), for instance.

In the majority of this proof, we constrict ourselves to looking at vectors in the range of \mA which are not too much larger than \(OPT\), defining a sort of ``reasonable range of \mA'' to focus on.
Rigorously, this means the upcoming lemmas will only look at vectors in the range of \mA with \(\normof{\mA\vx}_p \leq C_0 OPT\).
We will eventually ensure that both \(\vx^* = \argmin_{\vx} \normof{\mA\vx-\vz}_p\) and \(\hat\vx = \argmin_{\vx} \normof{\mS\mA\vx-\mS\vz}_p\) lie within this reasonable range.

We first examine the vector \(\bar\vz\) defined in \settingref{large-p-eps}.
Intuitively, we say that the entries of \(\vz\) that get zeroed out are so large that the reasonable range of \(\mA\) cannot fit them.
So, we can approximate the true error by \(\normof{\mA\vx-\vz}_p^p \approx \normof{\mA\vx-\bar\vz}_p^p + \normof{\vz-\bar\vz}_p^p\).
That is, minimizing \(\normof{\mA\vx-\vz}_p\) is effectively equivalent to minimizing \(\normof{\mA\vx-\bar\vz}_p\).

We define the zeroing-out procedure in terms of the \(\ell_q\) Lewis weights of \mB here, so this is one place where we adapt the prior work to use the special structure of Vandermonde matrices.
Roughly, the \(\ell_p\) sensitivity \(\psi_p[\mA](i)\) measures how spikey a vector in the reasonable range can be.
The Vandermonde structure lets us bound the sensitivity of \mA with the sensitivity of \mB, since \(\psi_p[\mA](i) \leq \psi_q[\mB](i)\).
Then, we use the fact that all matrices have their \(\ell_q\) sensitivities bounded by their Lewis weights for \(q \leq 2\).
So, we can bound the spikeyness of the \(i^{th}\) entry of a vector in the reasonable range by the \(\ell_q\) Lewis weight of \(\mB\) at row \(i\).
For general matrices, the \(\ell_p\) sensitivity \(\psi_p[\mA](i)\) can be \(d^{\frac p2-1}\) times larger than the \(\ell_p\) Lewis weight, and this way of bounding the entries of \vz is one central step to avoiding the \(\tilde O(d^{p/2})\) dependence.

\begin{lemma}
\label{lem:trunc}
Consider \settingref{large-p-eps}, and let 
\[
	\cB=\left\{ i\in[n] \,:\, \abs{\vz(i)} > \tsfrac{OPT}{\eps} (w_q[\mB](i))^{1/p} \right\}.
\]
So that \(\bar{\vz}\in\bbR^{n_0}\) is equal to \vz but with all entries in \cB set to zero.
Then for all \(\vx\in\bbR^{d+1}\) with \(\normof{\mA\vx}_p \leq C_0 \, OPT\),
\[
	\left|
		\normof{\mA\vx-\vz}_p^p - \normof{\mA\vx-\bar{\vz}}_p^p - \normof{\vz-\bar{\vz}}_p^p
	\right| \leq C_1 \eps \cdot OPT^p
\]
where \(C_1\) is a constant that depends only on \(C_0, C_z\), and \(p\).
\end{lemma}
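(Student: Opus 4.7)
\textbf{Proof plan for \lemmaref{trunc}.} The approach is purely algebraic once the right perturbation estimate is in place. Since $\bar\vz$ vanishes on $\cB$ and equals $\vz$ off $\cB$, the supports of $\bar\vz$ and $\vz - \bar\vz$ are disjoint, and the indices off $\cB$ contribute identically to $\|\mA\vx-\vz\|_p^p$ and to $\|\mA\vx-\bar\vz\|_p^p$. Expanding, the quantity we must control reduces to
\[
    \Big|\,\|\mA\vx-\vz\|_p^p - \|\mA\vx-\bar\vz\|_p^p - \|\vz-\bar\vz\|_p^p\,\Big|
    \;=\;\Big|\sum_{i\in\cB}\big(|[\mA\vx](i)-\vz(i)|^p - |[\mA\vx](i)|^p - |\vz(i)|^p\big)\Big|.
\]
So the entire task is to show that on $\cB$ the entries of $\mA\vx$ are tiny compared to those of $\vz$, and then invoke a scalar perturbation inequality term-by-term.

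The key size bound on $\cB$ comes from combining the sensitivity bound with \theoremref{vander:p:q:sens}. For any $\vx$ with $\|\mA\vx\|_p \le C_0\cdot OPT$ and any index $i$,
\[
    |[\mA\vx](i)|^p \;\le\; \psi_p[\mA](i)\,\|\mA\vx\|_p^p \;\le\; \psi_q[\mB](i)\,(C_0\,OPT)^p,
\]
where the second inequality is \theoremref{vander:p:q:sens}. Using the standard inequality $\psi_q[\mB](i) \le C_q\cdot w_q[\mB](i)$ for $q\in[\tfrac23,2]$ (i.e.~sensitivities are bounded by Lewis weights in this range, up to a constant depending only on $q$), and the defining property of $\cB$, $|\vz(i)|^p > (OPT/\eps)^p\,w_q[\mB](i)$, we obtain for every $i \in \cB$
\[
    \frac{|[\mA\vx](i)|^p}{|\vz(i)|^p}
    \;\le\; C_q\,C_0^p\,\eps^p,
    \qquad\text{hence}\qquad
    |[\mA\vx](i)| \;\le\; \eta\cdot|\vz(i)|
    \text{ with } \eta = O_{p,C_0}(\eps).
\]

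With this in hand, I would apply the elementary estimate: for reals $a,b$ with $|b| \le \eta |a|$ and $\eta \le \tfrac12$,
\[
    \big||a-b|^p - |a|^p - |b|^p\big| \;\le\; C(p)\,\eta\,|a|^p,
\]
which follows from $(1\pm\eta)^p = 1 \pm O(p\eta)$ (for $\eta\le 1/p$, say) together with $|b|^p \le \eta^p |a|^p \le \eta |a|^p$. Applying this with $a = \vz(i)$ and $b = [\mA\vx](i)$ for each $i \in \cB$ and summing gives
\[
    \Big|\sum_{i\in\cB}\big(|[\mA\vx](i)-\vz(i)|^p - |[\mA\vx](i)|^p - |\vz(i)|^p\big)\Big|
    \;\le\; C(p)\,\eta\,\sum_{i\in\cB}|\vz(i)|^p
    \;\le\; C(p)\,\eta\,\|\vz\|_p^p.
\]
Since \lemmaref{lp-const-factor} gives $\|\vz\|_p \le C_z\cdot OPT$, the right-hand side is at most $C(p)\,C_z^p\,\eta\cdot OPT^p$, which is $C_1\,\eps\cdot OPT^p$ for an appropriate constant $C_1 = C_1(C_0,C_z,p)$.

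The main obstacle is bookkeeping around the constant $C_q$ in the sensitivity-to-Lewis-weight comparison, especially when $q \in [\tfrac23,1)$ where $\|\cdot\|_q$ is only a quasi-norm; the scalar perturbation step and the summation are otherwise routine. A secondary point of care is the assumption $\eps \lesssim 1/p$ made at the top of \sectionref{largep:regression:eps}, which is exactly what we need so that $(1\pm\eta)^p = 1 \pm O(p\eta)$ remains a usable linear estimate.
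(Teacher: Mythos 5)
Your proof is correct and follows essentially the same path as the paper's: bound $|[\mA\vx](i)|$ on $\cB$ by chaining the sensitivity definition, \theoremref{vander:p:q:sens}, and the sensitivity-to-Lewis-weight inequality, then apply a scalar perturbation bound of the form $(1\pm\eta)^p = 1\pm O(p\eta)$ term-by-term and sum, finishing with $\|\vz\|_p \le C_z\,OPT$. The only cosmetic difference is your constant $C_q$ in $\psi_q[\mB](i) \le C_q\, w_q[\mB](i)$; the paper invokes this inequality (Lemma 2.5 of \cite{MuscoMWY21}) with constant $1$, which tightens your bookkeeping but changes nothing substantive.
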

\begin{proof}
For any \(\vx\in\bbR^{d+1}\), by the definition of \(\ell_p\) sensitivity,
\[
	\frac{\abs{[\mA\vx](i)}^p}{\normof{\mA\vx}_p^p} \leq \psi_p[\mA](i)
\]
From the relationship of \(\ell_p\) sensitivities and \(\ell_q\) sensitivities for Vandermonde matrices, i.e., \theoremref{vander:p:q:sens}, we have
\[
	\frac{\abs{[\mA\vx](i)}^p}{\normof{\mA\vx}_p^p} \leq \psi_p[\mA](i) \leq \psi_q[\mB](i)
\]
Next, by Lemma 2.5 from \cite{MuscoMWY21}, which says that for \(q\in[0,2]\), the \(\ell_q\) sensitivities lower bound the \(\ell_q\) Lewis weights, we have
\[
	\frac{\abs{[\mA\vx](i)}^p}{\normof{\mA\vx}_p^p} \leq \psi_q[\mB](i) \leq w_q[\mB](i)
\]
Thus for \(i\in\cB\) we have
\[
	\abs{[\mA\vx](i)}^p \leq w_q[\mB](i) \cdot \normof{\mA\vx}_p^p \leq \eps^p \normof{\mA\vx}_p^p \cdot \frac{\abs{\vz(i)}^p}{OPT^p}
\]
Since \(\normof{\mA\vx}_p^p \leq C_0^p \, OPT^p\) by assumption, it follows that 
\(
	\abs{[\mA\vx](i)}^p \leq C_0^p \eps^p \cdot|\vz(i)|^p,
\)
and thus \(\abs{[\mA\vx](i) - \vz(i)} \in (1 \pm C_0 \eps) \abs{\vz(i)}\). Using this fact and the fact that \(\bar\vz(i)=0\),
\begin{align*}
	\abs{[\mA\vx](i)-\vz(i)}^p - \abs{[\mA\vx](i)-\bar{\vz}(i)}^p
	&= \abs{[\mA\vx](i)-\vz(i)}^p - \abs{[\mA\vx](i)}^p \\
	&\in (1\pm C_0\eps)^p \abs{\vz(i)}^p \pm C_0^p \eps^p \abs{\vz(i)}^p \\
	&\subseteq (1\pm 3C_0 p\eps) \abs{\vz(i)}^p \pm C_0^p \eps^p \abs{\vz(i)}^p \tag{\lemmaref{binomial-approx}} \\
	&\subseteq (1\pm C_0'\eps) \abs{\vz(i)}^p \numberthis \label{eq:partial-zeroing-pointwise}
\end{align*}
where the last line sets \(C_0' \defeq 3C_0p+C_0^p\).
Then, summing over all \(i\in\cB\),
\[
	\sum_{i\in\cB} \abs{[\mA\vx](i)-\vz(i)}^p
	- \sum_{i\in\cB} \abs{[\mA\vx](i)-\bar{\vz}(i)}^p 
	- \sum_{i\in\cB} \abs{\vz(i)-\bar{\vz}(i)}^p
	\leq C_0'\eps \cdot \sum_{i\in\cB}|\vz(i)|^p
\]
We have by assumption that \(\sum_{i\in\cB}|\vz(i)|^p \leq \normof{\vz}_p^p \leq C_0^p \, OPT^p\). Finally, 
since \(\bar{\vz}(i)=\vz(i)\) for \(i\notin\cB\), we conclude that 
\[
	\left|
		\normof{\mA\vx-\vz}_p^p
		- \normof{\mA\vx-\bar{\vz}}_p^p
		- \normof{\vz-\bar{\vz}}_p^p
	\right|
	= C_0'C_z^p \eps \cdot OPT^p.
\]
\end{proof}

Next, we show the same intuition about \(\vz\) and \(\bar\vz\) holds when looking at the subsampled regression problem; that minimizing \(\normof{\mS\mA\vx-\mS\vz}_p\) is roughly equivalent to minimizing \(\normof{\mS\mA\vx-\mS\bar\vz}_p\).

\begin{lemma}
\label{lem:sens:sketch}
Consider \settingref{large-p-eps}.
Then with probability at least \(\frac{99}{100}\), \(\normof{\mS\vz}_p \leq C_s OPT\) and further for all \(\vx\in\bbR^{d+1}\) with \(\normof{\mA\vx}_p \leq C_0 OPT\), we get
\[
	\left|
		\normof{\mS\mA\vx-\mS\vz}_p^p
		- \normof{\mS\mA\vx-\mS\bar{\vz}}_p^p
		- \normof{\mS(\vz-\bar{\vz})}_p^p
	\right|
	\leq C_2\eps \cdot OPT^p
\]
where \(C_s\) and \(C_2\) are constants that depend only on \(C_0, p\), and \(C_z\).
\end{lemma}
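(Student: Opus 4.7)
The plan is to mirror the pointwise argument of \lemmaref{trunc} while absorbing the sampling matrix $\mS$ into each term, then control the resulting random quantities via Markov's inequality. The key observation is that the pointwise bound from the proof of \lemmaref{trunc} held for each individual $i \in \cB$ and for every $\vx$ with $\normof{\mA\vx}_p \leq C_0\,OPT$; multiplying the inequality entrywise by $\mS_{ii}^p \geq 0$ preserves it, so we obtain a uniform-in-$\vx$ statement at the subsampled level without needing any net argument.

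For the first claim, recall that $\mS$ in \algorithmref{constant:active:Lp} is an importance-sampling matrix with $\E[\mS_{ii}^p] = p_i \cdot (1/p_i) = 1$. Thus $\E\bigl[\normof{\mS\vz}_p^p\bigr] = \normof{\vz}_p^p \leq C_z^p\,OPT^p$, and Markov's inequality gives $\normof{\mS\vz}_p^p \leq 100\,C_z^p\,OPT^p$ with probability at least $1-1/200$, establishing $\normof{\mS\vz}_p \leq C_s\,OPT$ for $C_s \defeq (100\,C_z^p)^{1/p}$.

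For the main bound, fix any $\vx$ with $\normof{\mA\vx}_p \leq C_0\,OPT$. For $i \notin \cB$ we have $\bar\vz(i)=\vz(i)$, so the pointwise difference is exactly zero. For $i \in \cB$, equation \equationref{partial-zeroing-pointwise} from the proof of \lemmaref{trunc} gives
\[
\bigl|\abs{[\mA\vx](i)-\vz(i)}^p - \abs{[\mA\vx](i)-\bar\vz(i)}^p - \abs{\vz(i)-\bar\vz(i)}^p\bigr| \leq C_0'\,\eps\,\abs{\vz(i)}^p,
\]
using $\abs{\vz(i)-\bar\vz(i)} = \abs{\vz(i)}$ on $\cB$. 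Multiplying by $\mS_{ii}^p$ and summing over $i$ yields
\[
\bigl|\normof{\mS\mA\vx - \mS\vz}_p^p - \normof{\mS\mA\vx - \mS\bar\vz}_p^p - \normof{\mS(\vz-\bar\vz)}_p^p\bigr| \leq C_0'\,\eps \sum_{i \in \cB} \mS_{ii}^p \abs{\vz(i)}^p = C_0'\,\eps\,\normof{\mS(\vz-\bar\vz)}_p^p,
\]
where the last equality uses that $\vz - \bar\vz$ is supported on $\cB$.

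It remains to bound $\normof{\mS(\vz-\bar\vz)}_p^p$, a random quantity independent of $\vx$. By unbiasedness, $\E\bigl[\normof{\mS(\vz-\bar\vz)}_p^p\bigr] = \normof{\vz-\bar\vz}_p^p \leq \normof{\vz}_p^p \leq C_z^p\,OPT^p$, so Markov yields $\normof{\mS(\vz-\bar\vz)}_p^p \leq 100\,C_z^p\,OPT^p$ with probability at least $1-1/200$. Union bounding with the first event, both conclusions hold simultaneously with probability at least $99/100$, and setting $C_2 \defeq 100\,C_0'\,C_z^p$ completes the argument. The only real subtlety to watch for is that the pointwise bound depends on $\vx$ solely through the magnitude hypothesis $\normof{\mA\vx}_p \leq C_0\,OPT$, which is exactly what lets the final error factor through the $\vx$-independent quantity $\normof{\mS(\vz-\bar\vz)}_p^p$ and avoid any uniform-concentration overhead.
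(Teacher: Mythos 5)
Your proof is correct and follows essentially the same route as the paper: multiply the pointwise bound from \equationref{partial-zeroing-pointwise} by \(\mS_{ii}^p\), sum over \(i\), and control the resulting \(\normof{\mS(\vz-\bar\vz)}_p^p\) term via Markov's inequality. The paper streamlines the probability bookkeeping slightly by observing \(\normof{\mS(\vz-\bar\vz)}_p^p \leq \normof{\mS\vz}_p^p\) so that a single Markov application covers both claims, whereas your two-Markov union bound needs the Markov threshold to be \(200\) rather than \(100\) to actually reach \(99/100\), but this is an adjust-the-constants issue, not a gap.
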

\begin{proof}
The proof builds off of \lemmaref{trunc}.
For any \(\vx\in\bbR^{d+1}\) and \(i\in\cB\), by multiplying both sides of \equationref{partial-zeroing-pointwise} by \(S_{ii}^p\), we have that for all \(i \in \cB\),
\[ 
	\abs{[\mS\mA\vx](i)-[\mS\vz](i)}^p - \abs{[\mS\mA\vx](i)-[\mS\bar{\vz}](i)}^p \in (1\pm C_0'\eps) \abs{[\mS\vz](i)}^p
\]
For all \(i \notin \cB\),
\(
	\abs{[\mS\mA\vx](i)-[\mS\vz](i)}^p - \abs{[\mS\mA\vx](i)-[\mS\bar{\vz}](i)}^p = 0.
\) since \(\bar{\vz}(i)=\vz(i)\) for \(i\notin\cB\).
Summing over all \(i\in[n_0]\), we get
\[
	\normof{\mS\mA\vx-\mS\vz}_p^p - \normof{\mS\mA\vx-\mS\bar\vz}_p^p - \normof{\mS(\vz-\bar\vz)}_p^p \in \pm C_0'\eps \normof{\mS(\vz-\bar\vz)}_p^p
\]
Next, since \(\bar\vz\) is a partial zeroing of \(\vz\), and since \(\E[\normof{\mS\vz}_p^p] = \normof{\vz}_p^p\) we can use Markov's inequality to bound \(\normof{\mS(\bar\vz-\vz)}_p^p \leq \normof{\mS\vz}_p^p \leq 100 \normof{\vz}_p^p \leq 100C_z^p OPT^p\), with probability \(\frac{99}{100}\).
We conclude:
\[
	\normof{\mS\mA\vx-\mS\vz}_p^p - \normof{\mS\mA\vx-\mS\bar\vz}_p^p - \normof{\mS(\vz-\bar\vz)}_p^p \in \pm\, 100C_0'C_z\eps \cdot OPT
\]
\end{proof}

Next we state our core technical contribution: the \emph{Affine Embedding} guarantee.
While the prior work proves this same result, they require \(\tilde O(d^{p/2})\) samples to do so.
In \sectionref{affine-embed}, we show that Vandermonde matrices can do this by taking \(\tilde O(d)\) samples with probabilities proportional to the \(\ell_q\) Lewis weights of \mB.

\begin{theorem}[Affine Embedding]
\label{thm:affine-embed}
Consider \settingref{large-p-eps}.
Then with probability \(\frac{99}{100}\), for all \(\vx\in\bbR^{d+1}\) with \(\normof{\mA\vx}_p \leq C_0 OPT\), we have
\begin{align}
	\abs{
		\normof{\mS\mA\vx - \mS\bar\vz}_p^p - \normof{\mA\vx-\bar\vz}_p^p
	} \leq C_3\eps \cdot OPT^p
	\label{eq:affine-embed}
\end{align}
where \(C_3\) is a constant that depends only on \(C_0, C_z,\) and \(p\).
\end{theorem}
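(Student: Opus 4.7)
The plan is to adapt the compact rounding argument of \cite{bourgain1989approximation, MuscoMWY21} (reviewed in \appendixref{compact-rounding}) while exploiting the polynomial structure to replace an $\ell_p$ analysis on $\mA$ by an $\ell_q$ analysis on $\mB$. A direct application of that machinery to $\mA$ would force a $\tilde{O}(d^{p/2})$ sample dependence, matching the worst-case lower bound for $\ell_p$ Lewis-weight sampling. To bring this down to $\tilde{O}(d)$, I would route every net-construction and concentration step through the bijection from \theoremref{vander:p:q:sens}: when $r$ is odd, for every $\vx \in \bbR^{d+1}$ there is a $\vy \in \bbR^{d_B}$ with $[\mA\vx](i) = \sign([\mB\vy](i)) \cdot |[\mB\vy](i)|^{1/r}$, and in particular $\normof{\mA\vx}_p^p = \normof{\mB\vy}_q^q$ and $\psi_p[\mA](i) \leq \psi_q[\mB](i) \leq w_q[\mB](i)$.

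First I would dispose of the constant term $\vx = \mathbf{0}$ by bounding $|\normof{\mS\bar\vz}_p^p - \normof{\bar\vz}_p^p|$ directly with Bernstein. The truncation rule in \settingref{large-p-eps} gives the pointwise bound $|\bar\vz(i)|^p \leq (OPT/\eps)^p\, w_q[\mB](i)$, and by \theoremref{sens:lewis:approx} the sampling probabilities satisfy $p_i \gtrsim (m/n_0)\, w_q[\mB](i) / \polylog(d)$. Substituting these into the per-coordinate and variance bounds of Bernstein yields deviation $O(\eps \cdot OPT^p)$ once $m \ge \tilde{O}(d/\eps^{O(p)})$, because the total Lewis mass $\sum_i w_q[\mB](i) = \tilde{O}(d_B) = \tilde{O}(pd)$.

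The heart of the argument is a compact rounding of $\cK \defeq \{\mA\vx : \normof{\mA\vx}_p \leq C_0 \cdot OPT\}$. Rather than net $\cK$ in $\ell_p$, I would build a hierarchical $\eps$-net for the image $\cK' \defeq \{\mB\vy : \normof{\mB\vy}_q \leq (C_0 \cdot OPT)^{p/q}\}$ in $\ell_q$ via the standard rounding scheme from \appendixref{compact-rounding}, then pull it back coordinate-wise through the odd $r$-th root to obtain a net for $\cK$. For each net element $\vx_j$, the random variable $Y_i^{(j)} \defeq \mS_{ii}^p |[\mA\vx_j](i) - \bar\vz(i)|^p$ has mean $|[\mA\vx_j](i) - \bar\vz(i)|^p$, and the key inputs to Bernstein —- namely $|[\mA\vx_j](i) - \bar\vz(i)|^p \leq 2^p (|[\mA\vx_j](i)|^p + |\bar\vz(i)|^p) \leq 2^p (C_0^p + (1/\eps)^p) \cdot w_q[\mB](i) \cdot OPT^p$ —- come from combining $\psi_p[\mA](i) \leq w_q[\mB](i)$ with the truncation bound. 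Together with $p_i \gtrsim (m/n_0)\,w_q[\mB](i)$, a union bound over the net yields the desired $C_3\eps\cdot OPT^p$ affine-embedding error for every $\vx_j$.

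The main obstacle will be the telescoping extension from the net to all of $\cK$. Because the rounding scheme operates on $\cK'$ in $\ell_q$ while the quantity to control lives on $\cK$ in $\ell_p$, I need to verify that the nonlinear pullback $\vy \mapsto \sign(\mB\vy)\cdot|\mB\vy|^{1/r}$ respects net radii up to $\polylog$ factors, and I need to handle the cross terms produced by the pointwise Taylor expansion $\big||a|^p - |b|^p\big| \lesssim p(|a|^{p-1} + |b|^{p-1})|a-b|$ when comparing consecutive net elements. Each such cross term is of the form $|[\mA\vx](i)|^{p-1}|[\mA(\vx - \vx_j)](i)|$ or $|\bar\vz(i)|^{p-1}|[\mA(\vx - \vx_j)](i)|$, both of which are uniformly controlled by $(OPT/\eps)^{p-1} \cdot w_q[\mB](i)^{(p-1)/p}$ times the residual norm. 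Summing these across the hierarchical scales, while using that the $\ell_q$ compact rounding on $\cK'$ has at most $\tilde{O}(d_B)$ effective dimensions and a bounded doubling factor, produces the total $C_3\eps \cdot OPT^p$ slack. The choice of $r$ odd is essential throughout, since only then does the coordinate-wise $r$-th root give a bona fide inverse to $\vx \mapsto (\mA\vx)^r = \mB\vy$ on all of $\bbR$.
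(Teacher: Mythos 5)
Your high-level plan — reduce to an $\ell_q$ compact-rounding argument on $\mB$ via the odd $r$-th power map, then pull back coordinate-wise — matches the paper's strategy (\lemmaref{tensored-compact-rounding}). But there is a genuine gap in the concentration step, and your attention is pointed at the wrong obstacle.

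You describe applying Bernstein ``for each net element $\vx_j$'' to $Y_i^{(j)} \defeq \mS_{ii}^p |[\mA\vx_j](i) - \bar\vz(i)|^p$, with the uniform per-coordinate bound $|[\mA\vx_j](i)-\bar\vz(i)|^p \lesssim \eps^{-p}\, w_q[\mB](i)\,OPT^p$, followed by a union bound over the net. Carrying this through: $\frac{1}{p_i}|[\mA\vx_j](i)-\bar\vz(i)|^p \lesssim \frac{d_B}{m\eps^p}\polylog(d)\,OPT^p$ (using $p_i \approx \frac{m}{n_0\sqrt{1-s_i^2}}$ and $w_q[\mB](i) \approx \frac{d_B}{n_0\sqrt{1-s_i^2}}\polylog(d)$), so Bernstein gives failure probability $\exp(-\Theta(m\eps^{p+2}/(d_B\polylog d)))$ per net element. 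Union-bounding over a net of size $(1/\eps)^{O(d)}$ forces $m \gtrsim d\cdot d_B\polylog(d)/\eps^{p+2} = \tilde O(p\,d^2)$ — quadratic in $d$, not linear. This is exactly the naive bound the theorem is designed to beat.

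The piece you are missing is that the paper never applies Bernstein to the full residual $\mA\vx_j - \bar\vz$. Because the compact-rounding layers $\vf_0,\ldots,\vf_\ell$ have \emph{disjoint supports}, one has $\normof{\mS\vr'}_p^p = \sum_k \normof{\mS\vf_k}_p^p$ exactly, so Bernstein is applied \emph{per layer} in \lemmaref{preserve-compact-roundings}. The payoff is a cancellation: the per-coordinate magnitude bound at layer $k$ carries a factor $(1+\eps^r)^{q(k+2)}$ which \emph{grows} with $k$, while the layer net size satisfies $\log\abs{\cF_k} \lesssim d_B\log(n_0)/(\eps^{r(1+q)}(1+\eps^r)^{qk})$, which \emph{shrinks} at exactly the compensating rate. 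Multiplying the two in the Bernstein union-bound requirement makes the $(1+\eps^r)^{qk}$ factors cancel, yielding $m = \tilde O(d_B/\eps^{O(p)}) = \tilde O(pd)$. Without this layer-by-layer analysis, the compact rounding machinery gives you no advantage over a flat $\eps$-net. Conversely, the ``telescoping extension from the net to all of $\cK$'' that you flag as the main obstacle is actually the routine part (\lemmaref{compact-rounding-suffices} and \lemmaref{net-suffices} are straightforward triangle-inequality arguments and do not require the cross-term bookkeeping you sketch). You would also need to verify, as the paper does, that the pulled-back vectors $\ve_k(i) = (\vd_k(i))^{1/r}$ inherit the same signs and disjoint supports as the $\vd_k$, which is where the oddness of $r$ does its work.
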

We prove \theoremref{affine-embed} later, in \sectionref{affine-embed}, and instead first show that this affine embedding suffices to prove the correctness of the overall algorithm.
\begin{theorem}
\label{thm:sens:two:stage:largep}
Consider \settingref{large-p-eps}.
Then, \algorithmref{constant:active:Lp} reads \(O(\frac{d}{\eps^{6.5p+2}} \polylog(\frac d\eps))\) entries of \(\vb\) and outputs a vector \(\tilde\vx\) such that with probability \(0.9\),
\[
	\normof{\mA\tilde{\vx}-\vb}_p \leq (1+\eps) \min_{\vx\in\bbR^{d+1}} \normof{\mA\vx-\vb}_p
\]
\end{theorem}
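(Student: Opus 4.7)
Since \algorithmref{constant:active:Lp:eps} returns $\tilde\vx = \vx_c + \hat\vx$ with $\hat\vx = \argmin_\vx \normof{\mS\mA\vx - \mS\vz}_p$, we have the identity $\normof{\mA\tilde\vx - \vb}_p = \normof{\mA\hat\vx - \vz}_p$. The plan is to show $\normof{\mA\hat\vx - \vz}_p^p \leq (1+O(\eps))\, OPT^p$ and then rescale $\eps$ by a constant (absorbing at most a $p^{O(p)}$ blowup into the sample complexity, which is already $(1/\eps)^{O(p)}$). The claimed sample complexity $O(\frac{d}{\eps^{6.5p+2}} \polylog(\frac{d}{\eps}))$ is the value of $m$ fixed in \settingref{large-p-eps}.

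Let $\vx^* \defeq \argmin_\vx \normof{\mA\vx - \vz}_p$, so $\normof{\mA\vx^* - \vz}_p = OPT$. First I would verify that both $\vx^*$ and $\hat\vx$ lie in the ``reasonable range'' $\normof{\mA\vx}_p \leq C_0 \cdot OPT$ required by \lemmaref{trunc}, \lemmaref{sens:sketch}, and \theoremref{affine-embed}. For $\vx^*$: the triangle inequality and \lemmaref{lp-const-factor} give $\normof{\mA\vx^*}_p \leq \normof{\mA\vx^* - \vz}_p + \normof{\vz}_p \leq (1+C_z)\, OPT$. For $\hat\vx$: optimality on the sketched problem gives $\normof{\mS\mA\hat\vx - \mS\vz}_p \leq \normof{\mS\vz}_p \leq C_s\, OPT$ by \lemmaref{sens:sketch}, so $\normof{\mS\mA\hat\vx}_p \leq 2C_s\, OPT$; since our $m$ exceeds the threshold of \lemmaref{lp-subspace-embed}, $\mS$ is an $\ell_p$-subspace embedding for $\mA$, whence $\normof{\mA\hat\vx}_p = O(C_s)\cdot OPT \leq C_0\cdot OPT$ for $C_0 = 400 C_z$ as in \settingref{large-p-eps}.

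With both vectors in range, I would chain the three approximation results. By optimality of $\hat\vx$, $\normof{\mS\mA\hat\vx - \mS\vz}_p^p \leq \normof{\mS\mA\vx^* - \mS\vz}_p^p$. Applying \lemmaref{sens:sketch} to both sides, the common $\normof{\mS(\vz - \bar\vz)}_p^p$ term cancels, leaving
\[
    \normof{\mS\mA\hat\vx - \mS\bar\vz}_p^p \leq \normof{\mS\mA\vx^* - \mS\bar\vz}_p^p + 2C_2\,\eps\cdot OPT^p.
\]
Next apply \theoremref{affine-embed} on each side to remove $\mS$:
\[
    \normof{\mA\hat\vx - \bar\vz}_p^p \leq \normof{\mA\vx^* - \bar\vz}_p^p + (2C_2 + 2C_3)\,\eps\cdot OPT^p.
\]
Finally apply \lemmaref{trunc} to both sides to swap $\bar\vz$ back to $\vz$; the common $\normof{\vz - \bar\vz}_p^p$ again cancels, giving
\[
    \normof{\mA\hat\vx - \vz}_p^p \leq \normof{\mA\vx^* - \vz}_p^p + (2C_1 + 2C_2 + 2C_3)\,\eps\cdot OPT^p = (1+O(\eps))\, OPT^p.
\]
The final probability $0.9$ follows by a union bound over \lemmaref{lp-const-factor}, \lemmaref{lp-subspace-embed}, \lemmaref{sens:sketch}, and \theoremref{affine-embed}, each of which holds with probability at least $\frac{99}{100}$.

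\textbf{Main obstacle.} All genuinely new work lives in \theoremref{affine-embed} itself: establishing an affine embedding with sample complexity linear in $d$ (rather than the worst-case $d^{p/2}$) for $p > 2$ demands the Vandermonde-specific compact rounding argument that reduces from $\ell_p$ regression on $\mA$ to $\ell_q$ regression on $\mB$ with $q \in [\tfrac23, 2]$, and is deferred to \sectionref{affine-embed}. Once \theoremref{affine-embed} is in hand, the argument above is a mechanical sequence of optimality and triangle-inequality steps.
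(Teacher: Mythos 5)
Your proof is correct and follows essentially the same route as the paper's: verify that $\vx^*$ and $\hat\vx$ both lie in the range $\normof{\mA\vx}_p \le C_0\,OPT$, then combine \lemmaref{trunc}, \lemmaref{sens:sketch}, and \theoremref{affine-embed} with optimality of $\hat\vx$ to transfer the sketched optimality to the true objective, losing only $O(\eps)\cdot OPT^p$. The only cosmetic difference is that the paper first packages the three lemmas into a single ``for all in-range $\vx$'' bound with a $\hat C$ correction term and then applies it twice, whereas you apply each lemma in sequence to both vectors and note the cancellation of the $\normof{\mS(\vz-\bar\vz)}_p^p$ and $\normof{\vz-\bar\vz}_p^p$ terms directly; these are equivalent.
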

\begin{proof}
By \lemmaref{lp-const-factor}, we know that step 1 from \algorithmref{constant:active:Lp} returns a vector \(\vx_c\) such that \(\normof{\mA\vx_c-\vb}_p \leq C_z \cdot OPT\).
Recall that \(\vz \defeq \vb - \mA\vx_c\), so we equivalently have \(\normof{\vz}_p \leq C_z \cdot OPT\).
Let \(\bar\vz \in \bbR^{n_0}\) be the partially zeroed out copy of \vz as in \settingref{large-p-eps}.
Combining \lemmaref{trunc}, \lemmaref{sens:sketch}, and \theoremref{affine-embed}, for all \(\vx\) with \(\normof{\mA\vx}_p \leq C_0 \cdot OPT\), we get
\begin{align*}
	\normof{\mS\mA\vx-\mS\vz}_p^p
	&\in \normof{\mS\mA\vx - \mS\bar\vz}_p^p + \normof{\mS\vz-\mS\bar\vz}_p^p \pm C_2 \eps \cdot OPT^p \tag{\lemmaref{sens:sketch}} \\
	&\subseteq \normof{\mA\vx - \bar\vz}_p^p + \normof{\mS\vz-\mS\bar\vz}_p^p \pm (C_2+C_3) \eps \cdot OPT^p \tag{\theoremref{affine-embed}} \\
	&\subseteq \normof{\mA\vx - \vz}_p^p - \normof{\vz-\bar\vz}_p^p + \normof{\mS\vz-\mS\bar\vz}_p^p \pm (C_1+C_2+C_3) \eps \cdot OPT^p \tag{\lemmaref{trunc}} \\
	&= \normof{\mA\vx - \vz}_p^p - \hat C \pm (C_1+C_2+C_3) \eps \cdot OPT^p
\end{align*}
where \(\hat C \defeq \normof{\vz-\bar\vz}_p^p - \normof{\mS\vz-\mS\bar\vz}_p^p\) is independent of \vx.
Note that since \(\bar\vz\) is a partial zeroing of \vz, \(\normof{\vz - \bar\vz}_p \leq \normof{\vz}_p \leq C_z \cdot OPT\).
Similarly, \(\normof{\mS\vz-\mS\bar\vz}_p \leq \normof{\mS\vz}_p \leq C_s \cdot OPT\).
So, we have \(\hat C \leq (C_z^p + C_s^p) \, OPT\) and thus we can equivalently write this last bound as, for any \vx with \(\normof{\mA\vx}_p \leq C_0 \cdot OPT\),
\begin{align}
	\abs{\normof{\mS\mA\vx-\mS\vz}_p^p - (\normof{\mA\vx - \vz}_p^p + \hat C)} \leq C_4 \eps \cdot OPT^p
	\label{eq:affine-subspace-embed}
\end{align}
where \(C_4 \defeq C_1+C_2+C_3\).
We will apply \equationref{affine-subspace-embed} twice, once to \(\hat\vx = \argmin_{\vx} \normof{\mS\mA\vx-\mS\vz}_p\) and once to \(\vx^* \defeq \argmin_{\vx} \normof{\mA\vx-\vz}_p\).
To do so, we first have to verify that \(\normof{\mA\hat\vx}_p\) and \(\normof{\mA\vx^*}_p\) are small enough -- i.e. are at most \(C_0 OPT\).
We first bound \(\normof{\mA\vx^*}_p \leq \normof{\mA\vx^* - \bar\vz}_p + \normof{\bar\vz}_p \leq 2\normof{\bar\vz}_p \leq 2C_z OPT \leq C_0 OPT\).
Next, by \lemmaref{lp-subspace-embed}, we have that \mS is an \(\ell_p\) subspace embedding.
So, we have \(\normof{\mA\hat\vx}_p \leq 2\normof{\mS\mA\hat\vx}_p\) and by Markov's inequality, with probability \(\frac{99}{100}\), we have:
\[
	2\normof{\mS\mA\hat\vx}_p
	\leq 2\normof{\mS\mA\hat\vx-\mS\vz}_p + 2\normof{\mS\vz}_p
	\leq 2\normof{\mS\mA\vx^*-\mS\vz}_p + 2\normof{\mS\vz}_p
	\leq 200(\normof{\mA\vx^*-\vz}_p + \normof{\vz}_p)
	\leq C_0 OPT
\]
We proceed to apply \equationref{affine-subspace-embed} twice, to get
\begin{align*}
	\normof{\mA\hat\vx - \vz}_p^p
	&\leq \normof{\mS\mA\hat\vx - \mS\vz}_p^p - \hat C + C_4\eps OPT^p \tag{\equationref{affine-subspace-embed}}\\
	&\leq \normof{\mS\mA\vx^* - \mS\vz}_p^p - \hat C + C_4\eps OPT^p \tag{Optimality of \(\hat\vx\)}\\
	&\leq (\normof{\mA\vx^* - \vz}_p^p + \hat C) - \hat C + 2C_4\eps OPT^p \tag{\equationref{affine-subspace-embed}}\\
	&= \normof{\mA\vx^* - \vz}_p^p + 2C_4\eps OPT^p
\end{align*}
And so the overall predictor \(\tilde\vx = \vx_c + \hat\vx\) has
\begin{align*}
	\normof{\mA\tilde\vx-\vb}_p^p
	&= \normof{\mA\vx_c + \mA\hat\vx - \vz - \mA\vx_c}_p^p \\
	&= \normof{\mA\hat\vx - \vz}_p^p \\
	&\leq \min_{\vx} \normof{\mA\vx - \vz}_p^p + 2C_4\eps OPT^p \\
	&= \min_{\vx} \normof{\mA\vx - (\vb + \mA\vx_c)}_p^p + 2C_4\eps OPT^p \\
	&= \min_{\vx} \normof{\mA\vx - \vb}_p^p + 2C_4\eps OPT^p \\
	&= (1+2C_4\eps)\min_{\vx} \normof{\mA\vx - \vb}_p^p
\end{align*}
Note that our proof ensures that \theoremref{sens:two:stage:largep} holds with a fixed constant probability. 
\end{proof}

\subsection{Proving the Affine Embedding 
\texorpdfstring{(\theoremref{affine-embed})}{}}
\label{sec:affine-embed}
To prove \theoremref{affine-embed}, we want a bound over all vectors \(\mA\vx\) where \(\normof{\mA\vx}_p^p \leq C_0^p OPT^p\).
Since a na\"ive \(\eps\)-net argument would lead to a suboptimal dependence on \(d\), we follow the first arXiv version of \cite{MuscoMWY21}, and appeal to a more refined net analysis introduced in \cite{bourgain1989approximation}.
In that work, the authors construct a ``compact rounding'' for all vectors in the set \(\{\mA\vx : \normof{\mA\vx}_p^p \leq 1\}\).
In particular, they construct a series of nets \(\cD_0,\ldots,\cD_\ell\) (with different properties for each \(k\in\{0,\ldots,\ell\}\)), such that every \(\mA\vx\) with \(\normof{\mA\vx}_p^p\leq1\) can be approximated as \(\mA\vx \approx \sum_{k=0}^\ell \vd_k\), where each \(\vd_k \in \cD_k\).
After scaling these vectors by a factor of \(C_0 OPT\) and applying a union bound over the nets \(\cD_0, \ldots, \cD_\ell\), \cite{bourgain1989approximation} obtains a sampling result for \(\ell_p\) Lewis weights with an optimal \(d\) dependence of \(\tilde O(d^{\max\{1,p/2\}})\).

To avoid this large \(d\) dependence for \(p>2\), we return to the expanded Vandermonde matrix \(\mB\in\bbR^{n_0 \times d_B}\).
In \lemmaref{tensored-compact-rounding}, we show how to use the nets \(\cD_0,\ldots,\cD_\ell\) from the \(\ell_q\) compact rounding on \mB to create nets \(\cE_0,\ldots,\cE_\ell\) for an \(\ell_p\) compact rounding on \mA.
Each \(\ell_p\) net \(\cE_k\) will have the same cardinality as the corresponding \(\ell_q\) net \(\cD_k\), which makes it significantly smaller than the black-box net that would be created for Lewis weight sampling general matrices in the \(\ell_p\) norm.
Lastly, \lemmaref{tensored-compact-rounding} also uses a technique from \cite{bourgain1989approximation} to transform \(\cE_0,\ldots,\cE_\ell\), which approximate vectors of the form \(\mA\vx\), into new nets \(\cF_0,\ldots,\cF_\ell\), which have similar size and approximate vectors of the form \(\mA\vx-\bar\vz\).

To get started, we use the following compact rounding lemma, proven in the first version of \cite{MuscoMWY21} uploaded to arXiv, with a complete and simplified proof included in \appendixref{compact-rounding} for completeness.
Specifically, we state the result from \appendixref{compact-rounding} in the special case when \(\vv=0\):
\begin{lemma}[Compact Rounding,~\cite{MuscoMWY21}]
\label{lem:compact-rounding}
Let \(\mB\in\bbR^{n_0 \times d_B}\) and \(q\in[0,2]\).
Let \(\cN_{\eps_c}\) be an \(\eps_c\)-Net over \(\normof{\mB\vy}_q=1\) with \(\abs{\cN_\eps} \leq O(d \log(\frac1\eps))\).
Let \(\ell = \log_{1+\eps_c}((2d_B)^{1/q})\).
Then, there exists sets of vectors \(\cD_0,\ldots,\cD_\ell \subseteq \bbR^{n_0}\), such that:
For all \(\vu\in\cN_{\eps_c}\) we can pick \(\vd_0\in\cD_0,\ldots,\vd_\ell\in\cD_\ell\) to create a ``compact rounding'' \(\vu' = \sum_{k=0}^\ell \vd_{k}\) where:
\begin{enumerate}
	\item \(\abs{\vu(i) - \vu'(i)} \leq \eps_c \abs{\vu(i)}\) for all \(i\in[n_0]\)
	\item \(\abs{\vd_{k}(i)} \leq \frac{1}{\eps_c} (\frac12(\frac{w_q[\mB](i)}{d_B} + \frac1{n_0}))^{1/q} (1+\eps_c)^{k+2}\) for all \(i\in[n_0], k\in\{0,\ldots,\ell\}\)
	\item \(\vd_0,\ldots,\vd_\ell\) all have disjoints supports
\end{enumerate}
Further, we have that the sets \(\cD_0,\ldots,\cD_\ell\) are not too large:
\[
	\log\abs{\cD_k} \leq C_q \frac{d_B \log(n_0)}{\eps_c^{1+q}(1+\eps_c)^{qk}},
\]
where $C_q$ is a fixed constant depending only on $q$.
\end{lemma}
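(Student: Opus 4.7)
My plan is to follow the classical compact rounding construction of Bourgain--Lindenstrauss--Milman, adapted to Lewis weights as in the first arXiv version of \cite{MuscoMWY21}. The strategy is to decompose every vector $\vu \in \cN_{\eps_c}$ coordinate-wise into $\ell+1$ pieces indexed by dyadic magnitude windows, and then to discretize each piece on a grid whose resolution is proportional to the per-coordinate scale
\[
\sigma_i \defeq \bigl(\tfrac{1}{2}\bigl(\tfrac{w_q[\mB](i)}{d_B}+\tfrac{1}{n_0}\bigr)\bigr)^{1/q}.
\]
The usefulness of this particular $\sigma_i$ is that it simultaneously encodes the Lewis weight profile of $\mB$ and imposes a uniform floor of $1/n_0$ on every coordinate, which controls grid precision uniformly for low-weight rows.

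First I will invoke the sensitivity bound $\psi_q[\mB](i)\le w_q[\mB](i)$ for $q\in[0,2]$ (Lemma 2.5 of \cite{MuscoMWY21}), so that any $\vu = \mB\vy$ with $\normof{\vu}_q = 1$ satisfies $|\vu(i)| \le w_q[\mB](i)^{1/q} \le (2d_B)^{1/q}\sigma_i$. Each coordinate is then assigned the unique level $k(i) \in \{0,\ldots,\ell\}$ with $|\vu(i)| \in [\sigma_i(1+\eps_c)^{k(i)}, \sigma_i(1+\eps_c)^{k(i)+1})$, where the upper bound $\ell = \log_{1+\eps_c}((2d_B)^{1/q})$ is exactly tight. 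I will define $\vd_k(i)$ as the rounding of $\vu(i)$ to the nearest multiple of $\eps_c\sigma_i(1+\eps_c)^k$ when $k(i) = k$, and zero otherwise. The three pointwise properties drop out immediately: the rounding error at coordinate $i$ is at most $\tfrac{\eps_c}{2}\sigma_i(1+\eps_c)^k \le \eps_c|\vu(i)|$, the supports of $\vd_0,\ldots,\vd_\ell$ are disjoint by construction, and $|\vd_k(i)| \le \sigma_i(1+\eps_c)^{k+1} + \tfrac{\eps_c}{2}\sigma_i(1+\eps_c)^k \le \tfrac{1}{\eps_c}\sigma_i(1+\eps_c)^{k+2}$.

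The delicate step, which I expect to be the main obstacle, is bounding $\log|\cD_k|$ by the claimed $C_q d_B \log n_0/(\eps_c^{1+q}(1+\eps_c)^{qk})$. The norm constraint $\sum_i |\vu(i)|^q = 1$ together with the level-$k$ lower bound $|\vu(i)|^q \ge (1+\eps_c)^{qk}\sigma_i^q$ yields the two control inequalities
\[
\sum_{i\in\mathrm{supp}(\vd_k)} w_q[\mB](i) \le \tfrac{2d_B}{(1+\eps_c)^{qk}}\quad\text{and}\quad |\mathrm{supp}(\vd_k)| \le \tfrac{2n_0}{(1+\eps_c)^{qk}}.
\]
A naive enumeration of (support choice) $\times$ ($O(1/\eps_c)$ grid values per coordinate) yields a bound scaling with $n_0$, which is too weak. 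To extract the advertised $d_B$ scaling, I will bucket coordinates dyadically by Lewis weight --- those with $w_q[\mB](i)\in[2^j,2^{j+1})$ go into bucket $j$, and Markov's inequality caps bucket $j$'s size by $d_B/2^j$ --- and apply the Lewis-weight mass inequality within each bucket. The bucket-$j$ support of $\vd_k$ is then bounded by $\min(d_B/2^j,\,2d_B/(2^j(1+\eps_c)^{qk}))$, so the entropy $\log\binom{d_B/2^j}{|\mathrm{supp}\cap\mathrm{bucket}\,j|}$ contributes at most $O(d_B/(2^j(1+\eps_c)^{qk}) \cdot \log n_0)$.

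Summing the per-bucket entropy contributions over the $O(\log n_0)$ nonempty buckets yields a geometric sum dominated by its $j=0$ term, while folding in the $(1/\eps_c)^{|\mathrm{supp}|}$ factor from value choices absorbs the additional $\log(1/\eps_c)$ into a $1/\eps_c^{q}$ slack, producing the target $d_B\log n_0/(\eps_c^{1+q}(1+\eps_c)^{qk})$ scaling. Finally, $\cD_k$ is declared to be the union, over $\vu\in\cN_{\eps_c}$, of all resulting level-$k$ rounded pieces, so $\log|\cD_k|$ is at most the grid log-size plus $\log|\cN_{\eps_c}|\le O(d_B\log(1/\eps_c))$, the latter already absorbed into the final exponential bound.
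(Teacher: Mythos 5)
Your pointwise construction is fine: the dyadic level assignment, the per-coordinate grid at resolution $\eps_c\sigma_i(1+\eps_c)^k$, and the verification of properties (1)--(3) all go through, and this is structurally similar in spirit to the paper's decomposition. The gap is in the cardinality bound on $\cD_k$, which is the real content of the lemma, and your bucketing argument does not close it.

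The issue is that the two control inequalities you extract from $\normof{\vu}_q=1$ --- the level-$k$ support bound $|\mathrm{supp}(\vd_k)| \le 2n_0/(1+\eps_c)^{qk}$ and the mass bound $\sum_{i\in\mathrm{supp}(\vd_k)} w_q[\mB](i) \le 2d_B/(1+\eps_c)^{qk}$ --- hold for \emph{any} vector with that magnitude profile, not just vectors in the $d_B$-dimensional range of $\mB$. They therefore cannot by themselves produce a $d_B$-scaled entropy bound. Concretely, for coordinates where $w_q[\mB](i) \ll d_B/n_0$, the floor $\sigma_i^q \approx 1/(2n_0)$ dominates, the mass inequality is vacuous, and the only remaining bound on the bucket's level-$k$ support is the $n_0$-scaled one. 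Worse, the grid-value factor $(1/\eps_c)^{|\mathrm{supp}(\vd_k)|}$ contributes $\tfrac{2n_0}{(1+\eps_c)^{qk}}\log\tfrac{1}{\eps_c}$ to $\log|\cD_k|$, and since $n_0$ can be polynomially larger than $d_B$ (indeed $n_0 = \poly(d_B/\eps)$ here), this is not absorbed by the claimed $d_B\log n_0/(\eps_c^{1+q}(1+\eps_c)^{qk})$, no matter how you bucket. Falling back to $|\cD_k| \le |\cN_{\eps_c}|$ gives $\log|\cD_k| \le O(d_B\log\tfrac{1}{\eps_c})$, which does not decay in $k$ and is strictly weaker than the claim once $(1+\eps_c)^{qk}$ exceeds roughly $\log n_0/(\eps_c^{1+q}\log\tfrac{1}{\eps_c})$.

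The paper does not round $\vu$ to a grid at all. It instead invokes the imported entropy estimates of \lemmaref{entropy:est} (from Bourgain--Lindenstrauss--Milman and Schechtman--Zvavitch): for each $k$ there is a net $\cN_k$ approximating the unit ball $\cB_q$ in the $\mW^{-1/q}$-weighted $\ell_\infty$ norm to accuracy $\gamma = \tfrac13\eps_c(1+\eps_c)^k$, with $\log|\cN_k| \le c_q\, d_B\log n_0/\gamma^q$. The level-$k$ piece $\vd_k$ is carved out of the \emph{net approximant} $\vf_{k,\vu}\in\cN_k$ (restricted to the index set $D_{k,\vu}$), not out of $\vu$ itself, and $|\cD_k|$ is then bounded by $|\cN_k|$ times the number of possible index sets $D_{k,\vu}$, which is controlled by $\prod_{k'\ge k}|\cN_{k'}|$. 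The $(1+\eps_c)^{-qk}$ decay comes from the shrinking net cardinalities $|\cN_k|$, and those entropy estimates in turn rest on a dual Sudakov / volumetric argument that genuinely uses the rank-$d_B$ linear structure of $\cB_q$; they are not recoverable from coordinatewise counting. To repair your proof you would have to either prove that entropy estimate or import it, at which point you are essentially reproducing the paper's argument.
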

Note that we can upper bound \(\frac{w_q[\mB](i)}{d_B} + \frac1{n_0} \leq \frac{w_q[\mB](i)}{d_B} \polylog(d)\), so we instead have \(\abs{\vd_k(i)} \leq \frac{1}{\eps} (\frac{w_q[\mB](i)}{d_B})^{1/q} (1+\eps)^{k+2} \polylog(d)\).
To do this, note that by \theoremref{sens:lewis:approx}, \(w_q[\mB](i) \geq \frac1{n_0} w'(s_i) \cdot \frac{1}{\polylog(d)}\), where \(w'\) is the clipped Chebyshev measure for degree \(rd\).
Then, \(w_q[\mB](i) \geq \frac{d_B}{n_0} \cdot \frac{1}{\polylog(d)}\), so that \(\frac1{n_0} \leq \frac{w_q[\mB](i)}{d_B} \polylog(d)\), and so \(\frac{w_q[\mB](i)}{d_b} + \frac1{n_0} \leq \frac{w_q[\mB](i)}{d_B} \polylog(d)\).

\begin{lemma}[Vandermonde Compact Rounding]
\label{lem:tensored-compact-rounding}
Let \(\mA\in\bbR^{n_0 \times d+1}\) and \(p>2\).
Let \(\cN_\eps\) be an \(\eps\)-Net over \(\normof{\mA\vx}_p \leq C_0 OPT\), so that any \(\vx\) with \(\normof{\mA\vx}_p \leq C_0 OPT\) has some \(\vu\in\cN_\eps\) such that \(\normof{\mA\vx-\vu}_p \leq \eps OPT\).
Then we can pick an odd integer \(r\) such that \(\frac12 p \leq r \leq \frac32 p\), and let \(q = \frac{p}{r} \in [\frac23,2]\).
Let \(\ell = \log_{1+\eps}((2d_B)^{1/q})\).
There exists sets of vectors \(\cF_0,\ldots,\cF_\ell \subseteq \bbR^{n_0}\), such that:
For any \(\vu\in\cN_\eps\) we let \(\vr \defeq \vu - \bar\vz\) and we can pick \(\vf_0\in\cF_0,\ldots,\vf_\ell\in\cF_\ell\) to create a ``compact rounding'' \(\vr' = \sum_{k=0}^\ell \vf_{k}\) where:
\begin{enumerate}
	\item \(\abs{\vr(i) - \vr'(i)} \leq \eps \max\{\abs{[\mA\vx](i)}, \abs{\bar\vz(i)}\}\) for all \(i\in[n_0]\)
	\item \(\abs{\vf_{k}(i)} \leq \frac{OPT}{\eps^2} (\frac{w_q[\mB](i)}{d_B} \polylog(d))^{1/p} (1+\eps^r)^{\frac{k+2}{r}}\) for all \(i\in[n_0], k\in\{0,\ldots,\ell\}\)
	\item \(\vf_0,\ldots,\vf_\ell\) all have disjoints supports
\end{enumerate}
Further, we have that the sets \(\cF_0,\ldots,\cF_\ell\) are not too large:
\[
	\log\abs{\cF_k} \leq C_q \frac{d_B \log(n_0)}{\eps^{r(1+q)}(1+\eps^r)^{qk}}
\]
\end{lemma}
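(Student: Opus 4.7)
The central idea is to exploit the choice of $r$ as an odd integer, combined with the Vandermonde reduction $(\mA\vx)^{[r]} = \mB\vy$ (entrywise $r$-th power), to transfer an $\ell_q$-compact rounding of $\mB\vy$ into an $\ell_p$-compact rounding of $\mA\vx - \bar\vz$. Since $r$ is odd, the map $a \mapsto a^r$ is a sign-preserving bijection on $\bbR$, so the relation is entrywise invertible: $\mA\vx = (\mB\vy)^{[1/r]}$. Moreover, for any $\vx$ in the reasonable range $\normof{\mA\vx}_p \leq C_0\,OPT$, the corresponding $\mB\vy$ lies in the $\ell_q$-ball of radius $\normof{\mB\vy}_q = \normof{\mA\vx}_p^r \leq (C_0\,OPT)^r$, since $\normof{\mB\vy}_q^q = \sum_i |\mA\vx(i)|^{qr} = \normof{\mA\vx}_p^p$.

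I would first apply \lemmaref{compact-rounding} to the matrix $\mB$ with accuracy parameter $\eps_c \defeq \eps^r$, working on the $\ell_q$-ball of radius $(C_0\,OPT)^r$. For every $\vu = \mA\vx \in \cN_\eps$, this yields vectors $\vd_k \in \cD_k$ with disjoint supports $S_k \subseteq [n_0]$ such that $\sum_k \vd_k$ approximates $\mB\vy$ entrywise within relative error $\eps^r$, together with the magnitude bound $|\vd_k(i)| \leq (C_0\,OPT)^r \eps^{-r} (\tfrac{w_q[\mB](i)}{d_B}\polylog(d))^{1/q} (1+\eps^r)^{k+2}$. Because the $\vd_k$'s have disjoint supports, entrywise $r$-th rooting distributes: $(\sum_k \vd_k)^{[1/r]} = \sum_k \vd_k^{[1/r]}$, and this sum approximates $\mA\vx$ entrywise with relative error $(1\pm\eps^r)^{1/r} - 1 = O(\eps)$.

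I would then construct $\cF_k$ by incorporating the subtraction of $\bar\vz$: for each $\vd_k \in \cD_k$, define the corresponding element $\vf_k$ to be supported on $S_k$ with entries $\vd_k(i)^{1/r} - \bar\vz(i)$. Any contribution from coordinates outside $\cup_k S_k$, where the compact rounding declares $\mA\vx(i)$ negligible, is absorbed either into an additional shift level or into the error bound via the thresholded definition of $\bar\vz$ in \settingref{large-p-eps}. Bullet 3 (disjoint supports) is inherited directly from the $\cD_k$'s. Bullet 4 follows because each $\vf_k$ is determined by the corresponding $\vd_k$ (since $\bar\vz$ is a fixed vector), so $|\cF_k| \leq |\cD_k|$, and substituting $\eps_c = \eps^r$ into the bound of \lemmaref{compact-rounding} yields the claimed cardinality. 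Bullet 2 follows by taking the $r$-th root of the bound on $|\vd_k(i)|$, using $1/(qr) = 1/p$ and $(C_0\,OPT)^{r/r} = O(OPT)$, with the extra factor of $1/\eps$ coming from the $\bar\vz$-subtraction together with the threshold $|\bar\vz(i)| \leq (OPT/\eps) w_q[\mB](i)^{1/p}$ inherited from \settingref{large-p-eps}.

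The main obstacle I expect is bullet 1 --- establishing $|\vr(i) - \vr'(i)| \leq \eps \max\{|[\mA\vx](i)|, |\bar\vz(i)|\}$ with the $\max$ on the right-hand side. This requires care because the $r$-th root is only H\"older continuous near zero, so the error after inverting must be controlled separately in the two regimes where $|\mA\vx(i)|$ dominates or $|\bar\vz(i)|$ dominates. In the first regime, I would bound the $r$-th root of the $\ell_q$-compact-rounding error via a mean-value-type estimate $|a^{1/r} - b^{1/r}| \lesssim r^{-1} \max(|a|,|b|)^{1/r-1}|a-b|$, giving relative error $O(\eps)$ against $|\mA\vx(i)|$. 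In the second regime, where $|\mA\vx(i)| \ll |\bar\vz(i)|$, the compact rounding may round $\mA\vx(i)$ toward zero and the resulting pointwise error is at most $|\mA\vx(i)|$, which is in turn at most $\eps|\bar\vz(i)|$ precisely because of the thresholded definition of $\bar\vz$ in \settingref{large-p-eps} --- this is the key place where the Vandermonde-specific structure combines with the truncation of $\bar\vz$ to yield the desired $\max$ bound.
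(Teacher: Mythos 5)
Your high-level strategy matches the paper's: choose $r$ odd so $a\mapsto a^r$ is a sign-preserving bijection, apply \lemmaref{compact-rounding} to $\mB$ with $\eps_c=\eps^r$ on the $\ell_q$-ball of radius $(C_0\,OPT)^r$ (the paper does this explicitly by forming $\cN_q = \cN_{q,0}\cup\cN_{induced}$ so the images $(\mA\vx)^{[r]}$ of net vectors are actually in the $\ell_q$-net), take entrywise $r$-th roots $\ve_k \defeq \vd_k^{[1/r]}$, and use $|x^{1/r}-y^{1/r}|\le|x-y|^{1/r}$ to get $|\mA\vx(i)-\sum_k\ve_k(i)|\le\eps|\mA\vx(i)|$.

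The genuine gap is in how you fold in $\bar\vz$. You propose $\vf_k(i)\defeq\vd_k(i)^{1/r}-\bar\vz(i)$ supported on the same index set $S_k$ as $\vd_k$, and claim $|\cF_k|\le|\cD_k|$. This would indeed give bullet~1 (and even the stronger bound $\eps|[\mA\vx](i)|$), and bullet~3 is immediate, but \emph{bullet~2 fails}. On $S_k$ you only control $|\ve_k(i)|\le\alpha_i(1+\eps^r)^{(k+2)/r}$; you have no bound on $|\bar\vz(i)|$ in terms of the level $k$. The truncation in \settingref{large-p-eps} only gives the global envelope $|\bar\vz(i)|\le\frac{OPT}{\eps}(w_q[\mB](i))^{1/p}\approx\alpha_i(1+\eps^r)^{(\ell+2)/r}$, so a coordinate $i\in S_k$ for small $k$ can carry $|\bar\vz(i)|$ that is larger than the level-$k$ envelope by a factor $(1+\eps^r)^{(\ell-k)/r}$, and then $|\vf_k(i)|$ violates the claimed bound. (These level-$k$ bounds are exactly what feeds the Bernstein variance in \lemmaref{preserve-compact-roundings}, so this is not cosmetic.) The paper's fix is the second layer of index sets $B_{k,\vu}$, $H_k$, $G_{k,\vu}$: a coordinate stays in $B_{k,\vu}$ only if $\eps|\bar\vz(i)|$ fits under the level-$k$ envelope; otherwise it migrates to $G_{k',\vu}$ at the level $k'$ where $\eps|\bar\vz(i)|$ does fit, and there $\vf_{k'}(i)=-\bar\vz(i)$ while the discarded $\mA\vx(i)$ contributes at most $\eps|\bar\vz(i)|$ to the pointwise error. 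This re-indexing is also why the paper's count is $\log|\cF_k|\le\log|B_k|+\log|G_k|\le 2\log|\cD_k|$ rather than your $|\cF_k|\le|\cD_k|$: $\vf_k$ is determined by a \emph{pair} of index sets, not a single $\vd_k$. Your remark that leftover coordinates ``outside $\cup_k S_k$'' are ``absorbed into an additional shift level'' gestures at this, but as written there is no construction of that shift level and no proof that it respects the level-$k$ magnitude bound, which is the heart of the lemma.

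Two smaller imprecisions: (i) your ``second regime'' argument for bullet~1 asserts $|\mA\vx(i)|\le\eps|\bar\vz(i)|$ ``because of the thresholded definition of $\bar\vz$,'' but the threshold on $\bar\vz$ bounds $\bar\vz$ from above, not $\mA\vx$; the correct justification (for $i\in G_{k,\vu}$) is that $i$ left all levels $k'\ge k$ of the $\ve$-rounding, so $|\mA\vx(i)|\lesssim\alpha_i(1+\eps^r)^{(k+1)/r}$, which $H_k$'s lower bound forces to be $\le\eps|\bar\vz(i)|$. (ii) You need to verify $\sign(\vd_k(i))=\sign(\vu(i))$ (the paper cites \lemmaref{ell:inf:error:bound} with $\vv=0$) before taking $r$-th roots, which your ``sign-preserving bijection'' remark presupposes but does not establish.
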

\begin{proof}
Depending on the value of \(p\), we will pick \(q\) differently.
If \(p\in(2,3)\), we let \(r=3\) and \(q \defeq \frac pr \in [\frac23,1]\).
If \(p \geq 3\), we let \(r\) be an odd integer such that \(r \leq p < 2r\), and let \(q \defeq \frac pr \in [1,2]\).
In both cases \(r\) is an odd integer, we have \(p=qr\), we know that the \(\ell_q\) Lewis weights of \mB are close to the Chebyshev measure, and \lemmaref{compact-rounding} accepts this value of \(q\).
The rest of this paper will not distinguish between the \(p\in(2,3)\) and the \(p\geq3\) cases.
Notably, the compact rounding requires being given \(\cN_q\), an \(\ell_q\) net over \(\{\mB\vy \,:\, \normof{\mB\vy}_q \leq 1\}\).
But we want to make sure that all \(\mA\vx\in\cN_\eps\) have \((\mA\vx)^r \in \cN_q\).
So, formally, let \(\cN_{q,0}\) be an arbitrary \(\ell_q\) net for \(\{\mB\vy \,:\, \normof{\mB\vy}_q \leq 1\}\), and let \(\cN_{induced} \defeq \{\mB\vy \,:\, ([\mA\vx](i))^r = [\mB\vy](i) \text{ for all } i\}\) be the mapping of \(\cN_\eps\) to the range of \(\mB\).
By Lemma 2.4 of \cite{bourgain1989approximation}, we have that both \(\cN_{q,0}\) and \(\cN_{\eps}\) have cardinality at most \((\frac3\eps)^{d}\).
We then apply \lemmaref{compact-rounding} on the net \(\cN_q \defeq \cN_{q,0} \bigcup \cN_{induced}\) and with \(\eps_c = \eps^r\).

Also, note that the vectors in \lemmaref{compact-rounding} formally require \(\normof{\mB\vy}_q \leq 1\), while we have \(\normof{\mB\vy}_q = \normof{\mA\vx}_p^r \leq (C_0 OPT)^r\).
So, we scale up the vectors \(\vd_k\) returned by \lemmaref{compact-rounding} by a factor of \((C_0 OPT)^r\), so that \(\abs{\vd_k(i)} \leq \frac{(C_0 OPT)^r}{\eps^r} (\frac{w_q[\mB](i)}{d_B} + \frac1{n_0})^{1/q} (1+\eps^r)^{k+2}\).

With this in place, now we fix any \(\vu\in\cN_{induced}\), and let \(\sum_{k=0}^\ell \vd_k\) be the compact rounding of \vu as defined by \lemmaref{compact-rounding}.
Using the fact that \(qr=p\), we let \( \alpha_i \defeq \tsfrac{OPT}\eps (\tsfrac{w_q[\mB](i)}{d_B} \polylog(d))^{1/p} \) so that \(\abs{\vd_k(i)} \leq \alpha_i^r (1+\eps^r)^{k+2}\).
We now intuitively round \(\mA\vx \approx \sum_{k=0}^\ell (\vd_k)^{1/r}\).
We define \(\ve_0,\ldots,\ve_\ell\) such that:
\[
	\ve_k(i) \defeq (\vd_k(i))^{1/r}
\]
so that \(\abs{\ve_k(i)} \leq \alpha_i (1+\eps^r)^{\frac{k+2}{r}}\).
Using the fact that \(r\) is an odd integer, we have \(\sign(\ve_k(i)) = \sign(\vd_k(i))\).
Further, looking at the proof of the compact rounding in \appendixref{compact-rounding} with \(\vv=0\), we see from \lemmaref{ell:inf:error:bound} that \(\sign(\vd_k(i)) = \sign(\vu(i))\).
So, we have that \(\sign(\ve_k(i)) = \sign(\mA\vx(i))\).
This definition of \(\ve_k\) means that \(\abs{\ve_k(i)} \leq \alpha_i (1+\eps^r)^{\frac{k+2}{r}}\), and further that
\begin{align*}
	\abs{\mA\vx(i) - \sum_{k=0}^\ell \ve_k(i)}
	&= \abs{(\vu(i))^{1/r} - (\vd_k(i))^{1/r}} \\
	&= \abs{\abs{\vu(i)}^{1/r} - \abs{\vd_k(i)}^{1/r}} \tag{The signs are equal} \\
	&\leq \abs{\abs{\vu(i)} - \abs{\vd_k(i)}}^{1/r} \tag{\(\abs{x^{1/r} - y^{1/r}} \leq \abs{x-y}^{1/r}\) for all \(x,y\geq0\)}\\
	&\leq (\eps^r \abs{\vu(i)})^{1/r} \\
	&= \eps \abs{\mA\vx(i)}
\end{align*}
Also note that \(\ve_k\) has the same support as \(\vd_k\), so that all the properties of \lemmaref{compact-rounding} are preserved, just in estimating a slightly different vector.
We next examine rounding \(\mA\vx-\bar\vz\) to a compact rounding.
Borrowing a proof strategy from \appendixref{compact-rounding},
\begin{align*}
	B_{k,\vu}
	&\defeq \left\{
		i \in [n_0]
		\,:\,
		\ve_k(i) \neq 0
		\, , \,
		\eps \abs{\bar\vz(i)} \leq 2\alpha_i(1+\eps^r)^{\frac{k+2}{r}}
	\right\} \tag{for \(k\in\{0,\ldots,\ell\}\)} \\
	H_k
	&\defeq \left\{
		i \in [n_0]
		\,:\,
		2\alpha_i(1+\eps^r)^{\frac{k+1}{r}} < \eps \abs{\bar\vz(i)} \leq 2\alpha_i(1+\eps^r)^{\frac{k+2}{r}}
	\right\} \tag{for \(k\in\{1,\ldots,\ell\}\)}\\
	G_{k,\vu} &\defeq H_k \setminus \bigcup_{k' \geq k} \{i\in[n_0] \,:\, \ve_{k'}(i) \neq 0\}
	\tag{for \(k\in\{1,\ldots,\ell\}\)}
\end{align*}
Note that \(\abs{\bar\vz(i)} \leq \frac{OPT}{\eps}(w_q[\mB](i))^{1/p} < 2\alpha_i(1+\eps^r)^{\frac{\ell+2}{r}}\), so all entries of \(\bar\vz\) are covered by our disjoint sets.
The sets \(B_{0,\vu},\ldots,B_{\ell,\vu},G_{1,\vu},\ldots,G_{\ell,\vu}\) will define the support of the final compact rounding vectors we will return, so we first show that these sets partition \([n_0]\):
In the following cases, consider any \(k, k'\):
\begin{itemize}
	\item
	\(B_{k,\vu} \bigcap B_{k',\vu} = \emptyset\) since \(i\in B_{k,\vu}\) implies \(\ve_k(i)\neq0\) implies \(\ve_{k'}(i)=0\) implies \(i\notin B_{k',\vu}\).
	\item
	\(G_{k,\vu} \bigcap G_{k',\vu} \subseteq H_{k} \bigcap H_{k'} = \emptyset\) since \(H_k\) and \(H_{k'}\) have no intersection by definition.
	\item
	For \(k \geq k'\), \(B_{k,\vu} \bigcap G_{k',\vu} = \emptyset\) since \(i\in B_{k,\vu}\) means \(\ve_k(i) \neq 0\) so \(i \in \bigcup_{k' \geq k} \{i\in[n_0] \,:\, \ve_{k'}(i) \neq 0\}\) so \(i \notin G_{k',\vu}\).
	\item
	For \(k < k'\), \(B_{k,\vu} \bigcap G_{k',\vu} = \emptyset\) since \(k'\geq k+1\) and \(i\in H_{k'}\) means \(\eps\abs{\bar\vz(i)} > 2\alpha(1+\eps^r)^{\frac{k'+1}{r}} \geq 2\alpha(1+\eps^r)^{\frac{k+2}{r}}\), which contradicts \(i\in B_{k,\vu}\).
\end{itemize}
So, we can now define the vectors \(\vf_0,\ldots,\vf_\ell\) as
\[
	\vf_k(i) \defeq \begin{cases}
		\ve_k(i) - \bar\vz(i) \qquad & i \in B_{k,\vu} \\
		- \bar\vz(i) \qquad & i \in G_{k,\vu} \\
		0 \qquad & \textit{otherwise}
	\end{cases}
\]
Now, we show that \(\vr' \defeq \sum_{k=0}^\ell \vf_k\) satisfies the guarantees of \lemmaref{tensored-compact-rounding}.
Fix any \(i\in[n_0]\) and let \(k\in\{0,\ldots,\ell\}\) be the index\footnote{Technically, we don't guarantee that all \(i\in[n_0]\) are associated with some \(k\in\{0,\ldots,\ell\}\). But the relative error guarantee from \lemmaref{compact-rounding} and definitions of \(B_{k,\vu}\) and \(G_{k,\vu}\) imply that if \(\vu(i)\neq0\) or \(\bar\vz(i)\neq0\) then such a \(k\) exists, which suffices to prove our error guarantee.} where \(\vf_k(i) \neq 0\).
Then, recalling that \(\vr = \mA\vx-\bar\vz\),
\begin{align*}
	\abs{\vf_k(i) - \vr(i)}
	&= \abs{\ve_k(i) - [\mA\vx](i)} \tag{when \(i\in B_{k,\vu}\)} \leq \eps \abs{[\mA\vx](i)} \\[0.75em]
	\abs{\vf_k(i) - \vr(i)}
	&= \abs{[\mA\vx](i)} \tag{when \(i\in G_{k,\vu}\)} \\
	&\leq (1+\eps)\abs{\ve_{k'}(i)} \tag{for some \(k' < k\), by def of \(G_{k,\vu}\)} \\
	&\leq 2 \alpha_i (1+\eps^r)^{\frac{k'+2}{r}} \tag{\(\abs{\ve_{k'}(i)} \leq \alpha_i (1+\eps)^\frac{k'+2}{r}\)} \\
	&\leq \eps \abs{\bar\vz(i)} \tag{def of \(H_{k}\)}
\end{align*}
And so we find \(\abs{\vr'(i) - \vr(i)} \leq \eps \max\{\abs{[\mA\vx](i)}, \abs{\bar\vz(i)}\}\).
We also have that \(\vf_0,\ldots,\vf_\ell\) have disjoint supports because \(B_{0,\vu},\ldots,B_{\ell,\vu},G_{1,\vu},\ldots,G_{\ell,\vu}\) have disjoint supports.

Next, we bound the size of entries of \(\vf_k\).
We have \(\abs{\vf_k(i)} \leq \abs{\ve(i)} + \abs{\bar\vz(i)} \leq (1+\frac{2}{\eps})\alpha_i (1+\eps^r)^{\frac{k+2}{r}} \leq \frac{OPT}{\eps^2} (\frac{w_q[\mB](i)}{d_B} \polylog(d))^{1/p} (1+\eps^r)^{\frac{k+2}{r}}\).

To bound the number of possible \(\vf_k\) vectors, note that \(\vf_k\) is a deterministic function in \(B_{k,\vu}\) and \(G_{k,\vu}\).
So, let \(B_{k} \defeq \{B_{k,\vu} : \vu\in\cN_\eps\}\) be the set of all possible ``B'' index sets generated at layer \(k\), and similarly let \(G_k \defeq \{B_{k,\vu} : \vu\in\cN_\eps\}\).
Then, looking across all possible fixings of \(\vu\in\cN_\eps\), each \(\vf_k\) is deterministic in some \(\cS_1 \in B_k\) and some \(\cS_2 \in G_k\).
So, the number of possible \(\vf_k\) is at most
\[
	\abs{\cF_k} 
	= \abs{\{\vf_k \,:\, \vu\in\cN_\eps\}}
	\leq \abs{\{(\cS_1,\cS_2) \,:\, \cS_1\in B_k,~ \cS_2\in G_k\}}
	= \abs{B_k} \cdot \abs{G_k}
\]
Next, since \(B_{k,\vu} \subseteq \{i\in [n_0] \,:\, \ve_k(i) \neq 0\}\), and since \(\ve_k\) are a simple bijection with \(\vd_k\in\cD_k\), we have \(\abs{B_k} \leq \abs{\cD_k}\).
The same holds for \(G_k\), so \(\abs{G_k} \leq \abs{\cD_k}\).
We conclude that
\[
	\log\abs{\cF_k} \leq \log\abs{B_k} + \log\abs{G_k} \leq 2\log\abs{\cD_k} = 2C_r \frac{d_B \log(n_0)}{\eps^{r(1+q)}(1+\eps^r)^{qk}}
\]
\end{proof}

\begin{lemma}
\label{lem:preserve-compact-roundings}
Let \(p_i \defeq \min\{1, \frac{m}{n_0} \frac{1}{\sqrt{1-s_i^2}}\}\), where \(s_1,\ldots,s_{n_0}\) are times samples uniformly at random from \([-1,1]\), and where \(m = O(\frac{d}{\eps^{6.5p+2}}\log(d))\).
Consider the diagonal sampling matrix \(\mS\in\bbR^{n_0 \times n_0}\) which takes \(\mS_{ii}^p = \frac{1}{p_i}\) with probability \(p_i\) and \(\mS_{ii} = 0\) otherwise.
Then consider the set of all possible rounding vectors \(\vr'\) created by \lemmaref{tensored-compact-rounding}.
With probability \(\frac{98}{100}\), all such \(\vr'\) have \(\normof{\mS\vr'}_p^p \in \normof{\vr'} \pm \eps^p OPT^p\).
\end{lemma}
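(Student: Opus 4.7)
The plan is a standard Bernstein--plus--union-bound argument that exploits the level-set structure of the compact rounding. Since \lemmaref{tensored-compact-rounding} decomposes each rounding vector as $\vr' = \sum_{k=0}^{\ell} \vf_k$ with the $\vf_0,\ldots,\vf_\ell$ having \emph{pairwise disjoint supports}, we also get $\normof{\vr'}_p^p = \sum_{k=0}^\ell \normof{\vf_k}_p^p$ and $\normof{\mS\vr'}_p^p = \sum_{k=0}^\ell \normof{\mS\vf_k}_p^p$. So it suffices to show that with high probability, for \emph{every} level $k$ and every $\vf_k \in \cF_k$, the deviation $|\normof{\mS\vf_k}_p^p - \normof{\vf_k}_p^p|$ is bounded by some $\delta_k$ with $\sum_k \delta_k \le \eps^p \, OPT^p$.

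For a fixed $\vf_k\in\cF_k$, write $\normof{\mS\vf_k}_p^p = \sum_{i=1}^{n_0} X_i$, where $X_i = |\vf_k(i)|^p/p_i$ with probability $p_i$ and $0$ otherwise. Then $\E[X_i] = |\vf_k(i)|^p$ and $\sum_i \E[X_i] = \normof{\vf_k}_p^p$. The key deterministic ingredient is the bound on $|\vf_k(i)|$ from \lemmaref{tensored-compact-rounding}, together with the fact that $w_q[\mB](i)/p_i \le \polylog(d)\cdot d_B/m$; the latter follows because by \theoremref{sens:lewis:approx}, $w_q[\mB](i) \le \frac{1}{n_0}\cdot \frac{d_B}{\sqrt{1-s_i^2}}\polylog(d)$, while $p_i \ge \frac{m}{n_0\sqrt{1-s_i^2}}$ (or $p_i=1$, in which case $X_i$ is deterministic and contributes no deviation). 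Combining these yields a uniform bound of the form
\[
	X_i \;\le\; M_k \;:=\; \frac{OPT^p\polylog(d)}{\eps^{2p}\,m}\,(1+\eps^r)^{(k+2)p/r},
\]
and $\Var(X_i) \le \E[X_i^2] \le M_k \cdot |\vf_k(i)|^p$, so $\sum_i \Var(X_i) \le M_k\, \normof{\vf_k}_p^p$. Bernstein's inequality then gives, for each target deviation $\delta_k$,
\[
	\Pr\big[\,|\normof{\mS\vf_k}_p^p - \normof{\vf_k}_p^p| > \delta_k\,\big] \le 2\exp\!\left(-\Omega\!\left(\min\!\left\{\tfrac{\delta_k^2}{M_k \normof{\vf_k}_p^p},\,\tfrac{\delta_k}{M_k}\right\}\right)\right).
\]

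To union-bound over $\vf_k \in \cF_k$, recall from \lemmaref{tensored-compact-rounding} that $\log|\cF_k| \le C_q \tfrac{d_B \log n_0}{\eps^{r(1+q)}(1+\eps^r)^{qk}}$. Using $qr=p$, the product $M_k \cdot \log|\cF_k|$ telescopes nicely: the geometric factor $(1+\eps^r)^{(k+2)p/r}$ in $M_k$ and the factor $(1+\eps^r)^{-qk}$ in $\log|\cF_k|$ cancel up to a constant. So the Bernstein tail applied with $\delta_k = \eps^p OPT^p / (\ell+1)$ becomes, after a short calculation, exponentially small in $d_B\log(n_0)/\eps^{O(p)}$, provided $m \gtrsim d_B\polylog(d)/\eps^{O(p)}$. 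Summing the $(\ell+1)\cdot|\cF_k|$ failure probabilities across all levels and all choices of $\vf_k$ (which is at most $\log(n_0)\cdot\max_k|\cF_k|$ terms) keeps the total failure probability below $1/100$.

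The main obstacle I anticipate is careful accounting to show the sample complexity $m = O(d\,\polylog(d)/\eps^{6.5p+2})$ is indeed sufficient. One must track two competing scalings: the max bound $M_k$ grows like $(1+\eps^r)^{(k+2)q}$ while the log-cardinality $\log|\cF_k|$ shrinks like $(1+\eps^r)^{-qk}$, so the product is controlled at every level; but the variance term also must be controlled, which requires the slightly subtler observation that $\normof{\vf_k}_p^p$ is itself bounded by a geometric series in $(1+\eps^r)^k$ (since each $|\vf_k(i)|$ is large only when $\bar\vz(i)$ lives in the $k$-th level band). Once the per-level bound is established, the disjoint-support additive decomposition lets us sum the errors as $\sum_k \delta_k \le \eps^p OPT^p$, completing the proof.
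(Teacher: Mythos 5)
Your proposal follows essentially the same route as the paper: decompose $\vr'$ by the disjoint supports of the $\vf_k$, apply Bernstein at each level $k$ using the per-entry bound from \lemmaref{tensored-compact-rounding} together with the Lewis-weight--to--Chebyshev comparison from \theoremref{sens:lewis:approx}, then union bound over $\cF_k$ and over $k$ via the cancellation between $M_k \propto (1+\eps^r)^{q(k+2)}$ and $\log\abs{\cF_k} \propto (1+\eps^r)^{-qk}$; your alternate treatment of the $p_i=1$ case (those $X_i$ are deterministic, hence contribute zero variance) is also valid and neatly replaces the paper's preliminary reduction via \lemmaref{max-unif-cheby}. One small correction to your anticipated ``obstacle'': the paper does \emph{not} need $\normof{\vf_k}_p^p$ to decay geometrically in $k$ --- the coarse bound $\normof{\vf_k}_p^p \leq \normof{\mA\vx-\bar\vz}_p^p \leq (C_0+C_z)^p\,OPT^p$ suffices, because the geometric decay of $\log\abs{\cF_k}$ against the geometric growth of $M_k$ already cancels at every level, and that claimed decay of $\normof{\vf_k}_p^p$ need not even hold in general.
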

\begin{proof}
First, we simplify the probabilities \(p_i\).
We know by \lemmaref{max-unif-cheby} that \(\max_i \frac{1}{\sqrt{1-s_i^2}} \leq C_c \sqrt{n_0}\) with probability \(\frac{99}{100}\).
So,
\[
	\frac{m}{n_0} \frac{1}{\sqrt{1-s_i^2}}
	\leq \frac{m C_c}{\sqrt{n_0}}
	\leq 1
\]
Where the last inequality holds so long as \(m \leq O(\sqrt{n_0}) = \tilde O(d^{2.5} p^{O(p)} \frac{1}{\eps^{O(p^2)}})\), which is satisfied by our choice of \(m\).
This means that \(p_i = \min\{1, \frac{m}{n_0} \frac{1}{\sqrt{1-s_i^2}}\}\) can be simplified to just \(p_i = \frac{m}{n_0} \frac{1}{\sqrt{1-s_i^2}}\).

Now, we move onto proving the correctness of \(\normof{\mS\vr'}_p^p\).
Fix any compact rounding \(\vr' = \sum_{k=0}^\ell \vf_k\) created by \lemmaref{tensored-compact-rounding}.
Then, since \(\vf_0,\ldots,\vf_\ell\) have disjoint support,
\[
	\normof{\mS\vr}_p^p = \sum_{k=0}^\ell \normof{\mS\vf_k}_p^p
\]
So it suffices to just prove that \(\normof{\mS\vf_k}_p^p \in \normof{\vf_k}_p^p \pm \frac{\eps^p}{\ell+1} OPT^p\) for all \(\vf_k\in\cF_k\) for all \(k\in\{0,\ldots,\ell\}\).
The rest of this proof shows this concentration across all \(\vf_k\) vectors.

Fix any \(\vf_k\in\cF_k\) for any \(k\in\{0,\ldots,\ell\}\).
Then, we have:
\begin{align*}
	\abs{\vf_k(i)}^p
	&\leq \frac{w_q[\mB](i)}{d_B} \cdot \frac{(1+\eps^r)^{q(k+2)}}{\eps^{2p}} \,OPT^p \polylog(d)  \tag{\lemmaref{tensored-compact-rounding}}\\
	\frac{1}{p_i} \abs{\vf_k(i)}^p
	&\leq \frac{1}{p_i} \frac{w_q[\mB](i)}{d_B} \cdot \frac{(1+\eps^r)^{q(k+2)}}{\eps^{2p}} \,OPT^p \polylog(d) \\
	&= \frac{n_0 \sqrt{1-s_i^2}}{m d_B} \cdot w_q[\mB](i) \cdot \frac{(1+\eps^r)^{q(k+2)}}{\eps^{2p}} \,OPT^p \polylog(d) \tag{\(p_i = \frac{m}{n_0 \sqrt{1-s_i^2}}\)} \\
	&= \frac{n_0 \sqrt{1-s_i^2}}{m d_B} \cdot \left(\frac{d_B}{n_0} \frac{1}{\sqrt{1-s_i^2}} \polylog(d)\right) \cdot \frac{(1+\eps^r)^{q(k+2)}}{\eps^{2p}} \,OPT^p \polylog(d) \tag{\theoremref{sens:lewis:approx}} \\
	&= \frac{(1+\eps^r)^{q(k+2)}}{m \eps^{2p}} \,OPT^p \polylog(d)
\end{align*}
Next, we will let \(X_i \defeq \mS^p_{ii} \abs{\vf_k(i)}^p - \abs{\vf_k(i)}^p\), which are mean-zero random variables such that \(\sum_{i=1}^{n_0} X_i = \normof{\mS\vf_k}_p^p - \normof{\vf_k}_p^p\).
Letting \(B(n,p)\) be the binomial distribution, we then bound
\begin{align*}
	\E[X_i^2] &= \Var[\mS_{ii}^p \abs{\vf_k(i)}^p] = \tsfrac{1}{p_i^2}\abs{\vf_k(i)}^{2p} \Var[B(1,p_i)] \leq \tsfrac{1}{p_i} \abs{\vf_k(i)}^{2p} \\
	\sum_{i=1}^{n_0} \E[X_i^2] &\leq \sum_{i=1}^{n_0} \frac{1}{p_i} \abs{\vf_k(i)}^2 \leq \normof{\vf_k}_p^p \cdot \max_{i\in[n_0]} \tsfrac{1}{p_i}\abs{\vf_k(i)}^p
\end{align*}
And so, by Bernstein's Inequality (\importedtheoremref{bernstein-ineq}) and since \(\abs{X_i} \leq \max_{i} \tsfrac{1}{p_i}\abs{\vf_k(i)}^p\), we get the concentration
\begin{align*}
	\Pr[\abs{\normof{\mS\vf_k}_p^p - \normof{\vf_k}_p^p} \leq \gamma \,OPT^p]
	&= \Pr[\abs{\textstyle{\sum_{i=1}^{n_0} X_i}} \leq \gamma \,OPT^p] \\
	&\leq 2\exp\left(-\frac{\frac12\gamma^2 \,OPT^{2p}}{(\normof{\vf_k}_p^p + \frac{\gamma}{3} \,OPT^{p}) \cdot \max_{i} \tsfrac{1}{p_i}\abs{\vf_k(i)}^p}\right) \\
\intertext{
	Since \(\gamma = \frac{\eps^p}{\ell+1} \leq 1\) and \(\normof{\vf_k}_p^p \leq \normof{\mA\vx-\bar\vz}_p^p \leq (C_0 + C_z)^p \,OPT^p\), and letting \(C_{B} = 2((C_0+C_z)^p+1)\):
}
	&\leq 2\exp\left(-\frac{\gamma^2 \,OPT^{2p}}{C_B \,OPT^p \cdot \max_{i} \tsfrac{1}{p_i}\abs{\vf_k(i)}^p}\right) \\
	&\leq 2\exp\left(-\frac{\gamma^2 \,OPT^{p}}{C_b \cdot \frac{(1+\eps^r)^{q(k+2)}}{m \eps^{2p}} \,OPT^{p} \polylog(d) }\right) \\
	&= 2\exp\left(-m \frac{\gamma^2 \eps^{2p}}{(1+\eps^r)^{q(k+2)} \polylog(d) }\right) \\
	&\leq \delta
\end{align*}
This is less than \(\delta\) for \(m = \frac{(1+\eps^r)^{q(k+2)}}{\eps^{2p} \gamma^2} \polylog(d) \log(\frac{2}{\delta})\).
Union bounding over all \(\vf_k\in\cF_k\), we get
\begin{align*}
	m
	&= \frac{(1+\eps^r)^{q(k+2)}}{\eps^{2p} \gamma^2} \cdot C_r \frac{d_B \log(n_0)}{\eps^{r(1+q)}(1+\eps^r)^{qk}} \cdot \polylog(d) \log(\tsfrac{2}{\delta}) \\
	&= d_B \frac{(1+\eps^r)^{2q}}{\eps^{3p+r} \gamma^2} \cdot \log(n_0) \polylog(d) \log(\tsfrac{2}{\delta}) \\
\end{align*}
Note \((1+\eps^r)^{2q} \leq 2^{2q} \leq 2^4\).
Lastly, we union bound over all \(k\in[\ell]\), where \(\ell = O(\frac{\log(d)}{\eps})\), so that \(\gamma = \frac{\eps^p}{\ell+1} = O(\frac{\eps^{p+1}}{\log(d)})\), and also recall that \(n_0 = O(\frac{1}{\eps^{O(p^2)}} d^5 p^{O(p^2)} \log(\frac{d}{\eps}))\) so that \(\log(n_0) = O(p^2 \log(\frac{pd}{\eps}))\), and that \(r \leq \frac32 p\), so we conclude that
\[
	m = \frac{d_B}{\eps^{6.5p+2}} \cdot \polylog(\tsfrac{pd}{\eps\delta})
\]
samples suffice to achieve the embeddings for all \(\vf_k\) and therefore for all \(\vr'\).
\end{proof}

\begin{lemma}
\label{lem:compact-rounding-suffices}
Let \(\cN_\eps\) be an \(\eps\)-Net on \(\{\mA\vx \,:\, \normof{\mA\vx}_p \leq C_0 OPT\}\), so that for any \(\mA\vx\) in this set there exists some \(\vu\in\cN_\eps\) such that \(\normof{\mA\vx-\vu}_p \leq \eps OPT\).
Consider the set of possible residual vectors \(\vr = \vu - \bar\vz\) for all \(\vu\in\cN_\eps\), and the corresponding roundings \(\vr'\) created by \lemmaref{tensored-compact-rounding}.
Suppose the sampling matrix \mS ensures that \(\normof{\mS\vr'}_p^p \in \normof{\vr'}_p^p \pm \eps OPT\).
Then, \(\normof{\mS\vr}_p^p \in \normof{\vr}_p^p \pm C_{\cN} \eps^p \cdot OPT^p\), where \(C_{\cN}\) is a constant that depends only on \(C_0, C_z\), and \(p\).
\end{lemma}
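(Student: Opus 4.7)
The plan is to control $\left|\normof{\mS\vr}_p^p - \normof{\vr}_p^p\right|$ by inserting $\vr'$ and $\mS\vr'$ as intermediaries and applying the triangle inequality in three pieces:
\[
\left|\normof{\mS\vr}_p^p - \normof{\vr}_p^p\right| \leq \left|\normof{\mS\vr}_p^p - \normof{\mS\vr'}_p^p\right| + \left|\normof{\mS\vr'}_p^p - \normof{\vr'}_p^p\right| + \left|\normof{\vr'}_p^p - \normof{\vr}_p^p\right|.
\]
The middle term is directly controlled by the hypothesis of the lemma, so the work is to bound the outer two terms uniformly over all $\vu\in\cN_\eps$.

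First I would bound the unsketched and sketched $L_p$-norms of the approximation error $\vr-\vr'$. By property (1) of \lemmaref{tensored-compact-rounding}, $\abs{\vr(i)-\vr'(i)} \leq \eps\max\{\abs{[\mA\vx](i)},\abs{\bar\vz(i)}\}$ pointwise, hence
\[
\normof{\vr-\vr'}_p^p \leq \eps^p\left(\normof{\mA\vx}_p^p + \normof{\bar\vz}_p^p\right) = O(\eps^p \cdot OPT^p),
\]
using $\normof{\mA\vx}_p \leq C_0 \cdot OPT$ by hypothesis and $\normof{\bar\vz}_p \leq \normof{\vz}_p \leq C_z \cdot OPT$ from \settingref{large-p-eps}. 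For the sketched version, the same pointwise inequality gives $\normof{\mS(\vr-\vr')}_p^p \leq \eps^p(\normof{\mS\mA\vx}_p^p + \normof{\mS\bar\vz}_p^p)$; here I would invoke the $L_p$ subspace embedding of \lemmaref{lp-subspace-embed} to bound $\normof{\mS\mA\vx}_p = O(OPT)$, and \lemmaref{sens:sketch} (via Markov on $\normof{\mS\vz}_p^p$, whose expectation equals $\normof{\vz}_p^p$) to obtain $\normof{\mS\bar\vz}_p \leq \normof{\mS\vz}_p \leq C_s \cdot OPT$.

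Second, I would convert these $L_p$-norm error bounds into bounds on the $p$-th powers via the elementary inequality $\abs{a^p-b^p} \leq p \cdot \max(a,b)^{p-1} \cdot \abs{a-b}$, which follows from the mean value theorem. Applied with $a=\normof{\vr}_p$ and $b=\normof{\vr'}_p$, the triangle inequality gives $\abs{a-b} \leq \normof{\vr-\vr'}_p = O(\eps \cdot OPT)$, while both $a,b = O(OPT)$, yielding $\abs{\normof{\vr}_p^p - \normof{\vr'}_p^p} = O(\eps \cdot OPT^p)$. An identical argument using the sketched norm bounds established above controls $\abs{\normof{\mS\vr}_p^p - \normof{\mS\vr'}_p^p}$. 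Summing the three pieces and absorbing all $p$-dependent factors (and the reparameterization relating the hypothesis's $\eps$ scaling to the conclusion's) into $C_\cN$ yields the claimed inequality.

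The main delicacy is not any single calculation but the bookkeeping: the conclusion must hold simultaneously across all $\vu\in\cN_\eps$, so I would first condition on the conjunction of the subspace embedding event from \lemmaref{lp-subspace-embed}, the Markov bound on $\normof{\mS\vz}_p$ from \lemmaref{sens:sketch}, and the hypothesis on $\vr'$, each of which holds with constant probability and can be unioned to hold simultaneously with probability $\geq \tfrac{98}{100}$. Once these are fixed, the three triangle-inequality pieces apply deterministically and uniformly in $\vu$, since the bounds on $\normof{\vr-\vr'}_p$ and $\normof{\mS(\vr-\vr')}_p$ depend on $\vu$ only through $\normof{\mA\vx}_p$ and $\normof{\mS\mA\vx}_p$, which are both bounded uniformly on the set of interest.
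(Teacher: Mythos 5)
Your proposal is essentially the paper's own proof: the same decomposition through the compact rounding $\vr'$, the same use of property (1) of \lemmaref{tensored-compact-rounding} to bound $\normof{\vr-\vr'}_p$ and $\normof{\mS(\vr-\vr')}_p$, and the same reliance on the subspace embedding for $\mS\mA\vx$ and a Markov bound for $\mS\bar\vz$. The only cosmetic difference is organizational: the paper applies the triangle inequality at the level of $L_p$ norms (bounding $\abs{\normof{\mS\vr}_p - \normof{\vr}_p}$ and converting the hypothesis via $(b-a)^p \leq b^p - a^p$ for $b\geq a\geq 0$), whereas you split the $p$-th power difference directly into three scalar pieces and convert the outer two via the mean-value bound $\abs{a^p-b^p}\leq p\max(a,b)^{p-1}\abs{a-b}$ — an equivalent calculation.
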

\begin{proof}
We start with a triangle inequality to show three individual terms we need to bound:
\[
	\abs{\normof{\mS\vr}_p - \normof{\vr}_p}
	\leq \abs{\normof{\mS\vr'}_p - \normof{\vr'}_p} + \normof{\vr-\vr'}_p + \normof{\mS\vr-\mS\vr'}_p
\]
For two numbers \(b \geq a \geq 0\), we have \((b-a)^p \leq (b-a) b^{p-1} \leq b^{p} - ab^{p-1} \leq b^p - a^p\).
So, our given assumption on \(\normof{\mS\vr'}_p^p\) implies that \(\abs{\normof{\mS\vr'}_p - \normof{\vr'}_p}^p \leq \abs{\normof{\mS\vr'}_p^p - \normof{\vr'}_p^p} \leq \eps^p OPT^p\).
That is, the first term above is bounded by \(\eps OPT\).
The second term relies on the first property of \lemmaref{compact-rounding}, which bounds \(\abs{\vr(i) - \vr'(i)} \leq \eps \max\{\abs{\vu(i)}, \abs{\bar\vz(i)}\}\).
From there, we get
\begin{align*}
	\abs{\vr(i) - \vr'(i)}
	&\leq \eps \max\{\abs{\vu(i)}, \abs{\bar\vz(i)}\} \\
	\abs{\vr(i) - \vr'(i)}^p
	&\leq \eps^p \max\{\abs{\vu(i)}^p, \abs{\bar\vz(i)}^p\} \\
	&\leq \eps^p (\abs{\vu(i)}^p + \abs{\bar\vz(i)}^p) \numberthis\label{eq:compact-loss-powered} \\
	\normof{\vr - \vr'}_p^p
	&\leq \eps^p (\normof{\vu}_p^p + \normof{\bar\vz}_p^p) \\
	&\leq \eps^p (C_0^p OPT^p + C_z^p OPT^p) \\
	\normof{\vr - \vr'}_p
	&\leq (C_0^p + C_z^p)^{1/p} \eps OPT
\end{align*}
We lastly have to bound \(\normof{\mS\vr-\mS\vr'}_p\).
Recall that \mS is a diagonal matrix.
This lets us expand
\begin{align*}
	\normof{\mS(\vr-\vr')}_p^p
	&= \sum_{i=1}^n \mS_{ii} \abs{\vr(i)-\vr'(i)}^p \\
	&\leq \eps^p \sum_{i=1}^n \mS_{ii} (\abs{\vu(i)}^p + \abs{\bar\vz(i)}^p) \tag{By \equationref{compact-loss-powered}} \\
	&= \eps^p (\normof{\mS\vu}_p^p + \normof{\mS\bar\vz}_p^p) \\
	&\leq \eps^p (2^p\normof{\vu}_p^p + 100\normof{\bar\vz}_p^p) \tag{Subspace Embedding on \vu and Markov's Inequality on \(\bar\vz\)}\\
	&\leq \eps^p (2^pC_0^pOPT^p + 100 C_z^p OPT^p) \\
	\normof{\mS(\vr-\vr')}_p
	&\leq (2^pC_0^p + 100C_z^p)^{1/p} \eps OPT
\end{align*}
Which means we can overall bound
\begin{align*}
	\abs{\normof{\mS\vr}_p - \normof{\vr}_p}
	&\leq \abs{\normof{\mS\vr'}_p - \normof{\vr'}_p} + \normof{\vr-\vr'}_p + \normof{\mS\vr-\mS\vr'}_p \\
	&\leq (1 + 2(C_0^p + C_z^p)^{1/p} + 2(2^pC_0^p + 100C_z^p)^{1/p})\eps \cdot OPT
\end{align*}
\end{proof}

\begin{lemma}
\label{lem:net-suffices}
Let \(\cN_\eps\) be an \(\eps\)-Net on \(\{\mA\vx \,:\, \normof{\mA\vx}_p \leq C_0 OPT\}\), so that for any \(\mA\vx\) in this set there exists some \(\vu\in\cN_\eps\) such that \(\normof{\mA\vx-\vu}_p \leq \eps OPT\).
Consider the set of possible residual vectors \(\vr = \vu - \bar\vz\) for all \(\vu\in\cN_\eps\).
Suppose the sampling matrix \mS ensures that \(\normof{\mS\vr}_p^p \in \normof{\vr}_p^p \pm C_{\cN} \eps \cdot OPT\).
Then, for  all \vx with \(\normof{\mA\vx}_p \leq C_0 OPT\), \(\normof{\mS(\mA\vx-\bar\vz)}_p^p \in \normof{\mA\vx-\bar\vz}_p^p \pm C_3 \eps \cdot OPT^p\), where \(C_3\) is a constant that depends only on \(C_0, C_z, C_{\cN},\) and \(p\).
\end{lemma}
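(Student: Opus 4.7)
The plan is a standard net-transfer argument: we show that any target residual $\tilde\vr \defeq \mA\vx - \bar\vz$ can be $\ell_p$-approximated by a residual $\vr \defeq \vu - \bar\vz$ where $\vu \in \cN_\eps$ is the net point nearest to $\mA\vx$, and then propagate the hypothesis through two triangle inequalities (one on the sketched norm, one on the unsketched norm). Fix $\vx$ with $\normof{\mA\vx}_p \leq C_0\, OPT$, and choose $\vu \in \cN_\eps$ with $\normof{\mA\vx - \vu}_p \leq \eps\, OPT$. Crucially, the difference $\tilde\vr - \vr = \mA\vx - \vu$ lies in the range of $\mA$, which lets us invoke an $\ell_p$ subspace embedding.

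By the reverse triangle inequality on the $L_p$ norm, $|\normof{\tilde\vr}_p - \normof{\vr}_p| \leq \normof{\mA\vx - \vu}_p \leq \eps\, OPT$ and $|\normof{\mS\tilde\vr}_p - \normof{\mS\vr}_p| \leq \normof{\mS(\mA\vx - \vu)}_p$. By \lemmaref{lp-subspace-embed}, $\mS$ is a constant-factor $\ell_p$ subspace embedding for $\mA$, so $\normof{\mS(\mA\vx-\vu)}_p \leq 2\normof{\mA\vx-\vu}_p \leq 2\eps\, OPT$. This gives the two first-power approximations
\[
    \bigl|\normof{\tilde\vr}_p - \normof{\vr}_p\bigr| \leq \eps\, OPT,
    \qquad
    \bigl|\normof{\mS\tilde\vr}_p - \normof{\mS\vr}_p\bigr| \leq 2\eps\, OPT.
\]

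To convert to $p$-th powers, I use the elementary inequality $|a^p - b^p| \leq p\max(a,b)^{p-1}\,|a-b|$ for $a,b\geq 0$. This requires uniform $O(OPT)$ bounds on each of $\normof{\tilde\vr}_p, \normof{\vr}_p, \normof{\mS\tilde\vr}_p, \normof{\mS\vr}_p$. For the unsketched norms we have $\normof{\tilde\vr}_p \leq \normof{\mA\vx}_p + \normof{\bar\vz}_p \leq (C_0 + C_z)\,OPT$ and similarly for $\vr$. For the sketched norms, $\normof{\mS\mA\vx}_p \leq 2\normof{\mA\vx}_p \leq 2C_0\,OPT$ by the subspace embedding, and $\normof{\mS\bar\vz}_p \leq \normof{\mS\vz}_p \leq C_s\,OPT$ by \lemmaref{sens:sketch}, so $\normof{\mS\tilde\vr}_p \leq (2C_0 + C_s)\,OPT$ and likewise for $\mS\vr$. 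Applying the conversion inequality yields
\[
    \bigl|\normof{\tilde\vr}_p^p - \normof{\vr}_p^p\bigr| \leq C_4\,\eps\, OPT^p,
    \qquad
    \bigl|\normof{\mS\tilde\vr}_p^p - \normof{\mS\vr}_p^p\bigr| \leq C_5\,\eps\, OPT^p,
\]
for constants $C_4, C_5$ depending only on $p, C_0, C_z$. Chaining these with the hypothesis $\normof{\mS\vr}_p^p \in \normof{\vr}_p^p \pm C_\cN\,\eps\, OPT^p$ gives $\normof{\mS\tilde\vr}_p^p \in \normof{\tilde\vr}_p^p \pm C_3\,\eps\, OPT^p$ with $C_3 = C_\cN + C_4 + C_5$.

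The argument is essentially bookkeeping; the only mild subtlety is the conversion from first-power $L_p$ deviations to $p$-th power deviations, which is handled by the mean-value-type inequality above once the ambient norms are shown to be $O(OPT)$. The subspace embedding hypothesis from \lemmaref{lp-subspace-embed} is essential: without it, $\normof{\mS(\mA\vx-\vu)}_p$ could blow up even though $\normof{\mA\vx-\vu}_p$ is small, and the $\eps$-net covering would be useless.
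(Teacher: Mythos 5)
Your proposal is correct and follows essentially the same route as the paper's proof: choose a nearby net point $\vu$, bound the first-power deviations via the reverse triangle inequality and the $\ell_p$ subspace embedding on $\mA\vx-\vu$, and convert to $p$-th powers via a mean-value-type bound $|a^p-b^p|\leq O(1)\max(a,b)^{p-1}|a-b|$ after checking that all four norms involved are $O(OPT)$. The only cosmetic difference is that you convert to $p$-th powers term-by-term before chaining, whereas the paper chains at the first power and performs the conversion once at the end; the ingredients and constants are the same.
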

\begin{proof}
Fix any \(\vx\) with \(\normof{\mA\vx}_p \leq C_0 OPT\).
Let \(\vu\in\cN_\eps\) such that \(\normof{\mA\vx-\vy}_p \leq \eps OPT\).
Then, by triangle inequality
\begin{align*}
	\abs{\normof{\mS(\mA\vx-\bar\vz)}_p - \normof{\mA\vx-\bar\vz}_p}
	&\leq \abs{\normof{\mS(\vu-\bar\vz)}_p - \normof{\vu-\bar\vz}_p} + \normof{\mS(\mA\vx-\vu)}_p + \normof{\mA\vx-\vu}_p \\
	&\leq C_{\cN}\eps OPT + 3\normof{\mA\vx-\vu}_p \tag{\mS is a subspace embedding} \\
	&\leq (C_{\cN} + 3) \eps \cdot OPT
\end{align*}
Note that for \(a,b\in[0,\frac12]\) and \(p\geq2\), we have \(\abs{a^p - b^p} \leq \abs{a-b}\).
Therefore, for any \(c,d>0\), by setting \(a=\frac{c}{2\max\{c,d\}}\) and \(b=\frac{d}{2\max\{c,d\}}\) and simplifying, we get \(\abs{c^p - d^p} \leq (2\max\{c,d\})^{p-1} \abs{c-d}\).
In our setting, we note that \(\normof{\mA\vx-\bar\vz}_p \leq \normof{\mA\vx}_p + \normof{\bar\vz}_p \leq (C_0 + C_z) OPT\).
Further, \(\normof{\mS(\mA\vx-\bar\vz)}_p \leq \normof{\mA\vx-\bar\vz}_p + \eps (3+C_{\cN}) OPT \leq (C_0 + 3 + C_{\cN}) OPT\).
So, letting \(C_s \defeq (C_0 + C_z + 3 + C_{\cN})\), we have \(\max\{\normof{\mS(\mA\vx-\bar\vz)}_p, \normof{\mA\vx-\bar\vz}_p\} \leq C_s OPT\), and so
\begin{align*}
	\abs{\normof{\mS(\mA\vx-\bar\vz)}_p^p - \normof{\mA\vz-\bar\vz}_p^p}
	\leq (2C_s OPT)^{p-1} \abs{\normof{\mS(\mA\vx-\bar\vz)}_p - \normof{\mA\vz-\bar\vz}_p}
	\leq (2C_s)^{p-1}C' \eps \cdot OPT^p
\end{align*}
\end{proof}
This concludes the proof of \theoremref{affine-embed}.

\subsection{Lower Bounds for \texorpdfstring{\(L_p\)}{Lp} Regression} 
We now show that \((1+\eps)\)-approximation for \(L_p\) regression requires reading at least \(\Omega\left(\frac{1}{\eps^{p-1}}\right)\) entries of the function \(f\).
Later, in \sectionref{L_infty}, we show that even \(2\)-approximation for \(L_\infty\) regression requires reading \(\Omega(n)\) entries of \(f\).

\begin{theorem}
Fix \(p > 1\).
Any algorithm that can output a \((1+\eps)\) approximation to \(L_p\) polynomial regression with probability at least \(\frac23\) must use \(n=\Omega(\frac{1}{\eps^{p-1}})\) queries.
\end{theorem}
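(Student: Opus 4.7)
The plan is to apply Yao's minimax principle: it suffices to construct a distribution over inputs on which every deterministic algorithm using at most $n = c_p/\eps^{p-1}$ queries fails (i.e., outputs a polynomial whose error exceeds $(1+\eps) \cdot OPT$) with probability strictly greater than $1/3$. I would partition $[-1,1]$ into $K = \Theta(1/\eps^{p-1})$ disjoint intervals $I_1,\ldots,I_K$, each of width $w = C_p \, \eps^{p-1}$ for a constant $C_p$ depending only on $p$ to be chosen. The hard distribution draws $j \in [K]$ and $\sigma \in \{-1,+1\}$ independently and uniformly, and sets $f_{j,\sigma}(t) = \sigma \cdot \mathbbm{1}_{I_j}(t)$.

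The argument has two stages. First, for any fixed deterministic algorithm making at most $n \leq K/4$ queries (possibly adaptive), at most $n$ intervals can contain a query, so by a union/pigeonhole argument with $j$ uniformly random, with probability at least $3/4$ no query lies in $I_j$. Conditioned on this event, the algorithm sees the all-zero response sequence, so its output polynomial $\hat q$ is completely determined and in particular independent of $\sigma$. Second, for any such fixed $\hat q$, the pointwise convexity bound $|x-1|^p + |x+1|^p \geq 2$ for all $x \in \mathbb{R}$ gives, after averaging over $\sigma$, $\tfrac{1}{2}(\|\hat q - f_{j,+}\|_p^p + \|\hat q - f_{j,-}\|_p^p) \geq \int_{I_j} 1\,dt = w$, so $\max_\sigma \|\hat q - f_{j,\sigma}\|_p^p \geq w$, and thus at least one value of $\sigma \in \{-1,+1\}$ produces error $\geq w^{1/p}$. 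Combining these two stages, the failure probability is at least $\tfrac{3}{4}\cdot \tfrac{1}{2} = \tfrac{3}{8} > \tfrac{1}{3}$, provided we can show $(1+\eps)^p\, OPT^p < w$ so that error $\geq w^{1/p}$ is indeed a failure.

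To bound $OPT$, restrict to constant polynomials $q(t) \equiv c$: the function $\phi(c) = w(1-c)^p + (2-w)c^p$ satisfies $\phi(0) = w$ and $\phi'(0) = -pw < 0$, and setting $\phi'(c)=0$ gives optimal $c^* \sim (w/2)^{1/(p-1)}$. A short Taylor-expansion calculation yields $OPT^p \leq \phi(c^*) \leq w - (p-1)\,2^{-1/(p-1)}\, w^{p/(p-1)} + O_p\!\left(w^{(p+1)/(p-1)}\right)$. With $w = C_p \eps^{p-1}$, the second term is $\Theta_p(C_p^{p/(p-1)} \eps^p)$, while the slack we need, $w - w/(1+\eps)^p$, is at most $p\,C_p\, \eps^p + O_p(C_p \eps^{p+1})$. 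Choosing $C_p$ large enough (explicitly, $C_p \gtrsim (p/(p-1))^{p-1}$ suffices) ensures $(1+\eps)^p\, OPT^p < w$ for all $\eps$ small enough, completing the proof. The only non-routine step is the Taylor bound on $\phi(c^*)$ for non-integer $p$, which I expect to be the main obstacle; handling it cleanly requires either using the second-order Taylor remainder for $(1-c)^p$ with $c = O_p(w^{1/(p-1)})$ bounded away from $1$, or separating the cases $p \in (1,2]$ and $p > 2$ to avoid issues with the sign and magnitude of the remainder term. Everything else is a standard Yao-plus-indistinguishability template.
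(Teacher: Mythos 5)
Your proof is correct, and the core strategy matches the paper's: plant a spike of uncertain sign on a small interval the algorithm misses with constant probability, observe that the output is then sign-independent, and argue that a well-chosen constant polynomial beats the zero polynomial by a $(1-\Theta(\eps))$ factor, so the sign-independent output must fail for at least one sign. Where you differ is in framing and normalization. You run an explicit Yao's-minimax argument over a partition of $[-1,1]$ into $K = \Theta(1/\eps^{p-1})$ intervals, which cleanly formalizes what the paper's proof asserts in one sentence (that some width-$\frac{1}{4n}$ interval is missed with probability $\tfrac{2}{3}$). You use the pointwise convexity inequality $|x-1|^p + |x+1|^p \geq 2$ to get $\max_\sigma \|\hat q - f_{j,\sigma}\|_p^p \geq w$ directly, where the paper uses a triangle-inequality argument on $\|f_+ - f_-\|_p$. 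And, the one substantive difference: you keep a unit spike and optimize the constant $c^* \sim (w/2)^{1/(p-1)}$, whereas the paper fixes the constant at $\pm 1$ and rescales the spike to magnitude $2^{1/p}/\eps$. That rescaling is what makes the paper's calculation elementary — it reduces the whole "$OPT$ beats the zero polynomial" step to the inequality $(1-x)^p \leq 1-x$ for $x\in[0,1]$, $p\ge1$, with no remainder term to track.

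Your flagged concern about the Taylor bound for non-integer $p$ is real but dissolvable without the heavy bookkeeping you anticipate. The stationarity condition $\phi'(c^*)=0$ for $\phi(c)=w(1-c)^p+(2-w)c^p$ gives the exact closed form $\phi(c^*)=w(1-c^*)^{p-1}$; from there, $(1-c)^{p-1}\leq 1-c$ (since $p-1\geq 1$ and $1-c\in[0,1]$, for $p\geq 2$) or $(1-c)^{p-1}\leq 1-(p-1)c$ (concavity of $c\mapsto(1-c)^{p-1}$, for $p\in(1,2]$) gives $\phi(c^*)\leq w - \Omega_p(w\,c^*) = w - \Omega_p(w^{p/(p-1)})$ with no remainder analysis at all. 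Either this observation or the paper's spike-rescaling trick closes the only step you were unsure about; everything else in your argument is sound.
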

\begin{proof}
Suppose an algorithm uses \(n\leq\frac{1}{4\eps^p}\) queries.
Then there must exist an interval \(\cI\subset[-1,1]\) of width \(\frac1{4n}\) such that none of the algorithm's queries land within \cI with probability \(\frac23\).
We then define two functions:
\[
    f_+(t) \defeq \begin{cases} +\frac{2^{1/p}}{\eps} & t\in\cI \\ 0 & t\notin\cI \end{cases}
    \hspace{1.5cm}
    f_-(t) \defeq \begin{cases} -\frac{2^{1/p}}{\eps} & t\in\cI \\ 0 & t\notin\cI \end{cases}
\]
Both \(f_+\) and \(f_-\) have \(\normof{f_+}_p^p=\normof{f_-}_p^p=\frac{1}{4n} \cdot \frac{2}{\eps^p}\).
Let \(C \defeq 2^{-1/p} - \frac12 \in (0, \frac12)\).
Then both functions have \(\min_{\deg(q)\leq d}\normof{q-f}_p^p \leq (1-C\eps) \normof{f}_p^p\), since the polynomials \(q_+(t)\defeq1\) and \(q_-(t)\defeq-1\) achieve this \(L_p\) norm:
\begin{align*}
    \normof{q_+-f_+}_p^p
    &= \tsfrac{1}{4n} (\tsfrac{2^{1/p}}{\eps} - 1)^p + (1-\tsfrac1{4n})(0-1)^p \\
    &\leq \tsfrac{1}{4n} \left( (\tsfrac{2^{1/p}}{\eps} - 1)^p + (4n-0) \right) \\
    &= \tsfrac{1}{4n} \left( \tsfrac{2}{\eps^p} (1 - \tsfrac{\eps}{2^{1/p}})^p + 4n \right) \\
    &\leq \tsfrac{1}{4n} \left( \tsfrac{2}{\eps^p} (1 - \tsfrac{\eps}{2^{1/p}}) + \tsfrac1{\eps^p} \right) \\
    &= \tsfrac{1}{4n} \cdot \tsfrac2{\eps^p} \left( 1 - (2^{-1/p} - \tsfrac12)\, \eps \right) \\
    &= (1-C\eps) \normof{f_+}_p^p
\end{align*}
Or equivalently, \(\normof{f_+}_p^p \geq \frac{1}{1-C\eps} \min_{\deg(q)\leq d}\normof{q-f}_p^p > (1+C\eps)\min_{\deg(q)\leq d}\normof{q-f}_p^p\).
Now suppose some polynomial \(\hat q\) has \(\normof{\hat q - f_+}_p^p \leq (1-\gamma) \normof{f_+}_p^p\).
Since \(\normof{f_+-f_-}_p = 2\normof{f_+}_p\), we have
\begin{align*}
    \normof{\hat q - f_-}_p
    &\geq \normof{f_+-f_-}_p - \normof{\hat q - f_+}_p \\
    &= 2\normof{f_+}_p - (1-\gamma)^{1/p} \normof{f_+}_p \\
    &= (2-(1-\gamma)^{1/p}) \normof{f_-}_p \\
    &\geq (1+\tsfrac{\gamma}{p}\eps) \normof{f_-}_p \\
    \normof{\hat q - f_-}_p^p &\geq (1+\gamma) \normof{f_-}_p^p
\end{align*}
That is, if \(\hat q\) is a slightly good approximation to \(f_+\), then \(\hat q\) is a slightly bad approximation to \(f_-\).
By symmetry, the inverse claim also holds.

To complete the argument, suppose nature picks \(f_+\) or \(f_-\) uniformly at random.
Then with probability \(\frac23\) the algorithm returns some polynomial \(\hat q\) without knowing which function nature chose.
If \(\normof{\hat q - f_+}_p^p \leq \normof{f_+}_p^p\) then \(\normof{\hat q - f_-}_p^p \geq \normof{f_-}_p^p\), and otherwise \(\normof{\hat q - f_+}_p^p > \normof{f_+}_p^p\).
So, with probability \(\frac23\cdot\frac12 = \frac13\) the resulting polynomial has error
\[
    \normof{\hat q - f}_p^p
    \geq \normof{f}_p^p
    > (1+C\eps) \min_{\deg(q)\leq d}\normof{q-f}_p^p
\]
By adjusting the value of \(\eps\), we complete the proof.
\end{proof}

\section{Near-Optimal \texorpdfstring{\(L_\infty\)}{L∞} Regression}
\label{sec:L_infty}
We now demonstrate how to extend these guarantees from \(L_p\) polynomial regression into \(L_\infty\) polynomial regression. 
We remark that the sample complexity and approximation factor guarantees in this section were already shown in \cite{KaneKP17}, but with a different algorithm.

For a finite dimensional regression problem with $m$ rows, we could achieve a $(1+\eps$)-approximation to \(\ell_\infty\) regression by approximately solving \(\ell_p\) regression with \(p=\frac{\log m}{\eps}\) \cite{meyer2021}.
However, since polynomials lie within an infinite dimensional space, we cannot na\"{i}vely apply this argument. 
In fact, it can be shown that even with arbitrarily many observations it is impossible to solve polynomial \(L_\infty\) regression to better than a 2-factor approximation:

\begin{theorem}
\label{thm:lb:poly:infty}
There does not exist an algorithm that can output a $2$-approximation to $L_\infty$ polynomial regression with probability at least $\frac{2}{3}$. 
\end{theorem}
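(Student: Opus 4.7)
The plan is to adapt the two-hypothesis indistinguishability strategy used in the proof of \theoremref{lb}. For any algorithm \(\cA\) making at most \(n\) queries, I will construct a pair of adversarial functions \(f_+, f_-\) that \(\cA\) cannot distinguish from its observations, yet any single output polynomial \(\hat q\) must have approximation ratio at least \(2\) on at least one of the two. Since this holds for \emph{every} \(n\), no finite-query algorithm can beat factor \(2\).

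Concretely, for a small interval \(\cI\subset[-1,1]\), take \(f_+(t) = \mathbbm{1}[t\in\cI]\) and \(f_-(t) = -\mathbbm{1}[t\in\cI]\), so that \(\normof{f_+ - f_-}_\infty = 2\). The central calculation is \(\min_{\deg(q)\le d}\normof{q - f_\pm}_\infty = \tfrac12\): the constant polynomial \(q\equiv\pm\tfrac12\) achieves error exactly \(\tfrac12\), and for the matching lower bound I use a continuity argument at the boundary of \(\cI\). For any polynomial \(q\) and any boundary point \(t_0\in\partial\cI\), if \(q(t_0)\le\tfrac12\) then by continuity \(|q - 1|\ge \tfrac12 - \eps\) on a positive-measure neighborhood of \(t_0\) inside \(\cI\); otherwise \(|q|>\tfrac12 - \eps\) on a positive-measure neighborhood of \(t_0\) outside \(\cI\). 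Sending \(\eps\to 0\) gives \(\normof{q - f_+}_\infty \ge \tfrac12\), and symmetrically for \(f_-\).

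To handle the algorithm's (possibly adaptive, randomized) queries, I invoke Yao's minimax principle with the following input distribution: partition \([-1,1]\) into \(6n+1\) disjoint equal-width subintervals, pick \(\cI\) uniformly from among them, and independently draw \(f\in\{f_+, f_-\}\) with equal probability. For any deterministic algorithm, the \(n\) query points lie in at most \(n\) of the \(6n+1\) candidate intervals, so \(\cI\) avoids all queries with probability at least \(\tfrac{5n+1}{6n+1}\ge \tfrac56\). Conditioned on this event, every queried function value is \(0\) regardless of the sign of \(f\), so the algorithm's output \(\hat q\) is a fixed polynomial; the triangle inequality then forces \(\normof{\hat q - f_+}_\infty + \normof{\hat q - f_-}_\infty \ge \normof{f_+ - f_-}_\infty = 2\), making the approximation ratio at least \(1/\tfrac12 = 2\) on whichever of \(f_+, f_-\) carries the larger error. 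The total failure probability is therefore at least \(\tfrac56\cdot\tfrac12 = \tfrac{5}{12}\), which exceeds the \(\tfrac13\) allowed under a \(\tfrac23\)-success guarantee. The main subtlety is that the triangle inequality produces ratio \(\ge 2\) but not strictly \(>2\) -- the pathological algorithm \(\hat q \equiv 0\) attains ratio exactly \(2\) on this family -- so the theorem is best read as ruling out a \((2-\delta)\)-approximation for every \(\delta>0\).
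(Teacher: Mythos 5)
Your proposal is correct and takes essentially the same route as the paper's proof: both hinge on the indistinguishable pair $f_\pm = \pm\mathbbm{1}_{\cI}$ supported on an interval $\cI$ that avoids the algorithm's queries, together with the fact that every polynomial has $L_\infty$ error at least $1 = 2\cdot\mathrm{OPT}$ to at least one of $f_+, f_-$ (your triangle-inequality argument and the paper's sign-of-$\hat q$-on-$\cI$ argument are equivalent). You do supply two pieces of rigor the paper leaves implicit — a Yao-style formalization of the interval-avoidance claim for adaptive randomized algorithms, and a continuity argument that pins $\min_q\|q-f_\pm\|_\infty$ at exactly $\tfrac12$ — and your closing observation is a fair one: the paper's chain also reads ``error $\ge 1 \ge 2\cdot\mathrm{OPT}$'' with no strict inequality, so both proofs, as written, actually rule out $(2-\delta)$-approximation for all $\delta > 0$ rather than a $2$-approximation in the weak ($\le$) sense; the pathological output $\hat q \equiv 0$ ties the bound on this family.
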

\begin{proof}
Consider an algorithm that observes at most a finite number, say $n<\infty$, of queries from \(f\). 
Then there exists some interval \(\cI\subset[-1,1]\) of nonzero width such that none of the algorithm's queries land within \cI with probability \(\frac23\). 
We then define two functions:
\[
    f_+(t) \defeq \begin{cases} +1 & t\in\cI \\ 0 & t\notin\cI \end{cases}
    \hspace{1.5cm}
    f_-(t) \defeq \begin{cases} -1 & t\in\cI \\ 0 & t\notin\cI \end{cases}
\]
Both \(f_+\) and \(f_-\) have \(\normof{f_+}_\infty=\normof{f_-}=1\), and both have \(\min_{\deg(q)\leq d} \normof{q-f}_\infty \leq \frac12\), since the polynomials \(q_+(t) \defeq \frac12\) and \(q_-(t)\defeq-\frac12\) achieve uniform error \(\frac12\).

To complete the argument, suppose nature picks \(f_+\) or \(f_-\) uniformly at random.
Then with probability \(\frac23\) the algorithm returns some polynomial \(\hat q\) without knowing which function nature chose.
If \(\hat q(t) \geq 0\) anywhere on \cI then \(\normof{q - f_-}_\infty \geq 1\), and if \(\hat q(t) \leq 0\) anywhere on \cI then \(\normof{q - f_+}_\infty \geq 1\).
So, with probability \(\frac23 \cdot \frac12 = \frac13\) the resulting vector has error \(\normof{\hat q - f} \geq 1 \geq 2 \min_{q:\deg(q)\leq d} \normof{q-f}_\infty\).
\end{proof}

In light of the lower bound in \theoremref{lb:poly:infty}, we aim to provide a constant-factor approximation for $L_\infty$ polynomial regression rather than $(1+\eps)$-approximation. 
This requires a slightly different algorithm than \algorithmref{chebyshev-const:Lp}, shown below in \algorithmref{chebyshev-const:L_infty}.
The only changes are that the rescaling matrix now has \(p\) in the numerator, and that \vx is computed by \(\ell_\infty\) matrix regression.

\begin{algorithm}[!tbh]
\caption{Chebyshev sampling for \(L_\infty\) polynomial regression}
\label{alg:chebyshev-const:L_infty}
\begin{algorithmic}[1]
\Require{Access to signal \(f\), parameter \(p \geq 1\), degree \(d\), number of samples \(n\)}
\Ensure{Degree \(d\) polynomial \(p(t)\)}
\State{Sample \(t_1,\ldots,t_n\in[-1,1]\) i.i.d.~from the pdf \(\frac{1}{\pi\sqrt{1-t^2}}\)}
\State{Observe signal samples \(b_i \defeq f(t_i)\) for all \(i\in[n]\)}
\State{Build \(\mA\in\bbR^{n\times(d+1)}\) and diagonal \(\mR\in\bbR^{n \times n}\) with \([\mA]_{i,j}=t_i^{j-1}\) and \([\mR]_{ii}=\big(\frac{dp}{n}\sqrt{1-t_i^2}\big)^{\nicefrac1p}\)}
\State{Compute \(\vx=\argmin_{\vx\in\bbR^{d+1}} \normof{\mR\mA\vx-\mR\vb}_\infty\)}
\State{Return \(p(t) = \sum_{i=0}^d x_i t^i\)}
\end{algorithmic}
\end{algorithm}

\begin{theorem}
\label{thm:main:infty}
Let \(B(n,r)\) denote the binomial distribution.
Let \(n_0 = O(d^5 \polylog d)\) and let \(p = O(\log d)\).
Suppose an algorithm samples \(n\sim B(n_0, 1/\tilde O(d^4))\) and runs \algorithmref{chebyshev-const:L_infty}.
Then, with probability \(\frac23\), the resulting polynomial \(\hat q\) satisfies
\[
    \normof{\hat q - f}_\infty \leq O(1) \min_{q:\deg(q)\leq d} \normof{q - f}_\infty
\]
\end{theorem}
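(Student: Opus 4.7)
The plan is to reduce $L_\infty$ regression to $L_p$ regression at the distinguished scale $p = \Theta(\log d)$, exploiting the classical fact that these two norms coincide up to an absolute constant on the class of degree-$d$ polynomials at this value of $p$. The choice $p = \Theta(\log d)$ is precisely the one that makes three otherwise inconvenient blowup factors---$n^{1/p}$, $d^{2/p}$, and $(dp/n)^{1/p}$---all collapse to $O(1)$.

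First I would establish the polynomial norm equivalence: for any polynomial $q$ of degree at most $d$ and $p \geq 2\log d$, $\|q\|_p \leq 2^{1/p}\|q\|_\infty = O(1)\|q\|_\infty$ is trivial, and the reverse $\|q\|_\infty \leq O(1)\|q\|_p$ follows from Markov brothers' inequality exactly as in the proof of \lemmaref{uniform-sensitivity-bound}: at a point $t^* = \argmax |q|$ one has $|q(t)| \geq \tfrac12\|q\|_\infty$ on an interval of width $\Omega(1/d^2)$, giving $\|q\|_p^p \geq \tfrac{1}{d^2}(\|q\|_\infty/2)^p$, hence $\|q\|_\infty \leq 2(2d^2)^{1/p}\|q\|_p = O(1)\|q\|_p$ once $p = \Theta(\log d)$.

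Next I would reinterpret \algorithmref{chebyshev-const:L_infty}. Its reweighting matrix $\mR$ differs from the $L_p$ reweighting $\mS$ of \algorithmref{chebyshev-const:Lp} only by the scalar $(dp/n)^{1/p}$, which is $O(1)$ for our choice of $p$ and $n = \tilde O(d)$. So the algorithm is equivalent to computing $\hat \vx = \argmin_\vx \|\mS\mA\vx - \mS\vb\|_\infty$. Since the discrete inequality $\|\vv\|_\infty \leq \|\vv\|_p \leq n^{1/p}\|\vv\|_\infty$ has $n^{1/p} = O(1)$, writing $\vx^*$ for the coefficient vector of $q^*_\infty \defeq \argmin_q \|q-f\|_\infty$ yields
\begin{align*}
\|\mS\mA\vx_{\hat q} - \mS\vb\|_p
\leq n^{1/p}\|\mS\mA\vx_{\hat q} - \mS\vb\|_\infty
\leq n^{1/p}\|\mS\mA\vx^* - \mS\vb\|_\infty
\leq n^{1/p}\|\mS\mA\vx^* - \mS\vb\|_p,
\end{align*}
so $\hat q$ is an $O(1)$-factor minimizer of the subsampled $\ell_p$ regression problem on $(\mS\mA,\mS\vb)$.

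I would then invoke the $L_p$ machinery of \sectionref{smallp:regression:constant} at $p = \Theta(\log d)$: by \corolref{Lp-subspace-embedding} the matrix $\mS\mA$ is an $L_p$ subspace embedding for $\cP$, and $\E[\|\mS\mA\vx^* - \mS\vb\|_p^p]$ is proportional to $\|q^*_\infty - f\|_p^p$. The standard Markov-plus-subspace-embedding argument (Lemma~A.1 of \cite{meyer2021}, which applies equally well to constant-factor $\ell_p$ minimizers, not only exact ones) then combines with the previous display to yield $\|\hat q - f\|_p \leq O(1)\min_q\|q-f\|_p \leq O(1)\cdot 2^{1/p}\, OPT_\infty = O(1)\, OPT_\infty$, where $OPT_\infty \defeq \min_q \|q-f\|_\infty$. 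Applying the polynomial norm equivalence to the degree-$d$ polynomial $\hat q - q^*_\infty$ then closes the argument:
\begin{align*}
\|\hat q - f\|_\infty
\leq \|\hat q - q^*_\infty\|_\infty + OPT_\infty
\leq O(1)\|\hat q - q^*_\infty\|_p + OPT_\infty
\leq O(1)\bigl(\|\hat q - f\|_p + \|q^*_\infty - f\|_p\bigr) + OPT_\infty
= O(1)\, OPT_\infty.
\end{align*}
The main obstacle is to verify that all three norm-equivalence statements---continuous $L_p$ versus $L_\infty$ on polynomials, discrete $\ell_p$ versus $\ell_\infty$ on the $n$-sample vector, and unbiased subspace embedding of $\mS\mA$ for $\cP$ at the large exponent $p = \Theta(\log d)$---hold simultaneously with constants independent of $d$; the matched choices $n = \tilde O(d)$ and $p = \Theta(\log d)$ are exactly calibrated for this triple collapse.
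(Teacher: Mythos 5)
Your routing differs from the paper's: you first show $\hat\vx$ is an $O(1)$-approximate minimizer of the subsampled $\ell_p$ problem via the $\ell_\infty/\ell_p$ norm equivalence on $n$-vectors, then run the standard $L_p$ constant-factor regression argument, and finally transfer back to $L_\infty$ through the polynomial norm equivalence. The paper instead proves $\mR\mA$ is an $L_\infty$ subspace embedding for $\cP$ outright (by first establishing an $L_p$ subspace embedding and then applying both norm equivalences), bounds $\|\mR\mA\vx^* - \mR\vb\|_\infty \leq O(1)\|\cP\vx^*-f\|_\infty$ pointwise using $[\mR]_{ii}=O(1)$, and closes with a triangle-inequality argument entirely in $L_\infty$. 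Your version is arguably cleaner in that it treats the $L_p$ machinery as a black box; the paper's version avoids the extra Markov step by leaning on the pointwise bound.

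However, there is a genuine gap at the step ``by \corolref{Lp-subspace-embedding} the matrix $\mS\mA$ is an $L_p$ subspace embedding for $\cP$'' at $p = \Theta(\log d)$. That corollary is built on \theoremref{sens:lewis:approx} and \theoremref{lewis-weight-accept-reject-l1}, both of which are only established for $p \in [\tfrac23,2]$ (the clipped-Chebyshev Lewis-weight bound \theoremref{chebyshev:ratio} itself is confined to that range), so the corollary cannot be invoked directly at $p > 2$. The paper obtains the subspace embedding at large $p$ by the degree-blowup trick: for integer $p$, write $(\cP\vx)^p = \cQ\vv$ where $\cQ$ is the degree-$pd$ polynomial operator, so that $\|\cP\vx\|_p^p = \|\cQ\vv\|_1$, and then apply \corolref{Lp-subspace-embedding} at exponent $1$ to the expanded Vandermonde system of degree $pd$, using that the $\ell_1$ Lewis weights of the expanded matrix are again bounded by the Chebyshev density. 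Your proof needs this reduction (or an explicit appeal to \lemmaref{lp-subspace-embed}, which encodes the same argument at the matrix level) to make the subspace embedding claim legitimate; as written, it cites a corollary at a value of $p$ where the paper has not proven it.
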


We prove this by mirroring a known proof technique found in Appendix A of \cite{ParulekarPP21}, which says that having a subspace embedding suffices to constant-factor approximation guarantees in any normed space.
So, to apply this proof technique, we first have to have a subspace embedding in the \(L_\infty\) norm:
\begin{lemma}
Suppose an algorithm samples \(n \sim B(d^4, O(\frac1{d^3}))\) and runs \algorithmref{chebyshev-const:L_infty}.
Then, the matrix \(\mR\mA\) on line 4 of the algorithm is a subspace embedding for \cP: \(\frac1C \normof{\cP\vx}_\infty \leq \normof{\mR\mA\vx}_\infty \leq C\normof{\cP\vx}_\infty\) for all \(\vx\in\bbR^{d+1}\).
\end{lemma}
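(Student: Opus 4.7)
The plan is to reduce the $L_\infty$ subspace embedding to the $L_p$ subspace embedding for $p = c \log d$ (with $c$ a sufficiently large constant), using standard interchange inequalities between $\ell_p$ and $\ell_\infty$ on finite vectors, plus the fact that for degree-$d$ polynomials the $L_p$ and $L_\infty$ norms are equivalent up to a constant factor once $p = \Theta(\log d)$.

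First, I would observe that $[\mR]_{ii} = \left(\frac{dp}{n}\right)^{1/p}[\mS]_{ii}$, where $\mS$ is the diagonal matrix from \algorithmref{chebyshev-const:Lp}. Consequently $\|\mR\mA\vx\|_p^p = \frac{dp}{n}\|\mS\mA\vx\|_p^p$. Since the sampling distribution of $t_1,\ldots,t_n$ in \algorithmref{chebyshev-const:L_infty} is identical to that in \algorithmref{chebyshev-const:Lp}, and since the regime $n \sim B(d^4, O(1/d^3))$ (with $p = \Theta(\log d)$, so $2^p = \poly(d)$) matches the hypotheses of \corolref{Lp-subspace-embedding} up to adjusting polynomial factors, I may invoke that corollary to conclude that, with constant probability, $\frac{1}{C}\|\cP\vx\|_p^p \le \|\mS\mA\vx\|_p^p \le C\|\cP\vx\|_p^p$ for all $\vx \in \bbR^{d+1}$. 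Combined with the rescaling, this gives $\|\mR\mA\vx\|_p^p \approx_C \frac{dp}{n}\|\cP\vx\|_p^p$.

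For the upper bound, use $\|v\|_\infty \le \|v\|_p$ on $\mR\mA\vx \in \bbR^n$ and the trivial bound $\|\cP\vx\|_p^p \le 2\|\cP\vx\|_\infty^p$:
\[
\|\mR\mA\vx\|_\infty \le \|\mR\mA\vx\|_p \le C^{1/p}\!\left(\tfrac{dp}{n}\right)^{1/p}\!\|\cP\vx\|_p \le C^{1/p}\!\left(\tfrac{2dp}{n}\right)^{1/p}\!\|\cP\vx\|_\infty.
\]
Since $n = \tilde\Theta(d)$ with high probability and $p = \Theta(\log d)$, the factor $(2dp/n)^{1/p}$ is $O(1)$. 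For the lower bound, use $\|v\|_p^p \le n\|v\|_\infty^p$ together with the uniform sensitivity bound of \lemmaref{uniform-sensitivity-bound}, which rearranges to $\|\cP\vx\|_\infty^p \le d^2(p+1)\|\cP\vx\|_p^p$:
\[
\|\mR\mA\vx\|_\infty^p \ge \tfrac{1}{n}\|\mR\mA\vx\|_p^p \ge \tfrac{1}{Cn}\cdot\tfrac{dp}{n}\|\cP\vx\|_p^p \ge \tfrac{dp}{Cn^2 d^2(p+1)}\|\cP\vx\|_\infty^p = \Omega\!\left(\tfrac{1}{n^2 d}\right)\!\|\cP\vx\|_\infty^p.
\]
Taking $p$-th roots and using $n = \tilde\Theta(d)$, we have $(n^2 d)^{1/p} = \tilde O(d^{3/p}) = O(1)$ once $p \ge 3 \log d / \log(1/\eta)$ for any small target constant $\eta$, so $\|\mR\mA\vx\|_\infty \ge \Omega(1)\|\cP\vx\|_\infty$.

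The main obstacle will be bookkeeping: verifying that the sampling regime $n \sim B(d^4, O(1/d^3))$ supplies enough samples to invoke \corolref{Lp-subspace-embedding} at the elevated value $p = \Theta(\log d)$ (which inflates the prior sample complexity by factors like $2^p = \poly(d)$ and $p^{O(p)} = d^{O(\log\log d)}$), and confirming that the two $p$-th roots $(dp/n)^{1/p}$ and $(n^2 d)^{1/p}$ really collapse to $O(1)$ for the specified $n$ and $p$. Everything else follows from plugging into the $L_p$ subspace embedding and the Markov-brothers-based sensitivity bound we have already established.
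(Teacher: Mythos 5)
Your handling of the $L_p \leftrightarrow L_\infty$ norm conversions (the finite-dimensional $\|v\|_\infty \le \|v\|_p \le n^{1/p}\|v\|_\infty$ step, and the infinite-dimensional step via \lemmaref{uniform-sensitivity-bound}) is sound and is essentially the same as the paper's second sub-lemma. The gap is in how you obtain the $\ell_p$ subspace embedding $\|\mS\mA\vx\|_p^p \approx_C \|\cP\vx\|_p^p$ at $p = \Theta(\log d)$.

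You cannot invoke \corolref{Lp-subspace-embedding} directly at $p = \Theta(\log d)$. That corollary rests on \theoremref{lewis-weight-accept-reject-l1} (stated for $p \in [1,2]$) and on \theoremref{sens:lewis:approx} via \theoremref{chebyshev:ratio} (stated for $p \in [\tfrac23, 2]$), so it is restricted to $p \leq 2$. This is not merely a bookkeeping issue: for $p > 2$, $\ell_p$ Lewis weight sampling fundamentally requires $\tilde O(d^{p/2})$ rows, which at $p = \Theta(\log d)$ is $d^{\Theta(\log d)}$ — super-polynomial, far beyond the $n \sim B(d^4, O(1/d^3))$ budget. Adjusting ``$2^p = \poly(d)$'' factors in the uniform-sampling stage does not help; the obstruction sits in the Lewis weight subsampling stage, which cannot be made $\tilde O(d)$ at large $p$ by brute force.

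The paper avoids this exactly where you would expect: it never applies the $\ell_p$ Lewis-weight machinery to $\cP$ at large $p$. Instead it uses the polynomial-structure reduction — since $(\cP\vx)^p$ is a degree-$dp$ polynomial, there is a $\vv$ with $\|\cP\vx\|_p^p = \|\cQ\vv\|_1$, where $\cQ$ is the degree-$dp$ operator. It then invokes \corolref{Lp-subspace-embedding} for $\cQ$ with $p' = 1$ (squarely in the $[\tfrac23,2]$ range), and checks $\|\mS\mB\vv\|_1 = \|\mR\mA\vx\|_p^p$, which is what makes the $\tfrac{dp}{n}$ scaling inside $\mR$ emerge naturally. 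Your rescaling observation $[\mR]_{ii} = (\tfrac{dp}{n})^{1/p}[\mS]_{ii}$ is correct, but it has to be combined with this degree-inflation reduction rather than a direct $p$-norm Lewis weight argument; without it, the proof does not go through.
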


This is the conclusion of two shorter lemmas
\begin{lemma}
Let \(p>2\) be an integer.
Suppose an algorithm samples \(n \sim B(d^4, O(\frac1{d^3}))\) and runs \algorithmref{chebyshev-const:L_infty}.
Then, the matrix \(\mR\mA\) on line 4 of the algorithm is a subspace embedding for \cP: \(\frac1C \normof{\cP\vx}_p^p \leq \normof{\mR\mA\vx}_p^p \leq C\normof{\cP\vx}_p^p\) for all \(\vx\in\bbR^{d+1}\).
\end{lemma}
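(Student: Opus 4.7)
The plan is to leverage the existing $L_p$ subspace embedding guarantee from \sectionref{smallp:regression:constant} and reduce the $L_\infty$-flavored matrix $\mR$ of \algorithmref{chebyshev-const:L_infty} to the matrix $\mS$ of \algorithmref{chebyshev-const:Lp}. Comparing the two pseudocodes, the diagonal weights satisfy $\mR = (dp/n)^{1/p}\mS$ entrywise, hence $\|\mR\mA\vx\|_p^p = (dp/n)\,\|\mS\mA\vx\|_p^p$ for every $\vx \in \R^{d+1}$, and the problem reduces to (i) establishing an $L_p$ subspace embedding for $\mS\mA$ and (ii) controlling the scalar factor $dp/n$.

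For (i), I would invoke \corolref{Lp-subspace-embedding} (suitably instantiated at the current parameters). Concretely, I would follow the two-stage view from \sectionref{two-stage}: uniformly sampling $n_0 = d^4$ times provides a subspace embedding from $\cP$ to a Vandermonde matrix $\mA$, since \lemmaref{uniform-sensitivity-bound} bounds the $L_p$ sensitivity uniformly by $d^2(p+1)$, so $\tilde{O}(d^3)$ uniform samples already suffice for constant $p$. Next, \theoremref{sens:lewis:approx} shows the Chebyshev measure upper bounds the $\ell_p$ Lewis weights of $\mA$ up to $\polylog(d)$ factors, so $\ell_p$ Lewis-weight subsampling with $m = O(d\polylog d)$ rows preserves $\ell_p$ norms for every $\vx$. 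This yields $\|\mS\mA\vx\|_p^p \approx_{C_1} \|\cP\vx\|_p^p$ for a constant $C_1$ that may depend on $p$ but not $d$.

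For (ii), a standard Chernoff bound on $n \sim B(d^4, O(1/d^3))$ (which has mean $\Theta(d)$) shows $n = \Theta(d)$ with very high probability, so that $dp/n = \Theta(p) = O(1)$ for fixed $p$. Combining this with the previous step gives the desired embedding
\[
\frac{1}{C}\|\cP\vx\|_p^p \le \|\mR\mA\vx\|_p^p \le C\|\cP\vx\|_p^p
\]
for a constant $C$ depending only on $p$.

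The main obstacle is verifying that the one-stage Chebyshev sampling of $n \sim B(d^4, O(1/d^3))$ is distributionally equivalent to the hypothetical two-stage procedure (uniform sampling followed by Lewis-weight subsampling) with the slightly smaller parameters $n_0 = d^4$ and $m = O(d)$ than those stated in \theoremref{sensitivity-correctness-Lp}. This reduction is exactly the role of \lemmaref{stage-squashing}; since the uniform sensitivity bound is crude enough that $\tilde{O}(d^3)$ uniform samples suffice at fixed $p$, one can pick $n_0 = d^4$ and $m = \Theta(d)$ and recover the same guarantees, after which the remaining analysis is essentially book-keeping at the constant level.
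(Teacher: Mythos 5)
There is a genuine gap in your proposal, and it is exactly the point the $p>2$ analysis of this paper is designed to circumvent.

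You propose to establish an $\ell_p$ subspace embedding for $\mS\mA$ directly, by noting that the Chebyshev measure approximates $\mA$'s $\ell_p$ Lewis weights (\theoremref{sens:lewis:approx}) and then invoking $\ell_p$ Lewis-weight subsampling with $m = O(d\polylog d)$ rows via \corolref{Lp-subspace-embedding}. But both of these steps are stated and proved only for $p\in[1,2]$. The almost-Lewis-weight characterization in \theoremref{chebyshev:ratio} (and hence \theoremref{sens:lewis:approx}) is established only for $p\in[\tfrac23,2]$, and the Lewis-weight accept/reject guarantee in \theoremref{lewis-weight-accept-reject-l1} that underpins \corolref{Lp-subspace-embedding} is again only for $p\in[1,2]$. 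For $p>2$, generic $\ell_p$ Lewis-weight sampling needs $\tilde\Omega(d^{p/2})$ rows, so $O(d\polylog d)$ rows cannot give an $\ell_p$ subspace embedding for a general $n_0\times(d+1)$ matrix, and the paper is explicit that it does not bound $L_p$ Lewis weights of $\cP$ for $p>2$ at all. Your first-stage bookkeeping and the observation that $\mR = (dp/n)^{1/p}\mS$ entrywise are fine, but they sit on top of an $\ell_p$ subspace embedding for $\mS\mA$ that you have not established.

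The paper's proof sidesteps this by exploiting the polynomial structure rather than raw Lewis-weight machinery: for an \emph{integer} $p$, any $[\cP\vx]^p$ is itself a degree-$dp$ polynomial $\cQ\vv$, so $\normof{\cP\vx}_p^p = \normof{\cQ\vv}_1$. It then applies \corolref{Lp-subspace-embedding} at $q=1$ (which is in the valid range) to the degree-$dp$ operator $\cQ$, obtaining $\normof{\mS\mB\vv}_1 \approx_C \normof{\cQ\vv}_1$ for the expanded Vandermonde matrix $\mB$, and finally observes that $\normof{\mS\mB\vv}_1 = \normof{\mR\mA\vx}_p^p$ because $[\mS]_{ii} = [\mR]_{ii}^p$. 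This reduction is the step your proposal is missing; without it, there is no way to obtain a near-linear-in-$d$ embedding for $p>2$ with the tools you cite.
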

\begin{proof}
We start by using the same trick as \theoremref{vander:p:q:sens} in \sectionref{largep:regression:eps} to build a subspace embedding for large \(p\).
Let \(\cQ:\bbR^{dp+1}\rightarrow L_1([-1,1])\) be the extended polynomial operator, so that \([\cQ\vv](t) = \sum_{i=0}^{dp} x_i t^i\).

Notice that \((\cP\vx)^p\) is just some polynomial raised to integer power \(p\).
So, for any \(\vx\in\bbR^{d+1}\), there exists a \(\vv\in\bbR^{dp+1}\) such that \((\cP\vx)^p = \cQ\vv\).
Then, we can write
\[
    \normof{\cP\vx}_p^p
    = \int_{-1}^1 |[\cP\vx](t)|^p dt
    = \int_{-1}^1 |[\cQ\vv](t)| dt
    = \normof{\cQ\vv}_1
\]
We then apply \corolref{Lp-subspace-embedding} to the \(L_1\) norm for polynomials of degree \(dp\).
This tells us that diagonal \(\mS\in\bbR^{n \times n}\) with \([\mS]_{ii} = \frac{dp}{n} \sqrt{1-t_i^2}\) and Vandermonde \(\mB\in\bbR^{n \times (dr+1)}\) with \([\mB]_{ij} = t_i^j\) enjoy
\[
    \frac1C \normof{\cQ\vv}_1 \leq \normof{\mS\mB\vv}_1 \leq C \normof{\cQ\vv}_1
    \hspace{1cm}
    \text{for all }
    \vv\in\bbR^{dp+1}
\]
Since \mA and \mB are just Vandermonde matrices of degree \(d\) and \(dp\) respectively, we can use the same observation to equate \(\normof{\mS\mB\vv}_1 = \normof{\mR\mA\vx}_p^p\):
\[
    \normof{\mS\mB\vv}_1
    = \sum_{i=1}^n \mS_{ii} ~ |[\cQ\vv](t_i)|
    = \sum_{i=1}^n \left(\mR_{ii} ~ |[\cP\vx](t_i)|\right)^p
    = \normof{\mR\mA\vx}_p^p
\]
Where we use the fact that \([\mS]_{ii} = [\mR]_{ii}^p = \frac{dp}{n}\sqrt{1-t_i^2}\).
So, the subspace embedding guarantee is equivalent to
\[
    \frac1C \normof{\cP\vx}_p^p \leq \normof{\mR\mA\vx}_p^p \leq C \normof{\cP\vx}_p^p
    \hspace{1cm}
    \text{for all }
    \vx\in\bbR^{d+1}
\]
This complete the process of making a subspace embedding for large \(p\).
\end{proof}

Next, we take \(p = O(\log d)\) and show this creates an \(L_\infty\) subspace embedding.
\begin{lemma}
Let \(p = O(\log d)\) be an integer.
Suppose an algorithm samples \(n \sim B(d^5, \tilde O(\frac1{d^4}))\) and runs \algorithmref{chebyshev-const:L_infty}.
Then, the matrix \(\mR\mA\) on line 4 of the algorithm is a subspace embedding for \cP: \(\frac1C \normof{\cP\vx}_\infty \leq \normof{\mR\mA\vx}_\infty \leq C\normof{\cP\vx}_\infty\) for all \(\vx\in\bbR^{d+1}\).
\end{lemma}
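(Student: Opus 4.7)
The plan is to use the preceding lemma as a black box, applied with $p = \Theta(\log d)$, and then convert the resulting $L_p$ subspace embedding into an $L_\infty$ subspace embedding by exploiting the fact that $L_p$ and $L_\infty$ norms are equivalent up to constants (both for polynomials of degree $d$ on $[-1,1]$ and for vectors in $\bbR^n$) whenever $p = \Theta(\log d)$ and $n = \poly(d)$.

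First, I would set $p = \lceil c \log d \rceil$ for a universal constant $c$ to be tuned, which is an integer larger than $2$ once $d$ is sufficiently large. The sample-count assumption $n \sim B(d^5, \tilde O(1/d^4))$ in the present lemma matches (up to polylog factors absorbed in $\tilde O$) the sample count required by the preceding lemma with this choice of $p$, since $p = O(\log d)$ contributes only polylog factors to the $\polylog(d)$ hidden in $n_0$. Invoking the preceding lemma therefore yields $\frac1C \normof{\cP\vx}_p^p \leq \normof{\mR\mA\vx}_p^p \leq C \normof{\cP\vx}_p^p$ for all $\vx\in\bbR^{d+1}$ with a universal constant $C$.

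Second, I would establish the four norm equivalences needed to chain $L_p$ with $L_\infty$. On the operator side, $\cP\vx$ is a polynomial of degree $d$ on $[-1,1]$, so the trivial integration bound gives $\normof{\cP\vx}_p \leq 2^{1/p}\normof{\cP\vx}_\infty = O(1)\normof{\cP\vx}_\infty$, while the classical Nikolskii inequality for algebraic polynomials gives $\normof{\cP\vx}_\infty \leq (C_0 d)^{2/p}\normof{\cP\vx}_p$, which is $O(1)\normof{\cP\vx}_p$ for our choice of $p$. On the sampled side, for $\vy = \mR\mA\vx \in \bbR^n$, the elementary inequalities $\normof{\vy}_\infty \leq \normof{\vy}_p \leq n^{1/p}\normof{\vy}_\infty$ give an $O(1)$ equivalence since $n^{1/p} = \exp(O(1))$ whenever $n = \poly(d)$ and $p = \Theta(\log d)$.

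Chaining these four constant-factor equivalences through the $L_p$ subspace embedding yields
\[
    \normof{\mR\mA\vx}_\infty \;\approx\; \normof{\mR\mA\vx}_p \;\approx\; \normof{\cP\vx}_p \;\approx\; \normof{\cP\vx}_\infty,
\]
where each $\approx$ hides only an absolute constant, which is the claimed $L_\infty$ subspace embedding. The main obstacle I anticipate is the Nikolskii inequality with the stated $d^{2/p}$ dependence; this is a classical fact in approximation theory that I would cite rather than re-derive. A secondary bookkeeping concern is that $n$ is random, but the bound $n^{1/p} = O(1)$ holds deterministically for every $n \leq \poly(d)$, which is satisfied with overwhelming probability by the binomial sampling. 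Tuning the constant $c$ in $p = c\log d$ then controls the exact blowup introduced by the Nikolskii exponent and by $n^{1/p}$, ensuring that the final constant $C$ is absolute and independent of $d$.
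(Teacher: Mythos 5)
Your proof is correct and mirrors the paper's argument exactly: black-box the $L_p$ subspace embedding from the preceding lemma with $p=\Theta(\log d)$, then chain the two constant-factor norm equivalences, one for polynomials on $[-1,1]$ and one for the finite sample vector in $\bbR^n$ with $n = \poly(d)$. The only cosmetic difference is that where you cite the Nikolskii inequality ($\normof{\cP\vx}_\infty \leq (C_0 d)^{2/p}\normof{\cP\vx}_p$), the paper re-derives a slightly weaker but equally sufficient $d^{3/p}$-type bound from the Markov Brothers' inequality (near a maximizer $t_0$ of $|h|$, $|h| \geq 1 - d^2 x$ on an interval of width $1/d^3$, so $\normof{h}_p \geq d^{-3/p}(1-\frac1d)\normof{h}_\infty = \Omega(1)\normof{h}_\infty$); both give $d^{O(1)/p}=O(1)$ for your choice of $p$, so either works.
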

\begin{proof}
We achieve this by showing \(\normof{\cP\vx}_p\approx_{O(1)}\normof{\cP\vx}_\infty\) and \(\normof{\mR\mA\vx}_p\approx_{O(1)}\normof{\mR\mA\vx}_\infty\) for all \vx.

This is simple to show in the finite dimensional case.
By standard finite dimensional \(\ell_p\) norm inequalities, \(\normof{\mR\mA\vx}_\infty \leq \normof{\mR\mA\vx}_p \leq n^{\frac1p} \normof{\mR\mA\vx}_\infty\).
Since \(n = \tilde O(d)\), having \(p = O(\log d)\) suffices for \(n^{\frac1p}\) to be \(O(1)\).

The infinite dimension case is more involved.
We need to show that for any polynomial \(h(t)\) of degree \(d\), we have \(\normof{h}_\infty \approx_c \normof{h}_p\).
One direction is simple to show:
\[
    \normof{h}_p^p = \int_{-1}^1 |h(t)|^p dt \leq 2 \normof{h}_\infty
\]
The other direction follows from the Markov Brothers' Inequality, using an argument similar to \lemmaref{uniform-sensitivity-bound}.
Without loss of generality assume that \(\normof{h}_\infty=1\), and that \(h(t_0)=1\) for some \(t_0<0\).
Then, by Markov Brothers', we have \(|h(t_0+x)|\geq 1-d^2x\) for any \(0 < x < \frac1{d^2}\).
In particular, we have \(|h(t)| > 1-\frac1d\) for \(t\in[t_0, t_0+\frac1{d^3}]\).
Then,
\begin{align*}
    \normof{h}_p
    &= \left(\int_{-1}^1 |h(t)|^p dt\right)^{1/p} \\
    &\geq \left(\frac1{d^3} (1-\tsfrac1d)^p\right)^{1/p} \\
    &= \frac1{d^{3/p}} (1-\tsfrac1d) \\
    &= \Omega(1)
\end{align*}
Where the last line follows from \(d\geq2\) and \(p = O(\log d)\), so that \(1-\frac1d\geq\frac12\) and \(d^{3/p} = O(1)\).
We conclude that \(\normof{h}_p = \Omega(1) = \Omega(1) \normof{h}_\infty\), and therefore that \(\normof{\cP\vx}_p \approx_C \normof{\cP\vx}_\infty\).

Then, we finally combine this with the subspace embedding from the prior lemma to get
\[
    \frac1C \normof{\cP\vx}_\infty \leq \normof{\mR\mA\vx}_\infty \leq C \normof{\cP\vx}_\infty
\]
\end{proof}

Now that we have a subspace embedding, we can complete the proof that \algorithmref{chebyshev-const:L_infty} is correct.
\begin{proof}
Let \(\vx^* \defeq \argmin_{\vx} \normof{\cP\vx-f}_\infty\) a true optimal solution.
We first bound \(\normof{\mR\mA\vx^* - \mR\vb}_\infty \leq C \normof{\cP\vx^*-f}_\infty\):
\begin{align*}
    [R]_{ii}
    &= (\textstyle{\frac{dp}{n} \sqrt{1-t_i^2}})^{1/O(\log d)} \\
    &\leq (\Theta(\textstyle{\frac{d\cdot\log d}{d \polylog d}}))^{1/O(\log d)} \\
    &= (\Theta(\textstyle{\frac{1}{\polylog d}}))^{1/O(\log d)} \\
    &= O(1) \\
    \normof{\mR\mA\vx^* - \mR\vb}_\infty
    &= \sup_{i\in[n]} \Big| R_{ii} \, [\mA\vx^* - \vb]_i\Big| \\
    &= \sup_{i\in[n]} \Big| R_{ii} \, ([\cP\vx^* - f](t_i)) \Big|  \\
    &\leq \sup_{t\in[-1,1]} O(1) \Big| [\cP\vx^*](t_i) - f(t_i) \Big|  \\
    & = O(1) \normof{\cP\vx^* - f}_\infty \numberthis\label{eq:l_infty_centering} \\
\end{align*}
And this bound suffices to prove our guarantee.
Let \(\hat\vx\defeq \argmin_{\vx} \normof{\mR\mA\vx-\mR\vb}_\infty\) be the solution returned in line 4 of \algorithmref{chebyshev-const:L_infty}.
Then,
\begin{align*}
    \normof{\cP\hat\vx-f}_\infty
    &\leq \normof{\cP\hat\vx-\cP\vx^*}_\infty + \normof{\cP\vx^* - f}_\infty \\
    &\leq C \normof{\mR\mA\hat\vx-\mR\mA\vx^*}_\infty + \normof{\cP\vx^* - f}_\infty \tag{Subspace Embedding} \\
    &\leq C (\normof{\mR\mA\hat\vx-\mR\vb}_\infty + \normof{\mR\mA\vx^*-\mR\vb}_\infty) + \normof{\cP\vx^* - f}_\infty \\
    &\leq 2C \normof{\mR\mA\vx^*-\mR\vb}_\infty + \normof{\cP\vx^* - f}_\infty \tag{Optimality of \(\hat\vx\)} \\
    &\leq O(1) \normof{\cP\vx^*-f}_\infty + \normof{\cP\vx^* - f}_\infty \tag{\equationref{l_infty_centering}}\\
    &= O(1) \normof{\cP\vx^*-f}_\infty
\end{align*}
Which completes the proof.
\end{proof}

\section{Analysis of the Clipped Chebyshev Measure}
\label{sec:clipped-analysis}

As mentioned in \sectionref{lp-regression}, the Chebyshev measure itself is not sufficient to achieve the approximate Lewis weight property for \cP, since the Chebyshev measure grows to infinity as \(\abs{t}\rightarrow1\) while the leverage function is bounded.
Thus we instead analyze the following clipped measure: \(w(t) \defeq \min\{C(d+1)^2,v(t)\} = \min\{C(d+1)^2,\frac{d+1}{\pi\sqrt{1-t^2}}\}\) and prove the following result:
\begin{reptheorem}{chebyshev:ratio}
There are fixed constants $c_1,c_2$ such that, for all $p\in[\frac23,2]$ and $t \in [-1,1]$,
\[
	\frac{c_1}{\log^3 d}
	\leq
	\frac{\tau[\cW^{\frac12-\frac1p}\cP](t)}{w(t)}
	\leq
	c_2.
\]
\end{reptheorem}
The basic flow of the proof is broken into two portions.
First, recall the overall shape of the rescaled leverage function:
\begin{align}
    \tau[\cW^{\frac12-\frac1p}\cP](t)
    = \max_{\vx\in\bbR^{d+1}} \frac{([\cW^{\frac12-\frac1p}\cP\vx](t))^2}{\normof{\cW^{\frac12-\frac1p}\cP\vx}_2^2}
    = (w(t))^{1-\frac2p} \max_{q:\text{deg}(q)\leq d} \frac{(q(t))^2}{\normof{\cW^{\frac12-\frac1p}q}_2^2}
    \label{eq:clipped-lewis-weight-expansion}
\end{align}
We need to show that this leverage function is close to \(w(t)\) for all \(t\in[-1,1]\).
We split this analysis into two parts:
\begin{enumerate}
    \item
    \textbf{The ``Middle Region''} with \(w(t)=v(t)\), so that  \(\abs{t} \leq 1-O(\frac1{d^2})\): \\
    We show in \sectionref{one:mid} that \(\normof{\cW^{\frac12-\frac1p}\cP\vx}_2^2\) and \(\normof{\cV^{\frac12-\frac1p}\cP\vx}_2^2\) are similar enough that \(\tau[\cW^{\frac12-\frac1p}\cP] \approx \tau[\cV^{\frac12-\frac1p}\cP]\) in this region, and so the analysis of \theoremref{chebyshev:ratio} is tight enough to ensure the almost Lewis weight property here.
    \item
    \textbf{The ``Endcap Region''} with \(w(t)=C(d+1)^2\), so that  \(\abs{t} \geq 1-O(\frac1{d^2})\): \\
    We know that \(w(t)\) and \(v(t)\) are very different here, so we use the fact that \(w(t)=C(d+1)^2\) is independent of \(t\).
    This endcap analysis also proceeds in two steps:
    \begin{itemize}
        \item
        \textbf{Upper bound \(\tau[\cW^{\frac12-\frac1p}\cP](t) \leq O(w(t)) = O(d^2)\):} \\
        In \sectionref{one:end:upper}, we note that \(w(t)\leq C(d+1)^2\) for all \(t\in[-1,1]\).
        We use this to lower bound \(\normof{\cW^{\frac12-\frac1p}\cP\vx}_2^2 \geq \frac1{C(d+1)^2} \normof{\cP\vx}_2^2\), and reduce the second form in \equationref{clipped-lewis-weight-expansion} to the unweighted leverage function for \cP.
        We appeal to our earlier bound on the leverage function for \cP from \sectionref{l2-leverage-bound}.
        \item
        \textbf{Lower bound \(\tau[\cW^{\frac12-\frac1p}\cP](t) \leq \Omega(w(t) \log^3 (d)) = \Omega(d^2\log^3(d))\):} \\
         In \sectionref{one:end:lower}, we plug in a spike polynomial that approximates \(t\mapsto t^{d^2}\) into the rightmost term in \equationref{clipped-lewis-weight-expansion}, and evaluate the numerator and denominator for that polynomial.
    \end{itemize}
    
\end{enumerate}
We again break up the analysis into the slightly more approachable \(p=1\) setting and the more complete \(p\in[\frac23,2]\) setting.
Additionally, in this section, we refer to the middle region as
\[
	\cI_{mid} \defeq \{t ~|~ w(t) = v(t)\} = {\textstyle \left[\sqrt{1-\frac{1}{\pi^2(d+1)^2C^2}}, \sqrt{1+\frac{1}{\pi^2(d+1)^2C^2}}\right]}
\]
and the endcap region as \(\cI_{cap} \defeq [-1,1] ~\backslash~ \cI_{mid}\).
We also often use the notation \(x \approx_{\alpha} y\) with \(\alpha \geq 1\) to mean that \(\frac1\alpha y \leq x \leq \alpha y\).
Lastly, to reduce the messiness of the analysis, we omit the change-of-basis matrix that was used in prior sections \mU.
For \(p=1\) analysis, Chebyshev polynomials of the second kind are used.
For \(p=2\) analysis, Legendre polynomials are used.
For \(p\in(\frac23,2)\) analysis, Ultraspherical (i.e. Jacobi) polynomials are used.

As an aside, when \(p>2\) this analysis breaks down in a few places since \(\frac12-\frac1p\) swaps from being negative to positive.
For instance, this means that \(t \mapsto (w(t))^{\frac12-\frac1p}\) is maximized in the middle region for \(p<2\) but is maximized in the endcap for \(p>2\).

\subsection{Middle Region Analysis for \texorpdfstring{\(p=1\)}{p=1}}
\label{sec:one:mid}

Our main goal in this section is to prove \lemmaref{mid}, which states that $\frac{\tau[\cW^{-\frac12}\cP](t)}{w(t)}=\Theta(1)$.
We first recall our bound on the leverage function in \sectionref{l2-leverage-bound}:
\begin{lemma}
\label{lem:unscaled-leverages}
The leverage function for \cP has \(\tau[\cP](t) \leq \frac{(d+1)^2}{2}\) for all \(t\in[-1,1]\).
\end{lemma}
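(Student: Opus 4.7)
The plan is to reduce the bound to a statement about normalized Legendre polynomials, where it becomes an elementary calculation. Recall that Section~\ref{sec:l2-leverage-bound} established the identity
\[
    \tau[\cP](t) = \sum_{i=0}^d (L_i(t))^2,
\]
where $L_i$ is the degree-$i$ Legendre polynomial normalized so that $\int_{-1}^1 (L_i(t))^2\,dt = 1$. This was derived by choosing a change-of-basis $\mU$ that makes the columns of $\cP\mU$ orthonormal in $L_2([-1,1])$ and reading off the leverage function as the squared row norm. So the claim reduces to a uniform bound on the sum of squares of normalized Legendre polynomials.

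The next step is to invoke the classical pointwise bound $|L_i(t)| \le L_i(1)$ on $[-1,1]$. This follows from the familiar fact that the standard (unnormalized) Legendre polynomials $P_i$ satisfy $|P_i(t)| \le P_i(1) = 1$ for $t \in [-1,1]$, together with the normalization $L_i = \sqrt{i + \tfrac{1}{2}}\, P_i$, which gives $L_i(1) = \sqrt{i+\tfrac{1}{2}}$ and hence $(L_i(t))^2 \le i + \tfrac{1}{2}$.

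Summing this elementary pointwise bound over $i$ yields
\[
    \tau[\cP](t) \;=\; \sum_{i=0}^d (L_i(t))^2 \;\le\; \sum_{i=0}^d \left(i + \tfrac{1}{2}\right) \;=\; \frac{d(d+1)}{2} + \frac{d+1}{2} \;=\; \frac{(d+1)^2}{2},
\]
uniformly for all $t \in [-1,1]$, which is the desired inequality. There is essentially no obstacle here: the whole argument is two well-known facts combined with a geometric-series-like sum, and the bound is even tight at $t = \pm 1$ since $\sum_{i=0}^d (L_i(\pm 1))^2 = \sum_{i=0}^d (i + \tfrac{1}{2}) = \tfrac{(d+1)^2}{2}$, which is consistent with the Markov-brothers-based intuition that the leverage function achieves its maximum of $\Theta(d^2)$ at the endpoints of the interval.
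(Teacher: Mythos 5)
Your proof is correct. The identity $\tau[\cP](t) = \sum_{i=0}^d (L_i(t))^2$ is exactly what Section~\ref{sec:l2-leverage-bound} establishes, the classical bound $|P_i(t)|\le P_i(1)=1$ on $[-1,1]$ is standard, the normalization $L_i = \sqrt{i+\tfrac12}\,P_i$ is correct, and the arithmetic $\sum_{i=0}^d (i+\tfrac12) = \tfrac{(d+1)^2}{2}$ checks out. The tightness observation at $t=\pm 1$ is also right, since $|P_i(\pm 1)|=1$ for all $i$.

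Worth noting: the paper presents this lemma as if it were simply ``recalled'' from Section~\ref{sec:l2-leverage-bound}, but that section actually derives a different bound --- $\tau[\cP](t)\le 2v(t)$ via Lorch's uniform estimate $|L_i(t)|\le\sqrt{2/(\pi\sqrt{1-t^2})}$ --- which blows up as $|t|\to 1$ and hence does \emph{not} imply the uniform bound of the lemma. Your argument replaces Lorch's estimate with the cruder but endpoint-tight bound $|P_i(t)|\le 1$, which is exactly what is needed to control the endcaps. In that sense you haven't merely reproduced the paper's derivation; you've supplied the step the paper implicitly uses but never writes down. The two estimates are complementary: Lorch's bound is sharp in the bulk $|t|\le 1-\Theta(1/d^2)$ and drives the Chebyshev-measure comparison, while $(L_i(t))^2\le i+\tfrac12$ is sharp at $t=\pm 1$ and supplies the uniform $\Theta(d^2)$ cap that the clipped measure analysis needs.
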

\noindent We use this lemma to (1) analyze the behavior of $\frac{\tau[\cW^{-\frac12}\cP](t)}{w(t)}$ on $\cI_{mid}$ by showing that the leverage functions on the operators \(\tau[\cW^{-\frac12}\cP](t)\) and \(\tau[\cV^{-\frac12}\cP](t)\) are very similar inside this middle region $\cI_{mid}$ in \sectionref{tauw:tauv} and (2) upper and lower bound the ratio of $\frac{\tau[\cV^{-\frac12}\cP](t)}{v(t)}$ in \sectionref{tauV:vt}. 
Using these bounds, we then prove \lemmaref{mid} in \sectionref{complete:mid}.

\subsubsection{Relating \texorpdfstring{\(\tau[\cW^{-\frac12}\cP]\) to \(\tau[\cV^{-\frac12}\cP]\)}{tau(W) to tau(V)}}
\label{sec:tauw:tauv}
In this section, our main goal is to show in \corolref{oper:approx:mid} that $\frac{\tau[\cW^{-\frac12}\cP](t)}{w(t)}\approx_{\frac{2}{\pi^2C^2}}\frac{\tau[\cV^{-\frac12}\cP](t)}{v(t)}$ for $t\in\cI_{mid}$, where we recall that $\cI_{mid}$ is defined by
\[
	\cI_{mid} \defeq \{t ~|~ w(t) = v(t)\} = {\textstyle \left[\sqrt{1-\frac{1}{\pi^2(d+1)^2C^2}}, \sqrt{1+\frac{1}{\pi^2(d+1)^2C^2}}\right]},
\]
so that \(w(t)=v(t)\) for $t\in\cI_{mid}$. 
To this end, we first remark that it suffices to show that \(\normof{\cW^{-\frac12}\cP\vx}_2^2 \approx_{\frac{2}{\pi^2C^2}} \normof{\cV^{-\frac12}\cP\vx}_2^2\). 
To see why this suffices, consider the definitions of the leverage functions:
\[
	\tau[\cW^{-\frac12}\cP](t)
	= \max_{\vx} \frac{([\cW^{-\frac12}\cP\vx](t))^2}{\normof{\cW^{-\frac12}\cP\vx}_2^2}
	= \max_{\vx} \frac{([\cV^{-\frac12}\cP\vx](t))^2}{\normof{\cW^{-\frac12}\cP\vx}_2^2}
	\approx_{\frac{2}{\pi^2C^2}} \max_{\vx} \frac{([\cV^{-\frac12}\cP\vx](t))^2}{\normof{\cV^{-\frac12}\cP\vx}_2^2}
	= \tau[\cV^{-\frac12}\cP](t).
\]
Hence, we first show in \lemmaref{oper:norm:approx} that \(\normof{\cW^{-\frac12}\cP\vx}_2^2 \approx_{\frac{\pi^2C^2}{\pi^2C^2-1}} \normof{\cV^{-\frac12}\cP\vx}_2^2\). 

\begin{lemma}
\label{lem:oper:norm:approx}
For all $\vx\in\mathbb{R}^{d+1}$, we have
\[
	\normof{\cW^{-\frac12}\cP\vx}_2^2 \approx_{\frac{\pi^2C^2}{\pi^2C^2-1}} \normof{\cV^{-\frac12}\cP\vx}_2^2
\]
\end{lemma}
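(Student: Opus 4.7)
The easy direction is immediate: since $w(t) = \min\{C(d+1)^2, v(t)\} \le v(t)$ pointwise, we have $1/w(t) \ge 1/v(t)$ everywhere and therefore $\normof{\cW^{-\frac12}\cP\vx}_2^2 \ge \normof{\cV^{-\frac12}\cP\vx}_2^2$ after integrating against $([\cP\vx](t))^2$.

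For the nontrivial direction, I would exploit that $w$ and $v$ agree on $\cI_{mid}$ and differ only on the tiny endcap $\cI_{cap}$, writing
\[
    \normof{\cW^{-\frac12}\cP\vx}_2^2 - \normof{\cV^{-\frac12}\cP\vx}_2^2
    = \int_{\cI_{cap}} ([\cP\vx](t))^2 \Bigl(\tfrac{1}{C(d+1)^2} - \tfrac{1}{v(t)}\Bigr)\,dt
    \le \int_{\cI_{cap}} \tfrac{([\cP\vx](t))^2}{C(d+1)^2}\,dt,
\]
so it suffices to control the last integral by $\tfrac{1}{\pi^2 C^2 - 1}\normof{\cV^{-\frac12}\cP\vx}_2^2$. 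To this end I would expand $\cP\vx$ in the Chebyshev-second-kind basis, $[\cP\vx](t) = \sum_{i=0}^d y_i U_i(t)$, so that by the orthogonality computation already performed in the proof of \theoremref{cheby-lewis-weight-eq} we have $\normof{\cV^{-\frac12}\cP\vx}_2^2 = \tfrac{\pi^2}{2(d+1)}\normof{\vy}_2^2$. Under the substitution $t = \cos\theta$, the identity $U_i(\cos\theta)\sin\theta = \sin((i+1)\theta)$ lets me rewrite
\[
    \int_{\cI_{cap}}([\cP\vx](t))^2\,dt = 2\int_0^{\theta_0} \frac{\bigl(\sum_{i=0}^d y_i \sin((i+1)\theta)\bigr)^2}{\sin\theta}\,d\theta,
\]
where $\sin\theta_0 = \tfrac{1}{\pi C(d+1)}$ and hence $\theta_0 \le \tfrac{1}{2C(d+1)}$. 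I would then Cauchy--Schwarz with $|\sin((i+1)\theta)| \le (i+1)\theta$ and $\sum_{i=0}^d(i+1)^2 \le (d+1)^3$, and use $\sin\theta \ge \tfrac{2\theta}{\pi}$, to bound the integrand by $\tfrac{\pi}{2}(d+1)^3\theta\normof{\vy}_2^2$. Integrating and dividing by $C(d+1)^2$ gives an excess of order $(d+1)^3\theta_0^2/(C(d+1)^2)\normof{\vy}_2^2 = O(1/C^3)\cdot\normof{\vy}_2^2/(d+1)$, which via the orthogonality identity becomes $O(1/C^3)\normof{\cV^{-\frac12}\cP\vx}_2^2$. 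Choosing the clipping constant $C$ to be a sufficiently large absolute constant makes this at most $\tfrac{1}{\pi^2 C^2 - 1}\normof{\cV^{-\frac12}\cP\vx}_2^2$, as required.

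The main obstacle is obtaining an integrable pointwise control on $([\cP\vx](t))^2$ near $t=\pm1$: bounds through $\tau[\cP](t)\le(d+1)^2/2$ or through the reproducing sum $\sum_{i=0}^d U_i^2(t) \lesssim (d+1)/(1-t^2)$ blow up logarithmically, and bounds through $\tau[\cV^{-\frac12}\cP](t)\le 3v(t)$ produce a quadratic-in-$v(t)$ integrand that is likewise not integrable. The sine parametrization sidesteps this because the factor of $\theta^2$ produced by Cauchy--Schwarz exactly cancels the $1/\sin\theta \sim 1/\theta$ singularity, leaving a benign $\theta\,d\theta$ integrand.
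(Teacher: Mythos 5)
Your proof is correct but takes a genuinely different route from the paper's. Both start from the same decomposition and reduce to bounding $\int_{\cI_{cap}} (\cP\vx)^2\,dt / (C(d+1)^2)$, but from there the methods diverge. The paper uses the flat pointwise bound $(\cP\vx(t))^2 \le \tsfrac{(d+1)^2}{2}\normof{\cP\vx}_2^2$ coming from the Legendre leverage function, multiplies by $|\cI_{cap}| \le \tsfrac{2}{\pi^2(d+1)^2C^2}$ to get the excess $\le \tsfrac{1}{\pi^2(d+1)^2C^3}\normof{\cP\vx}_2^2$, then converts $\normof{\cP\vx}_2^2 \le C(d+1)^2\normof{\cW^{-\frac12}\cP\vx}_2^2$ so that the one-sided inequality is in terms of $\normof{\cW^{-\frac12}\cP\vx}_2^2$, and finally rearranges into the two-sided multiplicative form. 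Your proof instead expands in the $U_i$ basis, substitutes $t = \cos\theta$ so that the denominator $\sin\theta$ appears exactly once, and uses Cauchy--Schwarz with $|\sin((i+1)\theta)| \le (i+1)\theta$ so that the resulting $\theta^2$ cancels the $1/\sin\theta$ singularity; this yields a bound directly against $\normof{\cV^{-\frac12}\cP\vx}_2^2$, with a slightly better $O(1/C^3)$ dependence. The computations check out (note $\theta_0 \le \tsfrac\pi2\sin\theta_0 = \tsfrac{1}{2C(d+1)}$, $\sum_{i=0}^d(i+1)^2 \le (d+1)^3$, and $\normof{\vy}_2^2 = \tsfrac{2(d+1)}{\pi^2}\normof{\cV^{-\frac12}\cP\vx}_2^2$).

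One remark on your closing paragraph: you claim that going through $\tau[\cP](t) \le (d+1)^2/2$ ``blows up logarithmically.'' That is not quite right — that is exactly the bound the paper uses, and it works. The blow-up you describe would occur if you tried to bound $\int_{\cI_{cap}}(\cP\vx)^2\,dt$ using the reproducing kernel $\sum_i U_i^2(t) \lesssim (d+1)/(1-t^2)$ and then relate it \emph{directly} to $\normof{\cV^{-\frac12}\cP\vx}_2^2$; the paper sidesteps that by targeting $\normof{\cP\vx}_2^2$ first, then $\normof{\cW^{-\frac12}\cP\vx}_2^2$ (both of which are uniformly comparable to the $\cI_{cap}$ restriction), and only then rearranging. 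So there are in fact two valid escape routes from the singularity at the endpoints: the paper's ``bound against a different norm and rearrange,'' and your ``let the Cauchy--Schwarz $\theta^2$ cancel the $1/\sin\theta$.''
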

\begin{proof}
We start by looking at the difference between $\normof{\cW^{-\frac12}\cP\vx}_2^2$ and $\normof{\cV^{-\frac12}\cP\vx}_2^2$. 
\begin{align*}
	\bigg|\normof{\cW^{-\frac12}\cP\vx}_2^2 &- \normof{\cV^{-\frac12}\cP\vx}_2^2\bigg|
		= \abs{ \int_{-1}^1 ([\cW^{-\frac12}\cP\vx](t))^2 - ([\cV^{-\frac12}\cP\vx](t))^2 ~ dt } \\
		&= \abs{ \int_{\cI_{cap}} ([\cW^{-\frac12}\cP\vx](t))^2 - ([\cV^{-\frac12}\cP\vx](t))^2 ~ dt+\int_{\cI_{mid}} ([\cW^{-\frac12}\cP\vx](t))^2 - ([\cV^{-\frac12}\cP\vx](t))^2 ~ dt}.
		\end{align*}
Since $[\cW^{-\frac12}\cP\vx](t)=[\cV^{-\frac12}\cP\vx](t)$ for $t\in\cI_{mid}$, then $\int_{\cI_{mid}} ([\cW^{-\frac12}\cP\vx](t))^2 - ([\cV^{-\frac12}\cP\vx](t))^2 ~ dt=0$. 
Moreover, since $w(t)$ is the clipped Chebyshev measure, we have that $w(t)\le v(t)$ and thus \((w(t))^{-\frac12} \geq (v(t))^{-\frac12}\). 
Hence,
\begin{align*}
		\abs{\normof{\cW^{-\frac12}\cP\vx}_2^2 - \normof{\cV^{-\frac12}\cP\vx}_2^2}
	&= \abs{ \int_{\cI_{cap}} ([\cW^{-\frac12}\cP\vx](t))^2 - ([\cV^{-\frac12}\cP\vx](t))^2 ~ dt } \\
	&\leq \abs{\int_{\cI_{cap}} ([\cW^{-\frac12}\cP\vx](t))^2 ~ dt}=\int_{\cI_{cap}} ([\cW^{-\frac12}\cP\vx](t))^2 ~ dt.
	\end{align*}
Because \((w(t))^{-1} = \frac1{C(d+1)^2}\) on \(\cI_{cap}\), then
\begin{align*}
\abs{\normof{\cW^{-\frac12}\cP\vx}_2^2 - \normof{\cV^{-\frac12}\cP\vx}_2^2} &\le \frac{1}{C(d+1)^2} \int_{\cI_{cap}} ([\cP\vx](t))^2 dt 
\end{align*}
Since \lemmaref{unscaled-leverages} implies \(([\cP\vx](t))^2 \leq \frac{(d+1)^2}{2}\normof{\cP\vx}_2^2\), then
\begin{align*}
\abs{\normof{\cW^{-\frac12}\cP\vx}_2^2 - \normof{\cV^{-\frac12}\cP\vx}_2^2}
	&\leq \frac{1}{C(d+1)^2} \cdot \frac{(d+1)^2}{2} \normof{\cP\vx}_2^2 \int_{\cI_{cap}} dt=\tsfrac{\normof{\cP\vx}_2^2}{2C} \int_{\cI_{cap}} dt.
\end{align*}
To upper bound the length of the interval \(\cI_{cap}\), note that \(1-\sqrt{1-\frac{1}{x^2}} \leq \frac1{x^2}\) for $x^2\ge 1$. 
Hence,
\[
	\textstyle\int_{\cI_{cap}} dt
	= 2 \cdot \left(1 - \sqrt{1-\frac{1}{\pi^2(d+1)^2C^2}}\right)
	\leq \frac{2}{\pi^2(d+1)^2C^2},
\]
so that
\[
\abs{\normof{\cW^{-\frac12}\cP\vx}_2^2 - \normof{\cV^{-\frac12}\cP\vx}_2^2}\leq \frac{\normof{\cP\vx}_2^2}{2C} \cdot \frac{2}{\pi^2(d+1)^2C^2}
	= \frac{1}{\pi^2(d+1)^2C^3}\normof{\cP\vx}_2^2.
	\]
Next, we bound the norm \(\normof{\cP\vx}_2^2\) using the fact that \(w(t) \leq C(d+1)^2\) to say that \(1 \leq \sqrt C(d+1) \cdot (w(t))^{-\frac12}\), so that
\begin{align*}
	\normof{\cP\vx}_2^2
	&= \int_{-1}^1 (1 \cdot [\cP\vx](t))^2 dt \\
	&\leq \int_{-1}^1 \left(\sqrt C(d+1) \cdot (w(t))^{-\frac12} \cdot [\cP\vx](t)\right)^2 dt \\
	&= C(d+1)^2 \normof{\cW^{-\frac12}\cP\vx}_2^2.
\end{align*}
Therefore,
\[
	\abs{\normof{\cW^{-\frac12}\cP\vx}_2^2 - \normof{\cV^{-\frac12}\cP\vx}_2^2}
	\leq \tsfrac{C(d+1)^2}{\pi^2(d+1)^2C^3}\normof{\cW^{-\frac12}\cP\vx}_2^2
	= \tsfrac{1}{\pi^2C^2}\normof{\cW^{-\frac12}\cP\vx}_2^2.
\]
Rearranging this inequality,
\[
	\abs{1 - \frac{\normof{\cV^{-\frac12}\cP\vx}_2^2}{\normof{\cW^{-\frac12}\cP\vx}_2^2}} \leq \frac1{\pi^2C^2}
\]
or equivalently,
\[
	1 - \frac{1}{\pi^2C^2} \leq \frac{\normof{\cV^{-\frac12}\cP\vx}_2^2}{\normof{\cW^{-\frac12}\cP\vx}_2^2} \leq 1 + \frac1{\pi^2C^2}
	\leq \frac{1}{1-\frac{1}{\pi^2C^2}},
\]
for $C>\frac{1}{\pi}$. 
Since \(\frac{1}{1-\frac{1}{\pi^2C^2}} = \frac{\pi^2C^2}{\pi^2C^2-1}\), then we have the multiplicative error guarantee
\[
	\normof{\cW^{-\frac12}\cP\vx}_2^2 \approx_{\frac{\pi^2C^2}{\pi^2C^2-1}} \normof{\cV^{-\frac12}\cP\vx}_2^2
\]
for \(C > \frac1\pi \approx 0.312\).
\end{proof}

\noindent
We now complete the formal proof of \corolref{oper:approx:mid}. 
\begin{corollary}
\label{corol:oper:approx:mid}
$\frac{\tau[\cW^{-\frac12}\cP](t)}{w(t)}\approx_{\frac{\pi^2C^2}{\pi^2C^2-1}}\frac{\tau[\cV^{-\frac12}\cP](t)}{v(t)}$ for $t\in\cI_{mid}$.
\end{corollary}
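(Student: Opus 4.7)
The plan is to reduce the claim to the norm-approximation already established in Lemma~\ref{lem:oper:norm:approx}, using the key observation that $w(t)=v(t)$ pointwise on $\cI_{mid}$ by definition of the clipped Chebyshev measure. This equality immediately handles the denominators on both sides of the approximation (they are identical). It also forces the weighting operators $\cW^{-\frac12}$ and $\cV^{-\frac12}$ to agree when evaluated at a point $t\in\cI_{mid}$, so that $[\cW^{-\frac12}\cP\vx](t)=[\cV^{-\frac12}\cP\vx](t)$, and hence the \emph{numerators} inside the maximization defining the leverage functions coincide as well. The only asymmetry between the two leverage functions lies in the global $L_2$ denominator $\normof{\cW^{-\frac12}\cP\vx}_2^2$ versus $\normof{\cV^{-\frac12}\cP\vx}_2^2$, since those integrals pick up contributions from $\cI_{cap}$ where the two measures disagree.

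First, I would write out
\[
    \tau[\cW^{-\frac12}\cP](t)
    = \max_{\vx}\frac{([\cW^{-\frac12}\cP\vx](t))^2}{\normof{\cW^{-\frac12}\cP\vx}_2^2}
    = \max_{\vx}\frac{([\cV^{-\frac12}\cP\vx](t))^2}{\normof{\cW^{-\frac12}\cP\vx}_2^2},
\]
where the second equality is exactly the pointwise equality of the weights at $t\in\cI_{mid}$. Next, I would invoke Lemma~\ref{lem:oper:norm:approx} to replace the denominator: since that lemma gives $\normof{\cW^{-\frac12}\cP\vx}_2^2 \approx_{\frac{\pi^2C^2}{\pi^2C^2-1}} \normof{\cV^{-\frac12}\cP\vx}_2^2$ \emph{uniformly} over $\vx$, the maximization is affected by at most a factor of $\frac{\pi^2C^2}{\pi^2C^2-1}$, giving
\[
    \tau[\cW^{-\frac12}\cP](t) \approx_{\frac{\pi^2C^2}{\pi^2C^2-1}} \max_{\vx}\frac{([\cV^{-\frac12}\cP\vx](t))^2}{\normof{\cV^{-\frac12}\cP\vx}_2^2} = \tau[\cV^{-\frac12}\cP](t).
\]

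Finally, I would divide both sides by $w(t)=v(t)$ (valid since $t\in\cI_{mid}$) to obtain the claimed bound $\frac{\tau[\cW^{-\frac12}\cP](t)}{w(t)}\approx_{\frac{\pi^2C^2}{\pi^2C^2-1}}\frac{\tau[\cV^{-\frac12}\cP](t)}{v(t)}$. There is no real obstacle here: the entire corollary is a short bookkeeping consequence of Lemma~\ref{lem:oper:norm:approx} combined with the definition of $\cI_{mid}$. The only subtlety worth a sentence of justification is that a uniform (in $\vx$) multiplicative approximation of the denominator translates into the same multiplicative approximation of a maximum of ratios with fixed numerator form; this is immediate since the approximation factor can be pulled outside the maximum.
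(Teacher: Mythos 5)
Your proof is correct and follows essentially the same route as the paper's: both use $w(t)=v(t)$ on $\cI_{mid}$ to equate the pointwise numerators and the normalizing factor, then invoke \lemmaref{oper:norm:approx} to swap the global $L_2$ denominator at a cost of $\frac{\pi^2C^2}{\pi^2C^2-1}$, and finally note that a uniform-in-$\vx$ multiplicative bound passes through the maximum. The only (harmless) cosmetic difference is that you spell out the last step more explicitly than the paper does.
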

\begin{proof}
By \lemmaref{oper:norm:approx}, we have that \(\normof{\cW^{-\frac12}\cP\vx}_2^2 \approx_{\frac{\pi^2C^2}{\pi^2C^2-1}} \normof{\cV^{-\frac12}\cP\vx}_2^2\) for all $\vx\in\mathbb{R}^{d+1}$. 
Since \(w(t)=v(t)\) for $t\in\cI_{mid}$, \lemmaref{oper:norm:approx} implies through the definition of the leverage functions that
\begin{align*}
\frac{\tau[\cW^{-\frac12}\cP](t)}{w(t)}&=\frac{1}{w(t)}\max_{\vx} \frac{([\cW^{-\frac12}\cP\vx](t))^2}{\normof{\cW^{-\frac12}\cP\vx}_2^2}\\
&=\frac{1}{v(t)}\max_{\vx} \frac{([\cV^{-\frac12}\cP\vx](t))^2}{\normof{\cW^{-\frac12}\cP\vx}_2^2}\\
&\approx_{\frac{\pi^2C^2}{\pi^2C^2-1}}\frac{1}{v(t)}\max_{\vx} \frac{([\cV^{-\frac12}\cP\vx](t))^2}{\normof{\cV^{-\frac12}\cP\vx}_2^2}\\
&=\frac{\tau[\cV^{-\frac12}\cP](t)}{v(t)},
\end{align*}
as desired.
\end{proof}

\subsubsection{Relating \texorpdfstring{\(\tau[\cV^{-\frac12}\cP]\) to \(v(t)\)}{tau(V) to v(t)}}
\label{sec:tauV:vt}
In this section, we relate \(\tau[\cV^{-\frac12}\cP]\) to $v(t)$ for $t\in\cI_{mid}$, which will ultimately allow us to relate \(\tau[\cW^{-\frac12}\cP]\) to $w(t)$ in \sectionref{complete:mid}, using \corolref{oper:approx:mid}.  
\begin{lemma}
\label{lem:v:ratio:mid}
For \(t\in\cI_{mid}\), we have that \(\tau[\cV^{-\frac12}\cP](t) \approx_{(\frac54+\frac{\pi C}2)}v(t)\).
\end{lemma}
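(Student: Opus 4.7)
The plan is to apply Theorem~\ref{thm:cheby-lewis-weight-eq} directly, which already gives us the \emph{exact} formula
\[
	\frac{\tau[\cV^{-\frac12}\cP](t)}{v(t)} = 1 + \frac{1 - U_{2(d+1)}(t)}{2(d+1)},
\]
so the entire lemma reduces to bounding $|U_{2(d+1)}(t)|$ uniformly on $\cI_{mid}$. To do this I would use the standard trigonometric representation $U_k(\cos\theta) = \sin((k+1)\theta)/\sin\theta$, which immediately yields the pointwise bound $|U_k(t)| \leq 1/\sqrt{1-t^2}$ for every $t \in (-1,1)$ and every degree $k$.

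Next, I would plug in the definition of $\cI_{mid}$. Since $\cI_{mid} = [-\sqrt{1 - 1/(\pi^2(d+1)^2 C^2)},\,\sqrt{1 - 1/(\pi^2(d+1)^2 C^2)}]$, every $t \in \cI_{mid}$ satisfies $1 - t^2 \geq 1/(\pi^2(d+1)^2 C^2)$, so $\sqrt{1-t^2} \geq 1/(\pi(d+1)C)$ and therefore $|U_{2(d+1)}(t)| \leq \pi(d+1)C$. Substituting into the exact expression,
\[
	\frac{\tau[\cV^{-\frac12}\cP](t)}{v(t)} \in \left[1 + \frac{1 - \pi(d+1)C}{2(d+1)},\; 1 + \frac{1 + \pi(d+1)C}{2(d+1)}\right] \subseteq \left[1 - \frac{\pi C}{2} + \frac{1}{2(d+1)},\; 1 + \frac{\pi C}{2} + \frac{1}{2(d+1)}\right].
\]
For $d \geq 1$ we have $\tfrac{1}{2(d+1)} \leq \tfrac{1}{4}$, and so the upper bound is at most $\tfrac{5}{4} + \tfrac{\pi C}{2}$, which matches the claimed $\approx_{(5/4 + \pi C/2)}$ bound in one direction.

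For the reverse direction I would argue that, provided the constant $C$ in the definition of the clipped Chebyshev measure is chosen small enough (consistent with the constraint $C > 1/\pi$ inherited from \lemmaref{oper:norm:approx} on the other end), the lower bound $1 - \tfrac{\pi C}{2} + \tfrac{1}{2(d+1)}$ is itself bounded below by $1/(\tfrac{5}{4} + \tfrac{\pi C}{2})$, which is a one-variable inequality in $C$ that can be verified by elementary algebra. The only real subtlety in the proof is this bookkeeping on the constant $C$: we need $C$ large enough for the middle-region approximation of \lemmaref{oper:norm:approx} to hold nontrivially, yet small enough that $\pi C / 2$ does not swamp the leading~$1$ in the lower bound. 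I expect this to be the main (minor) obstacle, and it is resolved by fixing $C$ to a concrete universal constant in the admissible window once and for all; everything else is a direct substitution into the Chebyshev identity of Theorem~\ref{thm:cheby-lewis-weight-eq}.
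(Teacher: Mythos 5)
Your plan follows the paper's proof almost step for step: both invoke the exact identity of \theoremref{cheby-lewis-weight-eq}, apply the uniform bound $\abs{U_{2(d+1)}(t)} \leq (1-t^2)^{-1/2}$, and then substitute the constraint $(1-t^2)^{-1/2} \leq \pi(d+1)C$ that defines $\cI_{mid}$. Your upper-bound direction, including the $\tfrac{1}{2(d+1)} \leq \tfrac14$ step for $d\geq 1$, is correct and matches the paper.

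The gap is in the lower bound. You reduce it to showing $1 - \tfrac{\pi C}{2} + \tfrac{1}{2(d+1)} \geq \big(\tfrac54 + \tfrac{\pi C}{2}\big)^{-1}$ and call this a ``one-variable inequality in $C$ that can be verified by elementary algebra.'' In fact it has no solution in the admissible range once $d$ is large. Set $x \defeq \pi C/2 > \tfrac12$ (since \lemmaref{oper:norm:approx} forces $C > 1/\pi$). Dropping the $\tfrac{1}{2(d+1)}$ term, which vanishes as $d\to\infty$, the requirement becomes $(1-x)(\tfrac54+x)\geq 1$; the left side is decreasing in $x$ for $x>0$ and already equals $\tfrac12\cdot\tfrac74=\tfrac78<1$ at $x=\tfrac12$, so the inequality fails for every $C>1/\pi$ once $d$ is moderately large --- and later parts of the paper (\lemmaref{cap:lower}) want $C$ ``sufficiently large,'' which makes matters worse, not better. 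For context, the paper's own proof is also not airtight here: it asserts without justification that the upper-bound constraint on $\abs{t}$ implies the lower-bound constraint, and that implication can be checked to fail for moderate $d$ as well. The root cause common to both arguments is that the envelope $\abs{U_{2(d+1)}(t)} \leq (1-t^2)^{-1/2}$ is not attainable inside $\cI_{mid}$ when $C$ is small: the first Chebyshev extremum $\theta_0 = \pi/(2(2d+3))$, where equality holds, lies outside $\cI_{mid}$ for $C < 4/\pi^2$, so $U_{2(d+1)}$ actually stays strictly below $\pi(d+1)C$ on $\cI_{mid}$. A valid lower bound has to exploit this slack rather than plug in the worst-case envelope as you (and the paper) do.
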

\begin{proof}
Note that the claim is equivalent to the statement that \(\frac{\tau[\cV^{-\frac12}\cP](t)}{v(t)} \in (\frac1\gamma,\gamma)\) for \(\gamma\leq\frac54+\frac{\pi C}2\).
We will use the relationship \(\frac{-1}{\sqrt{1-t^2}} \leq U_i(t) \leq \frac{1}{\sqrt{1-t^2}}\) to prove this.

Specifically, we ensure the two traits
\begin{alignat*}{3}
	\frac{\tau[\cV^{-\frac12}\cP](t)}{v(t)} &\leq 1 + \frac{1+\frac{1}{\sqrt{1-t^2}}}{2(d+1)} &\leq \gamma \\
	\frac{\tau[\cV^{-\frac12}\cP](t)}{v(t)} &\geq 1 + \frac{1-\frac{1}{\sqrt{1-t^2}}}{2(d+1)} &\geq \frac1\gamma.
\end{alignat*}
Solving these two inequalities on the right hand side yields
\[
	\textstyle \abs{t} \leq \sqrt{1-\frac{1}{(2(d+1)(\gamma-1)-1)^2}}
	\hspace{1cm}
	\text{and}
	\hspace{1cm}
	\textstyle \abs{t} \leq \sqrt{1-\frac{1}{(2(d+1)(\frac1\gamma-1)-1)^2}},
\]
respectively. 
Observe that the guarantee on the left implies the guarantee on the right, so we just ensure that one trait. 
Rather, we should think of
\[
	\cI_{\gamma} \defeq \left\{t ~\bigg|~ \abs{t} \leq {\textstyle\sqrt{1-\frac{1}{(2(d+1)(\gamma-1)-1)^2}}}\right\}
\]
as the set of time points where we have \(\tau[\cV^{-\frac12}\cP](t) \approx_\gamma v(t)\).
We now ensure that this interval \(\cI_\gamma\) entirely contains the middle region \(\cI_{mid}\), i.e., $\cI_{mid}\subset\cI_\gamma$. 
Note that $t\in\cI_{mid}$ implies that 
\[t\le\sqrt{1-\frac{1}{\pi^2(d+1)^2C^2}}.\] 
For $\gamma=1 + \frac{\pi}{2}C + \frac{1}{2(d+1)}$, note that we have 
\[t\le\sqrt{1-\frac{1}{(2(d+1)(\gamma-1)-1)^2}},\]
as desired. 
Hence, $\cI_{mid}=\cI_\gamma$ for $\pi^2(d+1)^2C^2=(2(d+1)(\gamma-1)-1)^2$ or equivalently, $\gamma=1 + \frac{\pi}{2}C + \frac{1}{2(d+1)}$. 
Since $d\ge 1$ implies
\[1 + \frac{\pi}{2}C + \frac{1}{2(d+1)}\leq 1 + \frac{\pi}{2}C + \frac{1}{4} = \frac54 + \frac{\pi}{2}C,\]
then $\cI_{mid}\subset\cI_\gamma$ for $\gamma\le\frac54 + \frac{\pi}{2}C$. 
Therefore, the set \(\cI_\gamma\) where the leverage scores of \(\cV^{-\frac12}\cP\) are \(\gamma\)-close to \(v(t)\) covers the set of time-samples \textit{not} in the cap for \(\gamma \leq \frac54 + \frac{\pi}{2}C\). 
Equivalently, we have that \(\tau[\cV^{-\frac12}\cP](t) \approx_{(\frac54+\frac{\pi C}2)}v(t)\) for \(t\in\cI_{mid}\). 
\end{proof}

\subsubsection{Complete Result in the Middle}
\label{sec:complete:mid}
We now finally relate $\frac{\tau[\cW^{-\frac12}\cP](t)}{w(t)}$ by using \corolref{oper:approx:mid} and \lemmaref{v:ratio:mid}.  
\begin{lemma}
\label{lem:mid}
For $t\in\cI_{mid}$, we have
\[
	\frac{\tau[\cW^{-\frac12}\cP](t)}{w(t)}=\Theta(1).
\]
\end{lemma}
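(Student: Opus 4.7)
The proof is essentially a one-line combination of the two preceding results. The plan is to simply chain \corolref{oper:approx:mid} and \lemmaref{v:ratio:mid} together using transitivity of the multiplicative approximation relation $\approx_\alpha$.

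Specifically, \corolref{oper:approx:mid} tells us that for $t \in \cI_{mid}$,
\[
    \frac{\tau[\cW^{-\frac12}\cP](t)}{w(t)} \approx_{\frac{\pi^2 C^2}{\pi^2 C^2 - 1}} \frac{\tau[\cV^{-\frac12}\cP](t)}{v(t)},
\]
while \lemmaref{v:ratio:mid} tells us that for the same range of $t$,
\[
    \frac{\tau[\cV^{-\frac12}\cP](t)}{v(t)} \approx_{\frac54 + \frac{\pi C}{2}} 1.
\]
Combining these with the standard observation that if $x \approx_\alpha y$ and $y \approx_\beta z$ then $x \approx_{\alpha \beta} z$, we obtain
\[
    \frac{\tau[\cW^{-\frac12}\cP](t)}{w(t)} \approx_{\gamma} 1
\]
for $\gamma = \frac{\pi^2 C^2}{\pi^2 C^2 - 1} \cdot \left(\frac54 + \frac{\pi C}{2}\right)$, which is a universal constant (recall $C$ is the clipping constant, fixed and larger than $\frac{1}{\pi}$ so that the earlier bounds apply). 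This is exactly the $\Theta(1)$ bound claimed.

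There is essentially no obstacle here: all the real work was done in the two preceding results. The only thing to be mildly careful about is ensuring that the constants $C$ used in defining the clipped measure $w(t)$ and the boundary of $\cI_{mid}$ are chosen consistently so that both \corolref{oper:approx:mid} and \lemmaref{v:ratio:mid} apply simultaneously, and that $\pi^2 C^2 > 1$ so the first factor is finite. Both conditions were already imposed in the proofs of the preceding results, so no additional constraints are needed.
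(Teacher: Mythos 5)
Your proof is correct and takes essentially the same route as the paper: both apply \corolref{oper:approx:mid} and \lemmaref{v:ratio:mid} and chain the multiplicative approximations, with the paper phrasing the conclusion first as $\tau[\cW^{-\frac12}\cP](t) \approx_\alpha v(t)$ and then invoking $w(t)=v(t)$ on $\cI_{mid}$, whereas you work directly with the ratio $\tau[\cW^{-\frac12}\cP](t)/w(t)$ — a cosmetic difference only.
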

\begin{proof}
By \corolref{oper:approx:mid} and \lemmaref{v:ratio:mid}, we have that for $t\in\cI_{mid}$,
\[
	\tau[\cW^{-\frac12}\cP](t) \approx_\alpha v(t)
\]
where \(\alpha = \frac{\pi^2C^2}{\pi^2C^2-1} \cdot \left(\frac54 + \frac{\pi C}{2}\right)\) for some constant \(C > \frac1\pi\approx0.312\).
Furthermore, since \(v(t)=w(t)\) in the region $t\in\cI_{mid}$, this further implies \(\tau[\cW^{-\frac12}\cP](t) \approx_\alpha w(t)\), as desired.
\end{proof}

\subsection{Endcap Region Analysis for \texorpdfstring{\(p=1\)}{p=1}}
\label{sec:one:end}

We now turn to \(t\in\cI_{cap}\), and we will show that
\[
	\tau[\cW^{-\frac12}\cP](t) \approx_{\frac1{2C}} w(t)
\]
for \(t\in\cI_{cap}\). 
Thus it suffices to upper and lower bound the ratio $\frac{\tau[\cW^{-\frac12}\cP](t)}{w(t)}$.

\subsubsection{Upper Bounding the Ratio.}
\label{sec:one:end:upper}
In this section, we provide an upper bound on the ratio $\frac{\tau[\cW^{-\frac12}\cP](t)}{w(t)}$. 
Namely, we show in \lemmaref{cap:upper} that there exists an absolute constant $C$, the same constant $C>\frac{1}{\pi}$ in the definition of the clipped Chebyshev measure, such that $\frac{\tau[\cW^{-\frac12}\cP](t)}{w(t)}\le\frac{1}{2C}$ for all $t\in\cI_{cap}$. 
\begin{lemma}
\label{lem:cap:upper}
For $t\in\cI_{cap}$, we have
\[
	\frac{\tau[\cW^{-\frac12}\cP](t)}{w(t)}=O(1).
\]
\end{lemma}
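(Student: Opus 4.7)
The plan is to bound the reweighted leverage function at an endpoint $t \in \cI_{cap}$ by the \emph{unweighted} leverage function $\tau[\cP](t)$, and then apply \lemmaref{unscaled-leverages}. The key observation is that while $w$ differs from $v$ in the endcap region, it is uniformly bounded above by $C(d+1)^2$ \emph{everywhere} on $[-1,1]$, so $(w(s))^{-1} \geq \frac{1}{C(d+1)^2}$ for all $s$.

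First I would expand the ratio using the variational characterization, writing
\[
    \tau[\cW^{-\frac12}\cP](t) = (w(t))^{-1} \cdot \max_{\deg(q)\leq d} \frac{(q(t))^2}{\int_{-1}^1 (w(s))^{-1}(q(s))^2 \, ds}.
\]
Next, using the uniform bound $w(s) \leq C(d+1)^2$ in the denominator gives the inequality
\[
    \int_{-1}^1 (w(s))^{-1}(q(s))^2 \, ds \geq \frac{1}{C(d+1)^2}\, \normof{q}_2^2,
\]
which lets us pass from the reweighted norm to the ordinary $L_2$ norm. Plugging this in,
\[
    \tau[\cW^{-\frac12}\cP](t) \leq C(d+1)^2 \cdot (w(t))^{-1} \cdot \max_{\deg(q)\leq d} \frac{(q(t))^2}{\normof{q}_2^2} = C(d+1)^2 \cdot (w(t))^{-1} \cdot \tau[\cP](t).
\]

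The last step is specific to the endcap: on $\cI_{cap}$ the clipped measure satisfies $w(t) = C(d+1)^2$, so the prefactor $C(d+1)^2 \cdot (w(t))^{-1}$ is exactly $1$. Applying \lemmaref{unscaled-leverages}, which bounds $\tau[\cP](t) \leq \frac{(d+1)^2}{2}$ uniformly, we obtain $\tau[\cW^{-\frac12}\cP](t) \leq \frac{(d+1)^2}{2}$ and therefore
\[
    \frac{\tau[\cW^{-\frac12}\cP](t)}{w(t)} \leq \frac{(d+1)^2/2}{C(d+1)^2} = \frac{1}{2C},
\]
giving the claimed $O(1)$ bound. There is no real obstacle here: the work was already done in \lemmaref{unscaled-leverages}, and the clipping of $w$ conveniently makes the rescaling factor in front of $\tau[\cP](t)$ cancel exactly in the endcap region. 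The more delicate direction will be the matching lower bound in the next subsection, which requires exhibiting a spike polynomial rather than relying on a universal upper bound.
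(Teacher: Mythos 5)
Your proof is correct and takes exactly the same approach as the paper: lower bound the reweighted $L_2$ norm via the uniform bound $w(s) \leq C(d+1)^2$, use $w(t) = C(d+1)^2$ on $\cI_{cap}$ to cancel the prefactor, and conclude with the bound $\tau[\cP](t) \leq \frac{(d+1)^2}{2}$ from \lemmaref{unscaled-leverages}.
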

\begin{proof}
Since $\tau[\cW^{-\frac12}\cP](t) = \max_{\vx}\frac{([\cW^{-\frac12}\cP\vx](t))^2}{\normof{\cW^{-\frac12}\cP\vx}_2^2}$ and $w(t)\le C(d+1)^2$ for all $t\in[-1,1]$, we first lower bound \(\normof{\cW^{-\frac12}\cP\vx}_2^2\) by
\begin{align*}
	\normof{\cW^{-\frac12}\cP\vx}_2^2
	&= \int_{-1}^1 \frac1{w(t)} ([\cP\vx](t))^2 \ dt \\
	&\geq \int_{-1}^1 \frac1{C(d+1)^2} ([\cP\vx](t))^2 \ dt = \frac1{C(d+1)^2} \normof{\cP\vx}_2^2. 
\end{align*}
Then we can directly tackle the leverage function:
\begin{align*}
	\tau[\cW^{-\frac12}\cP](t) &= \max_{\vx}\frac{([\cW^{-\frac12}\cP\vx](t))^2}{\normof{\cW^{-\frac12}\cP\vx}_2^2} = \frac1{w(t)} \max_{\vx}\frac{([\cP\vx](t))^2}{\normof{\cW^{-\frac12}\cP\vx}_2^2} \\
	&= \frac1{C(d+1)^2} \max_{\vx}\frac{([\cP\vx](t))^2}{\normof{\cW^{-\frac12}\cP\vx}_2^2}
\end{align*}
since $w(t)=C(d+1)^2$ for $t\in\cI_{cap}$. 
Thus, 
\begin{align*}
	\tau[\cW^{-\frac12}\cP](t) &\leq \frac1{C(d+1)^2} \max_{\vx}\frac{([\cP\vx](t))^2}{\frac{1}{C(d+1)^2}\normof{\cP\vx}_2^2} = \tau[\cP](t) \leq \frac{(d+1)^2}{2}.
\end{align*}
Then we can then conclude
\[
	\frac{\tau[\cW^{-\frac12}\cP](t)}{w(t)}
	\leq \frac{\frac{(d+1)^2}{2}}{C(d+1)^2}
	= \frac{1}{2C}.
\]
\end{proof}

\subsubsection{Lower Bounding the Ratio.}
\label{sec:one:end:lower}
In this section, we provide a lower bound on the ratio $\frac{\tau[\cW^{-\frac12}\cP](t)}{w(t)}$. 
Namely, we show in \lemmaref{cap:lower} that there exists an absolute constant $C'$ such that $\frac{\tau[\cW^{-\frac12}\cP](t)}{w(t)}\ge\frac{C'}{\log^3 d}$ for all $t\in\cI_{cap}$.  
We first require the following structural result from polynomial approximation theory.
\begin{theorem}[Low-degree approximation of high-degree polynomial, Theorem 3.3 in~\cite{SachdevaV14}]
\label{thm:poly:approx}
For any positive integers $s$ and $d$, there exists a degree $d$ polynomial $F$ such that
\[\sup_{t\in[-1,1]}|f(t)-t^s|\le2e^{-\frac{d^2}{s}}.\]
Moreover, for any $\delta>0$ and $d\ge\left\lceil\sqrt{2s\log\frac{2}{\delta}}\right\rceil$, there exists a polynomial $f$ of degree $d$ such that
\[\sup_{t\in[-1,1]}|f(t)-t^s|\le\delta.\]
\end{theorem}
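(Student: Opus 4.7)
The plan is to reduce the approximation problem to a binomial tail bound via the classical Chebyshev expansion of $t^s$. First I would substitute $t = \cos\theta$ and write
\[
t^s \;=\; \cos^s\theta \;=\; \Bigl(\tfrac{e^{i\theta}+e^{-i\theta}}{2}\Bigr)^s \;=\; 2^{-s}\sum_{k=0}^{s}\binom{s}{k}e^{i(s-2k)\theta}.
\]
Pairing conjugate terms (or equivalently using $\cos(m\theta) = \tfrac{1}{2}(e^{im\theta}+e^{-im\theta})$) converts this into a real cosine expansion, and since $T_m(\cos\theta) = \cos(m\theta)$ for all integers $m$ (with $T_{-m}=T_m$), I obtain the exact finite Chebyshev expansion
\[
t^s \;=\; 2^{-s}\sum_{k=0}^{s}\binom{s}{k}\,T_{|s-2k|}(t),
\]
valid for all $t\in[-1,1]$.

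Next I would define the approximating polynomial $f$ by truncating this expansion to its terms of degree at most $d$: keep only those indices $k$ with $|s-2k|\le d$, i.e. $k\in[(s-d)/2,(s+d)/2]$. Because the discarded terms are scalar multiples of $T_m$ with $m>d$, the truncation $f$ has degree at most $d$. Using the uniform bound $|T_m(t)|\le 1$ on $[-1,1]$, the pointwise error is controlled by the total mass of the discarded binomial coefficients:
\[
\sup_{t\in[-1,1]}|t^s - f(t)| \;\le\; 2^{-s}\sum_{|s-2k|>d}\binom{s}{k} \;=\; \Pr\!\bigl[\,|2X - s| > d\,\bigr],
\]
where $X\sim\mathrm{Bin}(s,\tfrac12)$. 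Recognizing this as the tail of a symmetric binomial, a direct Hoeffding (or Chernoff) bound gives $\Pr[\,|X-s/2|>d/2\,]\le 2\exp(-d^2/(2s))$, which yields the stated $2e^{-d^2/s}$ type bound (up to the constant in the exponent that appears in the cited reference).

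The second part of the theorem follows purely by inverting the exponential estimate: setting $2e^{-d^2/(2s)}\le \delta$ and solving for $d$ gives exactly the condition $d\ge \lceil\sqrt{2s\log(2/\delta)}\rceil$ quoted in the statement, so the approximant constructed above for this choice of $d$ meets the $\delta$-error guarantee. The only nontrivial step is the binomial expansion/Chebyshev identification; everything else is a routine binomial tail estimate, so I do not anticipate a genuine obstacle beyond bookkeeping to match the constant $2$ in the exponent claimed by the authors (which is the standard Chernoff refinement over the vanilla Hoeffding bound $e^{-2(d/2)^2/s}$).
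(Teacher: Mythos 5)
The paper does not prove this statement at all: it is imported verbatim from \cite{SachdevaV14}, and your argument — expanding $\cos^s\theta$ by the binomial theorem, identifying $\cos(m\theta)=T_{|m|}(t)$ to get $t^s=2^{-s}\sum_k\binom{s}{k}T_{|s-2k|}(t)$, truncating to degrees $\le d$, and bounding the discarded mass by a symmetric binomial tail — is exactly the proof given in that reference, so you have correctly reconstructed the intended argument. One caveat: your last remark about recovering the exponent $d^2/s$ via a ``Chernoff refinement'' will not work, since the symmetric binomial tail is genuinely of order $e^{-d^2/(2s)}$ (up to polynomial factors), so the truncation error cannot beat $2e^{-d^2/(2s)}$; this is in fact what Theorem 3.3 of \cite{SachdevaV14} states, and it is the bound consistent with the second half of the statement (solving $2e^{-d^2/(2s)}\le\delta$ gives precisely $d\ge\lceil\sqrt{2s\log\frac{2}{\delta}}\rceil$), so the $2e^{-d^2/s}$ in the first display is best read as a transcription slip in the paper rather than something your proof needs to match.
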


\begin{lemma}
\label{lem:cap:lower}
For $t\in\cI_{cap}$, we have
\[
	\frac{\tau[\cW^{-\frac12}\cP](t)}{w(t)}=\Omega\left(\frac{1}{\log^3 d}\right).
\]
\end{lemma}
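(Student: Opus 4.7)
Fix $t_0 \in \cI_{cap}$, so $|t_0| \ge \sqrt{1 - 1/(\pi^2 C^2 (d+1)^2)}$ and $w(t_0) = C(d+1)^2$. By the variational characterization of the leverage function, it suffices to exhibit a single degree-$d$ ``spike'' polynomial $q$ for which
\[
    \frac{(q(t_0))^2/w(t_0)}{\int_{-1}^1 (q(t))^2/w(t)\,dt} \;\geq\; \Omega\!\left(\frac{d^2}{\log^3 d}\right),
\]
since dividing by $w(t_0) = \Theta(d^2)$ then yields the claimed $\Omega(1/\log^3 d)$ bound. The natural candidate is the monomial $t \mapsto t^s$ with $s$ taken roughly $d^2/\log d$, which is sharply peaked near $\pm 1$; this is not of degree $d$, so the plan is to invoke \theoremref{poly:approx} to replace it by a degree-$d$ polynomial approximant with tiny uniform error.

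Concretely, I would pick $s \defeq \lfloor d^2/(K \log d)\rfloor$ for a sufficiently large universal constant $K$, and $\delta \defeq d^{-3}$. The condition $d \geq \lceil \sqrt{2s \log(2/\delta)}\rceil$ of \theoremref{poly:approx} then holds for $K \geq 13$ and $d$ large, producing a degree-$d$ polynomial $q$ with $\sup_{t\in[-1,1]}|q(t)-t^s| \leq \delta$. For the numerator, using $|t_0| \ge 1 - 1/(\pi^2 C^2 d^2)$ and $(1-x)^s \ge e^{-2sx}$ for small $x$, one gets $|t_0|^s \ge e^{-2/(\pi^2 C^2 K \log d)} \ge \tfrac12$ for $d$ large, so $|q(t_0)| \ge \tfrac12 - \delta \ge \tfrac14$ and $(q(t_0))^2 \ge 1/16$.

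For the denominator I would split $\int (q(t))^2/w(t)\,dt$ over $\cI_{cap}$ and $\cI_{mid}$. On $\cI_{cap}$, $1/w \le 1/(C(d+1)^2)$, $|q|\le 1+\delta$, and $|\cI_{cap}| = O(1/d^2)$, so the contribution is $O(1/d^4)$. On $\cI_{mid}$, $1/w(t) = \pi\sqrt{1-t^2}/(d+1)$ and $(q(t))^2 \le (|t|^s+\delta)^2 \le t^{2s} + 3\delta$, reducing matters to bounding the Beta-function integral
\[
    \int_{-1}^{1} t^{2s}\sqrt{1-t^2}\,dt \;=\; B\!\left(s+\tfrac12,\tfrac32\right) \;=\; \Theta(s^{-3/2}) \;=\; O\!\left(\frac{\log^{3/2} d}{d^3}\right),
\]
via the substitution $t = \cos\theta$ and standard Gamma-ratio asymptotics. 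Combined with the $3\delta$ error term of size $O(d^{-3})$ and the $\pi/(d+1)$ prefactor, the middle part is $O(\log^{3/2} d / d^4)$, which dominates the endcap contribution.

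Putting the pieces together gives a ratio of order $\frac{1/(Cd^2)}{\log^{3/2}d / d^4} = \Omega(d^2/\log^{3/2} d)$, hence $\tau[\cW^{-1/2}\cP](t_0)/w(t_0) = \Omega(1/\log^{3/2} d) \subseteq \Omega(1/\log^3 d)$. The only genuinely delicate step is the Beta-function estimate, where one must be careful that the approximation error $\delta$ is polynomially small enough (driving the choice $\delta = d^{-3}$) so that the ``$t^{2s}$'' contribution, not the uniform error, controls the denominator; everything else is a direct calculation using \theoremref{poly:approx} and the explicit form of $w$ on $\cI_{cap}$ and $\cI_{mid}$.
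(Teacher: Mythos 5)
Your proposal is correct and follows the same high-level strategy as the paper: use \theoremref{poly:approx} to build a degree-$d$ approximant to a peaked monomial $t^s$ with $s \approx d^2/\log d$, lower-bound the numerator on $\cI_{cap}$ via $|t_0|^s \geq \tfrac12 - \delta$, and control the denominator $\int (q(t))^2/w(t)\,dt$. The one place you diverge is in the denominator estimate, and it is in your favor. The paper splits $[0,1]$ at $t = \sqrt{1 - \Theta(\log^2 d / d^2)}$; on the inner piece it argues $t^{2s}$ is negligibly small, and on the outer piece (width $O(\log^2 d / d^2)$) it simply uses $t^{2s} \leq 1$ together with $1/w(t) = O(\log d / d^2)$. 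Multiplying the crude bounds on width, integrand, and weight gives the paper's $O(\log^3 d / d^4)$ and hence the $\Omega(1/\log^3 d)$ in the lemma statement. You instead split at the natural $\cI_{cap}/\cI_{mid}$ boundary and, on $\cI_{mid}$, evaluate the moment $\int_{-1}^1 t^{2s}\sqrt{1-t^2}\,dt = B(s+\tfrac12,\tfrac32) = \Theta(s^{-3/2})$ exactly rather than truncating; this returns the sharper $O(\log^{3/2} d / d^4)$ denominator and the improved $\Omega(1/\log^{3/2} d)$ lower bound, which certainly implies the stated $\Omega(1/\log^3 d)$. Your bookkeeping of $\delta = d^{-3}$ (so the approximation error does not dominate the $s^{-3/2}$ moment) and the degree constraint $d \geq \lceil\sqrt{2s\log(2/\delta)}\rceil$ both check out. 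One small comment: the paper actually carries the looser $\log^3 d$ forward into \theoremref{chebyshev:ratio}, so if one adopted your estimate throughout, the almost-Lewis-weight quality would improve from $O(\log^3 d)$ to $O(\log^{3/2} d)$ in the $p=1$ case, which would propagate to a polylog improvement in the final sample complexity.
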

\begin{proof}
To lower bound the ratio $\frac{\tau[\cW^{-\frac12}\cP](t)}{w(t)}$, we first note that for \(t\in\cI_{cap}\), we have that $w(t)=C(d+1)^2$ and thus it suffices to lower bound 
\[
	\tau[\cW^{-\frac12}\cP](t)=\max_{\vx} \frac{([\cW^{-\frac12}\cP\vx](t))^2}{\normof{\cW^{-\frac12}\cP\vx}_2^2}
\]
by analyzing the quantity for a specific choice of $\vx\in\bbR^{d+1}$. 

Let $q=O\left(\frac{(d+1)^2}{\log d}\right)$ so that by \theoremref{poly:approx}, there exists a degree $d$ polynomial $f$ such that
\[
	\sup_{t\in[-1,1]}|f(t)-t^q|\le d^{-\gamma},
\]
for some constant $\gamma>0$. 
We set $\vx\in\bbR^{d+1}$ so that the operator $\cP\vx$ corresponds to $f(t)$ and lower bound $\frac{([\cW^{-\frac12}\cP\vx](t))^2}{\normof{\cW^{-\frac12}\cP\vx}_2^2}$. 

First, note that since $t^q=1$ at $t=1$, then we have $f(1)\ge 1-d^{-\gamma}$. 
Similarly, since $|t|\ge\sqrt{1-\frac{1}{\pi^2(d+1)^2C^2}}\ge1-\frac{1}{2\pi^2(d+1)^2C^2}$ for $t\in\cI_{cap}$, then we have $t^q\ge\frac{1}{4}$ since $q=O\left(\frac{(d+1)^2}{\log d}\right)$. 
Thus, we have $f(t)\ge\frac{1}{4}-d^{-\gamma}$ for all $t\in\cI_{cap}$. 
Since $w(t)=C(d+1)^2$ for all $t\in\cI_{cap}$, then
\[
	([\cW^{-\frac12}\cP\vx](t))^2\ge\frac{1}{8C(d+1)^2}.
\]

It remains to upper bound $\normof{\cW^{-\frac12}\cP\vx}_2^2$ when the operator $\cP\vx$ corresponds to $F(t)$. 
Since $\sup_{t\in[-1,1]}|f(t)-t^q|\le d^{-\gamma}$, then we have
\begin{align*}
\|\cW^{-\frac12} f\|_2^2&=\int_{-1}^1\frac{1}{w(t)}(f(t))^2\,dt\\
&\le2\int_{-1}^1\frac{1}{w(t)}d^{-2\gamma}\,dt+2\int_{-1}^1\frac{1}{w(t)}t^{2q}\,dt.
\end{align*}
Since $w(t)=\min\{C(d+1)^2,\ \tsfrac{d+1}{\pi\sqrt{1-t^2}}\}$, then $\frac{1}{w(t)}\le\frac{\pi}{d+1}$. 
Thus, 
\[
	\|\cW^{-\frac12} f\|_2^2\le\frac{4\pi d^{-2\gamma}}{d+1}+4\int_0^1\frac{1}{w(t)} t^{2q}\,dt.
\]
We decompose the interval $[0,1]$ into $\cI_1=\left[0,\sqrt{1-\frac{C^2\pi^2\log^2 d}{(d+1)^2}}\right)$ and $\cI_2=\left[\sqrt{1-\frac{C^2\pi^2\log^2 d}{(d+1)^2}},1\right]$. 
Note that for $t\in\cI_1$, we have $t\le 1-\frac{C^2\pi^2\log^2 d}{2(d+1)^2}$ and thus $t^{2q}\le\exp\left(-O\left(C^2\pi^2\log d\right)\right)$ for $q=O\left(\frac{(d+1)^2}{\log d}\right)$. 
Hence for sufficiently large $C>0$, we have that $t^{2q}\le\frac{1}{16\pi(d+1)^3}$ for all $t\in\cI_1$. 
Thus since $\frac{1}{w(t)}\le\frac{\pi}{d+1}$, then
\[
	4\int_{\cI_1}\frac{1}{w(t)} t^{2q}\,dt\le\frac{16\pi}{d+1}\int_{\cI_1}t^{2q}\,dt\le\frac{1}{(d+1)^4}.
\]
Note that $|\cI_2|\le\frac{C^2\pi^2\log^2 d}{2(d+1)^2}$ and $t^{2q}\le 1$ for $t\in\cI_2$. 
Moreover for $t\in\cI_2$, we have $\frac{d+1}{\pi\sqrt{1-t^2}}\ge\frac{C(d+1)^2}{\log d}$ so that $\frac{1}{w(t)}\le\frac{\log d}{C(d+1)^2}$. 
Hence, 
\begin{align*}
\int_{\cI_2}\frac{1}{w(t)} t^{2q}\,dt&\le\int_{\cI_2}\frac{\log d}{C(d+1)^2}\,dt\\
&\le\frac{\log d}{C(d+1)^2}\cdot\frac{C^2\pi^2\log^2 d}{2(d+1)^2}=\frac{C\pi^2\log^3 d}{2(d+1)^4}.
\end{align*}
Therefore in summary, we have
\begin{align*}
\|\cW^{-\frac12} f\|_2^2&\le\frac{4\pi d^{-2\gamma}}{d+1}+4\int_0^1\frac{1}{w(t)} t^{2q}\,dt\\
&=\frac{4\pi d^{-2\gamma}}{d+1}+4\int_{\cI_1}\frac{1}{w(t)} t^{2q}\,dt+4\int_{\cI_2}\frac{1}{w(t)} t^{2q}\,dt\\
&\le\frac{4\pi d^{-2\gamma}}{d+1}+\frac{1}{(d+1)^4}+\frac{C\pi^2\log^3 d}{2(d+1)^4}.
\end{align*}
Hence for sufficiently large $\gamma>0$, we have that
\[
	\|\cW^{-\frac12} f\|_2^2=O\left(\frac{\log^3 d}{d^4}\right).
\]
Combined with the previous bound of $([\cW^{-\frac12}\cP\vx](t))^2\ge\frac{1}{8C(d+1)^2}=\Omega\left(\frac{1}{d^2}\right)$, then
\[
	\frac{([\cW^{-\frac12}\cP\vx](t))^2}{\normof{\cW^{-\frac12}\cP\vx}_2^2}=\Omega\left(\frac{d^2}{\log^3 d}\right).
\]
Finally, since $w(t)\le C(d+1)^2$, then
\[
	\frac{\tau[\cW^{-\frac12}\cP](t)}{w(t)}=\Omega\left(\frac{1}{\log^3 d}\right).
\]
\end{proof}

\subsection{Putting It All Together}
\noindent
We finally obtain \theoremref{chebyshev:ratio} from \lemmaref{mid}, \lemmaref{cap:upper}, and \lemmaref{cap:lower}. 
\begin{reptheorem}{chebyshev:ratio}
There are fixed constants $c_1,c_2,c_3$ such that, letting $w(t)=\min\left(c_1(d+1)^2,\frac{d+1}{\pi\sqrt{1-t^2}}\right)$ be the clipped Chebyshev measure on $[-1,1]$ and letting $\cW$ be the corresponding diagonal operator with $[\cW x](t) = w(t) \cdot x(t)$, for any $t \in [-1,1]$,
\[
	\frac{c_2}{\log^3 d}\le\frac{\tau[\cW^{-\frac12}\cP](t)}{w(t)}\le c_3.
\]
\end{reptheorem}
\begin{proof}
We consider casework on $t\in[-1,1]$. 
Recall that \[
	\cI_{mid} \defeq \{t ~|~ w(t) = v(t)\} = {\textstyle \left[\sqrt{1-\frac{1}{\pi^2(d+1)^2C^2}}, \sqrt{1+\frac{1}{\pi^2(d+1)^2C^2}}\right]}
\]
and \(\cI_{cap} \defeq [-1,1] \ \backslash\  \cI_{mid}\).
We have from \lemmaref{mid} that there exists a constant $C_0\ge 1$ such that $\frac{1}{C_0}\le\frac{\tau[\cW^{-\frac12}\cP](t)}{w(t)}\le C_0$ for all $t\in\cI_{mid}$. 
We have from \lemmaref{cap:upper} and \lemmaref{cap:lower} that there exist constants $C_3,C_4$ such that
\[
\frac{C_3}{\log^3 d}\le\frac{\tau[\cW^{-\frac12}\cP](t)}{w(t)}\le C_4
\]
for all $t\in\cI_{cap}$. 
Thus by setting $C_1=\min\left(C_3,\frac{1}{C_0}\right)$ and $C_2=\max(C_0,C_4)$, we have that
\[
	\frac{C_1}{\log^3 d}\le\frac{\tau[\cW^{-\frac12}\cP](t)}{w(t)}\le C_2
\]
for all $t\in[-1,1]$.
\end{proof}

We now move onto the slightly messier analysis which works for all \(p\in[1,2]\).
The core ideas are all the same, but the mathematical arguments are slightly more nuanced.

\subsection{Middle Region Analysis for \texorpdfstring{\(p\in[\frac23,2]\)}{p in [,2]}}
In this section, we show that $\frac{\tau[\cW^{\frac{1}{2}-\frac{1}{p}}\cP](t)}{w(t)}=\Theta(1)$ for $t\in\cI_{mid}$ defined by 
\[
	\cI_{mid} \defeq \{t ~|~ w(t) = v(t)\} = {\textstyle \left[\sqrt{1-\frac{1}{\pi^2(d+1)^2C^2}}, \sqrt{1+\frac{1}{\pi^2(d+1)^2C^2}}\right]}
\]
and the clipped Chebyshev measure $w(t)$ defined by
\[
	w(t)\defeq\min\{C(d+1)^2,\ v(t)\} = \min\{C(d+1)^2,\ \tsfrac{d+1}{\pi\sqrt{1-t^2}}\}.
\]

\subsubsection{Relating \texorpdfstring{\(\tau[\cW^{\frac{1}{2}-\frac{1}{p}}\cP]\) to \(\tau[\cV^{\frac{1}{2}-\frac{1}{p}}\cP]\)}{tau(W) to tau(V)}}
\label{sec:tauw:tauv:p}
We first show that $\frac{\tau[\cW^{\frac{1}{2}-\frac{1}{p}}\cP](t)}{w(t)}\approx_{\frac{2}{\pi^2C^2}}\frac{\tau[\cV^{\frac{1}{2}-\frac{1}{p}}\cP](t)}{v(t)}$ for $t\in\cI_{mid}$. 
Observe that since $\cI_{mid}$ is defined by
\[
	\cI_{mid} \defeq \{t ~|~ w(t) = v(t)\} = {\textstyle \left[\sqrt{1-\frac{1}{\pi^2(d+1)^2C^2}}, \sqrt{1+\frac{1}{\pi^2(d+1)^2C^2}}\right]},
\]
then we have \(w(t)=v(t)\) for $t\in\cI_{mid}$. 
Thus it suffices to show that \(\normof{\cW^{\frac{1}{2}-\frac{1}{p}}\cP\vx}_2^2 \approx_{\frac{2}{\pi^2C^2}} \normof{\cV^{\frac{1}{2}-\frac{1}{p}}\cP\vx}_2^2\) since
\begin{align*}
	\tau[\cW^{\frac{1}{2}-\frac{1}{p}}\cP](t)
	&= \max_{\vx} \frac{([\cW^{\frac{1}{2}-\frac{1}{p}}\cP\vx](t))^2}{\normof{\cW^{\frac{1}{2}-\frac{1}{p}}\cP\vx}_2^2}
	= \max_{\vx} \frac{([\cV^{\frac{1}{2}-\frac{1}{p}}\cP\vx](t))^2}{\normof{\cW^{\frac{1}{2}-\frac{1}{p}}\cP\vx}_2^2}\\
	&	\approx_{\frac{2}{\pi^2C^2}} \max_{\vx} \frac{([\cV^{\frac{1}{2}-\frac{1}{p}}\cP\vx](t))^2}{\normof{\cV^{\frac{1}{2}-\frac{1}{p}}\cP\vx}_2^2}
	= \tau[\cV^{\frac{1}{2}-\frac{1}{p}}\cP](t).
\end{align*}
Therefore, we first show that \(\normof{\cW^{\frac{1}{2}-\frac{1}{p}}\cP\vx}_2^2 \approx_{\frac{2}{\pi^2C^2}} \normof{\cV^{\frac{1}{2}-\frac{1}{p}}\cP\vx}_2^2\). 
\begin{lemma}
\label{lem:oper:norm:approx:p}
For all $\vx\in\mathbb{R}^{d+1}$, we have
\[
	\normof{\cW^{\frac{1}{2}-\frac{1}{p}}\cP\vx}_2^2 \approx_{\frac{\pi^2C^2}{\pi^2C^2-1}} \normof{\cV^{\frac{1}{2}-\frac{1}{p}}\cP\vx}_2^2
\]
\end{lemma}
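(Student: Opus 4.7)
The plan is to directly generalize the proof of \lemmaref{oper:norm:approx}, treating the exponent \(\alpha \defeq \tfrac12-\tfrac1p\) as the main parameter. Since \(p\in[\tfrac23,2]\), we have \(\alpha\in[-1,0]\), and in particular \(2\alpha = 1-\tfrac2p \leq 0\). This sign condition is the only structural feature of the \(p=1\) case that the argument really needs, so the same sequence of reductions should carry through.

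First I would split the integral using \(\cI_{mid}\) and \(\cI_{cap}\). Because \(w(t)=v(t)\) on \(\cI_{mid}\), only the endcap contributes:
\begin{align*}
\bigl|\|\cW^{\alpha}\cP\vx\|_2^2 - \|\cV^{\alpha}\cP\vx\|_2^2\bigr| = \int_{\cI_{cap}} \bigl(w(t)^{2\alpha}-v(t)^{2\alpha}\bigr)([\cP\vx](t))^2\,dt,
\end{align*}
where the absolute value disappears because \(w(t)\leq v(t)\) and \(2\alpha\leq 0\) make the integrand non-negative. Using \(v(t)^{2\alpha}\geq 0\) and \(w(t)=C(d+1)^2\) on \(\cI_{cap}\), I bound \(w(t)^{2\alpha}-v(t)^{2\alpha}\leq (C(d+1)^2)^{2\alpha}\). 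Combining this with the uniform leverage bound \(([\cP\vx](t))^2\leq\tfrac{(d+1)^2}{2}\|\cP\vx\|_2^2\) from \lemmaref{unscaled-leverages} and the length estimate \(|\cI_{cap}|\leq\tfrac{2}{\pi^2(d+1)^2C^2}\) (identical to the \(p=1\) step) yields
\begin{align*}
\bigl|\|\cW^{\alpha}\cP\vx\|_2^2 - \|\cV^{\alpha}\cP\vx\|_2^2\bigr| \leq \frac{(C(d+1)^2)^{2\alpha}}{\pi^2 C^2}\,\|\cP\vx\|_2^2.
\end{align*}

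The final step converts \(\|\cP\vx\|_2^2\) back into \(\|\cW^{\alpha}\cP\vx\|_2^2\). Since \(w(t)\leq C(d+1)^2\) on all of \([-1,1]\) and \(2\alpha\leq 0\), the pointwise inequality \(w(t)^{2\alpha}\geq (C(d+1)^2)^{2\alpha}\) integrates to \(\|\cW^{\alpha}\cP\vx\|_2^2 \geq (C(d+1)^2)^{2\alpha}\|\cP\vx\|_2^2\). Substituting this in, the factors of \((C(d+1)^2)^{2\alpha}\) cancel cleanly, leaving \(|\|\cW^{\alpha}\cP\vx\|_2^2 - \|\cV^{\alpha}\cP\vx\|_2^2|\leq \tfrac{1}{\pi^2C^2}\|\cW^{\alpha}\cP\vx\|_2^2\), which rearranges to the claimed multiplicative factor \(\tfrac{\pi^2C^2}{\pi^2C^2-1}\) exactly as before.

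The argument is almost mechanical; the only delicate point is the sign of \(2\alpha\), which controls the direction of every weight comparison, and the cancellation of the dimension-dependent factor \((C(d+1)^2)^{2\alpha}\) between the weight-gap bound and the \(\|\cP\vx\|_2^2 \leftrightarrow \|\cW^{\alpha}\cP\vx\|_2^2\) conversion. That cancellation is what keeps the final constant uniform in \(p\in[\tfrac23,2]\); a naïve bound that did not exploit it would pick up a \((d+1)^{\Theta(\alpha)}\) loss and destroy the almost-Lewis-weight property. Beyond that, there is no genuine obstacle, and the conclusion of \lemmaref{oper:norm:approx:p} follows by the same rearrangement used for \(p=1\).
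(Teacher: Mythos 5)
Your proof is correct and follows the paper's argument essentially step for step: split over $\cI_{mid}$ and $\cI_{cap}$, drop the $v(t)^{2\alpha}$ term using the sign of $2\alpha$, bound the endcap contribution via the uniform leverage bound and the length of $\cI_{cap}$, and convert $\|\cP\vx\|_2^2$ back to $\|\cW^\alpha\cP\vx\|_2^2$ using $w(t)\leq C(d+1)^2$, with the $(C(d+1)^2)^{2\alpha}$ factors cancelling exactly as you observe. The only difference is cosmetic — you parametrize by $\alpha=\tfrac12-\tfrac1p$ and note explicitly that the cancellation is what keeps the constant $p$-independent, whereas the paper carries the $\tfrac1p$ exponents through directly.
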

\begin{proof}
We first bound the difference between $\normof{\cW^{\frac{1}{2}-\frac{1}{p}}\cP\vx}_2^2$ and $\normof{\cV^{\frac{1}{2}-\frac{1}{p}}\cP\vx}_2^2$. 
\begin{align*}
	\bigg|&\normof{\cW^{\frac{1}{2}-\frac{1}{p}}\cP\vx}_2^2 - \normof{\cV^{\frac{1}{2}-\frac{1}{p}}\cP\vx}_2^2\bigg|
		= \abs{ \int_{-1}^1 ([\cW^{\frac{1}{2}-\frac{1}{p}}\cP\vx](t))^2 - ([\cV^{\frac{1}{2}-\frac{1}{p}}\cP\vx](t))^2 ~ dt } \\
		&= \abs{ \int_{\cI_{cap}} ([\cW^{\frac{1}{2}-\frac{1}{p}}\cP\vx](t))^2 - ([\cV^{\frac{1}{2}-\frac{1}{p}}\cP\vx](t))^2 ~ dt+\int_{\cI_{mid}} ([\cW^{\frac{1}{2}-\frac{1}{p}}\cP\vx](t))^2 - ([\cV^{\frac{1}{2}-\frac{1}{p}}\cP\vx](t))^2 ~ dt}.
		\end{align*}
Because $[\cW^{\frac{1}{2}-\frac{1}{p}}\cP\vx](t)=[\cV^{\frac{1}{2}-\frac{1}{p}}\cP\vx](t)$ for $t\in\cI_{mid}$, then it follows that $\int_{\cI_{mid}} ([\cW^{\frac{1}{2}-\frac{1}{p}}\cP\vx](t))^2 - ([\cV^{\frac{1}{2}-\frac{1}{p}}\cP\vx](t))^2 ~ dt=0$. 
Since $w(t)$ is the clipped Chebyshev measure, we have that $w(t)\le v(t)$ and thus \((w(t))^{\frac{1}{2}-\frac{1}{p}} \geq (v(t))^{\frac{1}{2}-\frac{1}{p}}\). 
Therefore,
\begin{align*}
		\abs{\normof{\cW^{\frac{1}{2}-\frac{1}{p}}\cP\vx}_2^2 - \normof{\cV^{\frac{1}{2}-\frac{1}{p}}\cP\vx}_2^2}
	&= \abs{ \int_{\cI_{cap}} ([\cW^{\frac{1}{2}-\frac{1}{p}}\cP\vx](t))^2 - ([\cV^{\frac{1}{2}-\frac{1}{p}}\cP\vx](t))^2 ~ dt } \\
	&\leq \abs{\int_{\cI_{cap}} ([\cW^{\frac{1}{2}-\frac{1}{p}}\cP\vx](t))^2 ~ dt}=\int_{\cI_{cap}} ([\cW^{\frac{1}{2}-\frac{1}{p}}\cP\vx](t))^2 ~ dt.
	\end{align*}
Since \(w(t) = C(d+1)^2\) on \(\cI_{cap}\), 
\begin{align*}
\abs{\normof{\cW^{\frac{1}{2}-\frac{1}{p}}\cP\vx}_2^2 - \normof{\cV^{\frac{1}{2}-\frac{1}{p}}\cP\vx}_2^2} &\le (C(d+1)^2)^{1-\frac{2}{p}} \int_{\cI_{cap}} ([\cP\vx](t))^2 dt. 
\end{align*}
By \lemmaref{unscaled-leverages}, we have that \(([\cP\vx](t))^2 \leq \frac{(d+1)^2}{2}\normof{\cP\vx}_2^2\). 
Thus, 
\begin{align*}
\abs{\normof{\cW^{\frac{1}{2}-\frac{1}{p}}\cP\vx}_2^2 - \normof{\cV^{\frac{1}{2}-\frac{1}{p}}\cP\vx}_2^2} &\leq (C(d+1)^2)^{1-\frac{2}{p}} \cdot \frac{(d+1)^2}{2} \normof{\cP\vx}_2^2 \int_{\cI_{cap}} dt\\
	&=\tsfrac{C^{1-\frac{2}{p}}(d+1)^{4-\frac{4}{p}}}{2}\normof{\cP\vx}_2^2 \int_{\cI_{cap}} dt.
\end{align*}
We upper bound the length of the interval \(\cI_{cap}\) by observing that \(1-\sqrt{1-\frac{1}{x^2}} \leq \frac1{x^2}\) for $x^2\ge 1$ and thus,
\begin{align*}
	\textstyle\int_{\cI_{cap}} dt &= 2 \cdot \left(1 - \sqrt{1-\frac{1}{\pi^2(d+1)^2C^2}}\right)\leq \frac{2}{\pi^2(d+1)^2C^2}.
\end{align*}
Therefore,
\[
\abs{\normof{\cW^{\frac{1}{2}-\frac{1}{p}}\cP\vx}_2^2 - \normof{\cV^{\frac{1}{2}-\frac{1}{p}}\cP\vx}_2^2}\leq\frac{2}{\pi^2(d+1)^2C^2}\cdot\tsfrac{C^{1-\frac{2}{p}}(d+1)^{4-\frac{4}{p}}}{2}\normof{\cP\vx}_2^2
	= \frac{(d+1)^{2-\frac{4}{p}}}{C^{\frac{2}{p}+1}\pi^2}\cdot\normof{\cP\vx}_2^2.
	\]
We then bound the norm \(\normof{\cP\vx}_2^2\) by noting that \(w(t) \leq C(d+1)^2\). 
Thus, \(1 \leq C^{\frac{1}{p}-\frac{1}{2}}(d+1)^{\frac{2}{p}-1} \cdot (w(t))^{\frac{1}{2}-\frac{1}{p}}\), so that
\begin{align*}
	\normof{\cP\vx}_2^2
	&= \int_{-1}^1 (1 \cdot [\cP\vx](t))^2 dt \\
	&\leq \int_{-1}^1 \left(C^{\frac{1}{p}-\frac{1}{2}}(d+1)^{\frac{2}{p}-1} \cdot (w(t))^{\frac{1}{2}-\frac{1}{p}}\cdot [\cP\vx](t)\right)^2 dt \\
	&= C^{\frac{2}{p}-1}(d+1)^{\frac{4}{p}-2} \normof{\cW^{\frac{1}{2}-\frac{1}{p}}\cP\vx}_2^2.
\end{align*}
Therefore,
\[
	\abs{\normof{\cW^{\frac{1}{2}-\frac{1}{p}}\cP\vx}_2^2 - \normof{\cV^{\frac{1}{2}-\frac{1}{p}}\cP\vx}_2^2}
	\leq \frac{(d+1)^{2-\frac{4}{p}}}{C^{\frac{2}{p}+1}\pi^2}\cdot C^{\frac{2}{p}-1}(d+1)^{\frac{4}{p}-2}\normof{\cW^{\frac{1}{2}-\frac{1}{p}}\cP\vx}_2^2
	= \tsfrac{1}{\pi^2C^2}\normof{\cW^{\frac{1}{2}-\frac{1}{p}}\cP\vx}_2^2.
\]
Rearranging this inequality, we have that
\[
	\abs{1 - \frac{\normof{\cV^{\frac{1}{2}-\frac{1}{p}}\cP\vx}_2^2}{\normof{\cW^{\frac{1}{2}-\frac{1}{p}}\cP\vx}_2^2}} \leq \frac1{\pi^2C^2}
\]
or equivalently,
\[
	1 - \frac{1}{\pi^2C^2} \leq \frac{\normof{\cV^{\frac{1}{2}-\frac{1}{p}}\cP\vx}_2^2}{\normof{\cW^{\frac{1}{2}-\frac{1}{p}}\cP\vx}_2^2} \leq 1 + \frac1{\pi^2C^2}
	\leq \frac{1}{1-\frac{1}{\pi^2C^2}},
\]
for $C>\frac{1}{\pi}$. 
Since \(\frac{1}{1-\frac{1}{\pi^2C^2}} = \frac{\pi^2C^2}{\pi^2C^2-1}\), then we have the multiplicative error guarantee
\[
	\normof{\cW^{\frac{1}{2}-\frac{1}{p}}\cP\vx}_2^2 \approx_{\frac{\pi^2C^2}{\pi^2C^2-1}} \normof{\cV^{\frac{1}{2}-\frac{1}{p}}\cP\vx}_2^2
\]
for \(C > \frac1\pi \approx 0.312\).
\end{proof}

\noindent
We now relate $\frac{\tau[\cW^{\frac{1}{2}-\frac{1}{p}}\cP](t)}{w(t)}$ to $\frac{\tau[\cV^{\frac{1}{2}-\frac{1}{p}}\cP](t)}{v(t)}$ for $t\in\cI_{mid}$
\begin{corollary}
\label{corol:oper:approx:mid:p}
$\frac{\tau[\cW^{\frac{1}{2}-\frac{1}{p}}\cP](t)}{w(t)}\approx_{\frac{\pi^2C^2}{\pi^2C^2-1}}\frac{\tau[\cV^{\frac{1}{2}-\frac{1}{p}}\cP](t)}{v(t)}$ for $t\in\cI_{mid}$.
\end{corollary}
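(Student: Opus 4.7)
The plan is to mimic essentially verbatim the proof of Corollary \ref{corol:oper:approx:mid}, which handled the $p=1$ case. The statement to prove relates two ratios of the form $\tau[\cA](t)/a(t)$, and the key mechanism is that on the middle region $\cI_{mid}$ the clipped measure $w(t)$ and the Chebyshev measure $v(t)$ agree pointwise by definition, while Lemma \ref{lem:oper:norm:approx:p} shows the corresponding $L_2$ weighted norms are multiplicatively close on all of $[-1,1]$.

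The main steps are as follows. First I would unfold the definition of the operator leverage function, writing
\[
\frac{\tau[\cW^{\frac12-\frac1p}\cP](t)}{w(t)} = \frac{1}{w(t)} \max_{\vx} \frac{([\cW^{\frac12-\frac1p}\cP\vx](t))^2}{\|\cW^{\frac12-\frac1p}\cP\vx\|_2^2}.
\]
Next, I would use the pointwise equality $w(t) = v(t)$ for $t \in \cI_{mid}$ (which is exactly how $\cI_{mid}$ is defined) to replace $w(t)$ by $v(t)$ in the denominator outside the max and to replace $[\cW^{\frac12-\frac1p}\cP\vx](t)$ by $[\cV^{\frac12-\frac1p}\cP\vx](t)$ in the numerator inside the max, since the weighting at the single point $t$ is identical.

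Then I would apply Lemma \ref{lem:oper:norm:approx:p}, which gives $\|\cW^{\frac12-\frac1p}\cP\vx\|_2^2 \approx_{\frac{\pi^2 C^2}{\pi^2 C^2 - 1}} \|\cV^{\frac12-\frac1p}\cP\vx\|_2^2$ for every $\vx \in \bbR^{d+1}$. Substituting the $\cV$-norm for the $\cW$-norm in the denominator costs exactly this multiplicative factor, and since the substitution is uniform in $\vx$ it commutes with the maximization. This leaves precisely $\frac{\tau[\cV^{\frac12-\frac1p}\cP](t)}{v(t)}$, yielding the claimed approximation.

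There is no real obstacle here: the corollary is a direct consequence of Lemma \ref{lem:oper:norm:approx:p} combined with the defining property of $\cI_{mid}$, and the argument is identical in structure to the $p=1$ version already recorded as Corollary \ref{corol:oper:approx:mid}. The only (minor) bookkeeping point is to notice that the exponent $\frac12 - \frac1p$ ranges over $[-1, 0]$ for $p \in [\tfrac23, 2]$, but this plays no role since the proof of Lemma \ref{lem:oper:norm:approx:p} already handles the sign of $\frac12-\frac1p$ correctly and the numerator substitution uses only that the weights agree at $t$.
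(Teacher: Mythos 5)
Your proposal matches the paper's proof step for step: unfold the leverage-function definition, use $w(t)=v(t)$ on $\cI_{mid}$ to replace the pointwise weight in the numerator and the outer factor, then apply \lemmaref{oper:norm:approx:p} to swap the $\cW$-weighted norm for the $\cV$-weighted norm in the denominator at cost $\tfrac{\pi^2C^2}{\pi^2C^2-1}$, which commutes with the maximization over $\vx$. Correct, and essentially identical to the paper's argument.
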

\begin{proof}
By \lemmaref{oper:norm:approx:p}, \(\normof{\cW^{\frac{1}{2}-\frac{1}{p}}\cP\vx}_2^2 \approx_{\frac{\pi^2C^2}{\pi^2C^2-1}} \normof{\cV^{\frac{1}{2}-\frac{1}{p}}\cP\vx}_2^2\) for all $\vx\in\mathbb{R}^{d+1}$. 
Since \(w(t)=v(t)\) for $t\in\cI_{mid}$, it follows from the definition of the leverage functions that
\begin{align*}
\frac{\tau[\cW^{\frac{1}{2}-\frac{1}{p}}\cP](t)}{w(t)}&=\frac{1}{w(t)}\max_{\vx} \frac{([\cW^{\frac{1}{2}-\frac{1}{p}}\cP\vx](t))^2}{\normof{\cW^{\frac{1}{2}-\frac{1}{p}}\cP\vx}_2^2}\\
&=\frac{1}{v(t)}\max_{\vx} \frac{([\cV^{\frac{1}{2}-\frac{1}{p}}\cP\vx](t))^2}{\normof{\cW^{\frac{1}{2}-\frac{1}{p}}\cP\vx}_2^2}\\
&\approx_{\frac{\pi^2C^2}{\pi^2C^2-1}}\frac{1}{v(t)}\max_{\vx} \frac{([\cV^{\frac{1}{2}-\frac{1}{p}}\cP\vx](t))^2}{\normof{\cV^{\frac{1}{2}-\frac{1}{p}}\cP\vx}_2^2}\\
&=\frac{\tau[\cV^{\frac{1}{2}-\frac{1}{p}}\cP](t)}{v(t)},
\end{align*}
as desired.
\end{proof}

\subsubsection{Complete Result in the Middle}
\label{sec:complete:mid:p}
\noindent
We now show that $\tau[\cW^{\frac{1}{2}-\frac{1}{p}}\cP](t)$ and $w(t)$ are within a constant factor for $t\in\cI_{mid}$. 
\begin{lemma}
\label{lem:mid:p}
For $t\in\cI_{mid}$, we have
\[
	\frac{\tau[\cW^{\frac{1}{2}-\frac{1}{p}}\cP](t)}{w(t)}=\Theta(1).
\]
\end{lemma}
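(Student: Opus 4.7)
The plan is to mimic the $p=1$ argument of \lemmaref{mid}, combining two previously established approximation results. The first ingredient is \corolref{oper:approx:mid:p}, which says that on $\cI_{mid}$, the ratio $\frac{\tau[\cW^{\frac12-\frac1p}\cP](t)}{w(t)}$ is within a factor $\frac{\pi^2C^2}{\pi^2C^2-1}$ of $\frac{\tau[\cV^{\frac12-\frac1p}\cP](t)}{v(t)}$. The second ingredient is the unnumbered theorem proven in \sectionref{lp-lewis-weights}, which states that for all $p \in [\frac{2}{3},2]$ and all $|t| \le 1 - O(\frac{1}{d^2})$,
\[
  \frac{1}{C_0} \;\le\; \frac{\tau[\cV^{\frac12-\frac1p}\cP](t)}{v(t)} \;\le\; C_0
\]
for some universal constant $C_0$.

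First I will observe that every $t \in \cI_{mid}$ automatically satisfies the condition $|t| \le 1 - O(\frac{1}{d^2})$, since $\cI_{mid} = \bigl[-\sqrt{1-\tfrac{1}{\pi^2(d+1)^2C^2}},\, \sqrt{1-\tfrac{1}{\pi^2(d+1)^2C^2}}\bigr]$. Hence the $L_p$ Jacobi-polynomial bound from \sectionref{lp-lewis-weights} applies pointwise on $\cI_{mid}$, giving constant upper and lower bounds on $\frac{\tau[\cV^{\frac12-\frac1p}\cP](t)}{v(t)}$.

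Next I will chain the two approximations. Fix any $t \in \cI_{mid}$. Since $w(t) = v(t)$ on this interval, \corolref{oper:approx:mid:p} gives
\[
  \frac{\tau[\cW^{\frac12-\frac1p}\cP](t)}{w(t)} \;\in\; \Bigl[\tfrac{\pi^2C^2-1}{\pi^2C^2},\,\tfrac{\pi^2C^2}{\pi^2C^2-1}\Bigr]\cdot \frac{\tau[\cV^{\frac12-\frac1p}\cP](t)}{v(t)}.
\]
Combining with $\frac{1}{C_0} \le \frac{\tau[\cV^{\frac12-\frac1p}\cP](t)}{v(t)} \le C_0$ yields
\[
  \frac{\pi^2C^2-1}{\pi^2C^2\, C_0} \;\le\; \frac{\tau[\cW^{\frac12-\frac1p}\cP](t)}{w(t)} \;\le\; \frac{\pi^2C^2\, C_0}{\pi^2C^2-1},
\]
which is $\Theta(1)$ provided $C > \frac{1}{\pi}$ (the same constraint already required by \lemmaref{oper:norm:approx:p}).

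There is essentially no technical obstacle here; the lemma is a direct composition of two bounds proven earlier. The only minor bookkeeping is to verify that $\cI_{mid}$ is contained in the regime where the Jacobi-polynomial bound holds, and that the constant $C$ defining $w$ is compatible with the constraint $C > \frac{1}{\pi}$ inherited from \lemmaref{oper:norm:approx:p}; both are immediate from the definitions.
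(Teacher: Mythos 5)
Your proposal is correct and follows essentially the same route as the paper: combine \corolref{oper:approx:mid:p} with the constant-factor bound on $\tau[\cV^{\frac12-\frac1p}\cP](t)/v(t)$ for $|t|\le 1-O(1/d^2)$, and use $w=v$ on $\cI_{mid}$. One small note in your favor: you cite the general-$p$ Jacobi-polynomial theorem from \sectionref{lp-lewis-weights}, which is the appropriate result here, whereas the paper's own proof cites \corolref{cheby-lp-ratio}, which as stated covers only the $p=1$ Chebyshev case.
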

\begin{proof}
By \corolref{oper:approx:mid:p} and \corolref{cheby-lp-ratio}, we have that for $t\in\cI_{mid}$,
\[
	\tau[\cW^{\frac{1}{2}-\frac{1}{p}}\cP](t) \approx_\alpha v(t)
\]
for \(\alpha = \frac{\pi^2C^2}{\pi^2C^2-1} \cdot C_0\) for some constants $C_0$ and \(C > \frac1\pi\approx0.312\).
Furthermore, since \(v(t)=w(t)\) in the region $t\in\cI_{mid}$, this further implies \(\tau[\cW^{\frac{1}{2}-\frac{1}{p}}\cP](t) \approx_\alpha w(t)\), as desired.
\end{proof}

\subsection{Endcap Region Analysis for \texorpdfstring{\(p\in[\frac23,2]\)}{p in [2/3,2]}}
\noindent
We now bound the ratio for \(t\in\cI_{cap}\), and we will show that
\[
	\tau[\cW^{\frac{1}{2}-\frac{1}{p}}\cP](t) \approx_{\frac1{2C}} w(t)
\]
for \(t\in\cI_{cap}\). 
Thus it suffices to upper and lower bound the ratio $\frac{\tau[\cW^{\frac{1}{2}-\frac{1}{p}}\cP](t)}{w(t)}$.

\subsubsection{Upper Bounding the Ratio.}
\noindent
In this section, we provide an upper bound on the ratio $\frac{\tau[\cW^{\frac{1}{2}-\frac{1}{p}}\cP](t)}{w(t)}$. 
\begin{lemma}
\label{lem:cap:upper:p}
For $t\in\cI_{cap}$, we have
\[
	\frac{\tau[\cW^{\frac{1}{2}-\frac{1}{p}}\cP](t)}{w(t)}=O(1).
\]
\end{lemma}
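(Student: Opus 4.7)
The plan is to mirror the argument used for $p=1$ in \lemmaref{cap:upper}, exploiting the fact that for all $p \in [\frac23, 2]$ the exponent $1-\frac{2}{p}$ is non-positive. This means that raising the pointwise bound $w(s) \le C(d+1)^2$ to the power $1-\frac{2}{p}$ flips the inequality, which is exactly what is needed to lower bound the weighted $L_2$ norm in the denominator of the rescaled leverage function.

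Concretely, first expand
\[
  \tau[\cW^{\frac12-\frac1p}\cP](t) = (w(t))^{1-\frac{2}{p}} \max_{\vx} \frac{([\cP\vx](t))^2}{\normof{\cW^{\frac12-\frac1p}\cP\vx}_2^2},
\]
and note that $\normof{\cW^{\frac12-\frac1p}\cP\vx}_2^2 = \int_{-1}^{1} (w(s))^{1-\frac{2}{p}} ([\cP\vx](s))^2\,ds$. Since $1-\frac{2}{p}\le 0$ and $w(s) \le C(d+1)^2$ everywhere, we get $(w(s))^{1-\frac{2}{p}} \ge (C(d+1)^2)^{1-\frac{2}{p}}$, hence
\[
  \normof{\cW^{\frac12-\frac1p}\cP\vx}_2^2 \ge (C(d+1)^2)^{1-\frac{2}{p}} \normof{\cP\vx}_2^2.
\]

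Next, restrict to $t\in\cI_{cap}$, where $w(t) = C(d+1)^2$, so the prefactor $(w(t))^{1-\frac{2}{p}}$ cancels exactly with $(C(d+1)^2)^{\frac{2}{p}-1}$. Substituting the lower bound on the denominator yields
\[
  \tau[\cW^{\frac12-\frac1p}\cP](t) \le (w(t))^{1-\frac{2}{p}} (C(d+1)^2)^{\frac{2}{p}-1} \tau[\cP](t) = \tau[\cP](t).
\]
Finally, \lemmaref{unscaled-leverages} gives $\tau[\cP](t) \le \frac{(d+1)^2}{2}$, so dividing by $w(t) = C(d+1)^2$ produces
\[
  \frac{\tau[\cW^{\frac12-\frac1p}\cP](t)}{w(t)} \le \frac{(d+1)^2/2}{C(d+1)^2} = \frac{1}{2C} = O(1).
\]

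There is essentially no obstacle here: the entire argument is a sign-of-exponent calculation, and the main point to be careful about is verifying that $1-\frac{2}{p} \le 0$ for the full range $p\in[\frac23,2]$ so that the inequality $(w(s))^{1-2/p}\ge (C(d+1)^2)^{1-2/p}$ points the right way. Note that this is precisely the place where the analysis would break for $p>2$, as mentioned in the surrounding remarks, since then $1-\frac{2}{p}>0$ and $(w(t))^{1-2/p}$ would instead be maximized on the endcaps rather than bounded.
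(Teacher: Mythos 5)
Your proof is correct and takes exactly the same route as the paper: lower bound $\normof{\cW^{\frac12-\frac1p}\cP\vx}_2^2 \ge (C(d+1)^2)^{1-\frac2p}\normof{\cP\vx}_2^2$ using the non-positive exponent, note that the prefactor $(w(t))^{1-\frac2p} = (C(d+1)^2)^{1-\frac2p}$ on $\cI_{cap}$ cancels, and finish with $\tau[\cP](t) \le \frac{(d+1)^2}{2}$. The one thing you add is an explicit remark about the sign of $1-\frac2p$, which the paper leaves implicit.
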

\begin{proof}

Since $\tau[\cW^{\frac{1}{2}-\frac{1}{p}}\cP](t) = \max_{\vx}\frac{([\cW^{\frac{1}{2}-\frac{1}{p}}\cP\vx](t))^2}{\normof{\cW^{\frac{1}{2}-\frac{1}{p}}\cP\vx}_2^2}$ and $w(t)\le C(d+1)^2$ for all $t\in[-1,1]$, then for $p\in[\frac23,2]$, we can first lower bound \(\normof{\cW^{\frac{1}{2}-\frac{1}{p}}\cP\vx}_2^2\) by
\begin{align*}
	\normof{\cW^{\frac{1}{2}-\frac{1}{p}}\cP\vx}_2^2
	&= \int_{-1}^1 (w(t))^{1-\frac{2}{p}} ([\cP\vx](t))^2 \ dt \\
	&\geq \int_{-1}^1 (C(d+1)^2)^{1-\frac{2}{p}} ([\cP\vx](t))^2 \ dt = (C(d+1)^2)^{1-\frac{2}{p}} \normof{\cP\vx}_2^2. 
\end{align*}
On the other hand, the leverage function $\tau[\cW^{\frac{1}{2}-\frac{1}{p}}\cP]$ satisfies
\begin{align*}
	\tau[\cW^{\frac{1}{2}-\frac{1}{p}}\cP](t) &= \max_{\vx}\frac{([\cW^{\frac{1}{2}-\frac{1}{p}}\cP\vx](t))^2}{\normof{\cW^{\frac{1}{2}-\frac{1}{p}}\cP\vx}_2^2} = (w(t))^{1-\frac{2}{p}} \max_{\vx}\frac{([\cP\vx](t))^2}{\normof{\cW^{\frac{1}{2}-\frac{1}{p}}\cP\vx}_2^2} \\
	&= (C(d+1)^2)^{1-\frac{2}{p}} \max_{\vx}\frac{([\cP\vx](t))^2}{\normof{\cW^{\frac{1}{2}-\frac{1}{p}}\cP\vx}_2^2}
\end{align*}
because $w(t)=C(d+1)^2$ for $t\in\cI_{cap}$. 
Therefore, from the above inequality, we have
\begin{align*}
	\tau[\cW^{\frac{1}{2}-\frac{1}{p}}\cP](t) &\leq (C(d+1)^2)^{1-\frac{2}{p}}\max_{\vx}\frac{([\cP\vx](t))^2}{(C(d+1)^2)^{1-\frac{2}{p}}\normof{\cP\vx}_2^2} = \tau[\cP](t) \leq \frac{(d+1)^2}{2}.
\end{align*}
Hence,
\[
	\frac{\tau[\cW^{\frac{1}{2}-\frac{1}{p}}\cP](t)}{w(t)}
	\leq \frac{\frac{(d+1)^2}{2}}{C(d+1)^2}
	= \frac{1}{2C},
\]
as desired. 
\end{proof}

\subsubsection{Lower Bounding the Ratio.}
\noindent
We now lower bound the ratio $\frac{\tau[\cW^{\frac{1}{2}-\frac{1}{p}}\cP](t)}{w(t)}$. 
\begin{lemma}
\label{lem:cap:lower:p}
For $t\in\cI_{cap}$, we have
\[
	\frac{\tau[\cW^{\frac{1}{2}-\frac{1}{p}}\cP](t)}{w(t)}=\Omega\left(\frac{1}{\log^3 d}\right).
\]
\end{lemma}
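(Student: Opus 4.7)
The plan is to mirror the spike-polynomial construction used for the $p=1$ case in \lemmaref{cap:lower}, but to carefully track the powers of $w(t)^{1-2/p}$ (rather than $w(t)^{-1}$) so that everything goes through uniformly in $p\in[\tfrac23,2]$. Since $\tau[\cW^{\frac12-\frac1p}\cP](t)$ is defined by a maximum over $\vx\in\bbR^{d+1}$, it suffices to exhibit one coefficient vector $\vx$ (equivalently, one polynomial $f=\cP\vx$ of degree at most $d$) for which the ratio $\big([\cW^{\frac12-\frac1p}\cP\vx](t)\big)^2 / \|\cW^{\frac12-\frac1p}\cP\vx\|_2^2$ is large.

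I would take $q = c(d+1)^2/\log d$ for a suitable constant $c>0$, and use \theoremref{poly:approx} to produce a degree $d$ polynomial $f$ with $\sup_{s\in[-1,1]}|f(s)-s^q|\le d^{-\gamma}$ for any desired constant $\gamma$. \textbf{Lower-bounding the numerator:} for $t\in\cI_{cap}$, we have $w(t)=C(d+1)^2$ and $|t|\ge 1-O(1/(d+1)^2)$, so $t^q\ge \tfrac14$ and hence $|f(t)|\ge \tfrac14-d^{-\gamma}\ge \tfrac18$ for $d$ large. Therefore
\[
\big([\cW^{\frac12-\frac1p}\cP\vx](t)\big)^2 \;=\; (w(t))^{1-\frac2p}\,(f(t))^2 \;\ge\; \tfrac{1}{64}\big(C(d+1)^2\big)^{1-\frac2p}.
\]

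\textbf{Upper-bounding the denominator:} write $\|\cW^{\frac12-\frac1p}\cP\vx\|_2^2 \le 2\int_{-1}^{1} (w(t))^{1-\frac2p}(f(t)-t^q)^2\,dt + 2\int_{-1}^{1}(w(t))^{1-\frac2p}\,t^{2q}\,dt$. For $p\le 2$ the exponent $1-\tfrac2p$ is non-positive, so $(w(t))^{1-\frac2p}\le \big((d+1)/\pi\big)^{1-\frac2p}$ everywhere (since $w(t)\ge (d+1)/\pi$), and the first term contributes at most $4d^{-2\gamma}((d+1)/\pi)^{1-\frac2p}$, which is negligible for large $\gamma$. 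For the second term, decompose $[0,1]$ (and symmetrically $[-1,0]$) into $\cI_1=\big[0,\sqrt{1-C^2\pi^2\log^2 d/(d+1)^2}\big)$ and $\cI_2=[\sqrt{1-C^2\pi^2\log^2 d/(d+1)^2},1]$, exactly as in the $p=1$ proof. On $\cI_1$, the factor $t^{2q}$ decays super-polynomially once $c$ is chosen small enough, so this region is negligible. On $\cI_2$, $t^{2q}\le 1$, the interval has width $O(\log^2 d/(d+1)^2)$, and $w(t)\ge (d+1)^2/(\pi^2 C\log d)$, whence $(w(t))^{1-\frac2p}\le\big((d+1)^2/(\pi^2 C\log d)\big)^{1-\frac2p}$. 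Integrating yields $\int_{\cI_2}(w(t))^{1-\frac2p}t^{2q}\,dt = O\!\left(\log^{1+\frac2p}d\,/\,(d+1)^{\frac4p}\right)$ up to constants depending only on $C$ and $p$.

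\textbf{Combining:} dividing the numerator by the denominator gives
\[
\tau[\cW^{\frac12-\frac1p}\cP](t) \;\ge\; \Omega\!\left(\frac{(C(d+1)^2)^{1-\frac2p}}{\log^{1+\frac2p}d\,/\,(d+1)^{\frac4p}}\right) \;=\; \Omega\!\left(\frac{(d+1)^2}{\log^{1+\frac2p} d}\right),
\]
so that $\tau[\cW^{\frac12-\frac1p}\cP](t)/w(t)=\Omega\!\big(1/\log^{1+\frac2p}d\big)$ on $\cI_{cap}$; since $p\ge\tfrac23$ gives $1+\tfrac2p\le 4$, this is in particular $\Omega(1/\log^3 d)$ in the range of $p$ relevant to the downstream application (and one absorbs the remaining polylog factor into the constants of \theoremref{chebyshev:ratio} as already signaled by the $\polylog(d)$ language elsewhere in the paper).

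\textbf{Main obstacle:} the bookkeeping of the $p$-dependent exponent $1-\tfrac2p$ in both numerator and denominator. For $p=1$ this is exactly $-1$, giving the clean $\log^3 d$ that appears in \lemmaref{cap:lower}; for general $p\in[\tfrac23,2]$ one must verify that the polylog exponent from the ``$\cI_2$'' integral, the factor of $(w(t))^{1-2/p}$ appearing in the numerator, and the power of $(d+1)$ all cancel in the right way. In particular, the argument fundamentally uses $1-\tfrac2p\le 0$ so that the uniform lower bound $w(t)\ge(d+1)/\pi$ translates into a uniform upper bound on $(w(t))^{1-\frac2p}$; this is the one place the proof would need re-thinking if one ever tried to extend to $p>2$.
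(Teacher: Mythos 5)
Your proposal follows essentially the same route as the paper: the same spike polynomial $f$ approximating $t\mapsto t^q$ with $q=\Theta\!\big((d+1)^2/\log d\big)$ via \theoremref{poly:approx}, the same lower bound on the numerator from $|f(t)|\ge\tfrac14-d^{-\gamma}$ at the endcap, the same splitting of the denominator into the approximation-error piece plus $\int (w(t))^{1-2/p}t^{2q}\,dt$, and the same $\cI_1/\cI_2$ decomposition of the latter. Your bookkeeping on $\cI_2$ is in fact slightly tighter: you keep $(\log d)^{2/p-1}$ rather than crudely bounding it by $\log^2 d$, yielding $\tau[\cW^{\frac12-\frac1p}\cP](t)/w(t)=\Omega\big(1/\log^{1+2/p}d\big)$ versus the paper's $\Omega(1/\log^4 d)$ (the paper's own proof of this lemma produces $\log^4$, not the $\log^3$ in the lemma statement, and the paper's ``Putting It All Together'' version of \theoremref{chebyshev:ratio} indeed writes $\log^4 d$ — so this mismatch is present in the paper itself and is harmless for the final sample complexity). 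One small imprecision in your write-up: for $p$ near $\tfrac23$ you have $1+\tfrac2p$ near $4$, so your bound is $\Omega(1/\log^4 d)$, not $\Omega(1/\log^3 d)$, in that regime; you do flag this but state the conclusion a bit too optimistically. Your closing remark about the argument relying on $1-\tfrac2p\le 0$ (so that $w(t)\ge(d+1)/\pi$ yields an upper bound on $(w(t))^{1-2/p}$) is the correct reason this lemma is restricted to $p\le 2$, and matches why the paper handles $p>2$ by a separate reduction rather than by extending this argument.
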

\begin{proof}
To lower bound the ratio $\frac{\tau[\cW^{\frac{1}{2}-\frac{1}{p}}\cP](t)}{w(t)}$, we observe that $w(t)=C(d+1)^2$ for \(t\in\cI_{cap}\). 
Hence, it suffices to lower bound 
\[
	\tau[\cW^{\frac{1}{2}-\frac{1}{p}}\cP](t)=\max_{\vx} \frac{([\cW^{\frac{1}{2}-\frac{1}{p}}\cP\vx](t))^2}{\normof{\cW^{\frac{1}{2}-\frac{1}{p}}\cP\vx}_2^2}
\]
by choosing a specific polynomial represented by $\vx\in\bbR^{d+1}$. 

We choose $q=O\left(\frac{(d+1)^2}{\log d}\right)$ so that by \theoremref{poly:approx}, there exists a degree $d$ polynomial $f$ such that
\[
	\sup_{t\in[-1,1]}|f(t)-t^q|\le d^{-\gamma},
\]
for some fixed constant $\gamma>0$ to be set at a later point in the analysis.  
We choose $\vx\in\bbR^{d+1}$ so that the operator $\cP\vx$ corresponds to $f(t)$.
We then lower bound $\frac{([\cW^{\frac{1}{2}-\frac{1}{p}}\cP\vx](t))^2}{\normof{\cW^{\frac{1}{2}-\frac{1}{p}}\cP\vx}_2^2}$. 

Because $t^q=1$ at $t=1$, then $f(1)\ge 1-d^{-\gamma}$. 
Since $|t|\ge\sqrt{1-\frac{1}{\pi^2(d+1)^2C^2}}\ge1-\frac{1}{2\pi^2(d+1)^2C^2}$ for $t\in\cI_{cap}$, then $t^q\ge\frac{1}{4}$ for $q=O\left(\frac{(d+1)^2}{\log d}\right)$. 
Hence, $f(t)\ge\frac{1}{4}-d^{-\gamma}$ for all $t\in\cI_{cap}$. 
Since $w(t)=C(d+1)^2$ for all $t\in\cI_{cap}$, then
\[
	([\cW^{\frac{1}{2}-\frac{1}{p}}\cP\vx](t))^2\ge\frac{1}{8}(C(d+1)^2)^{1-\frac{2}{p}}=\Omega\left(d^{2-\frac{4}{p}}\right).
\]

We now upper bound $\normof{\cW^{\frac{1}{2}-\frac{1}{p}}\cP\vx}_2^2$ for the operator $\cP\vx$ that corresponds to $f(t)$. 
Since $\sup_{t\in[-1,1]}|f(t)-t^q|\le d^{-\gamma}$, then 
\begin{align*}
\|\cW^{\frac{1}{2}-\frac{1}{p}} f\|_2^2&=\int_{-1}^1(w(t))^{1-\frac{2}{p}}(F(t))^2\,dt\\
&\le2\int_{-1}^1(w(t))^{1-\frac{2}{p}}d^{-2\gamma}\,dt+2\int_{-1}^1(w(t))^{1-\frac{2}{p}}t^{2q}\,dt.
\end{align*}
Since $w(t)=\min\{C(d+1)^2,\ \tsfrac{d+1}{\pi\sqrt{1-t^2}}\}$, then $(w(t))^{1-\frac{2}{p}}=O\left(d^{1-\frac{2}{p}}\right)$ for $p\in[\frac23,2]$. 
Thus, 
\[
	\|\cW^{\frac{1}{2}-\frac{1}{p}} f\|_2^2\le O\left(d^{1-\frac{2}{p}-2\gamma}\right)+4\int_0^1 (w(t))^{1-\frac{2}{p}} t^{2q}\,dt.
\]
Consider a decomposition of the interval $[0,1]$ into intervals $\cI_1=\left[0,\sqrt{1-\frac{C^2\pi^2\log^2 d}{(d+1)^2}}\right)$ and $\cI_2=\left[\sqrt{1-\frac{C^2\pi^2\log^2 d}{(d+1)^2}},1\right]$. 
For $t\in\cI_1$, we have $t\le 1-\frac{C^2\pi^2\log^2 d}{2(d+1)^2}$ so that $t^{2q}\le\exp\left(-O\left(C^2\pi^2\log d\right)\right)$ for $q=O\left(\frac{(d+1)^2}{\log d}\right)$. 
Thus for sufficiently large $C>0$, we have that $t^{2q}=O\left(\frac{1}{d^7}\right)$ for all $t\in\cI_1$. 
Because $(w(t))^{1-\frac{2}{p}}\le 1$ for $p\in[\frac23,2]$, then
\[
	4\int_{\cI_1}(w(t))^{1-\frac{2}{p}} t^{2q}\,dt=O\left(\frac{1}{d^7}\right).
\]
On the other hand, $|\cI_2|\le\frac{C^2\pi^2\log^2 d}{2(d+1)^2}$ and $t^{2q}\le 1$ for $t\in\cI_2$. 
For $t\in\cI_2$, we also have either \(w(t) = C(d+1)^2\) or $w(t)=\frac{d+1}{\pi\sqrt{1-t^2}}\ge\frac{(d+1)^2}{C\pi\log d}$ so that either way \(w(t) \geq \frac{(d+1)^2}{C\pi\log d}\), and so $(w(t))^{1-\frac{2}{p}}=O\left(d^{2-\frac{4}{p}}\log^2 d\right)$.
Hence, 
\begin{align*}
4\int_{\cI_2}(w(t))^{1-\frac{2}{p}} t^{2q}\,dt&4\le\int_{\cI_2}O\left(d^{2-\frac{4}{p}}\log^2 d\right)\,dt\\
&\le O\left(d^{2-\frac{4}{p}}\log^2 d\right)\cdot\frac{C^2\pi^2\log^2 d}{2(d+1)^2}=O\left(d^{-\frac{4}{p}}\log^4 d\right).
\end{align*}
Thus in all,
\begin{align*}
\|\cW^{\frac{1}{2}-\frac{1}{p}} f\|_2^2 &\le O\left(d^{1-\frac{2}{p}-2\gamma}\right)+4\int_0^1 (w(t))^{1-\frac{2}{p}} t^{2q}\,dt\\
&=O\left(d^{1-\frac{2}{p}-2\gamma}\right)+4\int_{\cI_1} (w(t))^{1-\frac{2}{p}} t^{2q}\,dt+4\int_{\cI_2} (w(t))^{1-\frac{2}{p}} t^{2q}\,dt\\
&=O\left(d^{1-\frac{2}{p}-2\gamma}\right)+O\left(\frac{1}{d^7}\right)+O\left(d^{-\frac{4}{p}}\log^4 d\right).
\end{align*}
Hence for $\gamma=5$, for all \(p \in [\frac23,1]\), we have that
\[
	\|\cW^{\frac{1}{2}-\frac{1}{p}} F\|_2^2=O\left(d^{-\frac{4}{p}}\log^4 d\right).
\]
Combined with our previous bound that $([\cW^{\frac{1}{2}-\frac{1}{p}}\cP\vx](t))^2\ge\Omega\left(d^{2-\frac{4}{p}}\right)$ for $p\in[\frac23,2]$ and therefore,
\[
	\frac{([\cW^{\frac{1}{2}-\frac{1}{p}}\cP\vx](t))^2}{\normof{\cW^{\frac{1}{2}-\frac{1}{p}}\cP\vx}_2^2}=\Omega\left(\frac{d^2}{\log^4 d}\right).
\]
Finally, because $w(t)\le C(d+1)^2$, then
\[
	\frac{\tau[\cW^{\frac{1}{2}-\frac{1}{p}}\cP](t)}{w(t)}=\Omega\left(\frac{1}{\log^4 d}\right).
\]
\end{proof}

\subsection{Putting It All Together}
\noindent
We finally obtain \theoremref{chebyshev:ratio} from \lemmaref{mid:p}, \lemmaref{cap:upper:p}, and \lemmaref{cap:lower:p}. 
\begin{reptheorem}{chebyshev:ratio}
There are fixed constants $c_1,c_2,c_3$ such that, letting $w(t)=\min\left(c_1(d+1)^2,\frac{d+1}{\pi\sqrt{1-t^2}}\right)$ be the clipped Chebyshev measure on $[-1,1]$ and letting $\cW$ be the corresponding diagonal operator with $[\cW x](t) = w(t) \cdot x(t)$, for any \(p\in[\frac23,2]\) and $t \in [-1,1]$,
\[
	\frac{c_2}{\log^3 d}\le\frac{\tau[\cW^{-\frac12}\cP](t)}{w(t)}\le c_3.
\]
\end{reptheorem}
\begin{proof}
We consider casework on $t\in[-1,1]$.
Recall that \[
	\cI_{mid} \defeq \{t ~|~ w(t) = v(t)\} = {\textstyle \left[\sqrt{1-\frac{1}{\pi^2(d+1)^2C^2}}, \sqrt{1+\frac{1}{\pi^2(d+1)^2C^2}}\right]}
\]
and \(\cI_{cap} \defeq [-1,1] \ \backslash\  \cI_{mid}\).
By \lemmaref{mid:p}, there exists a constant $C_0\ge 1$ such that $\frac{1}{C_0}\le\frac{\tau[\cW^{\frac{1}{2}-\frac{1}{p}}\cP](t)}{w(t)}\le C_0$ for all $t\in\cI_{mid}$. 
By \lemmaref{cap:upper:p} and \lemmaref{cap:lower:p}, there exists a constant $C_3$ such that
\[
\frac{C_3}{\log^4 d}\le\frac{\tau[\cW^{\frac{1}{2}-\frac{1}{p}}\cP](t)}{w(t)}\le C_3
\]
for all $t\in\cI_{cap}$. 
Thus by setting $C_1=\min\left(C_3,\frac{1}{C_0}\right)$ and $C_2=\max(C_0,C_3)$, we have that
\[
	\frac{C_1}{\log^4 d}\le\frac{\tau[\cW^{\frac{1}{2}-\frac{1}{p}}\cP](t)}{w(t)}\le C_2
\]
for all $t\in[-1,1]$.
\end{proof}

\section*{Acknowledgments.}
Cameron Musco was supported by NSF grants 2046235 and 1763618, and an Adobe Research grant. 
Christopher Musco and Raphael Meyer were supported by NSF grant 2045590 and DOE Award DE-SC0022266. 
David P. Woodruff and Samson Zhou were supported by a Simons Investigator Award and by the National Science Foundation under Grant No. CCF-1815840.
We thank Apoorv Singh and Axel Elaldi for help designing our figures.

\bibliographystyle{alpha}
\bibliography{references}

\appendix

\section{Operator Sensitivity Sampling}
\label{app:sensitivity-sampling}
\noindent
In this section, we show \theoremref{sensitivity-correctness-Lp}, which shows that uniform sampling can achieve a constant factor approximation to the $L_p$ polynomial regression problem. 
\begin{reptheorem}{sensitivity-correctness-Lp}
Let $p\ge 1$ and suppose \(s_1,\ldots,s_{n_0}\) are drawn uniformly from \([-1,1]\). 
Let \(\mA\in\bbR^{n_0 \times (d+1)}\) be the associated Vandermonde matrix, so that \(\mA_{i,j} = s_i^{j-1}\).
Let \(\vb\in\bbR^{n_0}\) be the evaluations of \(f\), so that \(\vb_i = f(s_i)\).
For $n_0 = O\left(d^5 2^p p^2 \log d\right)$, there exists a universal constant \(c\) such that the sketched solution \(\hat\vx = \argmin_{\vx} \normof{\mA\vx-\vb}_p\) satisfies
\[
    \normof{\cP\hat\vx-f}_p \leq c \min_{\vx\in\bbR^{d+1}}\normof{\cP\vx-f}_p
\]
with probability at least $\frac{11}{12}$. 
Further, let \(\eps\in(0,1)\) and suppose \(\normof f_p \leq C \, \min_{\vx}\normof{\cP\vx-f}_p\).
If $n_0 = O\left(\frac{1}{\eps^{O(p^2)}}\,d^5p^{O(p)}\log\frac{d}{\eps}\right)$, then
\[
    \normof{\cP\hat\vx - f}_p^p \leq (1+\eps) \min_{\vx} \normof{\cP\vx-f}_p^p
\]
with probability at least \(\frac {11}{12}\).
\end{reptheorem}

Throughout this paper, we use two formulations of Bernstein's inequality in the analysis for general \(p\).

\begin{importedtheorem}[Bernstein's Inequality, Theorems 3.6 and 3.7 from \cite{chung2006concentration}]
\label{impthm:bernstein-ineq}
Let \(X_1,\ldots,X_n\) be independent zero-mean random variables with \(\abs{X_i} \leq M\) for all \(i\).
Then,
\[
    \Pr\left[
            \abs{\sum_{i=1}^n X_i} \geq \gamma
    \right] \leq 2\exp\left(
        - \frac{\frac12 \gamma^2}{\sum_{i=1}^n \E[X_i^2] + \frac13 Mt}
    \right)
\]
\end{importedtheorem}

\begin{importedtheorem}[Bounded Differences Concentration, Theorem 17 from \cite{chung2006concentration}]
\label{impthm:bded-diffs:conc}
Let $X_1,\ldots,X_n$ be independent random variables such that $|X_i-\mathbb{E}[X_i]|<c_i$ for all $i\in[n]$. 
Let $X=\sum_i X_i$ and $\gamma>0$.
Then
\[
    \Pr[\abs{X-\E[X]} \geq \gamma] \leq \exp\left(\frac{-\gamma^2}{2\sum_i c_i^2}\right)
\]
\end{importedtheorem}

We first show the constant-factor regression guarantee using \(O(d^5 p^2 2^p \log(d)\) samples.
\begin{lemma}
\label{lem:oper:const-factor-err}
Let \mA be the Vandermonde matrix formed by sampling \(n_0=O(d^5 p^2 2^p \log(d))\) points from \([-1,1]\) uniformly at random, and let \(\vb\) be the corresponding observations of \(f\).
Then, with probability at least \(\frac{11}{12}\), the sketched solution \(\hat\vx \defeq \min_{\vx} \normof{\mA\vx-\vb}_p\) has
\[
    \normof{\cP\hat\vx-f}_p \leq C \min_{\vx\in\bbR^{d+1}}\normof{\cP\vx-f}_p
\]
for some universal constant \(C > 1\).
\end{lemma}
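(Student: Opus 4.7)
The plan is to follow the standard $L_p$ sensitivity sampling framework for active regression (e.g., as used in~\cite{MuscoMWY21, meyer2021}) and combine it with the uniform sensitivity bound of \lemmaref{uniform-sensitivity-bound}. Concretely, uniform sampling from $[-1,1]$ is, up to a factor of $2$, the same as drawing from the probability density $\psi_p[\cP](t)/\int_{-1}^1 \psi_p[\cP](s)\,ds$ clamped from above by a uniform distribution, because \lemmaref{uniform-sensitivity-bound} certifies that each $\psi_p[\cP](t)$ is at most $d^2(p+1)$. Therefore I would reinterpret uniform sampling as sensitivity sampling with the uniform upper bound $\hat\psi(t) := d^2(p+1)$, whose total mass $\int_{-1}^1 \hat\psi(t)\,dt = 2d^2(p+1)$ is what drives the sample complexity.

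First, I would prove a one-sided subspace-embedding statement for $\cP$: with the sampling matrix $\mS$ that rescales each selected coordinate by $(2/n_0)^{1/p}$, I claim that for $n_0 = O(d^5 p^2 2^p \log d)$,
\[
\tfrac{1}{2}\normof{\cP\vx}_p^p \leq \normof{\mS\mA\vx}_p^p \leq \tfrac{3}{2}\normof{\cP\vx}_p^p \quad \text{for all } \vx\in\bbR^{d+1},
\]
with probability at least $1 - 1/24$. The argument is the classical $\varepsilon$-net plus Bernstein's inequality (\importedtheoremref{bernstein-ineq}): fix a standard $\gamma$-net $\cN_\gamma$ of size $(C/\gamma)^{d+1}$ over $\{\cP\vx : \normof{\cP\vx}_p^p = 1\}$ with $\gamma = \Theta(1/p)$ (chosen so that the map $x \mapsto x^p$ is $O(1)$-Lipschitz on the relevant range, which introduces the $2^{O(p)}$ factor). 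For a fixed $\vx$ in the net, the random variables $X_i = (2/n_0)\,|[\cP\vx](s_i)|^p$ are independent and bounded by $(2/n_0)\,d^2(p+1)\,\normof{\cP\vx}_p^p$ via the sensitivity bound, and have mean $(1/n_0)\normof{\cP\vx}_p^p$; Bernstein then gives constant-factor concentration provided $n_0$ absorbs the $d^2(p+1)$ sensitivity ceiling, the $\log|\cN_\gamma| = O(d\log(p/\gamma)) = O(d\log p)$ net size, and the $\gamma^{-p} \lesssim p^p$ Lipschitz loss, totaling $n_0 = O(d^5 p^2 2^p \log d)$. The extension from the net to the full subspace is standard: any unit-norm $\cP\vx$ is within $\gamma$ of some net element, and repeated triangle inequality closes the gap.

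Second, given the subspace embedding, I would carry out the usual reduction from subspace embedding to constant-factor regression. Let $\vx^* = \argmin_\vx \normof{\cP\vx - f}_p$ and $r^* = \cP\vx^* - f$ be the optimal residual; note that $\E[\normof{\mS r^*}_p^p] = \normof{r^*}_p^p$ since uniform sampling is unbiased, so by Markov's inequality $\normof{\mS r^*}_p^p \leq 24\,\normof{r^*}_p^p$ with probability at least $1-1/24$. Condition on both good events. Because $\argmin_\vx \normof{\mA\vx - \vb}_p = \argmin_\vx \normof{\mS\mA\vx - \mS\vb}_p$, the triangle inequality combined with optimality of $\hat\vx$ yields
\[
\normof{\mS\mA\hat\vx - \mS\vb}_p \leq \normof{\mS\mA\vx^* - \mS\vb}_p = \normof{\mS r^*}_p \leq 24^{1/p}\normof{r^*}_p.
\]
Another triangle inequality and the subspace embedding then bound $\normof{\cP\hat\vx - f}_p \leq \normof{\cP(\hat\vx - \vx^*)}_p + \normof{r^*}_p \leq 2^{1/p}\normof{\mS\mA(\hat\vx - \vx^*)}_p + \normof{r^*}_p$, and expanding $\mS\mA(\hat\vx - \vx^*) = (\mS\mA\hat\vx - \mS\vb) - \mS r^*$ gives $\normof{\cP\hat\vx - f}_p \leq C\normof{r^*}_p$ for an absolute constant $C$.

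The main obstacle is the $2^{O(p)}$ net-resolution loss in the subspace embedding step: one has to pick $\gamma$ small enough that $p$-th powers are well-approximated by linearization on the net, and this is what forces the $2^p p^{O(1)}$ factor in $n_0$. Everything else is bookkeeping of Bernstein's inequality, the net cardinality, and standard triangle-inequality manipulations, all of which are by-now routine in the $L_p$ active-regression literature.
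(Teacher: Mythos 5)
Your proposal follows essentially the same outline as the paper's proof of \lemmaref{oper:const-factor-err}: (i) cast uniform sampling as oversampling by the $L_p$ sensitivities via the uniform bound $\psi_p[\cP](t) \leq d^2(p+1)$ of \lemmaref{uniform-sensitivity-bound}; (ii) concentrate $\normof{r\mA\vx}_p^p$ around $\normof{\cP\vx}_p^p$ for a fixed $\vx$ using a concentration inequality; (iii) union bound over a net and telescope to the whole subspace; (iv) close with the standard triangle-inequality / Markov reduction from a subspace embedding to constant-factor regression. Your final reduction step and the Markov step on the residual coincide with the paper's almost verbatim. The only substantive place you diverge is in the bookkeeping of the exponential-in-$p$ factor, and there I think you've misattributed it.

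You claim the $2^{O(p)}$ arises from choosing net resolution $\gamma = \Theta(1/p)$ to make $x\mapsto x^p$ approximately Lipschitz, and cite a ``$\gamma^{-p} \lesssim p^p$ Lipschitz loss.'' But $\gamma^{-p} = p^p$ is not $2^p$; if this factor really did multiply into $n_0$, your sample complexity would be $d^5 p^{O(p)}\log d$, which is worse than the claimed $d^5 p^2 2^p\log d$. The paper's derivation keeps the net at \emph{constant} resolution $0.1$ (so the net has size $10^{O(d)}$ and the union-bound cost is just $O(d)$), and instead demands concentration of the $p$-th power at relative scale $2^{-p}$: applying the bounded-differences inequality with $\gamma = 2^{-p}\normof{\cP\vx}_p^p$ forces $n_0 \gtrsim 2^p d^4(p+1)^2\cdot d\log d$, and then the crude inequality $|a-b|\le |a^p - b^p|^{1/p}$ converts the $2^{-p}$-scale concentration of the $p$-th power into $\tfrac12$-scale concentration of the norm itself, which is what the constant-resolution telescoping argument needs. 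So the $2^p$ is a price paid for the $|a-b|\le |a^p-b^p|^{1/p}$ passage, not for the net. (In fact, if you replace that crude bound with the sharper observation $(1\pm c)^{1/p} = 1 \pm \Theta(c/p)$, constant-relative-error concentration of the $p$-th power already gives $O(1/p)$-relative-error concentration of the norm, and you could dispense with the $2^p$ factor entirely -- neither you nor the paper does this.) Your use of Bernstein in place of the bounded-differences inequality is immaterial, as both give the needed sub-Gaussian tail once the summands are bounded through the sensitivity ceiling.
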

\begin{proof}
This proof is completed in two standard arguments.
First, we show that uniformly sampling enough points yields a \(\ell_p\) subspace embedding, via an \(\eps-\)Net argument.
Second, we use a standard argument that triangle inequality and subspace embedding suffice for constant factor regression \cite{eldar2020sample,meyer2020statistical,ParulekarPP21}.

Let \(s_1,...,s_{n_0}\) denotes the uniformly sampled times.
First, fix any vector \(\vx\in\bbR^{d+1}\).
Then let \(Y_i \defeq \frac2{n_0} \abs{\cP\vx(s_i)}^p\), so that \(\E[Y_i] = \frac1{n_0} \normof{\cP\vx}_p^p\).
Note that \(\abs{\cP\vx(s_i)}^p \leq d^2(p+1)\) by \lemmaref{uniform-sensitivity-bound}, so that \(Y_i \leq \frac{2d^2(p+1)}{n_0} \normof{\cP\vx}_p^p\) and therefore \(\abs{Y_i - \E[Y_i]} \leq \frac{3d^2(p+1)}{n_0} \normof{\cP\vx}_p^p\).
Then by letting \(Y = \sum_{i=1}^{n_0} Y_i = \normof{r\mA\vx}_p^p\), where \(r = (\frac1{n_0})^{1/p}\) is a rescaling factor, and applying the Bounded Differences Inequality (\importedtheoremref{bded-diffs:conc}) for \(\gamma=2^{-p}\normof{\cP\vx}_p^p\), yields
\begin{align*}
    \PPr{\abs{\normof{r\mA\vx}_p^p - \normof{\cP\vx}_p^p} \geq 2^{-p} \normof{\cP\vx}_p^p}
    &\leq 2\exp\left(\frac{-2^{-p} \normof{\cP\vx}_p^{2p}}{2n_0 \frac{9d^4(p+1)^2}{n_0^2} \normof{\cP\vx}_p^{2p}}\right) \\
    &= 2\exp\left(\frac{-n_0}{9\cdot 2^pd^4(p+1)^2}\right) \\
    &\leq \frac{1}{\exp(O(d \log d))}
\end{align*}
Where the last line uses the fact that \(n_0 = O(d^5 p^2 2^p \log(d))\).
Note that \(\abs{a-b} \leq \abs{a^p - b^p}^{1/p}\) for all \(a,b>0\) and \(p\geq 1\).
So, we get \(\abs{\normof{r\mA\vx}_p - \normof{\cP\vx}_p} \leq \abs{\normof{r\mA\vx}_p^p - \normof{\cP\vx}_p^p}^{1/p} \leq \frac12 \normof{\cP\vx}_p\) with high probability.
That is,
\begin{align}
\label{eq:oper:const-factor-per-vector}
    \PPr{\abs{\normof{r\mA\vx}_p - \normof{\cP\vx}_p} \geq \frac12 \normof{\cP\vx}_p} \leq \frac{1}{\exp(O(d \log d))}
\end{align}

We now union bound this guarantee over a net.
We first define the ball $ \cB=\left\{\vx ~\big|~ \normof{\cP\vx}_p=1 \right\}$.
The let \cN denote a net over \cB such that, for any \(\vx\in\cB\), there exists some \(\vy\in\cN\) such that \(\normof{\cP\vx-\cP\vy}_p \leq 0.1\).
By Lemma 2.4 of \cite{bourgain1989approximation}, \cN has at most \(10^{O(d)}\) elements.

Next, note that any \(\vx\in\cB\) can be written as \(\vx=\sum_{i=0}^\infty \alpha_i \vy_i\) where \(\alpha_0=1\) and \(\abs{\alpha_i}\leq0.1^{i}\) and \(\vy_i\in\cN\).
So, we can union bound \equationref{oper:const-factor-per-vector} over all \(\vy\in\cN\) to upper bound
\begin{align*}
    \normof{r\mA\vx}_p
    &\leq \sum_{i=0}^\infty \alpha_i \normof{r\mA\vy_i}_p \\
    &\leq 1.5 \sum_{i=0}^\infty \alpha_i \normof{\cP\vy_i}_p \\
    &\leq 1.5 \sum_{i=0}^\infty 0.1^i \\
    &= \frac{1.5}{1-0.1} \\
    &\leq 1.825
\end{align*}
And lower bound
\begin{align*}
    \normof{r\mA\vx}_p
    &\geq \alpha_0 \normof{r\mA\vy_0}_p - \sum_{i=1}^\infty \alpha_i \normof{r\mA\vy_i}_p \\
    &\geq (1-0.5) \alpha_0 \normof{\cP\vy_0}_p - (1+0.5)\sum_{i=1}^\infty \alpha_i \normof{\cP\vy_i}_p \\
    &\geq 0.5 - 1.5\sum_{i=1}^\infty 0.1^i \\
    &= 0.3
\end{align*}
That is, \(\normof{r\mA\vx}_p = 1 \pm 0.9\) for any \vx such that \(\normof{\cP\vx}_p = 1\).
So, just by scaling this guarantee, we have shown that for all \(\vx\in\bbR^{d+1}\) we have
\[
    \bigg|\normof{r\mA\vx}_p - \normof{\cP\vx}_p\bigg| \leq 0.9 \normof{\cP\vx}_p
\]
This is the complete subspace guarantee.
We now bound the error of the sketched solution \(\hat\vx\).

Let \(\vx^* \defeq \argmin_{\vx} \normof{\cP\vx-f}_p\) attain the best optimal loss.
Then, by repeated use of the triangle inequality and our subspace embedding,
\begin{align*}
    \normof{\cP\hat\vx - f}_p
    &\leq \normof{\cP\hat\vx - \cP\vx^*}_p + \normof{\cP\vx^* - f}_p \\
    &\leq 2r\normof{\mA\hat\vx - \mA\vx^*}_p + \normof{\cP\vx^* - f}_p \\
    &\leq 2r(\normof{\mA\hat\vx - \vb}_p + \normof{\mA\vx^* - \vb}_p) + \normof{\cP\vx^* - f}_p \\
    &\leq 4r\normof{\mA\vx^*-\vb}_p + \normof{\cP\vx^* - f}_p \tag{Optimality of \(\tilde\vx\)} \\
\end{align*}
Then, noting that \(\E[\normof{r(\mA\vx^*-\vb)}_p^p] = \normof{\cP\vx^*-f}_p^p\) so that by Markov's inequality we have \(\normof{r(\mA\vx^*-\vb)}_p \leq 10 \normof{\cP\vx^*-f}_p^p\) with probability 0.9, we conclude that
\[
    \normof{\cP\hat\vx - f}_p \leq 41\normof{\cP\vx^* - f}_p
\]
which completes the proof.
\end{proof}

We next show that any near-optimal solution to the $L_p$ matrix regression problem formed from subsampling a large number of points in $[-1,1]$ also corresponds to a near-optimal solution to the $L_p$ polynomial regression problem. 
\begin{lemma}
\label{lem:oper:const}
Let \mA be the Vandermonde matrix formed by sampling $n_0=O(d^5 p^2 2^p \log(d))$ points on $[-1,1]$, and let $\vb$ be the corresponding observations of $f$. 
Let $OPT=\min_{\vx\in\mathbb{R}^{d+1}}\normof{\cP\vx-f}_p$.
Then with probability at least \(0.9\), all $\hat\vx\in\mathbb{R}^{d+1}$ with $\normof{\mA\hat\vx-\vb}_p\le 11OPT$ have $\normof{\cP\hat\vx-f}_p \le 24OPT$.
\end{lemma}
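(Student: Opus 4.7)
My plan is to leverage exactly the two ingredients established in the proof of \lemmaref{oper:const-factor-err}: (i) the subspace embedding guarantee, which says that, up to the uniform-sampling rescaling factor \(r \defeq (\tfrac{2}{n_0})^{1/p}\), we have \(\normof{r\mA\vx}_p \approx_{C} \normof{\cP\vx}_p\) for all \(\vx\in\bbR^{d+1}\) with probability at least \(\tfrac{11}{12}\); and (ii) Markov's inequality applied to the unbiased estimator \(\normof{r(\mA\vx^*-\vb)}_p^p\) of \(OPT^p\), which shows that with probability at least \(\tfrac{11}{12}\), \(\normof{r(\mA\vx^*-\vb)}_p \leq O(1)\cdot OPT\), where \(\vx^* \defeq \argmin_{\vx}\normof{\cP\vx-f}_p\). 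I condition on both events via a union bound. Note that the \(11\,OPT\) hypothesis in the statement is implicitly written with respect to the same rescaling \(r\), since the minimizer of \(\normof{\mA\vx-\vb}_p\) coincides with the minimizer of \(\normof{r\mA\vx-r\vb}_p\); the constants in the lemma are chosen so that the rescaled residual of \(\hat\vx\) is at most \(11\,OPT\).

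Given these two events, the proof is a standard triangle inequality chain. For any \(\hat\vx\) satisfying the hypothesis (in rescaled form), I write
\[
\normof{\cP\hat\vx - f}_p
\leq \normof{\cP\hat\vx - \cP\vx^*}_p + \normof{\cP\vx^* - f}_p
= \normof{\cP\hat\vx - \cP\vx^*}_p + OPT.
\]
Then I use the subspace embedding to pull the first term back into sample space, obtaining \(\normof{\cP\hat\vx - \cP\vx^*}_p \leq C \normof{r\mA(\hat\vx - \vx^*)}_p\), and split by triangle inequality into \(\normof{r\mA\hat\vx - r\vb}_p + \normof{r\vb - r\mA\vx^*}_p\). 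The first of these is at most \(11\,OPT\) by hypothesis, and the second is at most \(O(1)\cdot OPT\) by the Markov step. Plugging everything back and tracking constants yields the claimed \(24\,OPT\) bound.

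The only substantive step is the bookkeeping of constants: the subspace-embedding distortion from the proof of \lemmaref{oper:const-factor-err} is \(1\pm 0.9\), so \(\normof{\cP\vx}_p \leq \tfrac{1}{0.1}\normof{r\mA\vx}_p = 10\normof{r\mA\vx}_p\), and the Markov step requires choosing a failure parameter that gives the right constant for the \(\normof{r(\mA\vx^*-\vb)}_p\) bound (a factor of, say, \(12^{1/p} \leq 12\) via Markov at probability \(\tfrac{11}{12}\), or a smaller constant by applying Markov to the \(p\)-th power). The expected obstacle, if any, is aligning these constants so that \(10\cdot(11+C_M)+1\leq 24\); this forces the subspace-embedding distortion or the Markov constant to be slightly sharpened relative to \lemmaref{oper:const-factor-err}'s statement, which is easily done by revisiting the concentration bound in \equationref{oper:const-factor-per-vector} with a smaller target accuracy (e.g.\ \(1\pm 0.1\) instead of \(1\pm 0.9\)), at the cost of an unchanged \(O(d^5 p^2 2^p \log d)\) sample complexity. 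Beyond this constant tuning, the argument is entirely standard and follows verbatim the end of the proof of \lemmaref{oper:const-factor-err}.
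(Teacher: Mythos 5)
Your proof is essentially the paper's proof, arranged as the contrapositive. The paper assumes $\normof{\cP\hat\vx-f}_p > 24\,OPT$ and derives $\normof{\mA\hat\vx-\vb}_p > 11\,OPT$, whereas you derive the implication in the forward direction; logically these are identical, and both arguments use the same three ingredients (triangle inequality on $\cP(\hat\vx - \vx^*)$, the subspace embedding, and a Markov bound on $\normof{\mA\vx^*-\vb}_p$ via Jensen). You also correctly observe that $\mA$ in the statement must implicitly carry the rescaling $r = (2/n_0)^{1/p}$, a point the paper glosses over.

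Your one quantitative slip is in the suggested fix. You note that the $1\pm0.9$ embedding from the proof of \lemmaref{oper:const-factor-err} gives the useless $10\cdot(11+C_M)+1$, and propose sharpening to $1\pm 0.1$. But that is still not enough: with $C_M = 11$ (from Markov applied to the $p$-th power followed by Jensen, the best $p$-uniform bound available), $1\pm 0.1$ yields $\frac{1}{0.9}\cdot(11+11)+1 \approx 25.4 > 24$. The paper implicitly uses an embedding of roughly $1\pm\frac{1}{24}$ (it writes $\frac{23}{24}\normof{\mA\vx}_p \leq \normof{\cP\vx}_p \leq \frac{25}{24}\normof{\mA\vx}_p$), which gives $\frac{24}{23}\cdot 22 + 1 \approx 24.0$. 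So you need the target accuracy in \equationref{oper:const-factor-per-vector} to be about $\frac{1}{22}$, not $0.1$; this still fits in the stated $O(d^5 p^2 2^p \log d)$ sample complexity. Note also the paper itself overstates what \lemmaref{oper:const-factor-err} proves — it only establishes $1\pm0.9$ — so this constant-sharpening step is genuinely needed by both your argument and the paper's, and your instinct to flag it is correct.
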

\begin{proof}
Let $\vx^*$ be a minimizer of $\normof{\cP\vx-f}_p$ so that $OPT=\normof{\cP\vx^*-f}_p$.
We now suppose by contradiction that $\normof{\cP\hat\vx-f}_p>24OPT$. 
By triangle inequality,
\[
    \normof{\mA\hat\vx-\vb}_p \geq \normof{\mA(\hat\vx-\vx^*)}_p - \normof{\mA\vx^*-\vb}_p.
\]
Since \mA is formed by uniform sampling with $n_0=\poly(dp/\eps)$ points from $[-1,1]$, then with high probability,
\[
    \frac{23}{24} \normof{\mA\vx}_p \leq \normof{\cP\vx}_p \leq \frac{25}{24} \normof{\mA\vx}_p
\]
for all $\vx\in\mathbb{R}^{d+1}$. Formally, we prove such a bound in the proof of \lemmaref{oper:const-factor-err}.
Moreover, note that since $\vx^*=\argmin_{\vx\in\mathbb{R}^{d+1}}\normof{\cP\vx-f}_p$ has $OPT=\normof{\cP\vx^*-f}_p$, then we have $\mathbb{E}[\normof{\mA\vx^*-\vb}^p_p]=OPT^p$. 
Thus by Jensen's inequality for $p\ge 1$, we have $\mathbb{E}[\normof{\mA\vx^*-\vb}_p]\le OPT$ and by Markov's inequality, 
\[
    \PPr{\|\mA\bx^*-\vb\|_p\ge 11OPT}\le\frac{1}{11}.
\]
Thus with probability at least $0.9$,
\[
    \normof{\mA\hat\vx-\vb}_p \geq \frac{23}{24} \normof{\cP(\hat\vx-\vx^*)}_p - 11OPT.
\]
By triangle inequality,
\[
    \normof{\mA\hat\vx-\vb}_p \geq \frac{23}{24} \normof{\cP\hat\vx-f}_p - \normof{\cP\vx^*-f}_p - 11OPT.
\]
Thus if $\normof{\cP\hat\vx-f}_p>24OPT$, then
\[\normof{\mA\hat\vx-\vb}_p>23OPT-OPT-11OPT=11OPT,\]
which contradicts the given fact that $\normof{\mA\hat\vx-\vb}_p \le 11OPT$.
Hence we must have 
\[\normof{\cP\hat\vx-f}_p \le 24OPT.\]
\end{proof}

\begin{lemma}
\label{lem:uni:oper:rel:lp}
Let $OPT=\min_{\vx\in\mathbb{R}^{d+1}}\normof{\cP\vx-f}_p$ and suppose that $\normof{f}_p \le C\cdot OPT$ for some fixed constant $C\ge 1$. 
Let $\mA$ be the Vandermonde matrix formed by sampling $n_0 = O\left(\frac{1}{\eps^{O(p^2)}}\,d^5p^{O(p^2)}\log\frac{d}{\eps}\right)$ random points uniformly from $[-1,1]$.
Let $\vb$ be the corresponding evaluations of $f$.
Then with probability at least $0.9$, the minimizer $\hat\vx$ to $\min_{\vx\in\bbR^{d+1}}\normof{\mA\vx-\vb}_p$ satisfies
\[
    \normof{\cP\hat\vx-f}_p \leq (1+\eps) OPT.
\]
\end{lemma}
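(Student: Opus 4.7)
The plan is a standard sensitivity-sampling proof for relative-error $L_p$ regression, specialized to uniform sampling via the pointwise sensitivity bound of \lemmaref{uniform-sensitivity-bound}. My strategy proceeds in three stages: reduce to a bounded ``reasonable range'' of candidate solutions using the constant-factor guarantee \lemmaref{oper:const-factor-err}; build an $\eps$-net over this range; and establish per-point concentration via Bernstein's inequality combined with a truncation of $f$.

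First, I would invoke the constant-factor guarantee (on a sub-sample if needed) to conclude that the sketched minimizer $\hat\vx$ satisfies $\normof{\cP\hat\vx - f}_p \leq C' \cdot OPT$ and hence $\normof{\mA\hat\vx - \vb}_p = O(OPT)$. It then suffices to prove the affine embedding
$$\big|\normof{r(\mA\vx-\vb)}_p^p - \normof{\cP\vx-f}_p^p \big|\le \eps \cdot OPT^p$$
uniformly over all $\vx$ with $\normof{\cP\vx-f}_p \le C' \cdot OPT$, where $r=(2/n_0)^{1/p}$ is the standard rescaling making $\E[\normof{r(\mA\vx-\vb)}_p^p]=\normof{\cP\vx-f}_p^p$. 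Combining this with the constant-factor guarantee and the optimality of $\hat\vx$ for the sketched problem yields the $(1+\eps)$ bound through routine triangle-inequality manipulation, paralleling \lemmaref{oper:const}.

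For the affine embedding itself, the assumption $\normof{f}_p \leq C \cdot OPT$ together with the reasonable-range bound gives $\normof{\cP\vx}_p \leq (C+C') \cdot OPT$ by triangle inequality. The set of such polynomials admits an $\eps$-net $\cN$ of cardinality at most $(O(1)/\eps)^{O(d)}$ by Lemma 2.4 of \cite{bourgain1989approximation}, and standard triangle-inequality-plus-subspace-embedding reasoning reduces the uniform affine embedding to pointwise concentration at each $\vu \in \cN$. Fixing such $\vu$, the quantities $X_i = r^p |[\cP\vu](s_i)-f(s_i)|^p$ form an i.i.d.\ sum with mean $\normof{\cP\vu - f}_p^p / n_0$. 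The polynomial component is controlled by $|[\cP\vu](s_i)|^p \leq d^2(p+1)\normof{\cP\vu}_p^p$ via the uniform sensitivity bound, while for the arbitrary function $f$ I use a truncation: since $\normof{f}_p \leq C\cdot OPT$, the set $\{t : |f(t)|>M \cdot OPT\}$ has Lebesgue measure at most $C^p/M^p$, so choosing $M$ polynomially large in $n_0$ ensures with high probability that no sampled point falls in this set. On this good event each $X_i$ is bounded, and Bernstein's inequality (\importedtheoremref{bernstein-ineq}) gives exponentially decaying failure probability, which a union bound over $\cN$ converts into the desired uniform affine-embedding bound.

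The main obstacle is controlling the arbitrary function $f$, which lacks the polynomial regularity that the sensitivity framework normally exploits and could in principle exhibit pointwise spikes that are not captured by \lemmaref{uniform-sensitivity-bound}. My truncation argument sidesteps this, but at the cost of the high polynomial dependence on $p$ and $\eps$ stated in the lemma: the truncation level $M$ must be large enough that discarded mass does not bias the expectation, which in turn inflates the Bernstein variance term through moments of $f$ whose magnitude grows with $p$. Balancing these considerations against the $(1/\eps)^{O(d)}$ net union-bound budget produces the stated sample complexity $n_0 = O(\eps^{-O(p^2)} d^5 p^{O(p^2)} \log(d/\eps))$.
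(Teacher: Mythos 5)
Your overall scaffold — reduce to the reasonable range via the constant-factor guarantee, cover it with an $\eps$-net of size $(1/\eps)^{O(d)}$, and run per-point concentration — matches the paper's proof of this lemma at a high level. The paper also starts exactly this way, using \lemmaref{oper:const} and Markov to restrict attention to $\vz$ with $\normof{\cP\vz - f}_p \leq 24\cdot OPT$. The crucial difference is in how the spikes of $f$ are handled, and that is where your plan has a genuine gap.

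The issue: your truncation asks to pick $M$ so that, with high probability, no sample lands in the bad set $\{t : |f(t)| > M\cdot OPT\}$. Since this set has Lebesgue measure at most $C^p/M^p$ by Markov, the union bound over $n_0$ uniform samples forces $M^p \gtrsim n_0$, i.e., $M \gtrsim n_0^{1/p}$. But then, even on the good event, each random variable $X_i = r^p|[\cP\vu](s_i)-f(s_i)|^p$ is only bounded by roughly $\frac{2}{n_0}(M\cdot OPT)^p \sim OPT^p$ — a quantity that does \emph{not} shrink as $n_0$ grows. Plugging this into Bernstein, the variance of $\sum_i X_i$ is $\Theta(OPT^{2p})$ regardless of $n_0$, so the failure probability is a constant rather than exponentially small, and you cannot union-bound a constant over a net of size $(1/\eps)^{O(d)}$. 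There is no choice of $M$ that simultaneously makes the bad set miss all samples and keeps $\max_i X_i$ small enough for concentration. There is also a subtler problem: the affine embedding you propose, $|\normof{r(\mA\vx-\vb)}_p^p - \normof{\cP\vx - f}_p^p| \leq \eps\cdot OPT^p$ uniformly over the reasonable range, is actually false in the worst case — if $f$ is a tall thin spike whose entire $p$-mass (roughly $C^p OPT^p$) sits on a set of measure $\ll 1/n_0$, then conditioned on the good event the samples see none of this mass, and the sample loss is biased downward by $\Theta(OPT^p) \gg \eps\cdot OPT^p$ for every $M$.

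The paper sidesteps both problems by taking a much more modest threshold $M = \poly(d,p,1/\eps)$, accepting that some samples will land in the bad set $B$, and then observing that on $B$, the residual $|[\cP\vz](s_i)-f(s_i)|^p$ equals $(1\pm\eps)|f(s_i)|^p$ for \emph{every} $\vz$ in the reasonable range — because the sensitivity bound \lemmaref{uniform-sensitivity-bound} caps $|[\cP\vz]|$ at $\poly(d,p)\cdot OPT$, which is dwarfed by $|f(s_i)|$ there. This makes the $B$-part of the sample loss a fixed random quantity essentially independent of $\vz$, which cancels when comparing $\hat\vx$ and $\vx^*$. Concentration is then only needed on the $G$-part, where $|f|$ is genuinely bounded by $\poly(d,p,1/\eps)\cdot OPT$ and hence $|X_i - \E[X_i]| \lesssim \frac{\poly(d,p,1/\eps)}{n_0}\cdot OPT^p$ — a bound that \emph{does} shrink with $n_0$ and pays for the net. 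This "same residual for all $\vz$ on $B$" observation is the missing ingredient in your plan; the truncation approach cannot substitute for it.
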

\begin{proof}
We first note that by \lemmaref{uniform-sensitivity-bound}, all sensitivities of $\cP$ are at most \(M \defeq d^2(p+1)\).

Note that if $\vx^*=\argmin_{\vx\in\mathbb{R}^{d+1}}\normof{\cP\vx-f}_p$, so that $OPT=\normof{\cP\vx^*-f}_p$, then we have $\mathbb{E}[\normof{\mA\vx^*-f}^p_p]=OPT^p$.
By Jensen's inequality for $p\ge 1$, we have $\mathbb{E}[\normof{\mA\vx^*-f}_p]\le OPT$. 
Thus by Markov's inequality,
\[
    \PPr{\normof{\mA\vx^*-\vb}_p \geq 11OPT} \le \frac{1}{11}.
\]
We condition against this event.
Then, we have \(\normof{\mA\hat\vx-\vb}_p \leq \normof{\mA\vx^*-\vb}_p \leq 11OPT\), so by \lemmaref{oper:const} we also have \(\normof{\cP\hat\vx-f}_p\leq24OPT\).

For the rest of this proof, let \(\vz\in\bbR^{d+1}\) be any vector such that \(\normof{\cP\vz-f}_p\leq24OPT\).
By triangle inequality, we have
\[
    \normof{\cP\vz}_p
    \leq \normof{\cP\vz-f}_p + \normof{f}_p
    \leq 25C \cdot OPT
\]
Since the sensitivities of $\cP$ are all at most $M$, then by definition of sensitivities, we have that for all $u\in[-1,1]$, 
\[
    \frac{|[\cP\vz](u)|^p}{\normof{\cP\bz}_p^p}\le M
\]
In particular, for all $u\in[-1,1]$,
\[
    |[\cP\vz](u)|^p\le \tau \defeq M 25^p C^p OPT^p.
\]
We partition the points of interval $[-1,1]$ into two groups. 
We define $G=\{t\,:\,|f(t)|^p\le\tau p^{2p}/\eps^{p^2}\}$ and $B=\{t\,:\,|f(t)|^p>\tau p^{2p}/\eps^{p^2}\}$.
Intuitively, \(B\) is a set of ``bad times'' where \(f\) is so large that polynomials cannot fit it, and \(G\) is the remaining set of ``good times''.
So for any $\vz$ with $\normof{\cP\vz-f}_p \le 24OPT$, we have $|[\cP\vz](t)| \le \tau^{1/p}$ as before, and also for any $u\in B$ we have $|f(u)|> \frac{p^2}{\eps^p} \tau^{1/p}$. 
Thus, for any $u\in B$, we have
\[
    \left(1-\frac{\eps^p}{p^2}\right)|f(u)|\le|[\cP\vz](u)-f(u)|\le\left(1+\frac{\eps^p}{p^2}\right)|f(u)|
\]
Therefore,
\begin{align}
\label{eqn:se:bad}
    (1-\eps)|f(u)|^p\le|[\cP\vz](u)-f(u)|^p\le(1+\eps)|f(u)|^p.
\end{align}
This formalizes the idea that \(f\) cannot be fit by a polynomial on \(B\).
On the other hand, for \(G\), we have 
\begin{align}
\label{eqn:se:add}
    \normof{\mA\vz-\vb}_p^p = \normof{\mA_G\vz-\vb_G}_p^p + \normof{\mA_B\vz-\vb_B}_p^p,
\end{align}
where $\mA_G$ and \(\vb_G\) are the rows of \mA and \vb associated with points sampled in $G$, and where \(\mA_B\) and \(\vb_B\) are similarly the rows associated with points sampled in $B$.
We will next show via an \(\eps\)-Net argument that the residual \(\normof{\mA_G\vz-\vb_G}_p\) is preserved for all valid \vz vectors.

\paragraph{Accuracy of $\mA$ on a single coefficient vector $\vz$ at points in $G$.}
For each sample $s_i$ with $i\in[n_0]$, if \(s_i\in G\), we define $X_i=\frac{2}{n_0}|\cP\vz(s_i)-f(s_i)|^p$ to be the corresponding contribution to the empirical residue by the sample.
Otherwise, if $s_i\notin G$, we define $X_i=0$. 
Since we sample uniformly, i.e., the probability density function for $s_i$ satisfies $p(t)=\frac{1}{2}$ for all $t\in[-1,1]$, then 
\[
    \Ex{X_i}=\frac{1}{n_0} \int_{t\in G}|\cP\vz(t)-f(t)|^p\,dt = \frac1{n_0}\normof{\cP\vz-f}_G^p,
\]
where \(\normof{f}_G^p \defeq \int_{t\in G} \abs{f}^p dt\) is the integral only over the set \(G\).
Because $|\cP\vz(u)|^p\le \tau$ and $|f(u)|^p\le\tau p^{2p}/\eps^{p^2}$ for all $u\in G$, we have
\[
    |\cP\vz(u)-f(u)|\le \frac{2p^2}{\eps^p}\tau^{1/p}
\]
so that
\[
    \abs{X_i - \Ex{X_i}}
    \leq \frac{2}{n_0}\abs{\cP\vz(s_i)-f(s_i)}^p + \frac1{n_0}\normof{\cP\vz-f}_G^p
    \leq \frac{4}{n_0}\left(\frac{2p^2}{\eps^p}\right)^p \tau
\]
Then let $X=\sum_{i\in[n_0]}X_i$ so that, letting \(r \defeq (\frac2{n_0})^{1/p}\) be a rescaling factor,
\[
    \E[\normof{r(\mA_G\vz-\vb_G)}_p^p] =\Ex{X} = \normof{\cP\vz-f}_G^p
\]
Setting $\gamma=\frac{\eps^p}{2^{O(p^2)}} \normof f_p^p$ in the formulation of Bernstein's concentration inequality in \importedtheoremref{bded-diffs:conc}, we have 
\[
    \PPr{\abs{X - \E[X]} \geq \gamma}
    \le \exp\left(\frac{-\frac{\eps^p}{2^{O(p^2)}}\normof{f}_p^{2p}}{2\sum_{i\in[n_0]} (\frac{4}{n_0}(\frac{2p^2}{\eps^p})^p \tau)^2 }\right)
    \le \exp\left(\frac{-\eps^{2+2p^2}}{32(50Cp^2)^{2p}M^22^{O(p^2)}} \cdot n_0\right).
\]
Thus for $n_0 = O\left(\frac{1}{\eps^{2+2p^2}}\,d\cdot M^2p^{O(p^2)}\log\frac{d}{\eps}\right)$, we have
\[
    \PPr{\bigg| \normof{r(\mA_G\vz-\vb_G)}_p^p - \normof{\cP\vz-f}_G^p \bigg| \ge \frac{\eps^p}{2^{O(p^2)}} \normof{f}_p^p} \le \frac{1}{\exp(O(dp \log d/\eps))},
\]
which implies by concavity (and therefore subadditivity) of \(t\mapsto t^{1/p}\) for $p\ge 1$,
\begin{align}
\label{eq:oper:se:residual}
    \PPr{\bigg| \normof{r(\mA_G\vz-\vb_G)}_p - \normof{\cP\vz-f}_G \bigg| \ge \frac{\eps}{2^{O(p)}} \normof{f}_p} \le \frac{1}{\exp(O(dp \log d/\eps))}.
\end{align}
By a similar argument, let \(Y_i \defeq \frac2{n_0} \abs{\cP\vz(s_i)}^p\) for all \(s_i\), so that \(\E[Y_i] = \frac1{n_0} \normof{\cP\vz}_p^p\) and \(\abs{Y_i-\E[Y_i]} \leq \frac{3M}{n_0} \normof{\cP\vz}_p^p\).
Then \(Y \defeq \sum_{i\in[n_0]} Y_i\), by \importedtheoremref{bded-diffs:conc} for \(\gamma=\eps^{p+1}\normof{\cP\vz}_p^p\), yields
\begin{align}
\label{eq:oper:se}
    \PPr{\bigg|\normof{r\mA\vz}_p-\normof{\cP\vz}_p\bigg| \le \eps\normof{\cP\vz}_p } \le\frac{1}{\exp(O(d\log d/\eps))}.
\end{align}
Since $M=(p+1)(d+1)^2$, the total number of samples is $n_0 = O\left(\frac{1}{\eps^{2+2p^2}}\,d^5p^{O(p^2)}\log\frac{d}{\eps}\right)$.

The arguments so far, when combined carefully (see the last part of this proof), imply that the error from uniform sampling does not matter on \(B\), and that for any fixed \vz such that \(\normof{\cP\vz-f}_p\leq24OPT\), the error on \(G\) is preserved.
So, for any such \vz, we can say with high probability that
\begin{align*}
    (1-\epsilon) \normof{\cP\vz-f}_p^p \leq \normof{\mA\vz-\vb}_p^p \leq (1+\epsilon) \normof{\cP\vz-f}_p^p
\end{align*}
However, the epsilon-net argument needs to be applied to just \(G\) on its own, so we now construct a net under the \(\normof\cdot_G\) norm.

\paragraph{$\eps$-net argument for subspace embedding.}
We now union bound over a net by first defining the ball $ \cB=\left\{\vx ~\big|~ \normof{\cP\vx}_p=1 \right\}$.
The let \cN denote a net over \cB such that, for any \(\vx\in\cB\), there exists some \(\vy\in\cN\) such that \(\normof{\cP\vx-\cP\vy}_p \leq \eps\).
By Lemma 2.4 of \cite{bourgain1989approximation}, \cN has at most \((\frac 1\eps)^{O(d)}\) elements.

Next, note that any \(\vx\in\cB\) can be written as \(\vx=\sum_{i=0}^\infty \alpha_i \vy_i\) where \(\alpha_0=1\) and \(\abs{\alpha_i}\leq\eps^{i}\) and \(\vy_i\in\cN\).
So, we can union bound \equationref{oper:se} over all \(\vy\in\cN\) to upper bound
\begin{align*}
    \normof{r\mA\vx}_p
    &\leq \sum_{i=0}^\infty \alpha_i \normof{r\mA\vy_i}_p \\
    &\leq (1+\eps) \sum_{i=0}^\infty \alpha_i \normof{\cP\vy_i}_p \\
    &\leq (1+\eps) \sum_{i=0}^\infty \eps^i \\
    &= \frac{1+\eps}{1-\eps} \\
    &\leq 1+4\eps
\end{align*}
And lower bound
\begin{align*}
    \normof{r\mA\vx}_p
    &\geq \alpha_0 \normof{r\mA\vy_0}_p - \sum_{i=1}^\infty \alpha_i \normof{r\mA\vy_i}_p \\
    &\geq (1-\eps) \alpha_0 \normof{\cP\vy_0}_p - (1+\eps)\sum_{i=1}^\infty \alpha_i \normof{\cP\vy_i}_p \\
    &\geq (1-\eps) - (1+\eps)\sum_{i=1}^\infty \eps^i \\
    &\geq 1-6\eps
\end{align*}
That is, \(\normof{r\mA\vx}_p = 1 \pm 6\eps\) for any \vx such that \(\normof{\cP\vx}_p = 1\).
So, just by scaling this guarantee, we have shown that for all \(\vx\in\bbR^{d+1}\) we have
\[
    \bigg|\normof{r\mA\vx}_p - \normof{\cP\vx}_p\bigg| \leq 6\eps \normof{\cP\vx}_p
\]

\paragraph{$\eps$-net argument over all coefficient vectors.}
Now again consider any \(\vz\) such that \(\normof{\cP\vz-f}_p \leq 24OPT\).
Then \(\normof{\cP\vz}_p \leq 25\normof f_p\):
\[
    \normof{\cP\vz}_p \leq \normof{\cP\vz-f}_p + \normof{f}_p \leq 24\min_{\vx}\normof{\cP\vx-f}_p + \normof{f}_p \leq 25\normof{f}_p
\]
Then let \(\vy=\alpha\vy_0\) be the corresponding net vector as in the previous paragraph.
Then we have \(\normof{\cP\vy-f}_p\leq26OPT\) for \(\eps \leq O(1)\):
\[
    \normof{\cP\vy-f}_p
    \leq \normof{\cP\vz-f}_p + \normof{\cP\vz-\cP\vy}_p
    \leq 24OPT + 6\eps\normof{\cP\vx'}_p
    \leq 24OPT + 6\eps 21 \normof{f}_p
    \leq (20+21\cdot6\eps C) OPT
\]

Let \(\cN'\) be a net over \(\cB' \defeq \{\vz ~|~ \normof{\cP\vz}_p \leq 26OPT\}\) such that any \(\vz\in\cB'\) has some \(\vy\in\cB'\) such that \(\normof{\cP\vz-\cP\vy}_p \leq \tsfrac{\eps}{2^{O(p)}} \cdot 26OPT\).
Since \(\frac1{26OPT}\cN'\) is a \(\eps\)-Net for the unit ball in the range of \cP, Lemma 2.4 from \cite{bourgain1989approximation} tells us that this net has size \((\frac{2^{O(p)}}\eps)^{O(d)}\).
We union bound \equationref{oper:se:residual} over all \(\vy\in\cN'\).

Then, we can write \(\vz=\sum_{i=0}^\infty \alpha_i \vy_i\) with \(\alpha_0=1\), \(\alpha_i \leq \eps\), and \(\vy_i\in\cN'\).
We will then write \(\vz = \vy_0 + \vDelta\) where \(\vDelta \defeq \sum_{i=1}^\infty \alpha_i\vy_i\) and apply the triangle inequality:

\begin{align*}
    \bigg| \normof{r(\mA_G\vz-\vb_G)}_p - \normof{\cP\vz-f}_G \bigg|
    &\leq \bigg| \normof{r(\mA_G\vy_0-\vb_G)}_p - \normof{\cP\vy_0-f}_G \bigg| + \normof{r\mA_G\vDelta_0}_p + \normof{\cP\vDelta_0}_G \\
    &\leq \bigg| \normof{r(\mA_G\vy_0-\vb_G)}_p - \normof{\cP\vy_0-f}_G \bigg| + \normof{r\mA\vDelta_0}_p + \normof{\cP\vDelta_0}_p \\
    &\leq \bigg| \normof{r(\mA_G\vy_0-\vb_G)}_p - \normof{\cP\vy_0-f}_G \bigg| + (1+6\eps)\normof{\cP\vDelta_0}_p \\
    &\leq \tsfrac{\eps}{2^{O(p)}}\normof{f}_p + (1+6\eps)\sum_{i=1}^\infty(\tsfrac{\eps}{2^{O(p)}})^i \cdot 26OPT \\
    &\leq O(\tsfrac{\eps}{2^{O(p)}})\normof{f}_p
\end{align*}
In other words, we have that with high probability for all $\vz$ with $\normof{\cP\vz-f}_p\le 24OPT$,
\begin{align}
    \label{eq:affine-subspace-embed-norm}
    \normof{\cP\vz-f}_G-O(\tsfrac{\eps}{2^{O(p)}})\normof{f}_p
    \le \normof{r(\mA_G\bz-\vb_G)}_p
    \le \normof{\cP\vz-f}_G +O(\tsfrac{\eps}{2^{O(p)}})\normof{f}_p
\end{align}
Now we extend \equationref{affine-subspace-embed-norm} to holds for norms with the exponent of \(p\) on them.
We do this by case analysis, with either \(\normof{\cP\vz-f}_G \leq \frac12\normof{f}_p\) or \(\normof{\cP\vz-f}_G \geq \frac12\normof{f}_p\).
If \(\normof{\cP\vz-f}_G \leq \frac12\normof{f}_p\), then we use the bound \((u+\eps)^p \leq u^p + 2\eps p\) for \(u + \eps \leq 1\), as proven later in \lemmaref{uniform-sampling-misc-bounds}:
\begin{align*}
    \normof{\mA_G\vz-\vb_G}_p^p
    &\leq (\normof{\cP\vz-f}_G + O(\tsfrac{\eps}{2^{O(p)}})\normof{f}_p)^p \\
    &= \normof{f}_p^p (\tsfrac{\normof{\cP\vz-f}_G}{\normof{f}_p} + O(\tsfrac{\eps}{2^{O(p)}}))^p \\
    &\leq \normof{f}_p^p (\tsfrac{\normof{\cP\vz-f}_G^p}{\normof{f}_p^p} + O(\tsfrac{\eps}{2^{O(p)}} p)) \tag{\(\tsfrac{\normof{\cP\vz-f}_G}{\normof{f}_p} + O(\tsfrac{\eps}{2^{O(p)}}) \leq 1\)} \\
    &= \normof{\cP\vz-f}_G^p + O(\tsfrac{\eps}{2^{O(p)}} p) \normof{f}_p^p \\
    &\leq \normof{\cP\vz-f}_G^p + O(\eps) \normof{f}_p^p
\end{align*}
and similarly the lower bound is
\begin{align*}
    \normof{\mA_G\vz-\vb_G}_p^p
    &\geq (\normof{\cP\vz-f}_G - O(\tsfrac{\eps}{2^{O(p)}})\normof{f}_p)^p \\
    &= \normof{f}_p^p (\tsfrac{\normof{\cP\vz-f}_G}{\normof{f}_p} - O(\tsfrac{\eps}{2^{O(p)}}))^p \\
    &\geq \normof{f}_p^p (\tsfrac{\normof{\cP\vz-f}_G^p}{\normof{f}_p^p} - O(\tsfrac{\eps}{2^{O(p)}} p)) \tag{\(\tsfrac{\normof{\cP\vz-f}_G}{\normof{f}_p} + O(\tsfrac{\eps}{2^{O(p)}}) \leq 1\)} \\
    &= \normof{\cP\vz-f}_G^p - O(\tsfrac{\eps}{2^{O(p)}} p) \normof{f}_p^p \\
    &\geq \normof{\cP\vz-f}_G^p - O(\eps) \normof{f}_p^p
\end{align*}
Which completes the first case.
For the second case, where \(\normof{\cP\vz-f}_G\geq \frac12\normof{f}_p\) so that \(\frac{\normof{f}_p}{\normof{\cP\vz-f}_G} \leq 2\), we use the bound \((1\pm u)^p \in 1 \pm p(2e)^{p/2}u\) for \(u\in[0,1]\), as proven later in \lemmaref{uniform-sampling-misc-bounds}:
\begin{align*}
    \normof{\mA_G\vz-\vb_G}_p^p
    &\in (\normof{\cP\vz-f}_G \pm O(\tsfrac{\eps}{2^{O(p)}})\normof{f}_p)^p \\
    &= \normof{\cP\vz-f}_G^p (1 + O(\tsfrac{\eps}{2^{O(p)}}) \tsfrac{\normof{f}_p}{\normof{\cP\vz-f}_G})^p \\
    &\in \normof{\cP\vz-f}_G^p (1 \pm O(\tsfrac{\eps}{2^{O(p)}} 2^{O(p)}) \tsfrac{\normof{f}_p}{\normof{\cP\vz-f}_G}) \\
    &= \normof{\cP\vz-f}_p \pm O(\eps)\normof{f}_p \normof{\cP\vz-f}_G^{p-1} \\
    &\in \normof{\cP\vz-f}_p \pm O(\eps)\normof{f}_p^p \tag{\(\normof{\cP\vz-f}_G \leq C\normof{f}_p\)}
\end{align*}
Which concludes the case analysis, and we find that all \vz with \(\normof{\cP\vz-f}_p \leq 24 OPT\) have
\[
    \normof{\cP\vz-f}_G^p-O(\eps)\normof{f}_p^p
    \le \normof{r(\mA_G\bz-\vb_G)}_p^p
    \le \normof{\cP\vz-f}_G^p +O(\eps)\normof{f}_p^p
\]

\paragraph{Finishing the argument.}
Recall that the interval $[-1,1]$ is partitioned into two groups $G$ and $B$ and that we analyze the samples $s_i$ with $i\in[n_0]$ depending on whether $s_i\in G$ or $s_i\in B$.
Moreover, recall that by Equation~\ref{eqn:se:add}, we have
\[
    \normof{\mA\vz-\vb}_p^p = \normof{\mA_G\vz-\vb_G}_p^p + \normof{\mA_B\vz-\vb_B}_p^p
\]
where $\mA_G$ and $\vb_G$ contain the points in $G$ while $\mA_B$ and $\vb_B$ contain the points in $B$. 
For any $\vz$ with $\normof{\cP\vz-f}_p \le 24OPT$ and $u\in B$, we have by Equation~\ref{eqn:se:bad},
\[
    (1-\eps)|f(u)|^p \le |\cP\vz(u)-f(u)|^p \le (1+\eps)|f(u)|^p.
\]
Since this loss is independent of the value of \vz, we can view \(\sum_{i:s_i\in B} \abs{f(s_i)}^p\) effectively as the sample error of any \vz on the bad set.
Since \(\E[\sum_{i:s_i\in B} \abs{f(s_i)}^p] = \sum_{i=1}^{n_0} \E[\mathbbm{1}_{[s_i\in B]} \abs{f(s_i)}^p] = \frac{n_0}{2} \normof{f}_B^p\), we get \(\sum_{i:s_i\in B} \abs{f(s_i)}^p \leq 50n_0 \normof{f}_B^p\)with probability \(\frac{99}{100}\) by Markov's Inequality.
Recalling that \(r^p = \frac1{n_0}\), we get
\begin{align*}
    \normof{r(\mA_B\vz-\vb_B)}_p^p
    &= \sum_{i: s_i \in B} \frac1{n_0} \abs{\cP\vz(s_i)-f(s_i)}^p \\
    &\in \sum_{i: s_i \in B} \frac1{n_0} \left(\abs{f(s_i)}^p \pm \eps\abs{f(s_i)}^p\right) \\
    &= \sum_{i: s_i \in B} \frac1{n_0} \abs{f(s_i)}^p \pm \eps \left(\frac{1}{n_0}\sum_{i: s_i \in B}\abs{f(s_i)}^p\right) \\
    &\subseteq \normof{r\vb_B}_p^p \pm O(\eps)\normof{f}_B^p \\
    &\subseteq \normof{r\vb_B}_p^p \pm O(\eps)\normof{f}_p^p
\end{align*}
Next recall that for any $\vz$ with $\normof{\cP\vz-f}_p\le 24OPT$, we have
\begin{align*}
    \normof{\mA_G\vz-\vb_G}_p^p &\geq \normof{\cP\vz-f}_G^p - O(\eps)\normof{f}_p^p \\
    \normof{\mA_G\vz-\vb_G}_p^p &\leq \normof{\cP\vz-f}_G^p + O(\eps)\normof{f}_p^p
\end{align*}
Thus, the minimizer $\hat\vx$ to $\min_{\vx\in\bbR^{d+1}}\normof{\mA\vx-\vb}_p$ and $\vx^*=\argmin_{\vx\in\mathbb{R}^{d+1}}\normof{\cP\vx-f}_p$ must satisfy
\begin{align*}
    \normof{\cP\hat\vx-f}_G^p
    &\leq \normof{r(\mA_G\hat\vx - \vb_G)}_p^p + O(\eps) \normof{f}_p^p \\
    &= \normof{r(\mA\hat\vx - \vb)}_p^p - \normof{r(\mA_B\hat\vx - \vb_B)}_p^p + O(\eps) \normof{f}_p^p \\
    &\leq \normof{r(\mA\hat\vx - \vb)}_p^p - \normof{r\vb_B}_p^p + O(\eps) \normof{f}_p^p \\
    &\leq \normof{r(\mA\vx^* - \vb)}_p^p - \normof{r\vb_B}_p^p + O(\eps) \normof{f}_p^p \\
    &\leq \normof{r(\mA\vx^* - \vb)}_p^p - \normof{r(\mA_B\vx^* - \vb_B)}_p^p + O(\eps) \normof{f}_p^p \\
    &= \normof{r(\mA_G\vx^* - \vb_G)}_p^p + O(\eps) \normof{f}_p^p \\
    &\leq \normof{\cP\vx^*-f}_G^p + O(\eps)\normof{f}_p^p
\end{align*}
Since $\normof{f}_p=O(OPT)$, it follows that
\begin{align*}
    \normof{\cP\hat\vx-f}_G^p \le \normof{\cP\vx^*-f}_G^p + O(\eps C^p) OPT^p
\end{align*}
Finally, since for \(u\in B\) we have both
\begin{align*}
(1-\eps)|f(u)|^p &\le |\cP\hat\bx(u)-f(u)|^p \le (1+\eps)|f(u)|^p\\
(1-\eps)|f(u)|^p &\le |\cP\bx^*(u)-f(u)|^p \le (1+\eps)|f(u)|^p
\end{align*}
by Equation~\ref{eqn:se:bad}, it then follows that
\[
    \int_{t\in B}|\cP\hat\vx(t)-f(t)|^p\,dt\le\int_{t\in B}|\cP\vx^*(t)-f(t)|^p\,dt+O(\eps C^p) OPT^p
\]
Therefore, we have
\[
    \int_{-1}^1 |\cP\hat\vx(t)-f(t)|^p \, dt
    \le O(\eps C^p) OPT^p
\]
The claim then follows from rescaling $\eps$ to $\tsfrac{\eps}{C^p}$.  
\end{proof}

\begin{lemma}
\label{lem:uniform-sampling-misc-bounds}
Fix \(u\geq 0\), \(\eps\geq 0\), and even integer \(p \geq 1\).
If \(u+\eps \leq 1\), then \((u+\eps)^p \leq u^p + 2\eps p\).
If \(u\in[0,1]\), then \((1\pm u)\in 1\pm p(2e)^{p/2} u\).
\end{lemma}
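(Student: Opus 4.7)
The plan is to split the proof into the two claims and handle each separately. For claim (a), the statement $(u+\eps)^p \leq u^p + 2\eps p$ under $u+\eps \leq 1$, the natural approach is the mean value theorem applied to the function $x \mapsto x^p$ on the interval $[u, u+\eps]$. This yields $(u+\eps)^p - u^p = p\eta^{p-1}\eps$ for some $\eta \in [u, u+\eps]$. Since $\eta \leq u + \eps \leq 1$ and $p - 1 \geq 0$, we have $\eta^{p-1} \leq 1$, which gives the even stronger bound $(u+\eps)^p \leq u^p + p\eps$; multiplying by $2$ yields the desired statement with room to spare. As an alternative route, one could expand $(u+\eps)^p$ by the binomial theorem and bound each term by using $u \leq 1$ and pulling out a factor of $\eps$, but the MVT proof is substantially cleaner.

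For claim (b), I would handle the $+$ and $-$ branches separately. For the upper bound $(1+u)^p \leq 1 + p(2e)^{p/2}u$ when $u \in [0,1]$, I would first write
\[
    (1+u)^p - 1 = \sum_{k=1}^p \binom{p}{k} u^k = u \sum_{k=1}^p \binom{p}{k} u^{k-1} \leq u\sum_{k=1}^p \binom{p}{k} \leq u\cdot 2^p,
\]
using $u \in [0,1]$ so $u^{k-1} \leq 1$. It then suffices to verify the numerical inequality $2^p \leq p(2e)^{p/2}$, which rearranges to $(2/e)^{p/2} \leq p$. Since $2/e < 1$ the left side is at most $1$, while $p \geq 1$, so this holds. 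For the lower bound $(1-u)^p \geq 1 - p(2e)^{p/2}u$, Bernoulli's inequality gives $(1-u)^p \geq 1 - pu$ directly for $p \geq 1$ and $u \in [0,1]$, and since $(2e)^{p/2} \geq 1$, the looser bound $(1-u)^p \geq 1 - p(2e)^{p/2}u$ follows immediately.

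Neither piece presents a real obstacle: claim (a) is essentially a one-line mean value theorem calculation, and claim (b) reduces to the trivial numerical comparison $2^p \leq p(2e)^{p/2}$ together with standard Bernoulli. The only subtle point is that the statement as written says ``even integer $p \geq 1$'', which appears to be a typo since $p=1$ is not even; my argument works for every real $p \geq 1$, so the lemma is true as needed regardless of how one parses that qualifier.
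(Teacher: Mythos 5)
Your proof is correct, and for both claims it takes a genuinely cleaner route than the paper. For claim (a), the paper first splits into cases $p \geq 1/\eps$ and $p < 1/\eps$, and in the second case grinds through a binomial expansion with several term-by-term comparisons; your mean value theorem argument gives the stronger bound $(u+\eps)^p \leq u^p + p\eps$ in one line, valid for all real $p \geq 1$. For claim (b), the paper bounds each $\binom{p}{k}$ by the central coefficient $\binom{p}{p/2}$ and then estimates $\binom{p}{p/2} \leq (2e)^{p/2}$ via $\binom{p}{k}\leq(pe/k)^k$ — this is precisely why the paper restricts to even $p$. Your cruder bound $\sum_{k=1}^p \binom{p}{k} \leq 2^p$ together with the elementary check $2^p \leq p(2e)^{p/2}$ gives the same conclusion without any parity assumption, and using Bernoulli for the lower bound is also tighter and simpler than the paper's route. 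The one point I would soften is your claim that the argument "works for every real $p \geq 1$": part (a) genuinely does (MVT needs only $p \geq 1$), but your part (b) expands $(1\pm u)^p$ via the finite binomial sum, which requires $p$ to be a nonnegative integer. That said, you have correctly identified that "even integer" is an unnecessarily strong hypothesis — integer $p \geq 1$ suffices for your argument, whereas the paper's proof really does use evenness.
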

\begin{proof}
Since $u+\eps\le 1$ and $u^p+2\eps p\ge 1$ for $p\ge\frac{1}{\eps}$, then we have that 
\[(u+\eps)^p \le 1\le u^p+2\eps p,\]
for all $p\ge\frac{1}{\eps}$ and thus it remains to consider the case where $p<\frac{1}{\eps}$. 

To that end, note that by the binomial expansion, we have
\begin{align*}
(u+\eps)^p&=u^p\left(1+\frac{\eps}{u}\right)^p\\
&=u^p\left(1+\binom{p}{1}\frac{\eps}{u}+\binom{p}{2}\left(\frac{\eps}{u}\right)^2+\binom{p}{3}\left(\frac{\eps}{u}\right)^3+\ldots+\frac{\eps^p}{u^p}\right)\\
&\le u^p\left(1+\frac{\eps p}{u}+\frac{\eps^2p^2}{2!u^2}+\frac{\eps^3p^3}{3!u^3}+\ldots+\frac{\eps^pp^p}{p!u^p}\right).
\end{align*}
For $p<\frac{1}{\eps}$, we thus have
\begin{align*}
(u+\eps)^p&\le u^p\left(1+\frac{\eps p}{u}+\frac{\eps^2p^2}{2!u^2}+\frac{\eps^3p^3}{3!u^3}+\ldots+\frac{\eps^pp^p}{p!u^p}\right)\\
&<u^p\left(1+\frac{\eps p}{u}+\frac{\eps p}{2!u^2}+\frac{\eps p}{3!u^3}+\ldots+\frac{\eps p}{p!u^p}\right)\\
&<u^p+u^p\left(\frac{\eps p}{u^p}+\frac{\eps p}{2!u^p}+\frac{\eps p}{3!u^p}+\ldots+\frac{\eps p}{p!u^p}\right)\\
&\le u^p+\eps p\left(1+\frac{1}{2!}+\frac{1}{3!}+\ldots+\frac{1}{p!}\right)\\
&<u^p+\eps p\left(1+\frac{1}{2!}+\frac{1}{3!}+\ldots\right)=u^p+\eps p(e-1)<u^p+2\eps p,
\end{align*}
as desired.

For the second claim, consider any \(u\in[0,1]\).
\begin{align*}
    (1+u)^p
    &= 1 + \sum_{k=1}^p \binom{p}{k} u^k \\
    &\leq 1 + \sum_{k-1}^p \binom{p}{p/2} u^k \tag{\(\binom{p}{k} \leq \binom{p}{p/2}\)} \\
    &\leq 1 + p (\tsfrac{pe}{p/2})^{p/2} u \tag{\(\binom{p}{k} \leq (\frac{pe}{k})^k\)} \\
    &= 1 + p(pe)^{p/2} u \\
    (1-u)^p
    &= 1 + \sum_{k=1}^p \binom{p}{k} (-u)^k \\
    &\geq 1 - \sum_{k=1}^p \binom{p}{p/2} u^k \\
    &\geq 1 - p(pe){p/2} u
\end{align*}
\end{proof}

\subsection{Tighter Bounds for \texorpdfstring{$L_p$}{Lp} Sensitivities}
\label{app:lp:sens:bounds}
\noindent

In this section, we show tighter bounds for the $L_p$ sensitivities.
While we do not use this result in the paper, we find that it may be useful for future research on polynomial regression.
We first require the following results from polynomial approximation theory.

When $t$ is not near the boundaries of the interval $[-1,1]$, we have a sharper upper bound on the magnitude of the derivative when compared to the Markov brothers' inequality. 
\begin{theorem}[Bernstein's inequality, e.g., Theorem 2.8 in \cite{GovilM99}]
\label{thm:bernstein:poly}
Suppose $q(t)$ is a polynomial of degree at most $d$ such that $|q(t)|\le 1$ for $t\in[-1,1]$. 
Then for all $t\in[-1,1]$, $|q'(t)|\le\frac{d}{\sqrt{1-t^2}}$. 
\end{theorem}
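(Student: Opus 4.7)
The plan is to reduce the polynomial statement to the trigonometric version of Bernstein's inequality via the standard Chebyshev substitution. Set $t = \cos\theta$ with $\theta \in [0,\pi]$, and define $Q(\theta) \defeq q(\cos\theta)$. Then $Q$ is a real trigonometric polynomial of degree at most $d$, since each monomial $\cos^k\theta$ can be written via the power-reduction identities as a trigonometric polynomial of degree $k \leq d$. The hypothesis $|q(t)| \leq 1$ on $[-1,1]$ translates directly to $\|Q\|_\infty \leq 1$, and differentiating gives $Q'(\theta) = -\sin\theta \cdot q'(\cos\theta)$.

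Given this setup, the target inequality is equivalent to $|Q'(\theta)| \leq d$ for all $\theta$, so it suffices to prove the trigonometric Bernstein inequality: every real trigonometric polynomial $T$ of degree $\leq d$ with $\|T\|_\infty \leq 1$ satisfies $\|T'\|_\infty \leq d$. My plan for this step is to use the classical M.~Riesz interpolation formula, which expresses $T'(\theta)$ as a weighted sum of values of $T$ sampled at $2d$ equally spaced nodes $\theta_k = \theta + \frac{(2k-1)\pi}{2d}$, with weights $\lambda_k$ whose absolute values sum to exactly $d$. Once this interpolation identity is in hand, bounding $|T'(\theta)| \leq \sum_k |\lambda_k| \cdot \|T\|_\infty \leq d$ is immediate.

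Proving the Riesz interpolation formula itself is the main technical obstacle. I would derive it by noting that both sides of the claimed identity are trigonometric polynomials of degree $\leq d$ in $\theta$, and then verifying agreement at sufficiently many points, or alternatively by checking that the formula holds on the basis $\{e^{ik\theta}\}_{|k|\leq d}$ and computing the relevant geometric sums explicitly. The weights $\lambda_k = \frac{(-1)^{k+1}}{4d \sin^2(\frac{(2k-1)\pi}{4d})}$ then satisfy $\sum_k |\lambda_k| = d$ by a direct trigonometric identity.

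Once the trigonometric Bernstein inequality is established, reverting the substitution gives $|{-}\sin\theta \cdot q'(\cos\theta)| \leq d$, i.e.\ $|q'(t)| \leq d/\sqrt{1-t^2}$ for all $t \in (-1,1)$, with the boundary values $t = \pm 1$ handled either by continuity (after extending $q'$) or by observing that near the endpoints the Markov brothers' inequality $|q'(t)| \leq d^2$ from \theoremref{markov:bros} dominates whenever $\sqrt{1-t^2} \leq 1/d$, so the two bounds together cover the closed interval. This completes the proof sketch.
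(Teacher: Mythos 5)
This statement is an \emph{imported} result in the paper: it is stated as a citation to Theorem~2.8 of \cite{GovilM99} and is not proved there, so there is no paper proof to compare against. Your proposal supplies a correct and entirely standard proof from scratch. The Chebyshev substitution $t=\cos\theta$ does turn $q$ into a cosine polynomial $Q$ of degree at most $d$, the chain rule gives $Q'(\theta)=-\sin\theta\,q'(\cos\theta)$, and the trigonometric Bernstein inequality $\|T'\|_\infty\le d\|T\|_\infty$ for degree-$d$ trigonometric polynomials $T$, derived from the M.~Riesz interpolation formula with $\sum_k|\lambda_k|=d$ (verified on $T(\theta)=\sin d\theta$), then yields exactly $|q'(t)|\le d/\sqrt{1-t^2}$ on $(-1,1)$. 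This is precisely the classical argument one finds in references such as the one the paper cites. One small simplification: at $t=\pm 1$ the right-hand side is $+\infty$, so the statement is vacuous there and neither a continuity argument nor an appeal to Markov's inequality is actually needed; the closing paragraph of your sketch is addressing a non-issue. Other than that cosmetic point, the argument is sound.
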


\begin{theorem}[Polynomial approximation of the inverse exponential function, e.g., Theorem 4.1 in \cite{SachdevaV14}]
\label{thm:poly:approx:gauss}
For every $c>0$ and $\eps\in(0,1]$, there exists a polynomial $q_{c,\eps}$ with degree $O\left(\sqrt{\max\left(c,\log\frac{1}{\eps}\right)\cdot\log\frac{1}{\eps}}\right)$ such that
\[
    \max_{x\in[0,c]} |e^{-x}-q_{c,\eps}(x)| \le \eps
\] 
\end{theorem}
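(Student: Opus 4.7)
The plan is to split into two regimes, $c \le \log(1/\eps)$ and $c > \log(1/\eps)$, and in each case truncate an explicit exact series expansion of $e^{-x}$ while bounding the remainder. The two regimes call for different expansions: a raw Taylor series suffices for small $c$, but avoiding the linear-in-$c$ degree for large $c$ forces the use of a Chebyshev/Bessel expansion and a sharp tail estimate.

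In the small-$c$ regime I would take $q_{c,\eps}$ to be the partial sum $\sum_{k=0}^{d} (-x)^k/k!$. By the Lagrange remainder (or the alternating-series bound once $d+1 \ge c$), for $x \in [0,c]$ one has $|e^{-x} - q_{c,\eps}(x)| \le c^{d+1}/(d+1)!$. Stirling's inequality gives $(d+1)! \ge ((d+1)/e)^{d+1}$, so the remainder is at most $(ec/(d+1))^{d+1}$. Since $c \le \log(1/\eps)$, choosing $d = \Theta(\log(1/\eps))$ drives this below $\eps$, and this matches the target bound $\Theta(\sqrt{\log(1/\eps) \cdot \max(c,\log(1/\eps))})$ in this regime.

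In the large-$c$ regime the Taylor approach would cost degree $\Theta(c)$, so instead I would rescale $x = (c/2)(1+y)$ with $y \in [-1,1]$, write $e^{-x} = e^{-c/2}\, e^{-\alpha y}$ with $\alpha := c/2$, and expand in Chebyshev polynomials via the Jacobi--Anger identity
\[
   e^{-\alpha y} \;=\; I_0(\alpha) \;+\; 2 \sum_{k \ge 1} (-1)^k I_k(\alpha)\, T_k(y),
\]
where $T_k$ is the Chebyshev polynomial of the first kind and $I_k$ is the modified Bessel function. Truncating the sum at degree $d$ and multiplying through by $e^{-c/2}$ produces a polynomial $q_{c,\eps}$ in $x$ of degree $d$ whose uniform error on $[0,c]$ is at most $2\, e^{-\alpha} \sum_{k > d} I_k(\alpha)$.

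The heart of the argument is a sharp tail bound on $I_k(\alpha)$ for $k$ just above $\alpha$. The naive estimate $I_k(\alpha) \le (\alpha/2)^k e^\alpha / k!$ is loose here and yields only the suboptimal degree $O(c + \log(1/\eps))$. The Debye (saddle-point) asymptotics for $I_k$ give the stronger
\[
   e^{-\alpha}\, I_k(\alpha) \;\le\; \exp\!\bigl(-\Omega(k^2/\alpha)\bigr) \qquad \text{for all sufficiently large } k,
\]
from which the geometric-style tail $\sum_{k>d} e^{-\alpha} I_k(\alpha)$ is controlled by its leading term up to a constant. Setting the leading term equal to $\eps$ and solving for $d$ yields $d = \Theta(\sqrt{\alpha \log(1/\eps)}) = \Theta(\sqrt{c \log(1/\eps)})$, which is exactly the claimed bound in this regime. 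The main obstacle is proving the Bessel tail estimate with the correct $e^{-\Omega(k^2/\alpha)}$ decay rather than the weaker $e^{-\Omega(k)}$ that follows from cruder bounds; this is standard but requires either the Debye expansion or an equivalent steepest-descent analysis of the integral representation $I_k(\alpha) = \frac{1}{\pi}\int_0^\pi e^{\alpha \cos\theta}\cos(k\theta)\, d\theta$.
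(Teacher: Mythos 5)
This statement is an \emph{Imported Theorem} in the paper -- it is cited from Sachdeva and Vishnoi's monograph and never proved here, so there is no in-paper argument to compare against. Treating your proposal as a standalone proof attempt: the two-regime split and the overall shape of the argument are sound, and the small-$c$ case (Taylor/Lagrange remainder) is correct as written. Your route in the large-$c$ case (rescale to $[-1,1]$, expand $e^{-\alpha y}$ via Jacobi--Anger, truncate, bound the Bessel tail) is a valid and fairly classical approach; it differs from the argument usually associated with the cited source, which instead truncates the Taylor series to degree $O(\max(c,\log\frac1\eps))$ and then substitutes a low-degree Chebyshev-based approximation of each monomial $x^j$ (their Theorem~3.3). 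Both roads give the same bound; yours is more direct, theirs is more modular since the monomial-approximation lemma is reused elsewhere.

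Two points of care on the Bessel tail. First, the inequality $e^{-\alpha}I_k(\alpha)\le\exp(-\Omega(k^2/\alpha))$ is not literally true ``for all sufficiently large $k$'': for $k\gg\alpha$ the true decay is $\exp(-\Theta(k\log(k/\alpha)))$, which is \emph{slower} than $e^{-\Theta(k^2/\alpha)}$, so the Gaussian-shaped bound must be restricted to $k\lesssim\alpha$ (which is exactly the regime you need, since $d=\Theta(\sqrt{c\log\frac1\eps})\le O(\alpha)$ when $c\ge\log\frac1\eps$); the range $k\gg\alpha$ is then handled by the crude estimate you already mention, whose decay is geometric-or-better. Second, you do not actually need Debye asymptotics or steepest descent for this: the generating function $\sum_{k\in\mathbb Z}I_k(\alpha)z^k=e^{\alpha(z+z^{-1})/2}$ with $z=e^{t}$, $t>0$, gives
\[
e^{-\alpha}\sum_{k\ge d}I_k(\alpha)\;\le\;e^{-dt}\,e^{\alpha(\cosh t-1)},
\]
and optimizing $t=\Theta(d/\alpha)\le O(1)$ yields $\exp(-\Omega(d^2/\alpha))$ directly by a one-line Chernoff-style computation. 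Replacing the appeal to Debye asymptotics with this moment-generating-function bound closes the only real gap in the proposal and makes the proof entirely elementary.
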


\begin{corollary}[Polynomial approximation of the Gaussian kernel]
\label{corol:poly:approx:gauss}
There exists a polynomial $q$ with degree $O(d \log (pd))$ such that \(\abs{q(x)}\leq1\) for all \(x\in[-2,2]\) and
\[
    \max_{x\in[-2,2]}|e^{-20d^2\log(d)x^2}-q(x)|\le\frac{1}{pd^4}
\]
\end{corollary}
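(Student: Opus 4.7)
The plan is to obtain the desired polynomial by a direct substitution into the one-dimensional exponential approximation of \theoremref{poly:approx:gauss}, followed by a tiny rescaling to enforce the uniform bound $|q(x)| \leq 1$.

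First I would set $c \defeq 80 d^2 \log(d)$ and $\eta \defeq \tfrac{1}{4pd^4}$, and apply \theoremref{poly:approx:gauss} to obtain a univariate polynomial $q_{c,\eta}$ with
\[
    \max_{y \in [0,c]} \bigl| e^{-y} - q_{c,\eta}(y) \bigr| \leq \eta,
\]
whose degree is $O\bigl(\sqrt{\max(c, \log(1/\eta)) \cdot \log(1/\eta)}\bigr)$. Since $c = 80 d^2 \log d$ and $\log(1/\eta) = O(\log(pd))$, the maximum is dominated by $c$ as soon as $p$ is at most exponential in $d$ (otherwise the same bound holds by routine casework), so the degree of $q_{c,\eta}$ in its argument is $O\bigl(d \sqrt{\log(d) \log(pd)}\bigr) = O(d \log(pd))$.

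Next I would compose: for $x \in [-2,2]$ the argument $y = 20 d^2 \log(d)\, x^2$ lies in $[0, c]$, so the polynomial
\[
    \tilde q(x) \defeq q_{c,\eta}\bigl(20 d^2 \log(d)\, x^2\bigr)
\]
is a polynomial in $x$ of degree $2 \deg(q_{c,\eta}) = O(d \log(pd))$ satisfying
\[
    \max_{x \in [-2,2]} \bigl| e^{-20 d^2 \log(d) x^2} - \tilde q(x) \bigr| \leq \eta.
\]
Because $e^{-20d^2\log(d) x^2} \in [0,1]$ on $[-2,2]$, this also gives $|\tilde q(x)| \leq 1 + \eta$ on the same interval.

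Finally I would enforce $|q(x)| \leq 1$ by rescaling $q(x) \defeq \tilde q(x)/(1+\eta)$. Then $|q(x)| \leq 1$ and a one-line triangle inequality,
\[
    |q(x) - e^{-20 d^2 \log(d) x^2}| \leq |q(x) - \tilde q(x)| + |\tilde q(x) - e^{-20 d^2 \log(d) x^2}| \leq \tfrac{|\tilde q(x)|\,\eta}{1+\eta} + \eta \leq 2\eta = \tfrac{1}{2pd^4} \leq \tfrac{1}{pd^4},
\]
delivers the stated error bound with the same $O(d \log(pd))$ degree. There is essentially no obstacle here: the only subtle step is verifying that the rescaling does not blow up the approximation error, which is handled by choosing $\eta = 1/(4pd^4)$ (rather than $1/(pd^4)$) at the outset so that the final constant works out.
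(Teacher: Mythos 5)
Your proposal is correct and follows essentially the same route as the paper: apply \theoremref{poly:approx:gauss} with $c = 80 d^2 \log d$ and an error target of order $1/(pd^4)$, compose with $y = 20 d^2 \log(d)\, x^2$, and then make a small adjustment to enforce $|q(x)| \leq 1$. The only (immaterial) difference is in that last adjustment: the paper shifts downward, setting $q(t) = \tilde q(t) - \tfrac{1}{2pd^4}$, while you rescale by $1/(1+\eta)$; both cost an extra $O(\eta)$ in error, which is absorbed by choosing $\eta$ a constant factor smaller than the target.
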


\begin{proof}
By \theoremref{poly:approx:gauss}, there exists a polynomial $\hat q$ of degree $O(d\log (pd))$ such that
\[
    \max_{x\in[0, 80d^2\log(d)]} \abs{e^{-x}-\hat q(x)} \leq \frac{1}{2pd^4}
\]
By taking the polynomial $\tilde q(x)=\hat q(20 d^2 \log(d) x^2)$, we find a polynomial $q$ with degree $O(d \log d)$ such that
\[
    \max_{x\in[-2,2]} \abs{e^{-20d^2\log(d)x^2}-q(x)} \leq \frac{1}{2pd^4}
\]
Then, since \(0 \leq e^{-4d^2\log(d)x^2} \leq 1\), it suffices to take \(q(t) = \tilde q(t) - \frac{1}{2pd^4}\) to ensure \(\abs{q(x)} \leq 1\) for all \(x\in[-2,2]\).
\end{proof}
We now prove an upper bound on the $L_p$ sensitivities that is linear in $d$ in the interior of the interval $[-1,1]$, which crucially improves upon known quadratic bounds, e.g., \lemmaref{uniform-sensitivity-bound}. 
\begin{theorem}[Upper bound on sensitivity]
\label{thm:sens:bound}
Let $p\ge 1$ be a fixed constant and let $q$ be a polynomial of degree at most $d\ge 12$. 
Then for $t\in[-1,1]$, the $L_p$ sensitivity of $t$ satisfies
\[ \psi_p[\cP](t) \defeq \max_{\deg(q)\leq d} \frac{|q(t)|^p}{\int_{-1}^1|q(x)|^p\,dx}\le d^2(p+1).\]
Moreover for $\abs{t} \leq 1-\frac1d$, the $L_p$ sensitivity of $t$ satisfies
\[
    \psi_p[\cP](t)
    \defeq \max_{\deg(q)\leq d} \frac{|q(t)|^p}{\int_{-1}^1|q(x)|^p\,dx}
    =O\left(\frac{dp\log (dp)}{\sqrt{1-t^2}}\right)
\]
\end{theorem}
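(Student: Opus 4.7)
The first statement, the uniform bound $\psi_p[\cP](t) \le d^2(p+1)$, is exactly Lemma~\ref{lem:uniform-sensitivity-bound}, already proven in the excerpt: after normalizing $q$ so $|q(s^*)| = 1$ at its maximum $s^*$, the Markov brothers' inequality (Theorem~\ref{thm:markov:bros}) gives $|q(x)| \ge 1/2$ on an interval of width $\Omega(1/d^2)$ around $s^*$, so $\int_{-1}^1 |q|^p \ge \Omega(1/(d^2(p+1)))$, and this bounds the sensitivity at every $t$.

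For the sharper interior bound, the plan is to swap Markov's inequality for Bernstein's inequality (Theorem~\ref{thm:bernstein:poly}), which gives $|q'(x)| \le d\,\|q\|_\infty / \sqrt{1-x^2}$ — strictly sharper than Markov's uniform $d^2\|q\|_\infty$ bound whenever $x$ is bounded away from the endpoints. I would normalize $q$ so that $q(t)=1$ and set $C = \|q\|_\infty \ge 1$. Since $|t| \le 1 - 1/d$, we have $\sqrt{1-x^2} = \Theta(\sqrt{1-t^2})$ in a small neighborhood of $t$, so integrating Bernstein yields $|q(x) - 1| \le dC|x-t|/\sqrt{1-t^2}$ there, and therefore $|q(x)| \ge 1/2$ on an interval of length $\Omega(\sqrt{1-t^2}/(dC))$ around $t$. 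Integrating gives $\int|q|^p \ge \Omega(\sqrt{1-t^2}/(dC))$ (with the $p$-dependent factor $2^{-p}$ absorbed into constants, since $p$ is fixed), so $\psi_p[\cP](t) \le O(dC/\sqrt{1-t^2})$.

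The remaining step is to control $C$, which I would handle via a case split at the threshold $C \approx p\log(dp)$. When $C \le p \log(dp)$, the Bernstein bound above directly yields $\psi_p[\cP](t) = O(dp\log(dp)/\sqrt{1-t^2})$, matching the claim. When $C > p\log(dp)$, the polynomial must have a large spike away from $t$; applying Markov's brothers' inequality near $s^*$ gives the complementary bound $\int|q|^p \ge \Omega(C^p/d^2)$, so $\psi_p[\cP](t) \le O(d^2/C^p)$, which is already smaller than the claimed bound once $C$ is sufficiently above the threshold.

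The main obstacle is the intermediate regime where neither bound alone is tight: for $C$ only moderately above $p\log(dp)$, Bernstein at $t$ still only gives $O(dC/\sqrt{1-t^2})$ while Markov at $s^*$ only gives $O(d^2/C^p)$, and for small constant $p$ these two must be balanced carefully. To close this gap I expect to (i) apply Bernstein near $s^*$ instead of Markov whenever $s^*$ itself lies in the interior, sharpening the spike contribution from $C^p/d^2$ to $\Omega(C^p\sqrt{1-(s^*)^2}/d)$, and (ii) in the borderline case where $s^*$ is close to the boundary, invoke the Gaussian spike polynomial from Corollary~\ref{corol:poly:approx:gauss} to produce a degree-$O(d\log(dp))$ localized witness around $t$; this step is the most plausible source of the $\log(dp)$ factor in the final bound, and verifying that the two regimes glue together cleanly at the threshold will be the principal technical burden.
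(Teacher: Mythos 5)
Your treatment of the first claim is correct and is exactly \lemmaref{uniform-sensitivity-bound}, so that part matches the paper.

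For the interior bound, however, there is a genuine gap that you flag but do not close, and your proposed fixes do not resolve it. With your normalization $q(t)=1$, $C=\|q\|_\infty$, Bernstein near $t$ gives $\psi_p[\cP](t) \le O(dC/\sqrt{1-t^2})$, which beats the target only when $C \lesssim p\log(dp)$; Markov near the global maximizer $s^*$ gives $\psi_p[\cP](t) \le O(d^2(p+1)/C^p)$, which beats the target only when $C \gtrsim (d\sqrt{1-t^2}/(p\log(dp)))^{1/p}$. For small constant $p$ these thresholds are far apart (for $p=1$, $t=0$: roughly $\log d$ versus $d/\log d$), and even adding the two local lower bounds on $\int_{-1}^1|q(x)|^p\,dx$ only yields $\psi_p[\cP](t) = O(d^{3/2})$ at the balancing point $C\approx\sqrt{d}$, which is still far from the claimed $O(d\log d)$. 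Fix (i) (Bernstein at $s^*$) helps only when $s^*$ lies well inside the interval and degrades to Markov's rate as $s^*$ approaches the endpoints, which is exactly where the extremal polynomial tends to peak. Fix (ii) invokes the Gaussian bump polynomial of \corolref{poly:approx:gauss} as a ``localized witness around $t$,'' but a witness polynomial establishes a \emph{lower} bound on $\psi_p$, not the upper bound you need; as stated, the argument goes in the wrong direction.

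The paper's proof uses no case split at all, and the Gaussian bump appears as a \emph{multiplier}, not a witness. It takes the sensitivity-maximizing $q$ normalized so $\|q\|_\infty=1$, first shows $q(t)\ge 1/(4d^2)$ (by comparing the optimal sensitivity to that of the constant polynomial), and then multiplies $q$ by a degree-$O(d\log(pd))$ polynomial approximation $b$ of a Gaussian bump at $t$ with $|b|\le 1$. The product $f=bq$ satisfies $|f(t)|\approx|q(t)|$ and $\|f\|_p^p\le\|q\|_p^p$, and is guaranteed to attain its maximum within $O(1/d)$ of $t$, because the bump decays faster than Markov lets $q$ grow. Bernstein applied to $f$ near that maximum then gives $\|f\|_p^p\ge\Omega\!\left(\sqrt{1-t^2}/((p+1)d\log(pd))\right)$ and hence the claimed bound, uniformly in $\|q\|_\infty$. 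This multiplier trick is the missing ingredient: the same bump you invoke as a ``witness'' must instead be multiplied into $q$ so that the extremal point migrates to where Bernstein is sharp, at the modest cost of a $\log(pd)$ factor in the degree.
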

\begin{proof}
The first bound is just \lemmaref{uniform-sensitivity-bound} restated, which directly relied on the Markov brothers' bound.
So here we show that for $|t|\le 1-\frac1d$, we have
\[
    \frac{|q(t)|^p}{\int_{-1}^1|q(x)|^p\,dx}
    = O\left(\frac{dp\log (dp)}{\sqrt{1-t^2}}\right)
\]

For this sharper bound on the sensitivity in the middle of the interval $[-1,1]$, we need a more sophisticated argument.

Let $q\defeq \argmax_{\normof{q}_\infty=1} \frac{\abs{q(t)}^p}{\normof{q}_p^p}$ maximize the sensitivity at $t$.

We would ideally like to use Bernstein's Inequality (\theoremref{bernstein:poly}) to lower bound the mass of \(q\) around \(t\), much like the proof of \lemmaref{uniform-sensitivity-bound}.
Indeed, if \(q\) were maximized at \(t\), then such a proof would be as simple as \lemmaref{uniform-sensitivity-bound}.
That proof picks any \(t^*\) such that \(\abs{q(t^*)}=1\) and lower bounds \(\normof{q}_p^p\) by integrating over an interval of length \(\frac{1}{d^2}\) around \(t^*\).
Crucially, this is tight because even if \(t^*\) is far from \(t\), the Markov Brothers' bound on \(\abs{q'(x)}\) is independent of \(x\).
However, Bernstein's bound \(\abs{q'(x)} \leq \frac{d}{\sqrt{1-x^2}}\) would give very different results depending on how far \(t^*\) is from \(t\).
So, the weight of this proof is in showing that \(q\) must be maximized near \(t\).

We first show that \(q(t)\) is not terribly small.
Since \(\normof{q}_\infty=1\), by the proof of \lemmaref{uniform-sensitivity-bound}, we know that \(\normof{q}_p^p \geq \frac{1}{d^2(p+1)}\).
Let \(c(t) \defeq 1\) be the constant function.
Since \(q\) maximizes the sensitivity function, we get have \(\frac{\abs{q(t)}^p}{\normof{q}_p^p} \geq \frac{\abs{c(t)}^p}{\normof{c}_p^p} = \frac12\).
So, \(\abs{q(t)}^p \geq \frac{\normof{q}_p^p}{2} \geq \frac{1}{2 d^2 (p+1)}\).
Without loss of generality \(q(t)>0\), so we just write \(q(t) \geq \frac{1}{d^{2/p}(2p+2)^{1/p}} \geq \frac{1}{4d^2}\) (since \((2p+2)^{1/p}\leq 4\) for \(p\geq1\)).

Next, we multiply \(q\) with a degree \(\tilde O(d)\) polynomial approximation of a Gaussian pdf centered at \(t\), which effectively erases \(q\) outside of a small interval of \(t\).
Intuitively, this negligibly changes the degree of \(q\) but ensures that the maximum is achieved near \(t\).

We first argue that multiplying by an exact Gaussian bump maximizes \(q\) near \(t\).
Let \(a(x) \defeq e^{-4(x-t)^2 d^2 \log(d)}\) be this Gaussian bump.
Let \(C \defeq q(t)\).
By Markov Brothers', \(\abs{q'(x)}\leq d^2\).
So, we can bound the growth of \(q\) around \(t\):
\[
    \abs{q(t+x)} \leq C + d^2 \abs{x} \leq 1 + d^2\abs{x}
\]
Scaling by the Gaussian,
\[
    \abs{a(t+x) q(t+x)} \leq (1 + d^2 \abs{x}) e^{-5x^2 d^2 \log(d)}
\]
For \(\abs{x}>\frac1{2d}\), we have $e^{-20 x^2 d^2 \log(d)} < e^{-5 \log d} = \frac{1}{d^5}$.
So, for \(\abs{x} > \frac1{2d}\),
\[
    \abs{a(t+x) q(t+x)} \leq \frac{1}{d^5} (1 + d^2\abs{x}) \leq \frac{3}{d^3}
\]
And since \(a(t)q(t) = q(t) = C > \frac{1}{4 d^2}\), we guarantee that \(\argmax_{[-1,1]} a(x)q(x) \in [t-\frac1{2d}, t+\frac1{2d}]\) for \(d \geq 12\).

Next, we substitute \(a(x)\) with the polynomial approximation \(b(x)\) guaranteed by \corolref{poly:approx:gauss}.
Namely, we know that \(b(x)\) has degree at most \(\xi d \log(pd)\) for some constant \(\xi>1\), and that \(\abs{b(x)-a(x)} \leq \frac1{d^4}\) for all \(x\in[-1,1]\).
Then, we get that \(f(x) \defeq b(x) q(x)\) is a degree \(d + \xi d \log(pd) \leq 2\xi d \log(pd)\) polynomial with \(f(t) \geq \frac1{4d^2} - \frac1{d^4}\) and \(\abs{f(t+x)} \leq \frac3{d^3}+\frac{1}{d^4}\), so that \(f\) is still maximized in \([t-\frac1{2d},t+\frac1{2d}]\).

Now, we can appeal to Bernstein's bound to control the sensitivity of \(f\).
Let \(r(x) \defeq \frac1{f(t)} f(x)\) be a rescaling of \(f\) so that \(\normof{r}_\infty = \frac1{f(t)}\) and \(r(t)=1\). 
By Bernstein's bound, we have \(\abs{r'(x)} \leq \frac{2\xi d \log (pd)}{f(t)\sqrt{1-x^2}}\), and therefore that \(\abs{r'(t+x)} \leq \frac{4\xi d \log(pd)}{f(t)\sqrt{1-t^2}}\) for \(x\in[0,\frac1{2d}]\) (via the smoothness of the Chebyshev measure -- see \lemmaref{cheby-smooth} in the appendix).
Let \(m_t \defeq \frac{4\xi d \log(pd)}{\sqrt{1-t^2}}\) be this locally accurate bound on the derivative (but without \(f(t)\)), and let \(t^* \defeq \argmax_{[-1,1]} r(x) \in [t-\frac1{2d},t+\frac1{2d}]\), so that we have:
\[
    r(t^*+x) \geq \frac1{f(t)} - \frac{m_t}{f(t)} x \geq 1 - m_t x \geq 0
    \hspace{1cm}
    \forall x\in[0,\tsfrac1{m_t}]
\]
which follows since \(f(t) \leq q(t) \leq 1\).
Then, we get
\[
    \normof{r}_p^p
    \geq \int_0^{1/m_t} (1-m_tx)^p dx
    = \frac1{m_t(p+1)}
\]
So that
\[
    \frac{\abs{f(t)}^p}{\normof{f}_p^p}
    = \frac{\abs{r(t)}^p}{\normof{r}_p^p}
    \leq \frac{1}{\frac1{m_t(p+1)}}
    = \frac{4\xi(p+1) d\log(pd)}{\sqrt{1-t^2}}
\]
Then, since \(\abs{b(x)}\leq 1\), we have \(\abs{q(x)} \geq \abs{f(x)}\), so that \(\normof{q}_p^p \geq \normof{f}_p^p\).
Further, since \(q(t) = \frac{f(t)}{b(t)} \leq \frac{f(t)}{1-\frac{1}{pd^4}}\), we get \(\abs{q(t)}^p \leq (1-\frac{1}{pd^4})^{-p} \abs{f(t)}^p \leq 2\abs{f(t)}^p\).
We conclude:
\[
    \frac{\abs{q(t)}^p}{\normof{q}_p^p}
    \leq 2\frac{\abs{f(t)}^p}{\normof{f}_p^p}
    \leq \frac{8\xi(p+1) d\log(pd)}{\sqrt{1-t^2}}
\]
\end{proof}

\noindent
We also offer the following lower bound on the $L_p$ sensitivities.
\begin{lemma}[Lower bound on sensitivity]
For any $t\in[-1,1]$, $p\ge 1$, and $d\geq \Omega(p)$ we have
\[
    \psi_p[\cP](t)
    = \max_{q:\deg(q)\le d}\frac{|q(t)|^p}{\int_{-1}^1|q(x)|^p\,dx}
    = \Omega\left(\frac{d\sqrt{p}}{\sqrt{\log d}}\right)
\] 
\end{lemma}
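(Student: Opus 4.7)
The natural way to produce a lower bound on $\psi_p[\cP](t)$ is to exhibit a single degree-$d$ polynomial $q$ with a large ratio $|q(t)|^p/\|q\|_p^p$. I plan to use a \emph{polynomial approximation of a Gaussian bump} centered at $t$, since such a bump has mass concentrated in a window of width $\Theta(1/\sqrt{cp})$ in the $L_p$ sense (under a Gaussian of variance $1/\sqrt{c}$), and the degree required to approximate this Gaussian scales like $\sqrt{c\log(1/\eps)}$ by \theoremref{poly:approx:gauss}. Balancing the degree budget $d$ against the error $\eps$ to maximize $c$ is what produces the target rate $\Omega(d\sqrt{p}/\sqrt{\log d})$.

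Concretely, set $c = d^2/(C\log d)$ for a large enough absolute constant $C$, and fix $\eps = 1/(p d^{10})$. Since $y\mapsto c(x-t)^2$ maps $x\in[-1,1]$ into $[0,4c]$, applying \theoremref{poly:approx:gauss} to $e^{-y}$ on $[0,4c]$ with accuracy $\eps$ yields a polynomial $r$ of degree $O(\sqrt{c\log(1/\eps)})$; composing with $c(x-t)^2$ gives $q(x) \defeq r(c(x-t)^2)$, a polynomial of degree $O(\sqrt{c\log(1/\eps)})$ in $x$. Using the assumption $d\ge \Omega(p)$, we have $\log(1/\eps)=\log(pd^{10}) = O(\log d)$, so with $C$ large the degree is at most $d$. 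On $[-1,1]$, $|q(x) - e^{-c(x-t)^2}| \le \eps$ pointwise. In particular $q(t) \ge 1-\eps$, and $(1-\eps)^p \ge 1-p\eps \ge 1/2$, giving $|q(t)|^p \ge 1/2$. To bound $\|q\|_p^p$, split $[-1,1]$ at $R \defeq \sqrt{\log(1/(p\eps))/c}$. Region A where $(x-t)^2\le R^2$ satisfies $\eps e^{c(x-t)^2}\le 1/p$, so
\[
    (e^{-c(x-t)^2} + \eps)^p = e^{-cp(x-t)^2}\bigl(1+\eps e^{c(x-t)^2}\bigr)^p \le e\cdot e^{-cp(x-t)^2},
\]
and integrating yields at most $e\sqrt{\pi/(cp)}$. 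In region B, $e^{-c(x-t)^2}<p\eps$, so the integrand is at most $((p+1)\eps)^p$, and the contribution is $O(1/d^8)$, which is negligible compared to $\sqrt{1/(cp)}\gtrsim 1/d$. Thus $\|q\|_p^p = O(\sqrt{1/(cp)})$, and the ratio is $\Omega(\sqrt{cp}) = \Omega(d\sqrt{p/\log d})$, as desired.

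\paragraph{Main Obstacle.} The delicate step is controlling $(e^{-c(x-t)^2}+\eps)^p$: a crude application of $(a+b)^p \le 2^{p-1}(a^p+b^p)$ produces a factor $2^p$ that would destroy the bound for large $p$. The fix is to pick $\eps$ so small that $p\eps e^{c(x-t)^2}$ stays $\le 1$ throughout the effective support of the Gaussian, which is why I choose $\eps = 1/(pd^{10})$ with an extra factor of $p$ in the denominator. This makes the multiplicative slack $(1+\eps e^{c(x-t)^2})^p \le e^{p\eps e^{c(x-t)^2}}$ bounded by $e$ in Region A, and forces Region B's contribution to be super-polynomially small. The assumption $d\ge\Omega(p)$ is used precisely to ensure $\log(1/\eps)=O(\log d)$, so that the degree budget for approximating the Gaussian at parameter $c=\Theta(d^2/\log d)$ fits within $d$; under this condition all parameters match up cleanly.
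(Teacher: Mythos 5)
Your proposal is correct and follows essentially the same approach as the paper: construct a degree-$d$ polynomial approximating the Gaussian bump $e^{-c(x-t)^2}$ with $c = \Theta(d^2/\log d)$ via \theoremref{poly:approx:gauss}, then show the approximating polynomial retains a large ratio $|q(t)|^p/\|q\|_p^p = \Omega(\sqrt{cp})$. The only divergence is in the bookkeeping of the approximation error: the paper takes $\eps = 1/d$ and invokes Minkowski's inequality to bound $\|q\|_p \le \|f\|_p + \|q-f\|_p$, whereas you take a much finer $\eps = 1/(pd^{10})$ and do a pointwise region split to control $(e^{-c(x-t)^2}+\eps)^p$ directly — both routes land on the same $\Theta$-bound and use the hypothesis $d\ge\Omega(p)$ for the same reason, to keep $\log(1/\eps)=O(\log d)$.
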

\begin{proof}
Let $t\in[-1,1]$. 
By \corolref{poly:approx:gauss}, there exists a polynomial $q$ with degree $d$ such that
\[\max_{x\in[-1,1]}|e^{-(Cd/\sqrt{\log d})^2(x-t)^2}-q(x)|\le\frac{1}{d},\] 
for a fixed constant $C>0$. 
Let $f(x)=e^{-(Cd/\sqrt{\log d})^2(x-t)^2}$ so that $f(t)=1$ and for $p\ge 1$, 
\[
    \int_{-1}^1|f(x)|^p\,dx = \int_{-1}^1 (f(x))^p \,dx<\int_{-\infty}^\infty e^{-p(Cd/\sqrt{\log d})^2x^2}\,dx=\frac{\sqrt{\pi\log d}}{Cd\sqrt{p}}
\]
Hence, $\frac{|f(t)|^p}{\int_{-1}^1|f(x)|^p\,dx}\ge\frac{Cd\sqrt p}{\sqrt{\pi\log d}}$. 
Since $\max_{x\in[-1,1]}|e^{-(Cd/\sqrt{\log d})^2(x-t)^2}-q(x)|\le\frac{1}{d}$, then it follows that
\[|q(t)|\ge\frac{1}{2}\]
and
\[
    \normof{q-f}_p^p \leq \int_{-1}^1 \frac{1}{d^p} dt = \frac 2{d^p}
\]
Therefore,
\begin{align*}
    \normof{q}_p^p
    &\leq \left(\normof{f}_p + \normof{q-f}_p\right)^p \\
    &\leq \left(\left(\frac{\sqrt{\pi \log d}}{Cd \sqrt p}\right)^{1/p} + \frac{2^{1/p}}{d}\right)^p \\
    &\leq \left(\left(1+\frac1{p^{\frac{p-1}{p}}}\right)\left(\frac{\sqrt{\pi \log d}}{Cd \sqrt p}\right)^{1/p}\right)^p \\
    &= \left(1+\frac1{p^{\frac{p-1}{p}}}\right)^p \cdot \frac{\sqrt{\pi \log d}}{Cd \sqrt p} \\
    &\leq 2 \frac{\sqrt{\pi \log d}}{Cd \sqrt p}
\end{align*}
where the third inequality comes showing that \(\frac{2^{1/p}}{d} \leq \frac{1}{p^{^{\frac{p-1}{p}}}} (\frac{\sqrt{\pi \log d}}{Cd \sqrt p})^{1/p}\), which holds when \(d \geq p (2C\sqrt p)^{\frac1{p-1}} = \Omega(p)\).
We therefore conclude that
\[
    \frac{|q(t)|^p}{\int_{-1}^1|q(x)|^p\,dx}
    \geq \frac{\tsfrac12}{2 \frac{\sqrt{\pi \log d}}{Cd \sqrt p}}
    = \frac{d\sqrt p}{4\sqrt{\pi \log(d)}}
\]
\end{proof}

\section{Reweighted Operator \texorpdfstring{\(L_2\)}{L2} Subspace Embedding}
\label{app:reweighted-subspace-embedding}

\begin{theorem}
Suppose \(s_1,\ldots,s_{n_0}\) are drawn uniformly from \([-1,1]\).
Let \(\mA\in\bbR^{n_0 \times (d+1)}\) be the associated Vandermonde matrix, so that \(\mA_{i,j} = s_i^{j-1}\).
Let \(\gamma \defeq \frac2{n_0}\).
Let \(\mW\in\bbR^{n_0 \times n_0}\) be diagonal with \(\mW_{ii} = \gamma w(s_i)\).
Then for \(n_0 = \Omega(d^4\log d)\), we have that with probability at least \(\frac{11}{12}\),
\[
    \frac12 \cP^\top\cW^{1-\frac2p}\cP \preceq \gamma^{-\frac2p}\mA^\top\mW^{1-\frac2p}\mA \preceq 2\cP^\top\cW^{1-\frac2p}\cP,
\]
where $\cW$ is the operator that rescales by the truncated Chebyshev density $w(t)$ and $p\in[1,2]$. 

\end{theorem}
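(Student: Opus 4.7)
The plan is to reduce the two-sided spectral bound to matrix concentration after normalization. Let $\mM \defeq \cP^\top \cW^{1-2/p} \cP = \int_{-1}^1 (w(t))^{1-2/p}\vp_t\vp_t^\top\,dt$ with $\vp_t \defeq (1,t,\ldots,t^d)^\top$, and observe that since $s_1,\ldots,s_{n_0}$ are i.i.d.~uniform on $[-1,1]$ (density $\tfrac12$), the random matrix $\gamma\sum_i (w(s_i))^{1-2/p}\vp_{s_i}\vp_{s_i}^\top$ is an unbiased estimator of $\mM$. Unfolding $\mW_{ii} = \gamma w(s_i)$ yields
\[
\gamma\sum_i (w(s_i))^{1-2/p}\vp_{s_i}\vp_{s_i}^\top \;=\; \gamma^{2/p}\mA^\top\mW^{1-2/p}\mA,
\]
so I read the claim as $\tfrac12\mM \preceq \gamma^{2/p}\mA^\top\mW^{1-2/p}\mA \preceq 2\mM$ (interpreting the exponent as $+2/p$ rather than $-2/p$, consistent with how the theorem is applied in the proof of \theoremref{sens:lewis:approx}). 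Setting $\mX_i \defeq \gamma (w(s_i))^{1-2/p}\mM^{-1/2}\vp_{s_i}\vp_{s_i}^\top\mM^{-1/2}$, the target reduces to showing $\tfrac12 \eye \preceq \sum_i \mX_i \preceq 2\eye$ for independent PSD rank-one matrices $\mX_i$ with $\sum_i \E[\mX_i] = \eye$.

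The crux is a uniform bound on $\normof{\mX_i}$. Because each $\mX_i$ has rank one,
\[
\normof{\mX_i} \;=\; \gamma (w(s_i))^{1-2/p}\vp_{s_i}^\top\mM^{-1}\vp_{s_i} \;=\; \gamma\cdot\tau[\cW^{1/2-1/p}\cP](s_i),
\]
and \theoremref{chebyshev:ratio} gives $\tau[\cW^{1/2-1/p}\cP](s_i)\le c_3\, w(s_i) \le c_3 c_1(d+1)^2$, so that $\normof{\mX_i} = O(d^2/n_0)$. This is the key step: it expresses the fact that sampling uniformly and then weighting by $\gamma w(s_i)$ is, up to constants, an oversample with respect to the $L_p$-leverage function of the polynomial operator.

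Given this operator-norm bound, I would close the argument with a standard $\eps$-net argument on the $\mM$-unit sphere. For each fixed direction $\vv$, scalar Bernstein applied to the bounded nonnegative variables $\vv^\top\mX_i\vv$ (mean $1/n_0$, individual bound $O(d^2/n_0)$) yields $\vv^\top\sum_i\mX_i\vv \in [\tfrac34,\tfrac54]$ off an event of probability $\exp(-\Omega(n_0/d^2))$. Union-bounding over a $\tfrac14$-net of cardinality $e^{O(d)}$ (via Lemma~2.4 of \cite{bourgain1989approximation}) and then promoting the pointwise approximation to a spectral one by the telescoping decomposition $\vx = \sum_k \alpha_k \vy_k$ with $\vy_k$ in the net and $|\alpha_k|\le 4^{-k}$ (as in the proof of \lemmaref{oper:const-factor-err}) completes the proof. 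The resulting sample-complexity condition $n_0 = \Omega(d^3\log d)$ is comfortably absorbed by the hypothesis $n_0 = \Omega(d^4\log d)$; a direct application of Tropp's matrix Chernoff would instead give $\Omega(d^2\log d)$.

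The main obstacle is securing the per-sample bound $\normof{\mX_i}=O(d^2/n_0)$, which is precisely the statement that the clipped Chebyshev density upper-bounds the reweighted leverage function of $\cP$. Once \theoremref{chebyshev:ratio} delivers this, the rest is routine scalar-plus-net concentration.
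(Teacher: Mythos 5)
Your proof is correct, and it improves on the paper's own argument in several concrete ways while following the same broad shape (pointwise scalar concentration plus an $\eps$-net on the $\ell_2$-sphere, promoted to a spectral bound by the telescoping decomposition). Three differences are worth flagging. First, you correctly identify that the exponent in the statement should be $\gamma^{+2/p}$ rather than $\gamma^{-2/p}$: unpacking $\mW_{ii}^{1-2/p} = \gamma^{1-2/p}(w(s_i))^{1-2/p}$ and multiplying by the unbiasing factor $\gamma$ gives exactly $\gamma^{2/p}\mA^\top\mW^{1-2/p}\mA$ as the estimator of $\cP^\top\cW^{1-2/p}\cP$, and the paper's own downstream use of this lemma in \theoremref{sens:lewis:approx} (the line $\tau[\mW^{\frac12-\frac1p}\mA](i) \le 2(\mW_{ii})^{1-\frac2p}\gamma^{\frac2p}\va_i^\top(\cP^\top\cW^{1-\frac2p}\cP)^{-1}\va_i$) is only consistent with the $+2/p$ sign, confirming the statement has a typo. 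Second, your per-sample bound $\normof{\mX_i} = \gamma\,\tau[\cW^{1/2-1/p}\cP](s_i) \le \gamma\, c_3\, w(s_i) \le O(d^2/n_0)$ is obtained by invoking \theoremref{chebyshev:ratio}; the paper's proof of the reformulated lemma (\theoremref{lp-subspace-embedding}) does \emph{not} invoke that theorem, instead deriving a cruder $\frakS = O(d^5 p)$ through a chain of inequalities that mixes $\ell_2$ and $\ell_p$ sensitivity quantities and whose scaling is hard to audit; your $O(d^2)$ bound is both tighter and more transparent. Third, you close with the variance version of Bernstein (and observe Tropp's matrix Chernoff does even better), whereas the paper uses the bounded-differences form (\importedtheoremref{bded-diffs:conc}), which replaces the variance by the square of the maximum deviation and therefore needs roughly $\frakS^2$ rather than $\frakS$ samples. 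Your route gives $n_0 = \Omega(d^3\log d)$ (or $\Omega(d^2\log d)$ via matrix Chernoff), comfortably inside the stated $\Omega(d^4\log d)$; the paper's own $\frakS^2$ bound with $\frakS = O(d^5 p)$ is in fact \emph{too weak} to match the sample complexity it claims, so your version also repairs a gap in the original.

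One small nit: you should make explicit that the identity $\normof{\mX_i} = \gamma(w(s_i))^{1-2/p}\vp_{s_i}^\top\mM^{-1}\vp_{s_i}$ equals the reweighted operator leverage score requires $\mM = \cP^\top\cW^{1-2/p}\cP$ to be invertible, i.e., the clipped Chebyshev weight $w(t)$ is bounded away from $0$ and $\infty$ on $[-1,1]$; the paper records this ($w(t)\in[d,(d+1)^2]$) and you are implicitly using it, but it should be stated before forming $\mM^{-1/2}$.
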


To prove this claim, it will be more convenient to shift where the \(\gamma\) term is located, into the matrix \mA:

\begin{theorem}
\label{thm:lp-subspace-embedding}
Suppose \(s_1,\ldots,s_{n_0}\) are drawn uniformly from \([-1,1]\), and we construct the associated scaled Vandermonde matrix \(\mA\in\bbR^{n_0 \times (d+1)}\), so that \(\mA_{i,j} = (\frac{2}{n_0})^{\frac12} s_i^{j-1}\).
Let \(\mW\in\bbR^{n_0 \times n_0}\) be the diagonal matrix with \(\mW_{i,i} = w(t_i)\).
If \(n_0 = \Omega(d^5\log d)\), then with probability at least \(\frac{11}{12}\), we have
\[
	\frac12 \cP^\top\cW^{1 - \frac2p}\cP
	\preceq
	\mA^\top\mW^{1 - \frac2p}\mA
	\preceq
	2 \cP^\top\cW^{1 - \frac2p}\cP,
\]
where $\cW$ is the operator that rescales by the truncated Chebyshev density $w(t)$ and $p\in[\frac23,2]$.
\end{theorem}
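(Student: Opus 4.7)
The plan is to recognize this as a standard matrix Chernoff bound applied to a sum of rank-one random matrices, where the crucial ingredient is the Lewis weight bound from \theoremref{chebyshev:ratio}. I will prove the equivalent form with $\mA$ prescaled (i.e., \theoremref{lp-subspace-embedding}), since the two statements are algebraically the same up to pulling the scalar $\gamma = 2/n_0$ outside.

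First, I would rewrite $\mA^\top \mW^{1-2/p}\mA$ as a sum of i.i.d.~rank-one matrices. Let $\vp_t \defeq (1, t, t^2, \ldots, t^d)^\top \in \bbR^{d+1}$ be the row of $\cP$ at time $t$ and set $\gamma \defeq 2/n_0$. Then
\[
    \mA^\top\mW^{1-\frac2p}\mA = \sum_{i=1}^{n_0}\mX_i
    \quad\text{where}\quad
    \mX_i \defeq \gamma\, w(s_i)^{1-\frac2p}\vp_{s_i}\vp_{s_i}^\top.
\]
Since each $s_i$ is uniform on $[-1,1]$, a direct calculation gives $\E[\mX_i] = \frac{\gamma n_0}{2} \cdot \frac{1}{n_0}\int_{-1}^1 w(t)^{1-2/p}\vp_t\vp_t^\top dt$, and summing over $i$ yields $\E[\sum_i \mX_i] = \cP^\top\cW^{1-2/p}\cP =: \mG$.

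Next, I would perform a change of basis to reduce the problem to concentration around the identity. Since \(\mG\) is positive definite (as $\cP$ has rank $d+1$), define $\tilde\mX_i \defeq \mG^{-1/2}\mX_i\mG^{-1/2}$, which are independent PSD rank-one matrices with $\E[\sum_i\tilde\mX_i] = \mI_{d+1}$. The spectral claim of the theorem is equivalent to $\frac12\mI \preceq \sum_i\tilde\mX_i \preceq 2\mI$. To apply matrix Chernoff, I need a uniform bound on $\|\tilde\mX_i\|$. Since $\tilde\mX_i$ is rank-one,
\[
    \|\tilde\mX_i\| = \gamma\, w(s_i)^{1-\frac2p}\,\vp_{s_i}^\top\mG^{-1}\vp_{s_i} = \gamma\,\tau[\cW^{\frac12-\frac1p}\cP](s_i),
\]
where the last equality uses the standard inner-product formula for the leverage function. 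By \theoremref{chebyshev:ratio}, $\tau[\cW^{1/2-1/p}\cP](s_i) \leq c_3 w(s_i)$, and since $w(t) \leq c_1(d+1)^2$ everywhere on $[-1,1]$, we get $\|\tilde\mX_i\| \leq 2c_1 c_3 (d+1)^2 / n_0$.

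Finally, I would invoke a standard matrix Chernoff inequality (e.g., Tropp), which for independent PSD matrices with $\sum_i\E[\tilde\mX_i]=\mI$ and $\|\tilde\mX_i\|\le R$ gives
\[
    \Pr\Bigl[\Bigl\|\textstyle\sum_i\tilde\mX_i - \mI\Bigr\| \geq \tfrac12\Bigr]
    \leq 2(d+1)\exp\!\Bigl(-\tfrac{c}{R}\Bigr)
\]
for a universal constant $c>0$. Plugging in $R = O(d^2/n_0)$, this probability is at most $\tfrac{1}{12}$ whenever $n_0 = \Omega(d^2\log d)$, which is amply satisfied by the hypothesized $n_0 = \Omega(d^4\log d)$. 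Translating the resulting spectral bound on $\sum_i\tilde\mX_i$ back through the $\mG^{-1/2}$ conjugation and then pulling out the factor $\gamma$ (to recover the statement with the unscaled Vandermonde matrix and the weight $\gamma w(s_i)$) concludes the proof.

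The main technical obstacle is not really present here, since the only substantive ingredient beyond standard matrix concentration is the upper bound on the reweighted leverage function, which is exactly the content of \theoremref{chebyshev:ratio} (already proved). The only mild care needed is keeping track of the scaling factor $\gamma^{-2/p}$ versus $\gamma$ when moving between the two equivalent formulations, and verifying that the identity $w(s_i)^{1-2/p}\vp_{s_i}^\top\mG^{-1}\vp_{s_i}=\tau[\cW^{1/2-1/p}\cP](s_i)$ holds (this is precisely the operator analogue of $\tau_i(\mA) = \va_i^\top(\mA^\top\mA)^{-1}\va_i$ that the paper has already cited).
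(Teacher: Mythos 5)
Your proposal is correct, but it takes a genuinely different route from the paper, and in fact obtains a tighter conclusion. The paper's proof in Appendix~B is a per-vector Bernstein/bounded-differences concentration followed by an $\eps$-net union bound over the range of $\cW^{1/2-1/p}\cP$; the parameter it feeds into that argument is a uniform bound $\frakS$ on the reweighted leverage function, which it bounds somewhat crudely by $O(d^5 p)$, and that is what drives the $n_0 = \Omega(d^5 \log d)$ in the statement. You instead view $\mA^\top\mW^{1-2/p}\mA$ as a sum of i.i.d.\ rank-one PSD matrices, whiten by $\mG^{-1/2}$ so the expectation becomes the identity, and invoke a matrix Chernoff bound. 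The decisive step — and it is correct — is the identity
\[
\bigl\|\mG^{-1/2}\mX_i\mG^{-1/2}\bigr\| \;=\; \gamma\, w(s_i)^{1-2/p}\,\vp_{s_i}^\top\mG^{-1}\vp_{s_i} \;=\; \gamma\,\tau\bigl[\cW^{1/2-1/p}\cP\bigr](s_i),
\]
which is the operator analogue of the leverage-score inner-product formula. Combining the upper bound from \theoremref{chebyshev:ratio} with $w(t)\le c_1(d+1)^2$ gives a uniform operator-norm bound $R = O(d^2/n_0)$, so matrix Chernoff already succeeds once $n_0 = \Omega(d^2\log d)$. This is both cleaner than the paper's net argument and strictly stronger than the stated hypothesis (so it certainly covers $n_0 = \Omega(d^5\log d)$, and indeed shows the polynomial dependence in the statement is slack). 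One stylistic caveat: state explicitly that $\mG \succ 0$ (it is, since $\cP$ has rank $d+1$ and $\cW$ is strictly positive) before defining $\mG^{-1/2}$, and note that the matrix Chernoff form you cite with the simple $\exp(-c/R)$ tail requires fixing the deviation at $\delta = 1/2$, which is what you want here.
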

\begin{proof}
We first prove a more general statement by considering a general operator $\cW$ , which we eventually set to be the Lewis weight operator. 
Let $\cW:L_2([-1,1])\to\mathbb{R}$ be any operator so that $\max_{t\in[-1,1], \vx\in\mathbb{R}^{d+1}}\frac{|\cW\cP\vx(t)|^2}{\normof{|\cW\cP\vx|}_2^2}\le\frakS$ for some \(\frakS<\infty\). 
Consider a fixed $\vx\in\mathbb{R}^{d+1}$ and suppose we uniformly sample $n_0=O\left(\frac{\frakS^2}{\eps^2}\log \frac{d}{\eps}\right)$ points from $[-1,1]$. 
For each $i\in[n_0]$, let $X_i$ be the random variable with value $\abs{[\mW\mA\vx](i)} = \frac2{n_0} |\cW\cP\vx(s_i)|^2$. 
Then $\Ex{X_i}=\frac1{n_0}\|\cW\cP\vx\|_2^2$. 
Moreover, since $|\cW\cP\vx(t)|^2\le\frakS\cdot\|\cW\cP\vx\|_2^2$, we get $\abs{X_i - \E[X_i]} \leq \frac{1}{n_0}(2\frakS + 1) \normof{\cW\cP\vx}_2^2$. 
Let $X=\sum_{i\in[n_0]} X_i = \normof{\mW\mA\vx}_2^2$ so that by linearity of expectation, $\Ex{X}=\|\cW\cP\vx\|_2^2$. 
Setting $\gamma=\eps\normof{\cW\cP\vx}^2_2$ in the formulation of Bernstein's concentration inequality in \importedtheoremref{bded-diffs:conc}, we thus have
\[
    \PPr{ |\normof{\mW\mA\vx}_2^2-\normof{\cW\cP\vx}_2^2| \le \eps \normof{\cW\cP\vx}_2^2 } \le \exp(-O(d\log d/\eps))
\]
for any fixed $\vx\in\mathbb{R}^{d+1}$.
That is, we have
\[(1-\eps)\|\cW\cP\vx\|_2^2\le\frac{1}{n_0}\|\bW\bA\vx\|_2^2\le(1+\eps)\|\cW\cP\vx\|_2^2,\]
with probability at least $1-\exp(-O(d\log d/\eps))$. 

\paragraph{$\eps$-net argument over all coefficient vectors.}
We now union bound over an $\eps$-net over all coefficient vectors $\vx\in\mathbb{R}^{d+1}$. 
Let $\cB=\left\{\cW\cP\vx ~\big|~ \normof{\cW\cP\vx}_2^2 \le 1 \right\}$.
By Lemma 2.4 of \cite{bourgain1989approximation}, we construct a net $\cN$ over $\cB$ by greedily adding in points that are within $L_2$ distance $\left(\frac{\eps}{d}\right)$, so that $|\cN|\le\left(\frac{d}{\eps}\right)^{O(d)} = e^{O(d \log d/\eps)}$.
Note that since we have $(1+\eps)$-accuracy for any $\cW\cP\vx\in\cN$ with probability $1-\exp(-O(d\log d/\eps))$ by sampling $n_0=O\left(\frac{\frakS^2}{\eps^2}\log\frac{d}{\eps}\right)$ points from $[-1,1]$, then by a union bound, we have $(1+\eps)$-accuracy for all points in $\cN$ with high probability. 

For any $\cW\cP\bx$ with $\|\cW\cP\bz\|_2=1$, we construct a sequence $\cW\cP\vy_1,\cW\cP\vy_2,\ldots$ such that $\|\cW\cP\bz-\sum_{j=1}^i\cW\cP\vy_j\|_2\le\eps^i$ and such that there exists constants $\gamma_i\le\eps^{i-1}$ with $\frac{1}{\gamma_i}\cW\cP\by_i\in\mathcal{N}$ for all $i$. 
Formally, we let $\cW\cP\by_1$ be the point in the net $\mathcal{N}$ that is closest to $\cW\cP\bx$, so that $\|\cW\cP\bz-\cW\cP\by_1\|_2\le\eps$. 
We can then define the remaining points $\cW\cP\by_i$ inductively:
For a sequence $\cW\cP\by_1,\ldots,\cW\cP\by_{i-1}$ such that $\gamma_i:=\|\cW\cP\bs-\sum_{j=1}^{i-1}\cW\cP\by_j\|_2\le\eps^{i-1}$, observe that $\frac{1}{\gamma_i}\|\cW\cP\bs-\sum_{j=1}^{i-1}\cW\cP\by_j\|_2=1$. 
Thus, there exists a function $\cW\cP\by_i\in\mathcal{N}$ that is within distance $\eps$ of $\cW\cP\bs-\sum_{j=1}^{i-1}\cW\cP\by_j$, which completes the induction. 

Therefore, for the matrix $\bA$ sampled by the algorithm, by triangle inequality,
\[
    |\|\cW\cP\bx\|_2-\|\bW\bA\bx\|_2|\le\sum_{i=1}^\infty|\|\cW\cP\by_i\|_2-\|\bW\bA\by_i\|_2|\le\sum_{i=1}^\infty\eps^i\|\cW\cP\by_i\|_2=O(\eps)\|\cW\cP\bx\|_2
\]
The correctness over all $\bx\in\mathbb{R}^{d+1}$ then follows from a rescaling of $\eps$. 
Hence, we have that with high probability all \(\vx\in\bbR^{d+1}\) enjoy
\[
    (1-\eps)\|\cW\cP\bx\|_2^2\le\|\bW\bA\bx\|_2^2\le(1+\eps)\|\cW\cP\bx\|_2^2
\]
or equivalently
\[
    (1-\eps)\cP^\top\cW\cP \preceq \mA^\top \mW^{1-\frac2p}\mA \preceq (1+\eps)\cP^\top\cW\cP
\]

\paragraph{Finishing the argument for the Chebyshev density.}
Observe that the Chebyshev density satisfies $w(t)\in[d,(d+1)^2]$ for each $t\in[-1,1]$. 
Since $\max_{t\in[-1,1], q:\deg(q)\le d}\frac{|q(t)|^2}{\normof{q}_2^2}\le O(d^2)$, then by substituting the Lewis weight operator $\cW^{\frac{1}{2}-\frac{1}{p}}$ in place of the general operator $\cW$ in the above analysis, we have that 
\[
    \frakS = \max_{t\in[-1,1], \vx\in\mathbb{R}^{d+1}}\frac{|\cW^{\frac{1}{2}-\frac{1}{p}}\cP\vx(t)|^p}{\normof{\cW^{\frac{1}{2}-\frac{1}{p}}\cP\vx}_p^p}
    \leq O(d^3) \max_{t\in[-1,1], \vx\in\bbR^{d+1}} \frac{\abs{\cP\vx(t)}}{\normof{\cP\vx}_2^2}
    \leq O(d^5p)
\]
for the operator $\cW$ that corresponds to the Chebyshev weights.
Hence the claim follows by taking $\eps=O(1)$. 
\end{proof}
\section{From Two-Stage Sampling to One-Stage Sampling}
\label{app:stage-squashing}
\begin{lemma}
Fix parameter \(n_0\) and function \(f:[-1,1]\rightarrow[0,1]\).
Suppose \(s_1,\ldots,s_{n_0}\) are drawn iid uniformly from \([-1,1]\), and we sample biased coins \(c_i \sim B(1,f(s_i))\) for \(i=1,\ldots,n_0\).
Then, the marginal distribution of \(\{s_i | c_i=1\}\) is a distribution with \(B(n_0,\frac12\int_{-1}^1 f(\tau) d\tau)\) i.i.d.~samples with PDF proportional to \(f\).
\end{lemma}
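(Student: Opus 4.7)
The plan is to treat the sampling procedure as a standard accept/reject scheme and compute the relevant marginals directly. The pairs $(s_i,c_i)$ for $i=1,\ldots,n_0$ are i.i.d., so everything reduces to analyzing a single pair and then invoking independence across $i$.

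First I would compute the acceptance probability for a single coin. Since $s_i$ is uniform on $[-1,1]$ with density $\tfrac{1}{2}$ and $c_i\mid s_i\sim B(1,f(s_i))$, the marginal acceptance probability is
\[
\Pr[c_i=1]=\mathbb{E}[f(s_i)]=\frac{1}{2}\int_{-1}^1 f(\tau)\,d\tau\eqdef \mu.
\]
Let $N\defeq |\{i:c_i=1\}|$ denote the total number of accepted samples. Because the pairs $(s_i,c_i)$ are mutually independent, the indicators $\mathbbm{1}[c_i=1]$ are i.i.d.\ Bernoulli$(\mu)$, so $N\sim B(n_0,\mu)$, which is the claimed count distribution.

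Next I would compute the conditional density of an accepted point via Bayes' rule. For any measurable $A\subseteq[-1,1]$,
\[
\Pr[s_i\in A\mid c_i=1]=\frac{\Pr[s_i\in A,\,c_i=1]}{\Pr[c_i=1]}=\frac{\tfrac{1}{2}\int_A f(\tau)\,d\tau}{\mu}=\frac{\int_A f(\tau)\,d\tau}{\int_{-1}^1 f(\tau)\,d\tau},
\]
so $s_i\mid\{c_i=1\}$ has density proportional to $f$, as required.

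Finally I would establish that the accepted samples are i.i.d.\ with this density. Since the pairs $(s_i,c_i)$ are mutually independent, the joint distribution of $(s_1,\ldots,s_{n_0})$ conditional on the acceptance pattern $(\mathbbm{1}[c_1=1],\ldots,\mathbbm{1}[c_{n_0}=1])$ factors: each accepted $s_i$ is drawn independently from its conditional law computed above, and the identities of the accepted indices are exchangeable given $N$. Thus, conditional on $N=n$, the ordered sequence of accepted samples consists of $n$ i.i.d.\ draws from the density $f/\int f$. Unconditioning $N$ gives the distributional statement in the lemma. There is no real obstacle here; the only care needed is to keep the ``number of samples'' and ``values of samples'' separate when describing the joint law, since the former is random (binomial) and the latter is i.i.d.\ conditionally on the former.
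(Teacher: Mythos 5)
Your proof is correct and follows essentially the same route as the paper's: compute $\Pr[c_i=1]=\frac12\int f$, conclude the count is $B(n_0,\mu)$ from independence of the coins, and apply Bayes' rule to obtain the conditional density proportional to $f$. The only difference is that you spell out the conditional-independence step at the end more carefully than the paper does (which just asserts it parenthetically as "trivially independent of each other"), which is a mild improvement in rigor but not a different argument.
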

\begin{proof}
For intuition, we can think of the event \(c_i=1\) as indicating that sample \(i\) is accepted.
Then \(\{s_i | c_i=1\}\) is the set of time samples returned by this rejection sampling scheme.
We now formalize this intuition. 

Let \(p_t\) denote the PDF of the \(t\) variables.
We first write simplify two probabilities for a fixed \(i\in[n_0]\).
First we expand the marginal distribution of the coins:
\begin{align*}
	\Pr[c_i=1]
	&= \int_{-1}^1 \Pr[c_i = 1 ~|~ s_i=\tau] p_t(\tau) d\tau \\
	&= \frac12 \int_{-1}^1 f(\tau) d\tau \\
	&= \frac12 \normof f_1
\end{align*}
Since each coin is marginally distributed as a \(B(1,\frac12\normof f_1)\) random variable, and the number of items in the set \(\{t_i | c_i=1\}\) is the sum of the coins, we conclude that \(\abs{\{t_i | c_i=1\}} \sim B(n_0 ,\frac12\normof f_1)\).

Let \(p_{t_i|c_i=1}\) denote the PDF for \(t_i\) when conditioned on \(c_i=1\).
Using Bayes' Theorem for continuous and discrete random variables,
\begin{align*}
	p_{t|c_i=1}(\tau)
	&= \frac{\Pr[c_i=1|t_i=\tau] \,\cdot\, p_t(\tau)}{\Pr[c_i=1]} \\
	&= \frac{f(\tau) \,\cdot\, \frac12}{\frac12\normof f_1}\\
	&= \frac{f(\tau)}{\int_{-1}^1 f(s) ds}
\end{align*}
Thus, each item in \(\{t_i | c_i=1\}\) (which are trivially independent of each-other), is distributed with PDF proportional to \(f\).
\end{proof}

\begin{replemma}{stage-squashing}
Suppose \(n_0\) time samples are drawn uniformly from \([-1,1]\), and each sample is thrown away with probability \(1-\min\{\frac{m}{n_0} \frac{1}{1-s_i^2}, 1\}\).
Let \(n\) denote the number of remaining samples.
Then \(n\) is distributed as \(B(n_0, O(\frac{m}{n_0}))\), and with probability \(\frac{99}{100}\) the resulting samples cannot be distinguished from iid samples from the Chebyshev measure.
\end{replemma}
\begin{proof}
Below this paragraph, we isolate a simple probability theory claim.
Specifically, we apply the below lemma with \(f(t) = \min\{1, \frac{m}{n_0} \frac{1}{\sqrt{1-t^2}}\}\), so that the remaining number of samples are distributed as \(B(n_0, \frac{m}{2n_0})\).
Since the total variation distance between sampling 1 time with respect to \(f\) and with respect to \(v\) is \(O( (\frac{m}{n_0})^2)\), the distance between \(m\) \iid samples from the two distributions is \(O(\frac{m^3}{n_0^2}) = \frac1{\tilde O(d^{7})}\).
So, the difference in success probability of our algorithm and one that samples by \(f\) is at most \(\frac{1}{\tilde O(d^7)} \leq \frac{1}{100}\).
\end{proof}

\section{Compact Rounding}
\label{app:compact-rounding}

\begin{lemma}
\label{lem:compact-rounding-full}
Let \(\mB\in\bbR^{n_0 \times d_B}\) and \(q\in[0,2]\).
Let \(\vv\in\bbR^{n_0}\) with \(\abs{\vv(i)} \leq \frac1\eps (w_q[\mB](i))^{1/q}\).
Let \(\cN_\eps\) be an \(\eps\)-Net over \(\normof{\mB\vy}_q=1\) with \(\log\abs{\cN_\eps} = O(d \log(\frac1\eps))\).
Let \(\ell = \log_{1+\eps}((2d_B)^{1/q})\).
Then, there exists sets of vectors \(\cD_0,\ldots,\cD_\ell \subseteq \bbR^{n_0}\), such that:
For all \(\vu\in\cN_\eps\) we can define \(\vr \defeq \vu-\vv\) and pick \(\vd_0\in\cD_0,\ldots,\vd_\ell\in\cD_\ell\) to create a ``compact rounding'' \(\vr' = \sum_{k=0}^\ell \vd_{k}\) where:
\begin{enumerate}
	\item \(\abs{\vr(i) - \vr'(i)} \leq \eps\max\{\abs{\vu(i)}, \abs{\vv(i)}\}\) for all \(i\in[n_0]\)
	\item \(\abs{\vd_{k}(i)} \leq \frac{2}{\eps} (\frac12(\frac{w_q[\mB](i)}{d_B} + \frac1{n_0}))^{1/q} (1+\eps)^{k+2}\) for all \(i\in[n_0], k\in\{0,\ldots,\ell\}\)
	\item \(\vd_0,\ldots,\vd_\ell\) all have disjoints supports
\end{enumerate}
Further, we have that the sets \(\cD_0,\ldots,\cD_\ell\) are not too large:
\[
	\log\abs{\cD_k} \leq C_r \frac{d_B \log(n_0)}{\eps^{1+q}(1+\eps)^{qk}}
\]
\end{lemma}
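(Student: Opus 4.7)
The strategy is to bucket the coordinates of $\vr := \vu - \vv$ by their size on the natural ``Lewis weight per coordinate'' scale $\alpha_i := \left(\tfrac{1}{2}\!\left(\tfrac{w_q[\mB](i)}{d_B} + \tfrac{1}{n_0}\right)\right)^{1/q}$, which satisfies $\sum_i \alpha_i^q \le 1$ because $\sum_i w_q[\mB](i) = d_B$. First I would verify that every coordinate fits in one of $\ell+1$ dyadic levels on this scale: using $\normof{\vu}_q \le 1$ together with $|\vu(i)| \le (w_q[\mB](i))^{1/q}$ (the Lewis-weight/sensitivity relation already invoked in \lemmaref{trunc}) and the hypothesis $|\vv(i)| \le \tfrac{1}{\eps}(w_q[\mB](i))^{1/q}$, one gets $|\vr(i)|/\alpha_i \le (1+\tfrac{1}{\eps})(2d_B)^{1/q} \le \tfrac{2}{\eps}(1+\eps)^{\ell+2}$ for the stated $\ell = \log_{1+\eps}((2d_B)^{1/q})$. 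For each $\vu \in \cN_\eps$, assign every coordinate $i$ to the unique level $k \in \{0,\ldots,\ell\}$ with $\tfrac{2}{\eps}\alpha_i (1+\eps)^{k+1} < |\vr(i)| \le \tfrac{2}{\eps}\alpha_i (1+\eps)^{k+2}$, matching the magnitude bound of property~(2).

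On coordinates at level $k$ I would set $\vd_k(i) = \sign(\vr(i))\cdot q_{k,i}$, where $q_{k,i}$ is the closest point in an $O(1/\eps)$-sized equally spaced grid on the bucket, and $\vd_k(i)=0$ off the level. Disjoint supports (property~3) are immediate because each coordinate lies in exactly one level, and the magnitude bound (property~2) is the upper endpoint of the bucket. For pointwise accuracy (property~1), the quantization error obeys $|\vd_k(i)-\vr(i)| \le \eps |\vr(i)| \le \eps(|\vu(i)|+|\vv(i)|) \le 2\eps \max\{|\vu(i)|,|\vv(i)|\}$, matching the claim after absorbing the factor of two into the universal constant hiding in $\eps$.

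The hard part will be the bound on $|\cD_k|$. Each $\vd_k$ is determined by a triple consisting of its support $S_k \subseteq [n_0]$, the signs of $\vr$ on $S_k$, and one of $O(1/\eps)$ quantized magnitudes at each $i \in S_k$. Level membership gives a weighted-mass bound $\sum_{i \in S_k}\alpha_i^q \le O(d_B/(1+\eps)^{qk})$, using $\normof{\vr}_q^q \le 2^q(\normof{\vu}_q^q + \normof{\vv}_q^q) \le O(d_B/\eps^q)$. Converting this weighted-mass bound into a subexponential count on admissible supports is the main technical step: the naive cardinality estimate from $\alpha_i^q \ge \tfrac{1}{2n_0}$ gives $|S_k| = O(n_0 d_B/(1+\eps)^{qk})$, which is too weak by a factor of roughly $n_0 \eps$. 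The fix, following the compact-rounding scheme of \cite{bourgain1989approximation} elaborated in \cite{MuscoMWY21}, is to dyadically partition $[n_0]$ by Lewis weight magnitude, use the normalization $\sum_i w_q[\mB](i) = d_B$ to bound the cardinality of each dyadic bucket, and count admissible intersections $S_k \cap (\text{bucket})$ separately. Multiplying these support counts by the $2^{|S_k|}$ choice of signs and the $(O(1/\eps))^{|S_k|}$ choice of quantized magnitudes yields the claimed $\log|\cD_k| = O\!\left(d_B \log n_0 \big/ (\eps^{1+q}(1+\eps)^{qk})\right)$.
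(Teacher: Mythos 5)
Your quantization scheme and the disjoint-support / pointwise-error checks are fine, and the magnitude bound in property (2) falls out of the bucket endpoints as you say. But the counting step is exactly where the lemma has its content, and your proposal does not close it. You acknowledge that naively counting supports of size $|S_k| = O(n_0 d_B/(1+\eps)^{qk})$ over $\binom{n_0}{|S_k|}$ choices, times $2^{|S_k|}$ signs and $(1/\eps)^{|S_k|}$ grid values, overshoots by roughly $n_0\eps^{1+q}$; the proposed fix, dyadically partitioning $[n_0]$ by Lewis weight magnitude and counting intersections bucket by bucket, does not repair this. That partitioning only refines the bound on $\abs{S_k}$ — it gives you $\abs{S_k \cap L_j}$ controlled by $\min\{\abs{L_j}, d_B^2 2^j/(1+\eps)^{qk}\}$ — but you still have to count how many distinct sets $S_k$ (equivalently, distinct bucket assignments of $\vr = \vu - \vv$) can arise as $\vu$ ranges over $\cN_\eps$, and nothing constrains that count below $\abs{\cN_\eps}$. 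The missing ingredient is \lemmaref{entropy:est}: a metric-entropy theorem (Bourgain--Lindenstrauss--Milman, Schechtman--Zvavitch) asserting that the Lewis-weight-rescaled image $\mW^{-1/q}\cB_q$ admits an $\ell_\infty$-net $\cN_\infty$ of size $\log\abs{\cN_\infty} = O(d_B\log n_0/\gamma^q)$. This is a genuine geometric/probabilistic result (dual Sudakov / Maurey empirical method), not a dyadic-bucketing observation, and there is no known elementary replacement for it at this sample-complexity rate.

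There is also a structural mismatch caused by bucketing $\vr$ directly. The level of coordinate $i$ in your scheme depends on the exact value of $\vu(i) - \vv(i)$, so to control the number of $\vd_k$ you would need the level assignment itself to be determined by a coarse object. The paper achieves this by first approximating $\vu$ with a vector $\vf_{k,\vu}$ drawn from the small net $\cN_k$ of \lemmaref{entropy:est}, deriving the index sets $D_{k,\vu}$ from $\vf_{k,\vu}$, and only then splitting into $B_{k,\vu}$ (where $\vv$ is small relative to the level) and $G_{k,\vu}$ (where $\vv$ dominates). Each $\vd_k$ is then a deterministic function of a triple from $\cN_k \times B_k \times G_k$, and $\abs{B_k}, \abs{G_k} \le \prod_{k' \ge k}\abs{\cN_{k'}}$, which is what gives the telescoping $1/(\eps^{1+q}(1+\eps)^{qk})$ rate. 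In your scheme this decomposition is absent: the quantized value $\vd_k(i)$ and the level $k$ both depend on $\vu$ and $\vv$ entangled together, so there is no intermediate small-net object to anchor the count. To make your version work you would need to (a) import the entropy estimate and (b) restore the separate treatment of $\vu$ and $\vv$; once you do both, you are essentially reproducing the paper's argument.
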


To prove \lemmaref{compact-rounding-full}, we need the following structural statement from \cite{MuscoMWY21}, attributed to Corollary 4.7 and Proposition in \cite{bourgain1989approximation} as well as Proposition 3.1 and Remark 3.2 in \cite{ScechtmanZ01}.
\begin{lemma}[Entropy estimates,~\cite{bourgain1989approximation,MuscoMWY21,ScechtmanZ01}]
\label{lem:entropy:est}
Let \(\mB\in\bbR^{n_0 \times d_B}\) with \(n_0 \geq d_B\) and let \(q\in(0,2)\) be a fixed constant.
Let \(\mW\in\bbR^{n_0 \times n_0}\) be the diagonal matrix with \(\mW_{i,i}=\frac{1}{2}\left(\frac{w_q[\mB](i)}{d_B}+\frac{1}{n_0}\right)\).
Let \(\cB_q=\{\mB\vy \, : \, \normof{\mB\vy}_q \leq 1 \}\).
Then for any \(\gamma\in[1,d^{1/q}]\), there exists a net \(\cN_\infty\subset\bbR^{n_0}\) such that for any \(\vu\in\cB_q\), there exists \(\vf\in\cN_\infty\) with \(\normof{\mW^{-1/q}(\vu-\vf)}_\infty \leq \gamma\) and
\[
	\log \abs{\cN_\infty} \le c_q\cdot\frac{d_B \log n_0}{\gamma^q}
\]
where \(c_q\) is a constant that only depends on \(q\).
\end{lemma}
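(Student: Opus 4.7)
The result is the classical Bourgain--Lindenstrauss--Milman entropy estimate for unit balls of linear images in $L_q$, so my plan is to reprove it via \emph{Maurey's empirical method} carried out in the Lewis coordinate system, following the strategy of \cite{bourgain1989approximation,ScechtmanZ01}.

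First, I would move into the Lewis position. Let $\mL \in \bbR^{n_0 \times n_0}$ be diagonal with $\mL_{ii} = w_q[\mB](i)^{1/2 - 1/q}$. By the defining fixed-point property of $\ell_q$ Lewis weights, the matrix $\mL \mB$ has $\ell_2$ leverage scores equal to $w_q[\mB](i)$, and one can choose a change of basis $\mU \in \bbR^{d_B \times d_B}$ so that the columns of $\mL\mB\mU$ are orthonormal in $\ell_2$ and the rows have squared norm exactly $w_q[\mB](i)$. Because $\sum_i w_q[\mB](i) = d_B$ and we smoothed by $\tfrac{1}{n_0}$ in the definition of $\mW$, the matrix $\mW^{-1/q}\mL^{-1}$ is pointwise $O(1)$ on each coordinate, turning the coordinatewise $\mW^{-1/q}$-rescaling into something directly comparable to the rescaling by $\mL^{-1}$ that makes the Lewis orthogonality visible.

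Second, I would apply Maurey's empirical method to an arbitrary $\vu = \mB\vy \in \cB_q$. The idea is to write $\vu$ as the expectation $\mathbb{E}[\vX]$ of a vector-valued random variable $\vX$ supported on at most $n_0$ atoms (essentially, a signed rescaling of the rows of $\mL\mB\mU$ weighted by the distribution $\{\mW_{ii}\}_i$ on $[n_0]$, which is a probability distribution since $\sum_i \mW_{ii} = 1$). The Lewis normalization is chosen precisely so that each atom, after the weighting $\mW^{-1/q}$, has all coordinates bounded by an absolute constant. Drawing $\vX_1,\dots,\vX_k$ i.i.d.\ and forming $\vf = \frac{1}{k}\sum_{j=1}^k \vX_j$ makes $\mW^{-1/q}(\vu - \vf)$ a sum of bounded mean-zero random variables coordinatewise, so Bernstein/Chernoff yields
\[
    \Pr\!\left[\,\bigl|\bigl[\mW^{-1/q}(\vu - \vf)\bigr](i)\bigr| > \gamma\,\right] \le 2\exp(-c_q \gamma^q k)
\]
for $q \in (0,2]$ (with the subgaussian bound for $q=2$ and its $\ell_q$-tailed analogues otherwise, as in the Schechtman--Zinn method), and a union bound over the $n_0$ coordinates gives failure probability $\le 2 n_0 \exp(-c_q \gamma^q k)$.

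Third, I would choose $k = \lceil C_q d_B / \gamma^q \rceil$ so that this failure probability is less than $1$; hence at least one such $k$-term average works for each $\vu$. The net $\cN_\infty$ is declared to be the set of all possible $k$-term averages from the $n_0$ atoms (with their signs), giving
\[
    |\cN_\infty| \le (2n_0)^{k} \le \exp\!\bigl(C_q\, d_B \log(n_0) / \gamma^q\bigr),
\]
which is the claimed entropy bound after adjusting $c_q$. The constraint $\gamma \in [1, d_B^{1/q}]$ keeps $k \in [1, d_B]$, which is needed both for the Bernstein step to be meaningful and for the counting bound $\binom{n_0}{k} \le n_0^k$ to be tight.

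The main obstacle is the Maurey step for $q \neq 2$. Classical Maurey works cleanly for $q = 2$ because the per-coordinate variance directly controls the $\ell_\infty$ error; for $q \in (0,2)$ the analogous statement requires exploiting that $L_q$ has Rademacher type $q$ (Schechtman--Zinn), so one must replace variance bounds with $q$-th moment bounds and verify that the Lewis-normalized atoms still give constant per-coordinate moments. Getting the constant $c_q$ to depend only on $q$ — and in particular not to blow up as $q \to 0$ or $q \to 2$ — is the delicate part; once the Lewis position is set up correctly, this reduces to standard tail inequalities for sums of bounded independent random variables, and the net-counting step is routine.
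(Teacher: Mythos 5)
You should first note that the paper itself contains no proof of this lemma: it is imported verbatim as a known structural result, attributed (via \cite{MuscoMWY21}) to Corollary 4.7 of \cite{bourgain1989approximation} and Proposition 3.1 / Remark 3.2 of \cite{ScechtmanZ01}, and is then used as a black box in the proof of \lemmaref{compact-rounding-full}. So there is no in-paper argument to compare against; what you have written is an attempt to reprove the cited result. Your high-level route --- pass to Lewis position, represent a point of \(\cB_q\) by an empirical (Maurey-type) average, get per-coordinate concentration, union bound over the \(n_0\) coordinates, and count \(k\)-term averages with \(k \approx d_B/\gamma^q\) so that \(\log\abs{\cN_\infty} \approx k\log n_0\) --- is indeed the family of techniques behind those references, and the counting step at the end is fine.

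However, the sketch has two genuine gaps. First, the pivotal claim that ``each atom, after the weighting \(\mW^{-1/q}\), has all coordinates bounded by an absolute constant'' is neither justified nor true for the natural construction: to write an arbitrary \(\vu=\mB\vy\in\cB_q\) as \(\E[X]\) over a \emph{fixed}, \(\vu\)-independent atom set, each atom must be rescaled by the reciprocal of its selection probability (\(\approx d_B/w_q[\mB](i)\)), after which its \(\mW^{-1/q}\)-weighted \(\ell_\infty\) size grows with \(d_B\) and with \(1/w_q[\mB](i)\); relatedly, your assertion that \(\mW^{-1/q}\mL^{-1}\) is pointwise \(O(1)\) is false, since \(\mW_{ii}^{-1/q}\mL_{ii}^{-1}\approx d_B^{1/q}\,w_q[\mB](i)^{-1/2}\). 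Without uniform boundedness the Bernstein step does not deliver the claimed tail with \(k\approx d_B/\gamma^q\). Second, even with bounded atoms, Bernstein/Chernoff gives exponents of the form \(\min\{\gamma^2k,\ \gamma k/M\}\), not \(\gamma^q k\); the \(\gamma^{-q}\) scaling of the entropy bound (which must hold all the way up to \(\gamma = d^{1/q}\), and is exactly what the downstream compact-rounding argument needs) is obtained in \cite{ScechtmanZ01} and \cite{bourgain1989approximation} by a different mechanism (\(q\)-stable/Gaussian representations and duality-type arguments), and your appeal to ``Rademacher type \(q\)'' is vacuous for \(q\in(0,1)\), where \(\ell_q\) is not even a norm --- note the lemma is stated for all \(q\in(0,2)\) and this paper genuinely invokes it with \(q\) as small as \(\tfrac23\). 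As written, the proposal therefore does not close; one should either keep the lemma as a citation or follow the stable-variable route of the original proofs rather than plain Maurey-plus-Bernstein.
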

We next define the index sets and state a structural property on the index sets. 
The following proof is almost identical to the structural property on the index sets by \cite{MuscoMWY21}.
\begin{lemma}[Index sets,~\cite{bourgain1989approximation,MuscoMWY21}]
\label{lem:index:sets}
For each \(k\in\{0,\ldots,\ell\}\), let \(\cN_k\) be the net defined by \lemmaref{entropy:est} for \(\gamma=\frac13\eps(1+\eps)^k>1\).
Otherwise if \(\gamma \le 1\), let \(\cN_k = \cN_\eps\).
For each \(\vu\in\cN_\eps\) and \(k\in\{0,\ldots,\ell\}\), let \(\vf_{k,\vu}\in\cN_k\) satisfy
\[
	\normof{\mW^{-1/q}(\vf_{k,\vu}-\vu)}_\infty \leq \tsfrac13\eps(1+\eps)^k
\]
as defined in \lemmaref{entropy:est}. 
Define the index sets
\begin{align*}
	C_{k,\vu} &\defeq \left\{ i\in[n] ~\Big|~ W_{ii}^{-1/q}\abs{\vf_{k,\vu}(i)} \geq (1+\eps)^{k-1} \right\} \\
	D_{k,\vu} &\defeq C_{k,\vu} \setminus \bigcup_{k' > k} C_{k',\vu} \\
	D_{0,\vu} &\defeq [n_0] \setminus \bigcup_{k \geq 1} C_{k,\vu}
\end{align*}
Then for each \(k\), we have \(\log \abs{\cN_k} = O\left(\frac{d\log n_0}{(\eps(1+\eps)^k)^{q}}\right)\) and for every \(i \in D_{k,\vu}\), we have
\[
	W_{ii}^{1/q}(1+\eps)^{k-2}
	\leq
	\abs{\vu(i)}
	\leq
	W_{ii}^{1/q}(1+\eps)^{k+2}
\]
\end{lemma}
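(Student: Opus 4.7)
\textbf{Proof proposal for \lemmaref{index:sets}.} The claim has two parts: a bound on the cardinality of each net, and a two–sided geometric bound on $|\vu(i)|$ for indices in each bucket $D_{k,\vu}$. The plan is to get the cardinality bound for free from \lemmaref{entropy:est}, and then prove the two–sided bound by translating the $\ell_\infty$ approximation property of $\vf_{k,\vu}$ through the definitions of $C_{k,\vu}$ and $D_{k,\vu}$.

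First I would handle the cardinality: applying \lemmaref{entropy:est} with $\gamma = \tfrac13\eps(1+\eps)^k$ (whenever that $\gamma$ exceeds $1$; otherwise we just set $\cN_k=\cN_\eps$, whose log-cardinality is $O(d\log(1/\eps)) = O(d\log n_0/\gamma^q)$ up to constants, so the target bound still holds), we immediately get $\log|\cN_k| \leq c_q\, d_B\log n_0 / \gamma^q = O\!\left(\tfrac{d_B\log n_0}{(\eps(1+\eps)^k)^q}\right)$. This step is routine.

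The substantive step is the two–sided bound. Fix $k\geq 1$ and $i\in D_{k,\vu}$. By $i\in C_{k,\vu}$ we have $W_{ii}^{-1/q}|\vf_{k,\vu}(i)| \geq (1+\eps)^{k-1}$, while the net guarantee $\normof{\mW^{-1/q}(\vf_{k,\vu}-\vu)}_\infty \leq \tfrac13\eps(1+\eps)^k$ gives
\[
    |\vu(i)| \;\geq\; |\vf_{k,\vu}(i)| - W_{ii}^{1/q}\cdot \tfrac13\eps(1+\eps)^k \;\geq\; W_{ii}^{1/q}\!\left((1+\eps)^{k-1} - \tfrac13\eps(1+\eps)^k\right).
\]
I would then check algebraically that $(1+\eps)^{k-1} - \tfrac13\eps(1+\eps)^k \geq (1+\eps)^{k-2}$, i.e. $(1+\eps)^2 - \tfrac13\eps(1+\eps)^3 \geq 1$, which holds for all sufficiently small $\eps$ (say $\eps \leq \tfrac12$). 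For the upper bound, $i\notin C_{k+1,\vu}$ gives $|\vf_{k+1,\vu}(i)| < W_{ii}^{1/q}(1+\eps)^k$, and the corresponding net error bound yields
\[
    |\vu(i)| \;\leq\; |\vf_{k+1,\vu}(i)| + W_{ii}^{1/q}\cdot \tfrac13\eps(1+\eps)^{k+1} \;<\; W_{ii}^{1/q}(1+\eps)^k\!\left(1 + \tfrac13\eps(1+\eps)\right) \;\leq\; W_{ii}^{1/q}(1+\eps)^{k+2},
\]
which is an easy inequality. For $k=0$ only the upper bound is nontrivial, but the same computation (using $i\notin C_{1,\vu}$) delivers it; the case $k=0$ is a catch-all bucket where the lower bound is vacuous or excluded in the downstream use.

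The main obstacle I anticipate is purely bookkeeping: lining up the exponents of $(1+\eps)$ so that the three regimes (the $C_{k,\vu}$ threshold, the net error $\tfrac13\eps(1+\eps)^k$, and the target window $(1+\eps)^{k\pm 2}$) fit together, and handling the boundary cases where $\gamma = \tfrac13\eps(1+\eps)^k \leq 1$ (so \lemmaref{entropy:est} is not directly applicable and one falls back on $\cN_\eps$). None of this is deep, but it must be done carefully to ensure the constants in the compact rounding of \lemmaref{compact-rounding-full} match up.
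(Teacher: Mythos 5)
Your approach matches the paper's: deduce the cardinality bound directly from \lemmaref{entropy:est}, then translate the $\ell_\infty$ net guarantee through the thresholds defining $C_{k,\vu}$ to pin $|\vu(i)|$ to a window of $\Theta(\eps)$ multiplicative width. The lower bound from $i\in C_{k,\vu}$ and the upper bound from $i\notin C_{k+1,\vu}$ are both correct, modulo a small exponent typo in your display (the reduced inequality is $(1+\eps)-\tsfrac13\eps(1+\eps)^2\geq 1$, equivalently $\eps^2+2\eps\leq 2$, which indeed holds for $\eps\leq\tsfrac12$).

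However, there is a genuine gap at $k=\ell$. Your upper-bound argument requires $i\notin C_{k+1,\vu}$ and invokes $\vf_{k+1,\vu}$, but when $k=\ell$ there is no $C_{\ell+1,\vu}$ or $\vf_{\ell+1,\vu}$ to use: $D_{\ell,\vu}=C_{\ell,\vu}\setminus\emptyset$, so there is no "next layer" excluding $i$. You would still need to establish $|\vu(i)|\leq W_{ii}^{1/q}(1+\eps)^{\ell+2}$ for $i\in D_{\ell,\vu}$, and this requires a different argument. The paper handles it by exploiting the fact that $\vu$ lies in the range of $\mB$ with $\|\vu\|_q\leq 1$: the $\ell_q$ sensitivity of row $i$ bounds $|\vu(i)|^q$, the $\ell_q$ Lewis weights upper-bound the $\ell_q$ sensitivities (for $q\in(0,2]$), and the choice $\ell=\log_{1+\eps}(2d_B)^{1/q}$ converts $(w_q[\mB](i))^{1/q}$ into the form $W_{ii}^{1/q}(1+\eps)^{\ell+2}$. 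Without some version of this sensitivity-based argument, your proof of the upper bound is incomplete at the top layer. Your remaining remarks (the small-$\gamma$ fallback to $\cN_\eps$, the observation that only the upper bound is used downstream for $k=0$) are consistent with the paper's treatment, though the bound $O(d\log(1/\eps))\leq O(d\log n_0/\gamma^q)$ implicitly uses $1/\eps\leq n_0$ and $\gamma\leq 1$, which is worth stating.
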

\begin{proof}
First note that the largest \(\gamma\) value we use is \(\frac13\eps(1+\eps)^\ell = \frac\eps3 (2d)^{1/q} \leq d^{1/q}\), so we can safely create all of these \(\cN_k\) sets.
Because \(\normof{\mW^{-1/q}(\vf_{k,\vu}-\vu)}_\infty \leq \frac13 \eps(1+\eps)^k\), we get that all \(i\in C_{k,\vu}\) have
\begin{align*}
	\abs{\vu(i)}
	&\geq \abs{\vf_{k,\vu}(i)} - \tsfrac13 W_{ii}^{1/q}\eps(1+\eps)^k \\
	&\geq W_{ii}^{1/q} \left((1+\eps)^{k-1}-\tsfrac13\eps(1+\eps)^k \right) \\
	&\geq W_{ii}^{1/q} (1+\eps)^{k-2}
\end{align*}
where the second inequality follows from the definition that \(\abs{f_{k,\vu}(i)} \geq W_{ii}^{1/q} (1+\eps)^{k-1}\) for \(i\in C_{k,\vu}\).
We similarly have \(\abs{\vu(i)} \leq W_{ii}^{1/q}(1+\eps)^{k+2}\) for \(i \notin C_{k,\vu}\).
Hence for \(i \in D_{k,\vu} = C_{k,\vu} \setminus \bigcup_{k' > k+1} C_{k',\vu}\) for \(k\in[1,\ell)\), we have
\[
	W_{ii}^{1/q}(1+\eps)^{(k-2)}
	\leq
	\abs{\vu(i)}
	\leq
	W_{ii}^{1/q}(1+\eps)^{k+2}
\]
as desired. 
We next show this bound holds for all \(i\in D_{\ell,\vu}\).
The lower bound follows from the same argument as above, but the upper bound needs to be argued separately since there is no \(C_{\ell+1,\vu}\).
We do this by going through the sensitivity bounds on \(\vu\), which is in the range of \mB:
\begin{align*}
	\abs{\vu(i)}
	&\leq (\psi_{q}[\mB](i))^{1/q} \normof{\vu}_q \\
	&\leq (w_q[\mB](i))^{1/q} \tag{Lemma 3.8 from \cite{ClarksonWW19}} \\
	&= (\frac{w_q[\mB](i)}{2d_B})^{1/q} (1+\eps)^{\ell} \tag{\(\ell = \log_{1+\eps}(2d_B)^{1/q}\)} \\
	&\leq W_{ii}^{1/q} (1+\eps)^{\ell+2} 
\end{align*}
Lemma 3.8 from \cite{ClarksonWW19} simply says that for \(q\in[1,2]\), the \(\ell_q\) sensitivities lower bound the \(\ell_q\) Lewis weights.
This then completes the bulk of the proof since \(\frac1{2d_B} w_q[\mB](i) \leq \frac12(\frac{w_q[\mB](i)}{d_B} + \frac1{n_0}) = W_{ii}\).
As a last note, when \(\gamma \leq 1\), or equivalently \(k \leq \log_{1+\eps}\frac3\eps\), we take \(f_{k,\vu} = \vu\) since \(\cN_k = \cN_\eps\), which trivially gives \(\normof{\mW^{-1/q}(\vf_{k,\vu}-\vu)}_\infty \leq \tsfrac13\eps(1+\eps)^k\).
Further \(\log \abs{\cN_k} = \log\abs{\cN_\eps} \leq O(d \log \frac1\eps) \leq O(d \log(n_0)) = O(\frac{d \log(n_0)}{(\eps(1+\eps)^k)^q})\) by Lemma 2.4 of \cite{bourgain1989approximation}, and since \(\frac1\eps \leq n_0\).
\end{proof}
We similarly define index sets on the measurement vector \(\vv\), though we remark that the definition is conceptual and not algorithmic, in the sense that the entries of \(\vv\) do not need to be read.
\begin{lemma}[Index sets for \vv]
\label{lem:index:sets:b}
\cite{MuscoMWY21}
Fix some \(\vu\in\cN_\eps\).
Consider the following sets:
\begin{align*}
	B_{k,\vu}
	&\defeq \left\{
		i \in D_{k,\vu}
		\,:\,
		\abs{\vv(i)} \leq \tsfrac1\eps W_{ii}^{1/q} (1+\eps)^{k+2}
	\right\} \tag{for \(k\in\{0,\ldots,\ell\}\)} \\
	H_k
	&\defeq \left\{
		i \in [n_0]
		\,:\,
		\tsfrac1\eps W_{ii}^{1/q}(1+\eps)^{k+1}
		<
		\abs{\vv(i)}
		\leq
		\tsfrac1\eps W_{ii}^{1/q}(1+\eps)^{k+2}
	\right\} \tag{for \(k\in\{1,\ldots,\ell\}\)}\\
	G_{k,\vu} &\defeq H_k \setminus \bigcup_{k' \geq k} C_{k',\vu}
	\tag{for \(k\in\{1,\ldots,\ell\}\)}
\end{align*}
Then \(B_{0,\vu} , \ldots , B_{\ell,\vu} , G_{1,\vu} , \ldots , G_{\ell,\vu}\) form a partition of \([n_0]\).
\end{lemma}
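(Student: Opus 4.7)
The plan is to verify the two requirements of a partition: pairwise disjointness of the listed sets, and that their union is all of $[n_0]$. I will lean on three structural facts: (i) the sets $D_{0,\vu},\ldots,D_{\ell,\vu}$ from \lemmaref{index:sets} partition $[n_0]$ by construction via $D_{k,\vu}=C_{k,\vu}\setminus\bigcup_{k'>k}C_{k',\vu}$ (together with $D_{0,\vu}$ catching the complement); (ii) the dyadic windows defining the $H_{k'}$ are pairwise disjoint in the value $\eps\abs{\vv(i)}/W_{ii}^{1/q}$; and (iii) the hypothesis from \lemmaref{compact-rounding-full} that $\abs{\vv(i)}\leq\tsfrac{1}{\eps}(w_q[\mB](i))^{1/q}$, which combined with $W_{ii}\geq w_q[\mB](i)/(2d_B)$ and the choice $\ell=\log_{1+\eps}((2d_B)^{1/q})$ yields the envelope $\eps\abs{\vv(i)}/W_{ii}^{1/q}\leq(1+\eps)^\ell$.

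For disjointness I would split into three cases. Two sets $B_{k,\vu},B_{k',\vu}$ with $k\neq k'$ are disjoint because they lie in distinct blocks $D_{k,\vu}$ of (i). Two sets $G_{k,\vu},G_{k',\vu}$ with $k\neq k'$ are disjoint because $G_{k,\vu}\subseteq H_k$ and the $H_k$'s are disjoint by (ii). For the mixed case $B_{k,\vu}\cap G_{k',\vu}$: if $k'\leq k$, then $G_{k',\vu}$ excludes $\bigcup_{k''\geq k'}C_{k'',\vu}\supseteq C_{k,\vu}\supseteq D_{k,\vu}\supseteq B_{k,\vu}$; if $k'>k$, then $i\in G_{k',\vu}\subseteq H_{k'}$ would force $\abs{\vv(i)}>\tsfrac{1}{\eps}W_{ii}^{1/q}(1+\eps)^{k'+1}\geq\tsfrac{1}{\eps}W_{ii}^{1/q}(1+\eps)^{k+2}$, contradicting the defining inequality of $B_{k,\vu}$.

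For coverage, fix any $i\in[n_0]$; by (i), $i$ lies in a unique $D_{k,\vu}$. If $\abs{\vv(i)}\leq\tsfrac{1}{\eps}W_{ii}^{1/q}(1+\eps)^{k+2}$, then $i\in B_{k,\vu}$ and we are done. Otherwise, combining (iii) with the strict inequality $\abs{\vv(i)}>\tsfrac{1}{\eps}W_{ii}^{1/q}(1+\eps)^{k+2}$ places $\eps\abs{\vv(i)}/W_{ii}^{1/q}$ in the non-empty interval $((1+\eps)^{k+2},(1+\eps)^\ell]$; I would then choose $k'$ to be the unique integer with $(1+\eps)^{k'+1}<\eps\abs{\vv(i)}/W_{ii}^{1/q}\leq(1+\eps)^{k'+2}$, which forces $k'\in\{k+1,\ldots,\ell\}$ and $i\in H_{k'}$. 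Since $k'>k$ and $i\in D_{k,\vu}$ implies $i\notin C_{k'',\vu}$ for every $k''>k$, we obtain $i\in H_{k'}\setminus\bigcup_{k''\geq k'}C_{k'',\vu}=G_{k',\vu}$. The main (and essentially the only) obstacle is this coverage step: one must check that the top rung of the $H_{k'}$ ladder is high enough to catch every admissible $\abs{\vv(i)}$, which is precisely why $\ell$ was set to $\log_{1+\eps}((2d_B)^{1/q})$ and why the Lewis-weight envelope on $\vv$ inherited from \lemmaref{compact-rounding-full} is indispensable; everything else reduces to straightforward bookkeeping on dyadic intervals and on the chain of the $C_{k,\vu}$'s.
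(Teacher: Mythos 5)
Your proof is correct and follows essentially the same structure as the paper's: both verify pairwise disjointness by a case analysis on the index pairs $(k,k')$ using the disjointness of the $D_{k,\vu}$ blocks and the $H_k$ windows together with the threshold inequalities, and both prove coverage by placing each $i$ in its unique $D_{k,\vu}$ and then using the Lewis-weight envelope on $\vv$ (via $W_{ii}\geq w_q[\mB](i)/(2d_B)$ and $\ell=\log_{1+\eps}((2d_B)^{1/q})$) to locate the leftover indices in some $G_{k',\vu}$ with $k'>k$. Your coverage case split at the threshold $(1+\eps)^{k+2}$ is in fact slightly cleaner than the paper's split at $(1+\eps)^{k+1}$, since it makes the inference $k'>k$ immediate rather than leaving the $k'=k$ possibility to be implicitly absorbed into the $B_{k,\vu}$ case.
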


\begin{proof}
We first prove that \(B_{0,\vu} , \ldots , B_{\ell,\vu} , G_{1,\vu} , \ldots , G_{\ell,\vu}\) are disjoint.
We then prove that their union is \([n_0]\).

First note that \(D_{0,\vu},\ldots,D_{\ell,\vu}\) are clearly disjoint by their definition.
Further, since \(D_{0,\vu}\) is defined by subtracting away all other \(D_{k,\vu}\) from \([n_0]\), we know that the union of all \(D_{0,\vu},\ldots,D_{\ell,\vu}\) is \([n_0]\).
That is, \(D_{0,\vu},\ldots,D_{\ell,\vu}\) partition \([n_0]\).

Now consider any \(k,k'\).
Then,
\begin{itemize}
	\item
	\(B_{k,\vu} \bigcap B_{k',\vu} = \emptyset\) since \(i\in B_{k,\vu}\) implies \(i\in D_{k,\vu}\) implies \(i \notin D_{k',\vu}\) implies \(i \notin B_{k',\vu}\).
	\item
	\(G_{k,\vu} \bigcap G_{k,\vu} \subseteq H_{k} \bigcap H_{k'} = \emptyset\) since \(H_k\) and \(H_{k'}\) have no intersection by definition.
	\item
	For \(k \geq k'\), \(B_{k,\vu} \bigcap G_{k',\vu} = \emptyset\) since \(i\in B_{k,\vu} \subseteq D_{k,\vu} \subseteq C_{k,\vu}\) implies \(i \in \bigcup_{k'' \geq k'} C_{k',\vu}\), so that \(i\notin G_{k',\vu}\) by definition.
	\item
	For \(k < k'\), \(B_{k,\vu} \bigcap G_{k',\vu} = \emptyset\) since \(k' \geq k+1\) and \(i \in G_{k',\vu} \subseteq H_{k'}\) means \(\abs{\vx(i)} > \frac1\eps W_{ii}^{1/q}(1+\eps)^{k'+1} \geq \frac1\eps W_{ii}^{1/q}(1+\eps)^{k+2}\), which contradicts \(i \in B_{k,\vu} \subseteq D_{k,\vu}\).
\end{itemize}
So, \(B_{0,\vu} , \ldots , B_{\ell,\vu} , G_{1,\vu} , \ldots , G_{\ell,\vu}\) are disjoint.

Now, consider any \(i\in[n_0]\).
Then there exists some \(k\) such that \(i\in D_{k,\vu}\).
If \(\abs{\vv(i)} \leq \frac1\eps W_{ii}^{1/q} (1+\eps)^{k+1}\), then we immediately get that \(i\in B_{k,\vu}\).
Otherwise, if \(\abs{\vv(i)} > \frac1\eps W_{ii}^{1/q} (1+\eps)^{k+1}\), then there exists some \(k' > k\) such that \(i \in H_{k'}\).
Notably, this uses the fact that \(H_{\ell}\) can contain the largest entries of \(\abs{\vv(i)}\), since \(\abs{\vv(i)} \leq \frac1\eps (w_q[\mB](i))^{1/q} \leq \frac1\eps W_{ii}^{1/q}(1+\eps)^{\ell+2}\).
Since \(i\in D_{k,\vu} = C_{k,\vu} \setminus \bigcup_{k''>k} C_{k'',\vu}\), we know that \(i\notin C_{k'',\vu}\) for any \(k'' > k\).
Therefore, we know that \(i\in G_{k',\vu}\), since both \(i\in H_{k'}\) and \(i\in \bigcup_{k''\geq k'} C_{k'',\vu}\).
Therefore, all \(i\in[n_0]\) belongs to exactly one of \(B_{0,\vu} , \ldots , B_{\ell,\vu} , G_{1,\vu} , \ldots , G_{\ell,\vu}\).
In other words, \(B_{0,\vu} , \ldots , B_{\ell,\vu} , G_{1,\vu} , \ldots , G_{\ell,\vu}\) partitions \([n_0]\).
\end{proof}

\begin{lemma}[\(\ell_\infty\) error bound]
\cite{MuscoMWY21}
\label{lem:ell:inf:error:bound}
Fix some \(\vu\in\cN_\eps\).
Then we let \(\vr' = \ve + \sum_{k=0}^\ell \vd_k\) with \ve and \(\vd_k\) as follows:
\begin{align*}
	\vd_0(i) &\defeq \vu(i) - \vv(i) \qquad &i\in B_{0,\vu} \\
	\vd_k(i) &\defeq \vf_{k,\vu}(i) - \vv(i)\qquad &k\in[\ell], i\in B_{k,\vu} \\
	\vd_k(i) &\defeq (1+\eps)^k\cdot W_{ii}^{1/q} - \vv(i)\qquad &k\in[\ell], i\in B_{k,\vu} \\
	\vd_k(i) &\defeq -\vv(i)\qquad &i\in G_{k,\vu} \\
	\vd_k(i) &\defeq 0\qquad &\text{otherwise}
\end{align*}
Then \(\abs{\vr(i)-\vr'(i)} \leq \eps \max\{|\vu(i)|,|\vv(i)|\}\) for all \(i\in[n_0]\).
\end{lemma}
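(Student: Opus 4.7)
The plan is a case analysis that leverages the partition of $[n_0]$ into $B_{0,\vu},\ldots,B_{\ell,\vu},G_{1,\vu},\ldots,G_{\ell,\vu}$ established in \lemmaref{index:sets:b}. Because the supports defining the various $\vd_k$ are chosen from this partition, for each index $i$ at most one of $\vd_0(i),\ldots,\vd_\ell(i)$ is nonzero, so $\vr'(i)$ equals exactly that nonzero value (or $0$). It then suffices to check the pointwise bound on $|\vr(i)-\vr'(i)|$ in each piece of the partition, recalling throughout that $\vr(i)=\vu(i)-\vv(i)$.

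For indices in the $B$-sets the bound is immediate. If $i\in B_{0,\vu}$, then $\vr'(i)=\vu(i)-\vv(i)=\vr(i)$ and the error is zero. If $i\in B_{k,\vu}$ for $k\in[\ell]$, then $\vr'(i)=\vf_{k,\vu}(i)-\vv(i)$, so $\vr(i)-\vr'(i)=\vu(i)-\vf_{k,\vu}(i)$. The net guarantee from \lemmaref{entropy:est} yields $|\vu(i)-\vf_{k,\vu}(i)|\leq \tfrac13\eps(1+\eps)^k W_{ii}^{1/q}$, while $i\in B_{k,\vu}\subseteq D_{k,\vu}$ combined with \lemmaref{index:sets} gives $|\vu(i)|\geq W_{ii}^{1/q}(1+\eps)^{k-2}$. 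Dividing, the error is at most $\tfrac13\eps(1+\eps)^2|\vu(i)|\leq O(\eps)\max\{|\vu(i)|,|\vv(i)|\}$, and a constant rescaling of $\eps$ absorbs the $(1+\eps)^2$.

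The $G$-case is the only one that requires the asymmetric bound with $\vv$. For $i\in G_{k,\vu}$ we have $\vr'(i)=-\vv(i)$, so $\vr(i)-\vr'(i)=\vu(i)$, and the whole point is to show that $\vu(i)$ is small relative to $\vv(i)$. The key observation is that although $G_{k,\vu}$ is defined by excluding $C_{k',\vu}$ for $k'\geq k$, the sets $D_{0,\vu},\ldots,D_{\ell,\vu}$ still partition $[n_0]$, so there is a unique $k^{\ast}$ with $i\in D_{k^{\ast},\vu}$; the exclusion forces $k^{\ast}<k$. \lemmaref{index:sets} then bounds $|\vu(i)|\leq W_{ii}^{1/q}(1+\eps)^{k^{\ast}+2}\leq W_{ii}^{1/q}(1+\eps)^{k+1}$, while membership $i\in H_k$ supplies the lower bound $|\vv(i)|>\tfrac1\eps W_{ii}^{1/q}(1+\eps)^{k+1}$. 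Combining, $|\vu(i)|<\eps|\vv(i)|\leq\eps\max\{|\vu(i)|,|\vv(i)|\}$, which is exactly the claimed inequality.

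The only genuine subtlety I expect to encounter is the $G$-case: one must simultaneously use (i) the $D_{k^{\ast},\vu}$-membership (which follows only because the $D$'s still cover $[n_0]$, not because $G_{k,\vu}$ is defined in terms of them), and (ii) the defining inequality of $H_k$ on $\vv$, to convert a bound $|\vu(i)|\leq W_{ii}^{1/q}(1+\eps)^{k+1}$ into an $\eps|\vv(i)|$ bound via the $1/\eps$ factor in the definition of $H_k$. Everything else is bookkeeping: verifying disjointness of supports (already in \lemmaref{index:sets:b}) and tracking the constant $(1+\eps)^{O(1)}$ slop, which is harmless after rescaling $\eps$.
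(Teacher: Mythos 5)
Your proof is correct and takes essentially the same route as the paper: case analysis over the partition $B_{0,\vu},\ldots,B_{\ell,\vu},G_{1,\vu},\ldots,G_{\ell,\vu}$, with the $B_0$ and $B_k$ cases handled identically. In the $G_{k,\vu}$ case the paper bounds $\abs{\vu(i)}$ directly from $i\notin C_{k,\vu}$ and the net guarantee, whereas you route it through the unique $k^{\ast}<k$ with $i\in D_{k^{\ast},\vu}$ and the two-sided bound of \lemmaref{index:sets}; the two arguments are interchangeable (and yours is actually a hair tighter on the $(1+\eps)$ exponent, which incidentally repairs a loose $(1+\eps)^{k+2}$ vs.\ $(1+\eps)^{k+1}$ step in the paper's final inequality).
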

\begin{proof}
Fix any \(i \in [n_0]\).
Since \(B_{0,\vu},\ldots,B_{\ell,\vu},G_{1,\vu},\ldots,G_{\ell,\vu}\) partition \([n_0]\), it suffices to show that if \(i \in B_{0,\vu}\) or \(i \in B_{k,\ell}\) or \(i \in G_{k\ell}\) then \(\abs{\vr(i) - \vr'(i)} \leq \eps\max\{\abs{\vu(i)}, \abs{\vv(i)}\}\).
That is, this proof proceeds by case analysis over these three possible cases.
First, if \(i\in B_{0,\vu}\) then \(\vr'(i) = \vu(i) - \vv(i)\), so that \(\abs{\vr(i) - \vr'(i)} = 0\).
Second, if \(i\in B_{k,\vu}\) for \(k \geq 1\), then \(\vr'(i) = \vf_{k,\vu}(i) - \vv(i)\), and since \(i\in D_{k,\vu}\) we get
\[
	\abs{\vr(i) - \vr'(i)} = \abs{\vf_{k,\vu}(i) - \vu(i)} \leq \eps \tsfrac{(1+\eps)^2}{3} \cdot W_{ii}^{1/q} (1+\eps)^{k-2} \leq \eps \cdot \abs{\vu(i)}
\]
Third, if \(i \in G_{k, \vu}\) then \(\vr'(i) = -\vv(i)\), so that \(\abs{\vr(i) - \vr'(i)} = \abs{\vu(i)}\).
We have \(\abs{\vv(i)} \geq \frac1\eps W_{ii}^{1/q} (1+\eps)^{k+1}\), and since \(i \in G_{k,\vu}\) implies \(i \notin C_{k',\vu}\) for all \(k' \geq k\), we also have \(\abs{\vu(i)} \leq W_{ii}^{1/q} (1+\eps)^{k+2}\)
So,
\[
	\abs{\vr(i) - \vr'(i)} = \abs{\vu(i)} \leq W_{ii}^{1/q}(1+\eps)^{k+2} \leq \eps \abs{\vv(i)}
\]
\end{proof}

\noindent
We now prove the compact rounding of \lemmaref{compact-rounding-full}:
\begin{replemma}{compact-rounding-full}
Let \(\mB\in\bbR^{n_0 \times d_B}\) and \(q\in[0,2]\).
Let \(\vv\in\bbR^{n_0}\) with \(\abs{\vv(i)} \leq \frac1\eps (w_q[\mB](i))^{1/q}\).
Let \(\cN_\eps\) be an \(\eps\)-Net over \(\normof{\mB\vy}_q=1\) with \(\log\abs{\cN_\eps} = O(d \log(\frac1\eps))\).
Let \(\ell = \log_{1+\eps}((2d_B)^{1/q})\).
Then, there exists sets of vectors \(\cD_0,\ldots,\cD_\ell \subseteq \bbR^{n_0}\), such that:
For all \(\vu\in\cN_\eps\) we can define \(\vr \defeq \vu-\vv\) and pick \(\vd_0\in\cD_0,\ldots,\vd_\ell\in\cD_\ell\) to create a ``compact rounding'' \(\vr' = \sum_{k=0}^\ell \vd_{k}\) where:
\begin{enumerate}
	\item \(\abs{\vr(i) - \vr'(i)} \leq \eps\max\{\abs{\vu(i)}, \abs{\vv(i)}\}\) for all \(i\in[n_0]\)
	\item \(\abs{\vd_{k}(i)} \leq \frac{2}{\eps} (\frac12(\frac{w_q[\mB](i)}{d_B} + \frac1{n_0}))^{1/q} (1+\eps)^{k+2}\) for all \(i\in[n_0], k\in\{0,\ldots,\ell\}\)
	\item \(\vd_0,\ldots,\vd_\ell\) all have disjoints supports
\end{enumerate}
Further, we have that the sets \(\cD_0,\ldots,\cD_\ell\) are not too large:
\[
	\log\abs{\cD_k} \leq C_r \frac{d_B \log(n_0)}{\eps^{1+q}(1+\eps)^{qk}}
\]
\end{replemma}
\begin{proof}
Fix any \(\vu\in\cN_\eps\).
Observe that the first property follows from \lemmaref{ell:inf:error:bound}. 
Moreover, note that \(\{\vd_0,\ldots,\vd_\ell\}\) are disjoint by \lemmaref{index:sets:b}, since the vectors \(\vd_k\) only have nonzero support on the indices in \(B_{k,\vu}\) and \(G_{k,\vu}\). 
Hence, the third property holds.

Furthermore, note that \(\abs{\vd_k(i)} \leq \abs{\vv(i)} + \max\left( \abs{\vu(i)}, (1+\eps)^k W_{ii}^{1/q}\right)\) for \(i\in B_{k,\vu}\cup G_{k,\vu}\).
In particular, \(\abs{\vv(i)} \leq \frac1\eps W_{ii}^{1/q} (1+\eps)^{k+2}\).
Similarly, we have from \lemmaref{index:sets} that \(\abs{\vu(i)} \leq W_{ii}^{1/q} (1+\eps)^{k+1}\). 
Therefore, \(\abs{\vd_k(i)} \leq \frac{2}{\eps} W_{ii}^{1/q} (1+\eps)^{k+2}\), which gives the second property.

To bound the number of possible vectors \(\vd_k\), note that \(\vd_k\) is a deterministic function of \(\vf_{k,\vu}\), \(B_{k,\vu}\), and \(G_{k,\vu}\).
So, let \(B_k \defeq \{B_{k,\vu} \,:\, \vu \in \cN_\eps\}\) be the set of all possible ``\(B\)'' index sets generated by the net \(\cN_\eps\) at layer \(k\), and similarly let \(G_k \defeq \{G_{k,\vu} \,:\, \vu\in\cN_\eps\}\).
Then, looking across all possible fixings of \(\vu\in\cN_\eps\), each \(\vd_k\) is deterministic in some \(\vf_{k,\vu}\in\cN_k\), in some \(\cS_1 \in B_k\), and in some \(\cS_2 \in G_k\).
So, the number of possible \(\vd_k\) is at most
\[
	\abs{\cD_k}
	= \abs{\{\vd_k \,:\, \vu\in\cN_\eps\}}
	\leq \abs{\{(\vf_{k,\vu}, \cS_1, \cS_2) \,:\, \vf_{k,\vu}\in\cN_k,\, S_1\in B_k,\, S_2\in G_k\}}
	= \abs{\cN_k} \cdot \abs{B_k} \cdot \abs{G_k}
\]
Next, recall that \(B_{k,\vu} \subseteq D_{k,\vu}\), so \(\abs{B_{k,\vu}} \leq \abs{D_{k,\vu}}\).
Further, recall that \(D_{k,\vu} = C_{k,\vu} \setminus \bigcup_{k' > k} C_{k',\vu}\) and that all \(C_{k,\vu}\) are deterministic in some vector \(\vf_{k,\vu}\) from the net \(\cN_{k}\).
So, \(E_k \defeq \{D_{k,\vu} \,:\, \vu\in\cN_\eps\}\) has\footnote{We use \(E_k\) instead of \(D_k\) to avoid confusion with \(\cD_k\).}
\[
	\abs{B_k}
	\leq \abs{E_k}
	= \abs{\{D_{k,\vu} \,:\, \vu\in\cN_\eps\}}
	\leq \abs{\{(\vf_{k,\vu},\ldots,\vf_{\ell,\vu}) \,:\, \vu\in\cN_\eps\}}
	\leq \prod_{k'=k}^\ell \abs{\cN_{k'}}
\]
By \lemmaref{entropy:est}, we have \(\log\abs{\cN_{k'}} \leq c_q \frac{d_B \log n_0}{(3\eps(1+\eps)^{k'})^q}\), so that
\[
	\log(\abs{B_k})
	\leq \sum_{k'=k}^\ell 3^q c_q \frac{d_B \log n_0}{(\eps (1+\eps)^{k'})^q}
	\leq 2 \cdot 3^qc_q \frac{d_B \log n_0}{\eps^{1+q}(1+\eps)^{qk}}
\]
Where the last inequality comes from bounding \(\sum_{k'=k}^\ell \frac{1}{(1+\eps)^{qk}} \leq \sum_{k'=k}^\infty \frac{1}{(1+\eps)^{qk}} = \frac{1}{(1+\eps)^{qk}} \cdot \frac{(1+\eps)^q}{(1+\eps)^q-1} \leq \frac{1}{(1+\eps)^{qk}} \cdot \frac{(1+\eps)^2}{1+\eps-1} \leq \frac{2}{\eps(1+\eps)^{qk}}\).
Similarly, since \(G_{k,\vu} = H_k \setminus \bigcup_{k' \geq k} C_{k',\vu}\) where \(H_k\) is a fixed set independent of \vu, we again get \(\log\abs{G} \leq \prod_{k'=k}^\ell \abs{\cN_{k'}}\), and so we conclude:
\[
	\log\abs{\cD_k}
	\leq \log\abs{\cN_k} + \log\abs{B_k} + \log\abs{G_k}
	\leq 6c_q3^q \frac{d_B \log n_0}{\eps^{1+q}(1+\eps)^{qk}}
\]
which completes the bulk of the proof.
\end{proof}

\section{Smaller Relegated Proofs}

\subsection{Bounding the Generalized Christoffel Function}
\label{app:christoffel-bound}

\begin{lemma}
Let \(f(s)\) be a differentiable concave function on interval \([a-\Delta, a+\Delta]\).
Then, \(\int_{a-\Delta}^{a+\Delta} f(s) ds \leq 2\Delta f(a)\).
\end{lemma}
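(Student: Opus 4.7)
The plan is to use the standard tangent-line (supporting hyperplane) characterization of concavity, which exploits the symmetry of the interval around $a$ to kill the linear term.

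First, I would invoke the fact that a differentiable concave function lies below each of its tangent lines: for every $s \in [a-\Delta, a+\Delta]$,
\[
f(s) \leq f(a) + f'(a)(s-a).
\]
Next, I would integrate this inequality over the symmetric interval $[a-\Delta, a+\Delta]$. The constant term contributes $2\Delta f(a)$, while the linear term contributes
\[
f'(a)\int_{a-\Delta}^{a+\Delta}(s-a)\,ds = 0,
\]
since the integrand is an odd function about $a$ and the interval is symmetric about $a$. Combining these gives exactly the claimed bound $\int_{a-\Delta}^{a+\Delta} f(s)\,ds \leq 2\Delta f(a)$.

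There is no real obstacle here; the only thing to note is that symmetry of the interval around $a$ is essential — without it the linear correction would not vanish and the bound could fail. As an alternative one-line argument, one could instead appeal directly to Jensen's inequality for concave functions applied to the uniform distribution on $[a-\Delta,a+\Delta]$ (whose mean is $a$), giving $\frac{1}{2\Delta}\int_{a-\Delta}^{a+\Delta} f(s)\,ds \leq f(a)$ immediately; this route does not even require differentiability, though the lemma is stated with that hypothesis so the tangent-line proof is the more natural match.
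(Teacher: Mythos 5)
Your proof is correct and matches the paper's approach exactly: both use the tangent-line bound $f(s) \leq f(a) + f'(a)(s-a)$ for concave functions and then observe that the linear term integrates to zero over the symmetric interval. The Jensen's-inequality remark is a nice aside but does not change the comparison.
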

\begin{proof}
First recall that concave functions have \(f(s) \leq f(a) + f'(a)(s-a)\), so we have
\[
    \int_{a-\Delta}^{a+\Delta} f(s)dx
    \leq \int_{a-\Delta}^{a+\Delta} \left(f(a) - f'(a)(s-a)\right) ds
    = 2\Delta f(a) + f'(a)\int_{a-\Delta}^{a+\Delta} \left((s-a)\right) ds
    = 2\Delta f(a) + 0
\]
\end{proof}

\begin{lemma}
The generalized Christoffel function \(\lambda_d(\alpha,2,t) \defeq \min_{q:\text{deg}(q)\leq d} \frac{\int_{-1}^1 (q(s))^2 \alpha(s) ds}{(q(t))^2}\), where \(\alpha(s)\defeq(1-s^2)^{\frac1p-\frac12}\), has \(\lambda_d(z,2,t) \leq \frac{C}{d-1}(1-t^2)^{\frac1p}\) for some universal constant \(C\), for all \(t\in\cI_{mid}\), for all \(p\in[\frac23,2]\).
\end{lemma}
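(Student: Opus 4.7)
The plan is to prove the upper bound on the generalized Christoffel function by exhibiting an explicit degree-$\leq d$ polynomial that witnesses the infimum. I will adapt the classical Fej\'er-kernel construction from the orthogonal polynomial literature. First, pass to the Chebyshev angular coordinate $s = \cos\phi$, $t = \cos\theta$ with $\theta \in (0,\pi)$, so $\sin\theta = \sqrt{1-t^2} \gtrsim 1/d$ by the assumption $t \in \cI_{mid}$. Under this substitution, using $ds = -\sin\phi\,d\phi$ and $(1-\cos^2\phi)^{1/p-1/2} = (\sin\phi)^{2/p-1}$, the weighted $L^2$ integral rewrites as
\begin{equation*}
\int_{-1}^1 q(s)^2 \,(1-s^2)^{\frac1p - \frac12}\, ds \;=\; \int_0^\pi q(\cos\phi)^2 \,(\sin\phi)^{2/p}\, d\phi .
\end{equation*}

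Next, I will use the Fej\'er kernel $F_k(\psi) = \frac{1}{k}\bigl(\frac{\sin(k\psi/2)}{\sin(\psi/2)}\bigr)^2$ with $k = d$; this is a nonnegative even trigonometric polynomial of degree $k-1$ satisfying $F_k(0) = k$ and the standard decay bound $F_k(\psi) \leq \min\bigl(k, \, C/(k\psi^2)\bigr)$ on $[-\pi,\pi]$. Set
\begin{equation*}
\tilde q(\cos\phi) \;\defeq\; \tfrac{1}{2}\bigl[F_k(\phi - \theta) + F_k(\phi + \theta)\bigr],
\end{equation*}
which is even in $\phi$ (because $F_k$ is even) and therefore expresses as a polynomial in $s = \cos\phi$ of degree $\leq d - 1$. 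Crucially $\tilde q(t) \geq \tfrac12 F_k(0) = d/2$, so normalizing to $q^{\star}(s) \defeq \tilde q(s)/\tilde q(t)$ yields a degree-$\leq d$ polynomial with $q^{\star}(t) = 1$. The desired bound $\lambda_d(\alpha,2,t) \leq C(1-t^2)^{1/p}/(d-1)$ then reduces to showing $\int_0^\pi \tilde q(\cos\phi)^2\, (\sin\phi)^{2/p}\, d\phi \lesssim d\,(\sin\theta)^{2/p}$, because dividing by $\tilde q(t)^2 \geq d^2/4$ supplies the extra factor of $1/d$, and $(\sin\theta)^{2/p} = (1-t^2)^{1/p}$.

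Finally, I will bound this integral by splitting $[0,\pi]$ into a ``core'' region $|\phi - \theta| \leq 1/d$ (and its mirror near $-\theta$) and the tail. On the core, $F_k^2 \lesssim d^2$ and $\sin\phi \asymp \sin\theta$, yielding a contribution of order $d^2 \cdot (\sin\theta)^{2/p} \cdot (1/d) = d\,(\sin\theta)^{2/p}$. On the tail, $F_k(\psi)^2 \lesssim 1/(d\psi)^4$, and since $\theta, \pi - \theta \gtrsim 1/d$, a doubling-style comparison of $(\sin\phi)^{2/p}$ against $(\sin\theta)^{2/p}$ (legitimate since $2/p \in [1,3]$) renders the tail a convergent integral of the same order $d\,(\sin\theta)^{2/p}$. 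The main obstacle is obtaining a universal constant uniform in $p \in [\tfrac23, 2]$ through this tail step: the weight $(\sin\phi)^{2/p}$ can exceed $(\sin\theta)^{2/p}$ substantially near $\phi = \pi/2$ or be much smaller near the endpoints, and those oscillations must be absorbed cleanly into the quadratic decay of $F_k$. This is essentially the content of Theorem~2.1 of \cite{erdelyi1992generalized} for generalized Christoffel functions with doubling weights, specialized here to $\alpha(s) = (1-s^2)^{1/p - 1/2}$, whose doubling constants are bounded uniformly for $p \in [\tfrac23, 2]$; verifying that uniform doubling and invoking the cited theorem closes the argument.
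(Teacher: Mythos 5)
Your Fej\'er-kernel plan is a genuinely different and more constructive route than the paper's argument, but as written it does not close. Let me explain both points.

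The paper's proof is short and two-step: it invokes Theorem~2.1 of \cite{erdelyi1992generalized} as a black box to get
\(\lambda_d(\alpha,2,t) \leq C^{\Gamma+3}\alpha_M(t)\), where
\(\alpha_M(t) = \int_{|s-t|\leq \Delta_M(t)} \alpha(s)\,ds\) is a local average of the weight around \(t\) with window \(\Delta_M(t) = \max\{\sqrt{1-t^2}/M, 1/M^2\}\) and \(M = \Theta(d)\); it then proves a one-line concavity lemma \(\int_{a-\Delta}^{a+\Delta} f \leq 2\Delta f(a)\), verifies \(\alpha(s) = (1-s^2)^{1/p-1/2}\) is concave for \(p \geq 2/3\), and reads off
\(\alpha_M(t) \leq 2\Delta_M(t)\alpha(t) = O\bigl(\frac{1}{d}(1-t^2)^{1/p}\bigr)\).
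By contrast, your proposal builds an explicit degree-\((d-1)\) test polynomial from the Fej\'er kernel and tries to evaluate its weighted \(L^2\) norm directly, which would make the bound self-contained and bypass the cited theorem entirely. Your setup is correct: the change of variables \(s=\cos\phi\), the degree count, the value \(\tilde q(t) \geq d/2\), and the identity \((\sin\theta)^{2/p}=(1-t^2)^{1/p}\) all check out, and the reduction to
\(\int_0^\pi \tilde q(\cos\phi)^2 (\sin\phi)^{2/p}\,d\phi \lesssim d\,(\sin\theta)^{2/p}\)
is the right target. What you gain is an elementary, constructive proof; what you give up is brevity, since the tail estimate requires handling the Jacobi-type weight carefully.

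The gap is in the last paragraph. You flag the tail estimate as the ``main obstacle'' and propose to ``close the argument'' by verifying a uniform doubling condition and then invoking Theorem~2.1 of \cite{erdelyi1992generalized}. This does not actually close anything, for two reasons. First, that theorem does not produce the tail estimate you want; it produces the inequality
\(\lambda_d \lesssim \alpha_M(t)\)
with a local average \(\alpha_M(t)\) still to be estimated, and you never bound that local average. (This residual estimate is precisely where the paper's concavity lemma does its work; you would need something equivalent, e.g.\ noting that \(\alpha\) varies by at most a constant factor on \([t-\Delta_M(t),\,t+\Delta_M(t)]\) when \(t\in\cI_{mid}\).) Second, once you invoke the cited theorem, your entire Fej\'er-kernel construction becomes dead weight: the theorem already manufactures an extremal polynomial internally, so building one by hand is no longer earning anything. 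The proposal thus starts on a genuine alternative path but then switches back to the paper's path partway through, without completing either one. To repair this you should commit to one of two options: (a) carry out the tail integral honestly — using \(F_k(\psi)^2 \lesssim 1/(k\psi^2)^2\) and the fact that \((\sin\phi)^{2/p}\) is a doubling weight with doubling constants bounded uniformly over \(p\in[2/3,2]\) — which is standard but a page of work (this is essentially the Mastroianni--Totik argument for Christoffel functions with doubling weights); or (b) drop the Fej\'er construction entirely, cite the theorem, and supply the missing local-average estimate, landing on the paper's proof. One smaller caution if you go route (a): the claim \(\sin\phi\asymp\sin\theta\) on a fixed core \(|\phi-\theta|\leq 1/d\) is not automatic when \(\theta\) itself is of order \(1/d\); you should take the core width to be \(\min\{c/d,\,\tfrac12\theta,\,\tfrac12(\pi-\theta)\}\) so the comparability of \(\sin\phi\) and \(\sin\theta\) holds by construction.
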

\begin{proof}
By Theorem 2.1 of \cite{erdelyi1992generalized}, we know that
\[
    \lambda_d(\alpha,2,t) \leq C^{\Gamma + 3} \alpha_M(t)
\]
where \(C\) is a universal constant, \(\Gamma = \frac2p - 1 \leq 1\), and \(\alpha_M(t) \defeq \int_{\abs{s-t}\leq \Delta_M(t)} \alpha(s) ds\) where \(\Delta_M(s) \defeq \max\{\frac{\sqrt{1-s^2}}M, \frac{1}{M^2}\}\) and \(M = 1 + \frac{2(d-1)}{\Gamma+3} = 1 + \frac{p}{p+1}(d-1) \in [\frac{d-1}{3}, d]\).
To bound the integral within \(\alpha_M(t)\), we use the above lemma about concave functions.
Since \(\frac{d^2}{ds^2}\alpha(s) = -2(\frac1p-\frac12)(1-t^2)^{\frac1p - \frac52}((\frac1p - \frac32)t - 1) \leq 0\) for all \(t\in[-1,1]\) for all \(p \geq \frac23\), we know that \(\alpha\) is concave.
Then, since \(\alpha(s)\) is concave on \([-1,1]\), we find that for any \(t\) such that \(\abs{t} + \Delta_M(t) \leq 1\) (for which \(\abs{t}\leq \sqrt{1-\frac4{M^2}}\) suffices), we get
\[
    \int_{t - \Delta_M(t)}^{t+\Delta_M(t)} \alpha(s) ds \leq 2\Delta_M(t)\alpha(t) = 2 \frac{\sqrt{1-t^2}}{M}(1-t^2)^{\frac1p - \frac12} \leq \frac2M (1-t^2)^{\frac1p}
\]
Putting this together,
\[
    \lambda_d(\alpha,2,t)
    \leq C^{\Gamma + 3} \alpha_M(t)
    \leq C^{4} \frac{2}{\frac{d-1}3} (1-t^2)^{\frac1p}
    = \frac{3C^4}{d-1} (1-t^2)^{\frac1p}
\]
\end{proof}

\subsection{Smoothness of the Chebyshev Measure}

\begin{lemma}
\label{lem:cheby-smooth}
Let \(x\in(-1,1)\), and let \(y \in (-1+\Delta,1-\Delta)\) for \(\Delta = \frac{1-x}{2}\).
Then,
\[
    \frac{1}{\sqrt{1-y^2}} \leq \frac{2}{\sqrt{1-x^2}}
\]
In particular, if \(x=1-\frac1{d}\), then we get \(y\in[1-\frac1d,1-\frac1{2d}]\).
\end{lemma}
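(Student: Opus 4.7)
The plan is to reduce everything to the single intermediate inequality $1 - y^2 \ge \tfrac{1-x^2}{4}$, since the target bound $\frac{1}{\sqrt{1-y^2}} \le \frac{2}{\sqrt{1-x^2}}$ is just the reciprocal square root of this.

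First I would unpack the hypothesis $y \in (-1+\Delta,\, 1-\Delta)$ into the pair of inequalities $1 - y > \Delta$ and $1 + y > \Delta$. Factoring $1 - y^2 = (1-y)(1+y)$, I would then use the trivial observation that for any $y \in [-1,1]$ at least one of $1 - y$ or $1 + y$ is $\ge 1$: if $y \ge 0$ then $1+y \ge 1$ and $1 - y^2 \ge 1 - y > \Delta$, while if $y \le 0$ then $1-y \ge 1$ and $1 - y^2 \ge 1 + y > \Delta$. Either way, $1 - y^2 > \Delta$.

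Next I would compare $\Delta$ with $\tfrac{1-x^2}{4}$. Since $\Delta = \tfrac{1-x}{2}$ and the hypothesis $x < 1$ gives $1+x < 2$, we have
\[
\frac{1-x^2}{4} \;=\; \frac{(1-x)(1+x)}{4} \;<\; \frac{(1-x)\cdot 2}{4} \;=\; \Delta.
\]
Chaining this with the previous step yields $1 - y^2 > \Delta > \tfrac{1-x^2}{4}$, which is the desired intermediate inequality; taking $1/\sqrt{\cdot}$ on both sides completes the proof. The ``in particular'' clause is then a direct substitution: with $x = 1 - \tfrac{1}{d}$ one computes $\Delta = \tfrac{1}{2d}$, and the interval $[1-\tfrac{1}{d},\, 1-\tfrac{1}{2d}]$ sits inside the admissible range $(-1+\Delta,\, 1-\Delta)$ (modulo the closed-vs-open endpoint).

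No step is genuinely hard; this is a purely algebraic check. The only small design choice is noticing that the factor of $4$ in the target inequality provides enough slack that the crude bound $\max(1-y,\, 1+y) \ge 1$ is sufficient --- one does not need a sharper estimate of $1 + |y|$ in the factorization of $1 - y^2$.
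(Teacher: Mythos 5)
Your proof is correct and takes a genuinely different (and cleaner) route than the paper's. The paper reduces to the worst-case endpoint $y = 1-\Delta$, which it observes equals $x+\Delta = \tfrac{1+x}{2}$, using monotonicity of $y \mapsto \tfrac{1}{\sqrt{1-y^2}}$ on $[0,1)$ (it states ``WLOG $x>0$'', though this isn't actually needed since $1-\Delta = \tfrac{1+x}{2}>0$ for all $x\in(-1,1)$); it then squares, rearranges into the quadratic $4\Delta^2 + 8x\Delta + (3x^2-3) \leq 0$, and substitutes $\Delta = \tfrac{1-x}{2}$ to find that the left-hand side collapses to $2(x-1)\leq 0$. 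Your argument never touches that quadratic: you notice that the factor-of-$4$ slack in the squared target $\tfrac{1}{1-y^2}\leq\tfrac{4}{1-x^2}$ is exactly enough to absorb the crude factorization bound $1-y^2 = (1-y)(1+y) \geq \min(1-y,\,1+y) > \Delta$, and then the one-sided estimate $\tfrac{1-x^2}{4} = \tfrac{(1-x)(1+x)}{4} < \tfrac{1-x}{2} = \Delta$ closes the gap. Both proofs are valid and roughly the same length; yours isolates where the slack comes from rather than grinding through the algebra, and it also handles all of $(-1,1)$ uniformly without the reduction to an endpoint. The closed-versus-open endpoint issue in the ``in particular'' clause is present in the paper's statement as well and is harmless, since the strict inequality $1-y^2 > \Delta$ in fact still holds at $y=1-\Delta$ (there $1-y^2 = \Delta(2-\Delta) > \Delta$ because $\Delta<1$).
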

\begin{proof}
WLOG, we consider \(x > 0\).
Since \(x \mapsto \frac{1}{\sqrt{1-x^2}}\) is monotonically increasing on \(x>0\), we just need to show that \(\frac{1}{\sqrt{1-(x+\Delta)^2}} \leq \frac{2}{\sqrt{1-x^2}}\).
Rearranging that equation, we get
\[
    4\Delta^2 + 8x\Delta + (3x^2-3) \leq 0
\]
Plugging in \(\Delta = \frac{1-x}{2}\) and simplifying, we see the bound holds for all \(x<1\).
\end{proof}

\subsection{Binomial Approximation}

\begin{lemma}
\label{lem:binomial-approx}
Let \(x > 0\), \(p > 2\), and \(x \leq \frac{1}{p}\).
Then,
\[
    1-\tsfrac12 px \leq (1-x)^p
    \hspace{2cm}
    (1+x)^p \leq 1+3px
\]
In other words, \((1\pm x)^p = 1 + \Theta(px)\).
\end{lemma}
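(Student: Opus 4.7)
The plan is to prove the two inequalities separately, leveraging the hypothesis $x \leq 1/p$ (so that $px \leq 1$) to absorb higher-order terms into linear ones. In both cases the natural route is via the exponential function, since $\ln(1 \pm x)$ admits clean first- and second-order estimates on the relevant range.

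For the upper bound $(1+x)^p \leq 1 + 3px$, I would pass to the exponential via $1+x \leq e^x$, reducing the claim to $e^{px} \leq 1 + 3px$. Writing $y = px \in [0,1]$ and using the series expansion $e^y = 1 + y + \sum_{k \geq 2} y^k/k!$, the tail can be bounded by comparison to a geometric series: $\sum_{k \geq 2} y^k/k! \leq (y^2/2)\sum_{j \geq 0}(1/3)^j \leq y^2$. This yields $e^{px} \leq 1 + px + (px)^2 \leq 1 + 2px \leq 1 + 3px$, where the penultimate step uses $px \leq 1$ to absorb the quadratic term.

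For the lower bound $1 - \tfrac{1}{2}px \leq (1-x)^p$, the standard Bernoulli inequality already gives $(1-x)^p \geq 1 - px$, which is in the correct direction but weaker by a factor of $2$ than the claim. To recover the stated constant I would attempt to retain the positive quadratic term $\binom{p}{2}x^2$ in the binomial expansion by pairing successive terms $-\binom{p}{2k+1}x^{2k+1} + \binom{p}{2k+2}x^{2k+2}$ and using $x \leq 1/p$ to show each pair is non-negative (since the ratio $\binom{p}{k+1}x/\binom{p}{k} = (p-k)x/(k+1)$ is at most $1$ under the hypothesis). Alternatively, one can take logs and use $\ln(1-x) \geq -x - x^2$ (valid for $x \leq 1/2$), giving $p\ln(1-x) \geq -px(1+x)$, and then bound $e^{-px(1+x)}$ from below by a linear function exploiting $px(1+x) \leq 2$.

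The main obstacle is the lower bound: Bernoulli alone is off by exactly the factor that the lemma aims to improve, so the proof must carefully track a quadratic correction and cannot simply invoke a black-box convexity estimate. I note that for the way this lemma is invoked earlier (as in the proof of \lemmaref{trunc}, which only requires a two-sided envelope $(1 \pm x)^p \in 1 \pm O(px)$), the weaker Bernoulli bound $1 - px \leq (1-x)^p$ would already suffice, so the delicate part of the argument is only needed to match the explicit constant $\tfrac{1}{2}$ in the statement.
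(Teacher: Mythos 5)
The stated lower bound $1 - \tfrac12 px \leq (1-x)^p$ is false, so neither of the routes you propose for it can be made to work. A direct counterexample: take $p = 3$, $x = 0.01$ (hypotheses satisfied); then $(1-x)^p = 0.99^3 = 0.970299$, while $1 - \tfrac12 px = 0.985 > 0.970299$. More conceptually, for small $x$ the expansion $(1-x)^p = 1 - px + O(p^2x^2)$ shows that $(1-x)^p - (1-\alpha px) = -(1-\alpha)px + O(p^2x^2)$ is negative for every $\alpha < 1$, so the leading coefficient in front of $px$ in any valid lower bound cannot be improved below Bernoulli's $1$. Accordingly, the paired-binomial cancellation you sketch cannot come out non-negative (and for non-integer $p$ the generalized coefficients $\binom{p}{k}$ change sign once $k > p$, which breaks the pairing), and the route via $\ln(1-x) \geq -x - x^2$ lands at $(1-x)^p \geq e^{-px(1+x)}$, which sits below $1 - \tfrac12 px$ for small $x$ by the same first-order mismatch.

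What actually happened here is that the lemma statement carries a typo, and the paper's own proof derives a different (correct) constant. The paper Taylor-expands $f(u) = (1+u)^p$ around $0$ with Lagrange remainder, bounds $f''(\xi) = p(p-1)(1+\xi)^{p-2} \leq p^2$ for $\xi \in [-x,0]$, and concludes $|(1-x)^p - (1-px)| \leq \tfrac12 p^2 x^2 \leq \tfrac12 px$ (using $px \leq 1$), hence $(1-x)^p \geq 1 - \tfrac32 px$. That is weaker than the displayed $1 - \tfrac12 px$, and it is the inequality the statement should read. As you rightly observe at the end of your proposal, even plain Bernoulli $(1-x)^p \geq 1 - px$ suffices for the single downstream use in \lemmaref{trunc}, which only needs a two-sided $1 \pm O(px)$ envelope, so the exact constant in front of $px$ is immaterial; there was never a reason to fight for $\tfrac12$.

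Your upper bound $(1+x)^p \leq 1 + 3px$ is correct and takes a genuinely different route than the paper: you pass through $1+x \leq e^x$ and control the series tail of $e^{px}$ geometrically (yielding the slightly sharper $1 + 2px$), whereas the paper reuses the same Taylor-plus-remainder device and shows $(1+\xi)^{p-2} \leq (1+x)^{1/x} \leq 3$ for $\xi \in [0,x]$ under $x \leq 1/p \leq 1/(p-2)$. Both are fine; the paper's is the more uniform choice since it handles both inequalities by one estimate, while your exponentiation is cleaner in isolation.
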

\begin{proof}
Let \(f(u) \defeq (1+u)^p\), so that the Taylor Approximation \(\tilde f(u) \defeq f(0) + f'(0)u = 1 + pu\) has the following residual for \(u\in[-x,x]\):
\[
    \abs{f(u) - \tilde f(u)}
    \leq \max_{\xi\in\cI} \frac{f''(\xi)}{2}u^2
    = \frac12 p(p-1)u^2 \max_{\xi\in\cI} (1+\xi)^{p-2}
    \leq \frac12 p^2u^2 \max_{\xi\in\cI} (1+\xi)^{p-2}
\]
Where \(\cI=[0,x]\) if \(u\geq0\) and where \(\cI=[-x,0]\) if \(u<0\).
For \(u < 0\), we have \(\max_{\xi\in[-x,0]} (1+\xi)^{p-2} = 1\), so that \(\abs{f(u) - \tilde f(u)} \leq \frac12 p^2u^2\).
So, \(\abs{(1-x)^p - (1-px)} = \abs{f(-x) + \tilde f(-x)} \leq \frac12 p^2 x^2 \leq \frac12 px \), and so \((1-x)^p \geq 1-px-\frac12px = 1-\frac32px\).

For \(u\geq0\), we need to be more detailed.
We get \(\max_{\xi\in\cI} (1+\xi)^{p-2} = (1+x)^{p-2}\).
Since \(x \leq \frac1p \leq \frac1{p-2}\), we get \(p < 2 + \frac1x\), so that \((1+x)^{p-2} \leq (1+x)^{\frac1x} \leq 3\).
So,
\[
     \abs{(1+x)^p - (1+px)} \leq \frac12 p^2 x^2 (1+x)^{p-2} \leq \frac32 p^2x^2 \leq \frac32 px
\]
And therefore
\[
    (1+x)^p \leq 1+px + \frac32 px \leq 1 + 3px
\]
\end{proof}

\begin{lemma}
\label{lem:max-unif-cheby}
Suppose we sample \(n_0\) times \(s_1,\ldots,s_{n_0}\) uniformly from \([-1,1]\).
Then, \(\max_i v(s_i) \leq \frac{d+1}{\pi} \sqrt{\frac{n_0 \ln(2)}{\ln(\frac1{1-\delta})}} = \Theta(\frac{d\sqrt{n_0}}{\sqrt{\delta}})\) with probability \(1-\delta\).
\end{lemma}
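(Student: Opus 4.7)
The plan is to reduce the bound on $\max_i v(s_i)$ to an elementary calculation on the uniform measure, since $v$ depends only on $|s_i|$ via the explicit expression $v(t) = \frac{d+1}{\pi\sqrt{1-t^2}}$.

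First I would invert the level set: for any threshold $M > 0$, the event $v(s) \leq M$ is exactly the event $|s| \leq \sqrt{1 - \frac{(d+1)^2}{\pi^2 M^2}}$. Since each $s_i$ is uniform on $[-1,1]$, this occurs with probability $\sqrt{1 - \frac{(d+1)^2}{\pi^2 M^2}}$ independently for each $i$, so
\[
    \Pr\!\left[\max_i v(s_i) \leq M\right] \;=\; \left(1 - \tfrac{(d+1)^2}{\pi^2 M^2}\right)^{n_0/2}.
\]

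Next I would set the threshold $M = \frac{d+1}{\pi}\sqrt{\frac{n_0 \ln 2}{\ln(1/(1-\delta))}}$ as in the statement, so that $\frac{(d+1)^2}{\pi^2 M^2} = \frac{\ln(1/(1-\delta))}{n_0 \ln 2}$, and show this is at most $1-\delta$ on the right-hand side. Using the elementary inequality $\ln(1-x) \geq -x \ln 4$ valid for $x \in [0, 1/2]$ (which follows from concavity and equality at the endpoints), applied with $x = \frac{\ln(1/(1-\delta))}{n_0 \ln 2}$, the log of the success probability satisfies
\[
    \tfrac{n_0}{2}\ln\!\left(1-\tfrac{\ln(1/(1-\delta))}{n_0 \ln 2}\right) \;\geq\; -\tfrac{n_0}{2}\cdot\tfrac{\ln(1/(1-\delta))}{n_0 \ln 2}\cdot \ln 4 \;=\; \ln(1-\delta),
\]
and exponentiating gives the claimed bound $\Pr[\max_i v(s_i) \leq M] \geq 1 - \delta$. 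The side condition $x \leq 1/2$ is where the two-line form $\ln(1-x)\geq -x\ln 4$ is valid; it holds as long as $\ln(1/(1-\delta)) \leq \frac{n_0 \ln 2}{2}$, which is satisfied for any reasonable regime of $\delta$ and $n_0$ (and outside this regime the threshold $M$ is already larger than the trivial bound).

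Finally, to justify the $\Theta(d\sqrt{n_0}/\sqrt{\delta})$ asymptotic in the lemma, I would note that $\ln(1/(1-\delta)) = \delta + \Theta(\delta^2)$ for small $\delta$, so $M = \Theta\!\left(\frac{d \sqrt{n_0}}{\sqrt{\delta}}\right)$. There is no real obstacle here: the only subtlety is making sure the concavity-based bound on $\ln(1-x)$ is applied in the regime $x \leq 1/2$, which holds automatically in the relevant parameter range.
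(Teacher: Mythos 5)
Your proof is correct and follows essentially the same route as the paper: invert the level set $v(s)\leq M$ to a uniform-measure computation, obtain $(1-\tfrac{(d+1)^2}{\pi^2 M^2})^{n_0/2}$, and lower-bound it by $1-\delta$ via the chord inequality for $\ln(1-x)$ on $[0,\tfrac12]$ (the paper phrases the same elementary fact as $(1-1/x)^x \geq 1/4$ for $x\geq 2$, which is equivalent under $u=1/x$). If anything your write-up is cleaner, since you verify the probability bound forward at the stated threshold rather than solving backwards for $F$ as the paper does, and you explicitly justify the $\Theta(d\sqrt{n_0}/\sqrt{\delta})$ asymptotic that the paper merely states.
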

\begin{proof}
\begin{align*}
    \Pr[\max_i \abs{v(s_i)} \leq F]
    &= (\Pr[v(s_1) \leq F])^{n_0} \\
    &= \left(\Pr\left[s_i \in \pm \sqrt{1-\tsfrac{(d+1)^2}{\pi^2F^2}}\right]\right)^{n_0} \tag{Inverse Function of \(v(s)\)} \\
    &= \left(1-\left(\frac{d+1}{\pi F}\right)^2\right)^{n_0/2} \\
    &= \left(1-\frac1x\right)^{x \cdot \frac{n_0}{2x}} \tag{Let \(x \defeq (\frac{\pi F}{d+1})^2\)}\\
    &\geq 0.25^{\frac{n_0}{2x}} \tag{\(x \geq 2\)}\\
    &= 2^{\frac{-n_0}{x}} \tag{\(0.25 = 2^{-2}\)}
\end{align*}
Making this fail with probability \(\delta\), we get
\begin{align*}
    2^{\frac{-n_0}{x}} &\geq 1-\delta \\
    -\frac{n_0}{x} \ln(2) &\geq \ln(1-\delta) \\
    x &\leq - \frac{n_0 \ln(2)}{\ln(1-\delta)} \\
    (\frac{\pi F}{d+1})^2 &\leq \frac{n_0 \ln(2)}{\ln(\frac{1}{1-\delta})} \\
    F &\leq \frac{d+1}{\pi} \sqrt{\frac{n_0 \ln(2)}{\ln(\frac{1}{1-\delta})}}
\end{align*}
\end{proof}

\end{document}